\newtheorem{theorem}{Theorem}[section]
\newtheorem{corollary}[theorem]{Corollary}
\newtheorem{lemma}[theorem]{Lemma}
\newtheorem{proposition}[theorem]{Proposition}
\newtheorem{definition}[theorem]{Definition}
\newtheorem{remark}[theorem]{Remark}
\newtheorem{weakf}{Weak Formulation}
\newtheorem*{classicalf}{Classical Formulation}
\numberwithin{equation}{section}
\newcounter{mnote}
\let\oldmarginpar\marginpar
  \renewcommand\marginpar[1]{\-\oldmarginpar[\raggedleft\footnotesize #1]%
  {\raggedright\footnotesize #1}}
\newcommand{\grad}{\nabla}
\newcommand{\lab}{\label}
\newcommand{\tiwedge}{\mbox{{\tiny $\wedge$}}}
\newcommand{\tivee}{\mbox{{\tiny $\vee$}}}
\newcommand{\floor}[1]{\lfloor{#1}\rfloor}
\renewcommand{\div}{{\operatorname{div}}}
\DeclareMathOperator*{\esssup}{ess\,sup}
\DeclareMathOperator*{\essinf}{ess\,inf}
\DeclareMathOperator*{\supp}{supp}
\newenvironment{itemizeX}
{\begin{list}{\labelitemi}
 {\setlength{\leftmargin}{1.5em}
  \setlength{\topsep}{0.5em}
  \setlength{\itemsep}{0.5em}
  \setlength{\labelwidth}{50.0em}}}
 {\end{list}}
\newenvironment{itemizeXX}
{\begin{list}{$\circ$}
 {\setlength{\leftmargin}{1.5em}
  \setlength{\topsep}{0.5em}
  \setlength{\itemsep}{0.5em}
  \setlength{\labelwidth}{50.0em}}}
 {\end{list}}
\newenvironment{itemizeXM}
{\begin{list}{\labelitemi}
 {\setlength{\leftmargin}{1.5em}
  \setlength{\topsep}{0.0em}
  \setlength{\itemsep}{0.0em}
  \setlength{\labelwidth}{50.0em}}}
 {\end{list}}
\newenvironment{itemizeXXM}
{\begin{list}{$\circ$}
 {\setlength{\leftmargin}{1.5em}
  \setlength{\topsep}{0.0em}
  \setlength{\itemsep}{0.0em}
  \setlength{\labelwidth}{50.0em}}}
 {\end{list}}
\begin{document}

\title[Rough Solutions on Asymptotically Flat Manifolds]
      {Rough Solutions of the Einstein Constraint Equations \\
       on Asymptotically Flat Manifolds Without Near-CMC Conditions}

\author[A. Behzadan]{A. Behzadan}
\email{abehzada@math.ucsd.edu}

\author[M. Holst]{M. Holst}
\email{mholst@math.ucsd.edu}

\address{Department of Mathematics\\
         University of California San Diego\\ 
         La Jolla CA 92093}

\thanks{AB was supported by NSF Award~1262982.}
\thanks{MH was supported in part by 
        NSF Awards~1262982, 1318480, and 1620366.}

\date{\today}
\keywords{Einstein constraint equations, asymptotically flat manifolds, 
          rough solutions, non-constant mean curvature, 
          weighted Sobolev spaces, conformal method}

\begin{abstract}
In this article we consider the conformal decomposition of the
Einstein constraint equations introduced by Lichnerowicz,
Choquet-Bruhat, and York, on asymptotically flat (AF) manifolds.
Using the non-CMC fixed-point framework developed in 2009 by
Holst, Nagy, and Tsogtgerel and by Maxwell, we combine {\em a
priori} estimates for the individual Hamiltonian and momentum
constraints, barrier constructions for the Hamiltonian
constraint, Fredholm-Riesz-Schauder theory for the momentum
constraint, together with a topological fixed-point argument for
the coupled system, to establish existence of coupled non-CMC
weak solutions for AF manifolds. As was the case with the 2009
rough solution results for closed manifolds, and for the more
recent 2014 results of Holst, Meier, and Tsogtgerel for rough
solutions on compact manifolds with boundary, our results here
avoid the near-CMC assumption by assuming that the freely
specifiable part of the data given by the traceless-transverse
part of the rescaled extrinsic curvature and the matter fields are
sufficiently small. Using a coupled topological fixed-point
argument that avoids near-CMC conditions, we establish existence
of coupled non-CMC weak solutions for AF manifolds of class
$W^{s,p}_{\delta}$ (or $H^{s,p}_{\delta}$) where $p\in
(1,\infty)$, $s\in(1+\frac{3}{p},\infty)$, $-1<\delta<0$, with
metric in the positive Yamabe class. The non-CMC rough solutions
results here for AF manifolds may be viewed as an extension of
the 2009 and 2014 results on rough far-from-CMC positive Yamabe
solutions for closed and compact manifolds with boundary to the
case of AF manifolds. Similarly, our results may be viewed as
extending the recent 2014 results for AF manifolds of Dilts,
Isenberg, Mazzeo and Meier; while their results are restricted to
smoother background metrics and data, the results here allow the
regularity to be extended down to the minimum regularity allowed
by the background metric and the matter, further completing the
rough solution program initiated by Maxwell and Choquet-Bruhat in
2004.
\end{abstract}

\maketitle
\vspace*{-1.0cm}
{\tiny
\tableofcontents
}

\section{Introduction}
   \label{sec:intro}

In this article, we give an analysis of the coupled Hamiltonian and
momentum constraints in the Einstein equations on a 3-dimensional
asymptotically flat (AF) manifold.
 The unknowns are a Riemannian three-metric and a two-index
symmetric tensor. The equations form an under-determined system;
therefore, we focus entirely on a standard reformulation used in
both mathematical and numerical general relativity, called the
conformal method, introduced by Lichnerowicz~\cite{aL44},
Choquet-Bruhat~\cite{yC-B58}, and York~\cite{jY71,jY72}. The
(standard) conformal method assumes that the unknown metric is
known up to a scalar field called a conformal factor, and also
assumes that the trace and a term proportional to the trace-free
divergence-free part of the two-index symmetric tensor is known,
leaving as unknown a term proportional to the traceless
symmetrized derivative of a vector. Therefore, the new unknowns
are a scalar and a vector field, transforming the original
under-determined system for a metric and a symmetric tensor into
a (potentially) well-posed elliptic system for a scalar and a
vector field, which we will refer to as the
Lichnerowicz-Choquet-Bruhat-York (LCBY) system.


The LCBY equations, which are a coupled nonlinear elliptic system
consisting of the scalar Hamiltonian constraint coupled to the
vector momentum constraint, had been studied through 2008 almost
exclusively in the setting of constant mean extrinsic curvature,
known as the \emph{CMC case}.  In the CMC case the equations
decouple, and it has long been known how to establish existence
of solutions. The case of CMC data on closed (compact without
boundary) manifolds was completely resolved by several authors
over the last thirty years, with the last remaining sub-cases
resolved and all the CMC sub-cases on closed manifolds summarized
by Isenberg in~\cite{jI95}. Over the ten years that followed,
other CMC cases on different types of manifolds containing
various kinds of matter fields were studied and partially or
completely resolved; the survey~\cite{rBjI04} gives a thorough
summary of the state of the theory through about 2004. New
results through 2008 included extensions of the CMC theory to AF
manifolds~\cite{yCBjIjY00}, including the first results for rough
solutions~\cite{2,dM05,dM06,yCB04}. The CMC case with interior
black hole boundaries is of particular interest in numerical
general relativity; solution theory for this case involves the
careful mathematical treatment of trapped surface boundary
conditions that model apparent horizons; this was completed by
2005~\cite{sD04,dM05b}. Although it is the primary formulation of
the constraint equations actually used in numerical relativity,
the complete CMC solution theory for compact manifolds with an
exterior boundary that models AF behavior, and interior trapped
surface boundaries that model apparent horizons, was developed
only recently~\cite{HMT13a}. Results for existence of solutions
for non-constant mean extrinsic curvature, but under the
assumption that the mean extrinsic curvature was nearly constant
(the \emph{near-CMC} case), began to appear in
1996~\cite{jIvM96,jIjP97,yCBjIjY00,pAaCjI07}; these were
essentially the only non-CMC results through 2008.

The first true \emph{non-CMC} existence results, without any
smallness requirement on $\tau$, began to appear in
2008~\cite{HNT07a,HNT07b,dM09}. The analysis techniques first
developed and refined in~\cite{HNT07a,HNT07b,dM09} for closed
manifolds were intensively studied and extended to a number of
other cases over the last five years, including compact manifolds
with boundary~\cite{HMT13a,51}, AF manifolds without interior
boundaries~\cite{24}, and AF manifolds with inner trapped surface
boundary conditions that model apparent horizons~\cite{HoMe14a}.
A variation of the fixed-point analysis
from~\cite{HNT07a,HNT07b,dM09} was developed in~\cite{52}, which
builds on the framework to construct an associated \emph{limit
equation} and has led to a different class of non-CMC-type
results~\cite{54,57,55,53}. One of the initially alarming
implications of the topological fixed-point arguments introduced
in~\cite{HNT07a,HNT07b,dM09} was the lack of uniqueness results,
which had always been available in the CMC case. Rather than
being a limitation in the techniques, this now appears to be
\emph{generic} when far-from-CMC data is encountered, and has
even been explicitly demonstrated~\cite{61}. Moreover, analytic
bifurcation analysis has now also been done for some versions of
the LCBY system, and the existence of a quadratic fold with
respect to certain parameterizations has now been established
using those techniques~\cite{HoMe12a} (see also the recent
related work~\cite{65}). 
An important recent development is the new method introduced in~\cite{56,58},
which makes use of the Implicit Function Theorem to prove
existence of non-CMC solutions to the LCBY system.
This approach allows for the use of classical techniques in bifurcation
theory for analyzing multiplicity of solutions, similar to the
approach taken in~\cite{HoMe12a}.
A second important development in the non-CMC theory of the LCBY system has
been the analysis~\cite{30} of the somewhat hidden underlying
structure that is common to the primary variations of the
conformal method, including the original CMC
formulation~\cite{aL44,yC-B58,jY71,jY72}, the LCBY
formulation~\cite{nOMjY74,nOMjY74a}, and the conformal thin
sandwich formulation~\cite{jY99,PY03}. The analysis in~\cite{30}
shows that the standard conformal method and the conformal thin
sandwich method are in fact the same; in addition to allowing for
the immediate transfer of known results for one method to the
other method, further analysis of the structure has led to a much
deeper understanding of the shortcomings of the conformal method
as a parameterization of initial data~\cite{64}.


In this article, our goal is to tackle one of the remaining open
questions with the LCBY system: The existence of rough non-CMC
solutions to the LCBY problem on AF manifolds without near-CMC
assumptions. Using the overall non-CMC fixed-point framework
developed for the closed case in~\cite{HNT07a,HNT07b,dM09}, but
now developed in the setting of the function spaces that are
relevant in the AF case, we combine {\em a priori} estimates for
the individual Hamiltonian and momentum constraints, barrier
constructions for the Hamiltonian constraint,
Fredholm-Riesz-Schauder theory for the momentum constraint,
together with a topological fixed-point argument for the coupled
system, to establish existence of coupled non-CMC weak solutions
for AF manifolds. As was the case for the earlier 2009 rough
solutions results for closed manifolds~\cite{HNT07b,dM09}, and
for the more recent 2014 rough solutions results of Holst, Meier,
and Tsogtgerel for compact manifolds with boundary~\cite{HMT13a},
our results here avoid the near-CMC assumption by assuming that
the freely specifiable part of the data given by the
traceless-transverse part of the rescaled extrinsic curvature and
the matter fields are sufficiently small. Using a coupled
topological fixed-point argument that avoids near-CMC conditions,
we establish existence of coupled non-CMC weak solutions for AF
manifolds of class $W^{s,p}_{\delta}$ (or $H^{s,p}_{\delta}$)
where $p\in (1,\infty)$, $s\in(1+\frac{3}{p},\infty)$,
$-1<\delta<0$, with metric in the positive Yamabe class. The
non-CMC rough solution results here for AF manifolds may be
viewed as an extension of the 2009 and 2014 results on rough
far-from-CMC positive Yamabe solutions for closed and compact
manifolds with boundary to the case of AF manifolds. Similarly,
our results may be viewed as extending the recent 2014 results
for AF manifolds of Dilts, Isenberg, Mazzeo and Meier~\cite{24};
while their results are restricted to smoother background metrics
and data, the results here allow the regularity to be extended
down to the maximum allowed by the background metric and the
matter, further completing the rough solution program initiated
initiated by Maxwell in~\cite{dM05,dM06} (see also~\cite{yCB04}),
and thus further extending the known solution theory for the
Einstein constraint equations.

{\bf\em A Brief Remark Concerning the Results Contained in the Paper.}
Along the way to proving the main existence result in the paper,
we will need to assemble a number of new supporting technical results;
we include some of these results in the main body of the paper when
needed to maintain the flow of an argument, whereas it was possible to
place other supporting results into the included appendices without
damaging the flow of the main arguments.
One of the technical results we need, which is not available in the
literature, concerns cases of multiplication properties of functions
in weighted spaces.
While the limited version of the result needed for this paper is included,
this has led to a second project on establishing some multiplication lemmas
that are not yet in the literature.
These results will appear in~\cite{35}.
Lastly, note that we have included the complete bootstrapping
argument that has been only outlined in prior articles (including
some of our own) for obtaining the higher-smoothness results from
the rough results.
This argument is in fact somewhat non-trivial, and we felt that it should
be included somewhere in the literature on the conformal method.

{\bf\em Outline of the Paper.}
An extended outline of the remainder of the paper is as follows.

In Section~\ref{sec:prelim}, we give a brief overview of the
conformal method, and introduce notation that we use throughout
the paper. We summarize the conformal decomposition of Einstein
constraint equations introduced by Lichnerowicz and York, on an
AF manifold, and describe the classical strong formulation of the
resulting coupled elliptic system.

In Section~\ref{sec:weak}, we define weak formulations of the
constraint equations that will allow us to develop solution
theories for the constraints in the spaces with the weakest
possible regularity. In particular, we focus on one of two
possible weak formulations of the LCBY equations; a second
alternative, which has some advantages but which we do not use in
the main body of the paper, is described in
Appendix~\ref{app:weak2}.

In Section~\ref{sec:mom}, we study the momentum constraint in isolation
from the Hamiltonian constraint.
We develop some basic technical results for the momentum constraint
operator under the weakest possible assumptions on the problem data, including
existence of weak solutions to the momentum constraint, given the
conformal factor as data.

In Section \ref{sec:ham}, we study the individual Hamiltonian constraint.
We assume the existence of barriers (weak sub- and super-solutions) to the
Hamiltonian constraint equation forming a nonempty positive bounded interval,
and then derive several properties of the Hamiltonian constraint that are
needed in the analysis of the coupled system.
The results are established under the weakest possible assumptions on the
problem data.

In Section~\ref{sec:subsuper}, we develop a new approach for the
construction of global sub- and supersolutions for the Hamiltonian constraint
on AF manifolds.
In particular, we give constructions for both sub- and supersolutions in the
positive Yamabe case that have several key features, including:
(1) they are near-CMC free;
(2) they require minimal assumptions on the data in order to be
       used for developing rough solutions;
and
(3) they have appropriate asymptotic behavior to be compatible
       with an overall fixed-point argument for the coupled system.

Finally, in Section~\ref{sec:main} we develop our main results for the
coupled system.
In particular, we clearly state and then prove the main existence result
(Theorem~\ref{thm:main})
for rough positive Yamabe solutions to the constraint equations on
AF manifolds without near-CMC assumptions.

For ease of exposition, various supporting technical results are
given in several appendices as follows:
Appendix~\S\ref{app:spaces} -- construction of fractional order
Sobolev spaces on AF manifolds; Appendix~\S\ref{app:operators} --
{\em a priori} estimates and related results for elliptic
operators on AF manifolds; Appendix~\S\ref{app:covariance} --
artificial conformal covariance of the Hamiltonian constraint on
AF manifolds; Appendix~\S\ref{app:posyam} -- results on Yamabe
positive metrics on AF manifolds; Appendix~\S\ref{app:bessel} --
some remarks on the alternative use of Bessel Potential spaces;
and Appendix~\S\ref{app:weak2} -- an alternative weak formulation
of the LCBY system on AF manifolds that makes possible additional
results that are not developed in the paper.

\section{Preliminary Material}
   \label{sec:prelim}

We give a brief overview of the Einstein constraint equations and the
conformal method. A more detailed overview can be
found in \cite{HNT07b,rBjI04}.
Let $(\mathcal{M},g_{\mu\nu})$ be a 4-dimensional
globally hyperbolic spacetime, that is, $\mathcal{M}$ is a
4-dimensional smooth manifold, $g_{\mu\nu}$ is smooth, Lorentzian
metric on $\mathcal{M}$ with signature $(-,+,+,+)$ and
$\mathcal{M}$ admits a Cauchy surface (so it can be foliated by a
family of spacelike hypersurfaces). Let $\nabla_\mu$ be the
Levi-Civita connection associated with the metric $g_{\mu\nu}$.
The Einstein field equation is
\begin{equation*}
R_{\mu\nu}-\dfrac{1}{2}R g_{\mu\nu}=\kappa T_{\mu\nu},
\end{equation*}
where $T_{\mu\nu}$ is the stress-energy tensor, and $\kappa=8\pi
G/c^4$, with $G$ the gravitation constant and $c$ the speed of
light. The Ricci tensor is
$R_{\mu\nu}=R_{\mu\sigma\nu}{}^{\sigma}$ and
$R=R_{\mu\nu}g^{\mu\nu}$ is the Ricci scalar, where $g^{\mu\nu}$
is the inverse of $g_{\mu\nu}$, that is
$g_{\mu\sigma}g^{\sigma\nu}=\delta_{\mu}{}^{\nu}$.  The Riemann
tensor is defined by $R_{\mu\nu\sigma}{}^{\rho} w_{\rho}
=\big(\nabla_{\mu}\nabla_{\nu} -\nabla_{\nu}\nabla_{\mu}\bigr)
w_{\sigma}$, where $w_{\mu}$ is any 1-form on $M$.

The Einstein field equation allows a formulation as an initial
value problem. The metric is the fundamental variable and the
equations involve the second derivatives of the metric. Roughly
speaking, since the equation is of order two in time, in order to
solve the problem, we need initial data on the metric and on a
first order time derivative of the metric. In the case of a
globally hyperbolic spacetime, which supports a complete
foliation with space-like hypersurfaces parameterized by a scalar
time function, one can pick a constant time hypersurface of the
spacetime $\Sigma$ and then specify the initial data
$(g|_\Sigma=\hat{h}, \dfrac{\partial g}{\partial t}|_\Sigma\sim
\hat{k})$ on that hypersurface \cite{19}. The problem then
becomes that one is not allowed to freely specify the initial
conditions in that hypersurface; rather, the
Gauss-Codazzi-Menardi equations imply that the initial data
satisfy certain conditions which are known as \emph{constraint
equations}~\cite{20}. More precisely, we have the following
definition:
\begin{definition}
A triple $(M,\hat{h},\hat{k})$ is said to be an initial data set
for the Cauchy formulation of the Einstein field equations iff
$(M,\hat{h})$ is a $3$-dimensional smooth Riemannian manifold and $\hat{k}$
is a symmetric covariant tensor of order $2$ on $M$ such that
\begin{align*}
\hat{R}-|\hat{k}|_{\hat{h}}^2+(tr_{\hat{h}} \hat{k})^2&=2\kappa\hat{\rho},
     \qquad \textrm{(Hamiltonian constraint)} \\
\div_{\hat{h}} \hat{k}-d(tr_{\hat{h}}\hat{k})&=\kappa \hat{J},
     \qquad \ \ \textrm{(Momentum constraint)}
\end{align*}
where $\hat{R}$ is the scalar curvature of $\hat{h}$, and where
$\hat{\rho}$ is a non-negative scalar field and $\hat{J}$ is a 1
form on $M$, representing the energy and momentum densities of the
matter and non-gravitational fields, respectively. $\kappa$ is a
constant.
\end{definition}
The above equations are called the \textbf{Einstein constraint
equations}. Using any local frame we may write the above equations
as follows:
\begin{align*}
\hat{R}+(\hat{h}^{ab}\hat{k}_{ab})^2-\hat{k}_{ab}\hat{k}^{ab}&=2\kappa\hat{\rho},\\
\hat{\grad}^b (\hat{h}^{ac}\hat{k}_{ac})-\hat{\grad}_a
\hat{k}^{ab}&=-\kappa \hat{J}^b, \quad 1\leq b \leq 3.
\end{align*}
When the above equations hold, the manifold $M$ can be embedded
as a hypersurface in a 4-dimensional manifold corresponding to a
solution of the Einstein field equations, and the push forward of
$\hat{h}$ and $\hat{k}$ represent the first and second
fundamental forms of the embedded hypersurface. This leads to the
terminology \textbf{extrinsic curvature} for $\hat{k}$,
and \textbf{mean extrinsic curvature} for its trace
$tr_{\hat{h}}{\hat{k}}$.
If the source terms
are zero ($\hat{\rho}=0$, $\hat{J}=0$), the constraint equations
are called the
\textbf{vacuum} constraint equations.

A general statement of the problem we are interested in is as follows.

\textit{\textbf{The Initial Data Problem in GR:}}
\textit{Given a $3$-dimensional smooth manifold $M$, a scalar function
$\hat{\rho}$ and a vector valued function (or 1 form) $\hat{J}$,
find a Riemannian metric (a symmetric, positive definite
covariant tensor $\hat{h}$ of order $2$), and a symmetric
covariant tensor $\hat{k}$ of order 2, such that the triple
$(M,\hat{h},\hat{k})$ forms an initial data set for the Einstein
constraint equations (i.e., such that $(\hat{h},\hat{k})$
satisfies the constraint equations).}

\indent The constraint equations constitute an under-determined
system of equations (the number of unknowns is twelve, whereas the
number of equations is four).
In order to produce a unique solution we must specify certain unknowns
and then solve the constraint equations for the remaining unknowns.
To this end, we employ a standard reformulation
known as the \textbf{conformal transverse-traceless method},
introduced by Lichnerowicz, York, and O Murchadha \cite{aL44,jY73,nOMjY74}.
In this method, the initial data on $M$ is divided into two
sets: the \emph{Free (Conformal) Data}, and the
\emph{Determined Data},
such that given a choice of free data, the constraint
equations become a \textbf{determined} system to be solved for
the determined data \cite{rBjI04}.
There are several ways to do this; here we focus on the
``semi-decoupling split'', and examine briefly how the method works.

\begin{itemizeX}
\item \textbf{Step 1:} The original unknowns, $\hat{h}$ and $\hat{k}$,
each has six distinct components, therefore we have twelve unknowns.
We can decompose $\hat{k}_{ab}$ into the trace-free and the
pure trace parts:\\
\begin{align*}
\hat{k}^{ab}=\hat{s}^{ab}+\dfrac{1}{3}(tr_{\hat{h}}{\hat{k}})\hat{h}^{ab}.
\end{align*}
Clearly $tr_{\hat{h}}{\hat{s}}=0$.

\item \textbf{Step 2:} \textbf{Conformal rescaling.} Let
\begin{equation*}
\hat{h}_{ab}=\phi^{r} h_{ab}, \quad \hat{s}^{ab}=\phi^{s} s^{ab},
\quad tr_{\hat{h}}\hat{k}=\phi^{t} \tau,
\end{equation*}
where $r$, $s$, and $t$ are fixed but arbitrary integers. Note
that if $t=0$ then $\tau$ is the mean extrinsic curvature. We
denote the Levi-Civita connection for $h_{ab}$ by $\grad_{a}$. We
will assume $h_{ab}$ and $\tau$ are given (i.e we consider them
as free data) so we are left with $7$ unknowns (components of
$s_{ab}$ and $\phi$).

\item \textbf{Step 3:} \textbf{York decomposition.}
We begin by first defining the \emph{conformal
Killing operator}
$\mathcal{L}_h : \chi(M) \rightarrow
\tau^{0}_2(M)$ as follows:
\begin{equation*}
\mathcal{L}_h(W)= \grad^b W^a+\grad^a W^b-\dfrac{2}{3}(\div_h
W)h^{ab}\quad (\div_h W=\grad_c W^c).
\end{equation*}
Here $\chi(M)$ denotes the collection of vector fields on $M$ and
$\tau^{0}_2(M)$ is the collection of contravariant tensors of
order 2. The elements in the kernel of $\mathcal{L}_h$ are called
\emph{conformal Killing fields}. In the case where the background
metric is clear from the context we may denote the conformal
Killing operator by $\mathcal{L}$ instead of $\mathcal{L}_h$. In
particular, in what follows $\grad$, $\mathcal{L}$ and $\div$ are
all taken with respect to the metric $h$. For closed manifolds
and AF manifolds, under mild conditions on the regularity
of $h$, one can show that if $\psi$ is a symmetric traceless
contravariant tensor of order $2$, then there exists $W\in
\chi(M)$, uniquely determined up to conformal Killing fields, such
that $\div(\mathcal{L} W)=\div \psi$ \cite{jI95,dM06}.
$\div \mathcal{L}$ is sometimes called \emph{vector Laplacian} and
is denoted by $\Delta_L$.
Therefore, there exists $W \in \chi(M)$ such that
\begin{equation*}
\Delta_L W :=\div(\mathcal{L} W)=\div s \quad
(\grad_c(\mathcal{L} W)^{ac}=\grad_c s^{ac}).
\end{equation*}
Now define $\sigma^{ab}:= s^{ab}-(\mathcal{L} W)^{ab}$.
Clearly, $\div \sigma=0$. It is easy to
check that $\sigma$ is trace-free as well. So in fact $\sigma$ is
a \emph{transverse-traceless} tensor.

\item \textbf{Step 4:} We assume $\sigma^{ab}$ is given, i.e, we will
consider it as part of the free data; now we are left with four
unknowns (components of the vector field $W^a$ and the scalar
function $\phi$).

\end{itemizeX}

Therefore, the set of free (conformal) data consists of a
background Riemannian metric $h$, a transverse-traceless
symmetric tensor $\sigma$, and a function $\tau$. The set of
determined data consists of a positive function $\phi$ and a
vector field $W$. The transformed system consists of the
\emph{Lichnerowicz-Choquet-Bruhat-York (LCBY) equations}. For the
semi-decoupling split we set $r=4$, $s=-10$, $t=0$. When energy
and momentum densities of matter and non-gravitational fields are
present, one also takes $\rho=\phi^8 \hat{\rho}$ and
$J^b=\phi^{10} \hat{J}^b$.

\medskip
\noindent
\textbf{The conformal formulation of the Einstein constraint equations.}
Applying the conformal method by following \textbf{Steps~1--4} above,
one produces a coupled nonlinear elliptic system for the unknown
conformal factor $\phi \in C^{\infty}(M)$ and $W \in \chi(M)$:
\begin{align}
-8\Delta\phi+R\phi+\dfrac{2}{3}\tau^2\phi^5-[\sigma_{ab}+(\mathcal{L}W)_{ab}][\sigma^{ab}+(\mathcal{L}W)^{ab}]\phi^{-7}&=2\kappa
\rho \phi^{-3},
\label{eqn:ham} \\
-\grad_a (\mathcal{L}W)^{ab}+\dfrac{2}{3}\phi^6\grad^b\tau
&=-\kappa J^b,
\label{eqn:mom}
\end{align}
where the first equation~\eqref{eqn:ham} is referred to as the
\textbf{conformal formulation of the Hamiltonian constraint},
and the second equation~\eqref{eqn:mom} is referred to as the
\textbf{conformal formulation of the momentum constraint}.
In the vacuum case, the right-hands sides of both equations vanish.

In order to give a complete and well-defined mathematical
formulation of the problem we study here, we begin by setting
\begin{equation*}
F(\phi,W)=a_R\phi+a_\tau\phi^5-a_W\phi^{-7}-a_\rho\phi^{-3},\quad
\mathbb{F}(\phi)=b_\tau \phi^6+b_J,
\end{equation*}
where
\begin{align*}
&b_\tau^b=(2/3)\grad^b \tau ,\quad b_J^b=\kappa J^b, \quad a_R=R/8,\quad a_\tau=\tau^2/12,\\
&a_\rho=\kappa\rho/4, \quad
a_W=[\sigma_{ab}+(\mathcal{L}W)_{ab}][\sigma^{ab}+(\mathcal{L}W)^{ab}]/8.
\end{align*}
The \textbf{classical formulation} of the LCBY equations can be
stated as follows.
\begin{classicalf}
Given smooth functions $\tau$ and $\rho$, rank $2$
transverse-traceless tensor field $\sigma$, and vector field $J$
on the smooth $3$-dimensional Riemannian manifold $(M,h)$, find a
scalar field $\phi>0$ in $C^{\infty}(M)$ and a vector field $W$
in $\chi(M)$ such that
\begin{align*}
-\Delta \phi+F(\phi,W)&=0,\\
-\Delta_L W+\mathbb{F}(\phi)&=0.
\end{align*}
\end{classicalf}
As motivated clearly in the introduction, our goal in this
article is to provide an answer to the question of existence of
non-CMC solutions in the case of AF manifolds with very low
regularity assumptions on the data. Our approach follows closely
that taken in~\cite{HNT07b}, and is based on the following
fundamental ideas:
\begin{itemizeX}
\item \textbf{Abstract interpretation of the differential equation}: We
interpret any PDE as an equation of the form $Au=f$ where $A$ is
an operator between suitable function spaces. In this view, the
existence of a unique solution for all $f$ is equivalent to $A$
being bijective. This abstract interpretation allows one to employ
a number of general results from linear and nonlinear analysis.
\item \textbf{Conformal covariance of the Hamiltonian constraint}: The basic idea is that in the study of existence of solutions to the
Hamiltonian constraint, we have some sort of freedom in the choice
of the background metric $h$. Note that the coefficient $a_W$ is
the only part of the Hamiltonian constraint that depends on the
solution of the momentum constraint. Let's consider the
individual Hamiltonian constraint by assuming that $a_W$ is given
as data. Clearly, the Hamiltonian constraint depends on the
background metric $h$: the differential operator in the
Hamiltonian constraint is the Laplacian which is defined using
$h$, and also the scalar curvature $R$ is with respect to $h$.
An important question that one may ask is ``does the existence of
solution depend on the background metric $h$''?
More specifically,
if $h$ and $\tilde{h}$ are two conformally equivalent metrics,
does the existence of solution for $\tilde{h}$ imply the existence
of solution for $h$? The answer for the general ``non-CMC'' case is,
unfortunately, a resounding ``NO'' \cite{30}.
However, the situation is not completely hopeless.
We examine this at length in Appendix~\ref{app:covariance}, and we show that
one can \emph{artificially} define $a_W$, $a_\rho$ and other
coefficients in the Hamiltonian constraint with respect to the new
conformally equivalent metric in such a way that some sort of
connection is made between the two equations.
This generalized type of conformal covariance is enough for our purposes here.
This should not be confused with the genuine (geometric) conformal covariance
that is true for the CMC case, and is discussed in \cite{30}.
For both CMC case and non-CMC case (as discussed in
Appendix~\ref{app:covariance}),
in the study of existence of solutions to the Hamiltonian constraint,
one may perform a conformal transformation and use a metric in the conformal
class whose scalar curvature has ``nice'' properties.
This is exactly why the
\textbf{Yamabe classes} play an important role in the study of
constraint equations.
\item \textbf{Fredholm alternative:} If $A$ is a ``nice'' linear operator
(in this context, meaning Fredholm of index zero),
then uniqueness implies existence.
\item \textbf{Maximum Principle:} A linear operator $A$ satisfies the
maximum principle if $Au\leq 0$ implies $u\leq0$ in some suitable pointwise
sense.
If $A$ satisfies the
maximum principle then the solution of $Au=f$ (if it exists) is
unique.
\item \textbf{Sub- and Supersolutions and A Priori Estimates:}
Consider the equation $-\Delta
\phi+G(\phi)=0$ where $G$ is a given function. Functions
$\phi_{+}$ and $\phi_{-}$ satisfying
\begin{equation*}
-\Delta \phi_{+}+G(\phi_{+})\geq 0,\quad -\Delta
\phi_{-}+G(\phi_{-})\leq 0
\end{equation*}
are called a \emph{supersolution} and \emph{subsolution}, respectively.
One can show that under certain conditions the existence of super- and
subsolutions implies the existence of a solution $\phi$ to the PDE.
\item \textbf{Fixed Point Theorems:}
(in particular the contraction mapping and Schauder theorems)
We may reduce the problem of existence of solutions to the problem of
existence of fixed points of suitably defined operators.
\item \textbf{The Implicit Function Theorem:}
Although we do not use the implicit function theorem in this paper, it is important
to know that the implicit function theorem can be used in several
different ways to prove existence of solutions. For instance in
\cite{yCBjIjY00} this theorem has been used to prove the existence of
solutions of the coupled constraint equations near a given one.
Also the ``Continuity Method'', which for instance is used in
\cite{6} to study the constraint equations, usually makes use of
the implicit function theorem. The basic idea of the continuity
method is as follows: Let $\Phi (u)= 0$ be the equation to solve.
The Continuity Method consists of the following three steps
\cite{Aubin82,Gilbarg-Trudinger}:
\begin{itemizeXX}
\item \textbf{Step 1:} Find a continuous family of functions
$\Phi_\tau$ with $\tau \in [0,1]$, such that $\Phi_1 (u)=\Phi
(u)$ and $\Phi_0 (u)= 0$ is a known equation which has a solution
$u_0$.
\item \textbf{Step 2:} Prove that the set $J= \{\tau\in [0,1] : \Phi_\tau (u)=0 \, \textrm{has a
solution}\}$ is open. To show this, the Implicit Function Theorem is
typically used.
\item \textbf{Step 3:} Prove that the set $J$ is closed.
\end{itemizeXX}
Therefore $J$ is a nonempty subset of $[0,1]$ that is both open
and closed. This means $J=[0,1]$ and in particular $1\in J$.
\end{itemizeX}
The main difficulty is in finding the appropriate function spaces
as the domain and codomain of the differential operator $A$,
and ensuring that by using those function spaces we are allowed to
apply the maximum principle, Fredholm theory, fixed point
theorems, and so forth.
For elliptic equations on the whole space
 $\mathbb{R}^n$ (and also for AF manifolds), the appropriate
spaces are weighted Sobolev spaces.
To make for a reasonably self-contained article, a summary of the main
properties of weighted Sobolev spaces, and differential operators between
such spaces, has been included in Appendices A and B.

We note that although the situation that we study in this article
is more complicated, the main ideas which are employed to prove
the theorems, mostly follow those which have been used in
\cite{HNT07b} (non-CMC case on closed manifolds) and
\cite{2,dM06} (CMC case on AF manifolds).\\\\
\textbf{Notation.} Throughout this paper we use the standard
notations for Sobolev spaces. See Appendix~\S\ref{app:spaces} for
a summary of the standard notation we use here for norms. We use
the notation $A \preceq B$ to mean $A\leq cB$ where $c$ is a
positive constant that does not depend on the non-fixed parameters
appearing in $A$ and $B$.

\section{Weak Formulation on Asymptotically Flat Manifolds}
   \label{sec:weak}

First let us precisely define what we mean by an asymptotically
flat manifold.
\begin{definition}\lab{defAE}
Let $M$ be an $n$-dimensional smooth connected oriented manifold
and let $h$ be a metric on $M$ for which $(M,h)$ is complete. Let
$E_r=\{x\in \mathbb{R}^n : |x|> r\}$. We say $(M,h)$ is
asymptotically flat (AF) of class $W^{s,p}_{\delta}$ (where
$s\geq 0$, $p\in(1,\infty)$, and $\delta<0$) if
\begin{enumerate}
\item $h\in W^{s,p}_{loc}$.
\item There is a finite collection $\{U_i\}_{i=1}^m$ of open sets
of $M$ and diffeomorphisms $\phi_i: U_i \rightarrow E_1$ such that
$M\setminus (\cup_{i=1}^m U_i)$ is compact.
\item There exists a constant $\vartheta\geq 1$ such that
for each $i$
\begin{equation*}
\forall\, x\in E_1 \,\, \forall y\in \mathbb{R}^n \quad
\vartheta^{-1}|y|^2\leq ((\phi_i^{-1})^{*}h)_{rs}(x)y^r y^s\leq
\vartheta |y|^2. \quad (\textrm{see Remark
\ref{remextraconditionaf}})
\end{equation*}
\item There exists a positive constant $\omega$ such that for each $i$, $(\phi_i^{-1})^{*}h-\omega\bar{h}\in
W^{s,p}_{\delta}(E_1)$, where $\bar{h}$ is the Euclidean metric.
\end{enumerate}
The charts $(U_i,\phi_i)$ are called \emph{end charts}, and the
corresponding coordinates are called \emph{end coordinates}.
\end{definition}
Our goal is to come up with a weak formulation of LCBY
conformally rescaled Einstein constraint equations in order to
accommodate nonsmooth data on a $3$-dimensional AF manifold
$(M,h)$ of class $W^{s,p}_{\delta}$. There are at least two
different general settings where the LCBY equations are
well-defined with rough data; one of them is described in this
section and the other is discussed in Appendix~\ref{app:weak2}.
In both settings it is assumed that the AF manifold is of class
$W^{s,p}_{\delta}$ where $s>\frac{3}{p}$ (and of course
$p\in(1,\infty)$, $\delta<0$). So by Corollary
 \ref{coroA1},  $W^{s,p}_{\delta}$ is a Banach algebra and
$W^{s,p}_{\delta}\hookrightarrow C_{\delta}^{0}\hookrightarrow
L^{\infty}_{\delta}$. The framework that is described in
Appendix~\ref{app:weak2}
(which we refer to as \textbf{Weak Formulation~2}) only works for
$s\leq 2$, but the framework that is described in this section
(which we refer to as \textbf{Weak Formulation~1}) works for all
$s > 3/p$ with $p \in (1,\infty)$, even when $s>2$.
(However, as we explain at some length in Appendix~\ref{app:weak2},
\textbf{Weak Formulation~2} is not simply a special case of
\textbf{Weak Formulation 1}.)

Note that if $(M,h)$ is a $3$-dimensional AF manifold of class
$W^{s,p}_{\delta}$ and if $u\in W^{s,p}_{\delta}$ is a positive
function, then $(M, u^{4}h)$ is not asymptotically flat of class
$W^{s,p}_{\delta}$ (item (4) in Definition~\ref{defAE} is not satisfied).
However, if $u$ is a
positive function such that $u-\mu\in W^{s,p}_{\delta}(M)$ for
some positive constant $\mu$, then $(M, u^{4}h)$ is also AF of
class $W^{s,p}_{\delta}$. Indeed,
\begin{equation*}
u-\mu\in W^{s,p}_{\delta} \Rightarrow u-\mu\in W^{s,p}_{loc}
\Rightarrow u\in W^{s,p}_{loc} \Rightarrow u^{4}h\in
W^{s,p}_{loc} \quad (\textrm{$W^{s,p}_{loc}$ is an algebra}).
\end{equation*}
In addition, $u-\mu\in W^{s,p}_{\delta}$ implies that $u$ is bounded and
$\inf u>0 $ (see Remark \ref{rem1}; note that $u$ is a positive function).
Thus, there exists a positive number $\zeta$ such that
$\zeta^{-1}<u^4<\zeta$. Consequently for each $i$,
$(\phi_i^{-1})^{*}u^4=u^4\circ \phi_i^{-1}$ is between
$\zeta^{-1}$ and $\zeta$ which subsequently implies that
\begin{equation*}
\forall\, x\in E_1 \,\, \forall y\in \mathbb{R}^n \quad (\zeta
\vartheta)^{-1}|y|^2\leq ((\phi_i^{-1})^{*}(u^4h))_{rs}(x)y^r
y^s\leq (\zeta \vartheta) |y|^2.
\end{equation*}
Finally, since $(M,h)$ is AF of class $W^{s,p}_{\delta}$, there
exists a constant $\omega$ such that
$(\phi_i^{-1})^{*}h-\omega^{4}\bar{h}\in W^{s,p}_{\delta}(E_1)$;
if we let $v=u-\mu$ and $f(x)=(\mu+x)^4$, then for each $1\leq i
\leq m$ (end coordinates) we have
\begin{align*}
(\phi_i^{-1})^{*}(u^4
h)-(\mu\omega)^{4}\bar{h}
&=(u^4\circ\phi_i^{-1})(\phi_i^{-1})^{*} h-(\mu\omega)^{4}\bar{h}
\\
&=(\mu+v\circ\phi_i^{-1})^4
(\phi_i^{-1})^{*} h-(\mu\omega)^{4}\bar{h}
\\
&=f(v\circ\phi_i^{-1})
(\phi_i^{-1})^{*}h-(\mu\omega)^{4}\bar{h}
\\
&=f(v\circ\phi_i^{-1})((\phi_i^{-1})^{*}h-\omega^{4}\bar{h})
     +(\omega^{4}f(v\circ\phi_i^{-1})-(\mu\omega)^{4})\bar{h}.
\end{align*}
Since
$(\phi_i^{-1})^{*}h-\omega^{4}\bar{h}\in W^{s,p}_{\delta}(E_1)$,
by Lemma \ref{lempA1} the first term on the right is in
$W^{s,p}_{\delta}(E_1)$. Also as a direct consequence of Corollary
\ref{corpA1}, the second term on the right is in
$W^{s,p}_{\delta}(E_1)$.

In the LCBY equations, $\phi>0$ is the conformal factor, so
assuming $(M,h)$ is a $3$-dimensional AF manifold of class
$W^{s,p}_{\delta}$, by what was mentioned above, it seems
reasonable to let $\phi=\psi+\mu$ (so $\psi>-\mu$) where $\psi \in
W^{s,p}_{\delta}$ and $\mu$ is an arbitrary but fixed positive
constant (we have freedom in choosing the constant
$\mu$).

We can write the Hamiltonian constraint in terms of
$\psi$ as:
$$
-\Delta \psi+f(\psi,W)=0,
$$
where
\begin{align*}
f(\psi,W)&=F(\phi,W) \\
   &=a_R\phi+a_\tau\phi^5-a_W\phi^{-7}-a_\rho\phi^{-3}\\
   &=a_R(\psi+\mu)+a_\tau(\psi+\mu)^5-a_W(\psi+\mu)^{-7}-a_\rho(\psi+\mu)^{-3}.
\end{align*}
Since $\psi \in W^{s,p}_{\delta}$, we want to be able to extend
$-\Delta:C^{\infty}\rightarrow C^{\infty}$ to an operator $A_L:
W^{s,p}_{\delta}\rightarrow W^{s-2,p}_{\delta-2}$. As discussed in
Appendix~\ref{app:operators}, since $-\Delta\in
D^{s,p}_{2,\delta}$ (See Definition \ref{defellipticoperator}),
by the extension theorem (Theorem \ref{thmB1}), the only extra
assumption needed to ensure the above extension is possible is
$s\geq 1$. Indeed, according to Theorem \ref{thmB1}, we must
check the following conditions (following the numbering used in
Theorem \ref{thmB1}):
\begin{center}
\begin{tabular}{rll}
   (i) & $p\in (1, \infty)$,  & (true by assumption) \\
  (ii) & $s\geq 2-s$,         & (so need to assume $s\geq 1$) \\
 (iii) & $s-2\leq s-2$,       & (trivially true) \\
  (iv) & $s-2 < s-2 + s-\frac{3}{p}$,
                              & (since $s> \frac{3}{p}$) \\
   (v) & $s-2-\frac{3}{p}\leq s-\frac{3}{p}-2$,
                              & (trivially true) \\
  (vi) & $s-\frac{3}{p}> 2-3-s+\frac{3}{p}$.
                              & (since $s> \frac{3}{p}$)
\end{tabular}
\end{center}

\noindent
{\bf Framework 1:}\\
In this framework we look for $W$ in $\textbf{W}^{e,q}_{\beta}$
where $\beta<0$. For the momentum constraint to be well-defined,
we need to ensure that
\begin{align}
& \mbox{The operator: } -\Delta_L \colon
  \mathbf{C}^{\infty}\to \mathbf{C}^{\infty}
  \mbox{ can be extended to }
  \mathcal{A}_L\colon \mathbf{W}^{e,q}_{\beta}\to
  \mathbf{W}^{e-2,q}_{\beta-2},
\label{eqn:condition_1}
\\
& \mbox{It holds that: }
  b_\tau (\psi+\mu)^6+b_J \in \mathbf{W}^{e-2,q}_{\beta-2}.
\label{eqn:condition_2}
\end{align}
The vector Laplacian belongs to the class $D^{s,p}_{2,\delta}$
(See Definition \ref{defellipticoperator}).
Therefore, by Theorem \ref{thmB1},
in order to ensure that condition~\eqref{eqn:condition_1} holds true,
it is enough to require $e$ and $q$ satisfy the following conditions
(again, the numbering below corresponds to numbering in Theorem \ref{thmB1}):
\begin{center}
\begin{tabular}{rll}
   (i) &  $q\in (1,\infty)$, & \\
  (ii) &  $e> 2-s$,       & \\
 (iii) &  $e-2\leq \min\{e,s\}-2,
             \ p\leq q \mbox{ if } e=s\not\in\mathbb{N}_0$,
           & (in particular, need $e\leq s$) \\
  (iv) &  $e-2< e-2+s-\frac{3}{p}$,
           & (holds by assumption $s>\frac{3}{p}$) \\
   (v) &  $e-2-\frac{3}{q}\leq s-\frac{3}{p}-2$,
           & (must assume $e\leq s+\frac{3}{q}-\frac{3}{p}$) \\
  (vi) &  $e-\frac{3}{q}> 2-3-s+\frac{3}{p}$.
           & (must assume $e> -s+\frac{3}{p}-1+\frac{3}{q}$)
\end{tabular}
\end{center}
Combining these constraints, we see it is enough to have
\begin{align*}
&q\in (1,\infty),\\
&e\in (2-s,s]\cap (-s+\frac{3}{p}-1+\frac{3}{q},
s+\frac{3}{q}-\frac{3}{p}].
    \quad (p=q \mbox{ if } e=s\not \in \mathbb{N}_0)
\end{align*}
Note that in case $e=s \not \in \mathbb{N}_0$ we need to assume
$p\leq q$, which together with the inequality $s=e\leq
s+\frac{3}{q}-\frac{3}{p}$ justifies the assumption $p=q$ in this
case.

In order to ensure that condition~\eqref{eqn:condition_2} holds true,
it is enough to make the extra assumptions that $\tau$ is given in
$W^{e-1,q}_{\beta-1}$ and $J$ is given in
$\textbf{W}^{e-2,q}_{\beta-2}$. Indeed, note that $\tau \in
W^{e-1,q}_{\beta-1}$ implies $b_{\tau}\in
\textbf{W}^{e-2,q}_{\beta-2}$. Since $\psi \in W^{s,p}_{\delta}$,
it follows from Lemma \ref{lempA1} that $b_\tau (\psi+\mu)^6 \in
\textbf{W}^{e-2,q}_{\beta-2}$; Lemma \ref{lempA1} can be applied
since
(numbering corresponds to the numbering of conditions in Lemma \ref{lempA1}):
\begin{center}
\begin{tabular}{rll}
  (i) & $e-2 \in (-s,s)$,
             & (since $e\in (2-s,s]$) \\
 (ii) & $e-2-\frac{3}{q}\leq s-\frac{3}{p}$,
             & (since $e\leq s+\frac{3}{q}-\frac{3}{p}$) \\
      & $-3-s+\frac{3}{p}\leq e-2-\frac{3}{q}$.
             & (since $e> -s+\frac{3}{p}-1+\frac{3}{q}$)
\end{tabular}
\end{center}
\emph{In summary, for the momentum constraint to be well-defined,
it is enough to make the following additional assumptions}:
\begin{equation*}
q\in (1,\infty),\quad e\in (2-s,s]\cap
(-s+\frac{3}{p}-1+\frac{3}{q}, s+\frac{3}{q}-\frac{3}{p}],\quad
\tau \in W^{e-1,q}_{\beta-1},\quad J\in
\textbf{W}^{e-2,q}_{\beta-2}.
\end{equation*}
Of course, we let $p=q$ if $e=s \not \in \mathbb{N}_0$,
and the base assumptions hold as well
($s\geq 1,\, p\in(1,\infty),\delta,\beta<0,\, s>\frac{3}{p}$).
Note that for $(2-s,s]$ to be nonempty, in fact we need $s>1$.

Finally, we now consider the Hamiltonian constraint.
Note that $W \in
\textbf{W}^{e,q}_{\beta}$ and so that $\mathcal{L}W\in W^{e-1,q}_{\beta-1}$.
For $a_W=\frac{1}{8}|\sigma+\mathcal{L}W|^2$ to be well-defined,
it is enough to assume $\sigma \in W^{e-1,q}_{\beta-1}$.
Recall that $A_L$ is a well-defined operator from $W^{s,p}_{\delta}$ to
$W^{s-2,p}_{\delta-2}$.
If we set $\eta=\max\{\beta, \delta\}$,
then $W^{s-2,p}_{\delta-2}\hookrightarrow W^{s-2,p}_{\eta-2}$.
In fact, $A_L$ can be considered as an operator from
$W^{s,p}_{\delta}$ to $W^{s-2,p}_{\eta-2}$ where $\eta=\max\{\beta, \delta\}$.
Consequently, for the Hamiltonian constraint to be well-defined,
we need to have
\begin{equation*}
f(\psi,W)=a_R(\psi+\mu)+a_\tau(\psi+\mu)^5-a_W(\psi+\mu)^{-7}-a_\rho(\psi+\mu)^{-3}\in
W^{s-2,p}_{\eta-2}.
\end{equation*}
One way to guarantee that the above statement holds true is to
ensure that
\begin{equation*}
a_\tau, a_\rho, a_W\in W^{s-2,p}_{\beta-2}, \quad a_R \in
W^{s-2,p}_{\delta-2},
\end{equation*}
and then show that if $f$ is smooth on $(-\mu,\infty)$, $u\in
W^{s,p}_{\delta}$, and $v\in W^{s-2,p}_{\eta-2}$ then $f(u)v \in
W^{s-2,p}_{\eta-2}$.
We claim that for above statements to be
true it is enough to make the following extra assumptions:
\begin{equation*}
e>1+\frac{3}{q},\quad e\geq s-1, \quad e\geq
\frac{3}{q}+s-\frac{3}{p}-1, \quad \rho\in W^{s-2,p}_{\beta-2}.
\end{equation*}
The details are as follows:
\begin{enumerate}
\item  If $f$ is smooth and $u\in W^{s,p}_{\delta}$,
$v\in W^{s-2,p}_{\eta-2}$ then $f(u)v \in W^{s-2,p}_{\eta-2}$.\\
By Lemma \ref{lempA1}, we just need to check the following
(the numbering matches that of the conditions in Lemma~\ref{lempA1}):
\begin{center}
\begin{tabular}{lll}
  (i) & $s-2\in (-s,s)$, & (since $s> 1$) \\ 
 (ii) & $s-2-\frac{3}{p}
   \in [-3-s+\frac{3}{p},s-\frac{3}{p}]$. & (since $s>\frac{3}{p}$)
\end{tabular}
\end{center}
This shows that $f(u)v \in W^{s-2,p}_{\eta-2}$.
\item $a_\tau=\frac{1}{12} \tau^2$.\\
We want to ensure $a_\tau \in W^{s-2,p}_{\beta-2}$. Note that
$\tau \in W^{e-1,q}_{\beta-1}$; since $e-1>\frac{3}{q}$,
$W^{e-1,q}_{\beta-1}\times W^{e-1,q}_{\beta-1}\hookrightarrow
W^{e-1,q}_{2\beta-2}$ (see Corollary \ref{coroA2}). Therefore
$\tau^2\in W^{e-1,q}_{2\beta-2}$.
Thus we want to have
$W^{e-1,q}_{2\beta-2}\hookrightarrow W^{s-2,p}_{\beta-2}$. We
will see that because of the assumptions $e\geq s-1$ and $e\geq
\frac{3}{q}+s-\frac{3}{p}-1$ this embedding holds true.
We just
need to check that the assumptions of Theorem  \ref{thmA4} are
satisfied (numbering follows assumptions of Theorem \ref{thmA4}):
\begin{center}
\begin{tabular}{lll}
   (ii) & $e-1\geq s-2$, & (since $e\geq s-1$) \\
  (iii) & $e-1-\frac{3}{q}\geq s-2-\frac{3}{p}$,
          & (since $e\geq \frac{3}{q}+s-\frac{3}{p}-1$) \\
   (iv) & $2\beta-2<\beta-2$. & (since $\beta<0$)
\end{tabular}
\end{center}
\item $a_R=\frac{R}{8}$.\\
We want to ensure $a_R\in W^{s-2,p}_{\delta-2}$. Note that $h$
is an AF metric of class $W^{s,p}_{\delta}$ and $R$ involves the
second derivatives of $h$, so $R\in W^{s-2,p}_{\delta-2}$. We do
not need to impose any extra restrictions for this one.
\item $a_\rho=\kappa\rho/4$.\\
Clearly $a_\rho\in W^{s-2,p}_{\beta-2}$ iff $\rho\in
W^{s-2,p}_{\beta-2}$.
\item
$a_W=[\sigma_{ab}+(\mathcal{L}W)_{ab}][\sigma^{ab}+(\mathcal{L}W)^{ab}]/8$.\\
We want to ensure that $a_W\in W^{s-2,p}_{\beta-2}$.
Note that
$\mathcal{L}W, \sigma \in W^{e-1,q}_{\beta-1}$ and by our
restrictions on $e$, $W^{e-1,q}_{\beta-1}\times
W^{e-1,q}_{\beta-1}\hookrightarrow W^{e-1,q}_{2\beta-2}$ and
$W^{e-1,q}_{2\beta-2}\hookrightarrow W^{s-2,p}_{\beta-2}$.
Thus, we have
$a_W=\frac{1}{8}|\sigma+ \mathcal{L}W|^2\in
W^{e-1,q}_{2\beta-2}\hookrightarrow W^{s-2,p}_{\beta-2}$.
\end{enumerate}

We are finally in a position to give a well-defined weak formulation of the
Einstein constraint equations on AF manifolds with rough data, through the
use of {\bf Framework~1}.
(In Appendix~\ref{app:weak2},
we show how \textbf{Framework~2} leads to an alternative
weak formulation, leading to slightly different existence results.)
\begin{weakf}\lab{weakf2}
Let $(M,h)$ be a $3$-dimensional AF Riemannian manifold of class
$W^{s,p}_{\delta}$ where $p\in (1,\infty)$, $\beta, \delta<0$ and
$s\in (1+\frac{3}{p},\infty)$. Select $q$ and $e$ to satisfy
\begin{align*}
&\frac{1}{q}\in (0,1)\cap (0,\frac{s-1}{3})\cap [\frac{3-p}{3p},\frac{3+p}{3p}],\\
& e\in
(1+\frac{3}{q},\infty)\cap[s-1,s]\cap[\frac{3}{q}+s-\frac{3}{p}-1,\frac{3}{q}+s-\frac{3}{p}].
\end{align*}
 Let $q=p$ if $e=s \not \in \mathbb{N}_0$. Fix source functions:
\begin{equation*}
\tau \in W^{e-1,q}_{\beta-1},\quad \sigma \in
W^{e-1,q}_{\beta-1},\quad \rho \in W^{s-2,p}_{\beta-2} (\rho\geq
0),\quad J\in \textbf{W}^{e-2,q}_{\beta-2}.
\end{equation*}
Let $\eta=\max\{\beta, \delta\}$. Define
$f:W^{s,p}_{\delta}\times \textbf{W}^{e,q}_{\beta}\rightarrow W^{s-2,p}_{\eta-2}$
and
$\textbf{f}:W^{s,p}_{\delta}\rightarrow \textbf{W}^{e-2,q}_{\beta-2}$
by
\begin{align*}
&\quad f(\psi,W)=a_R(\psi+\mu)+a_\tau(\psi+\mu)^5-a_W(\psi+\mu)^{-7}-a_\rho(\psi+\mu)^{-3},\\
&\quad \textbf{f}(\psi)=b_\tau
(\psi+\mu)^6+b_J.
\end{align*}
Find $(\psi, W)\in W^{s,p}_{\delta}\times
\textbf{W}^{e,q}_{\beta}$ such that
\begin{align}\lab{eqweak1}
A_L \psi+f(\psi,W)=0, \\
\mathcal{A}_L W+\textbf{f}(\psi)=0.\lab{eqweak2}
\end{align}
\end{weakf}
\begin{remark}\lab{rem2.2}
We make the following observations regarding \textbf{Weak Formulation 1}.
\begin{itemizeXXM}
\item Since $s\geq 1$, the condition $e>1$ implies $e> 2-s$.
Therefore, we did not explicitly state the condition $e> 2-s$ in the
above formulation.
\item The condition $e>\frac{3}{q}+1$ together with $s>\frac{3}{p}$
imply that $e>-s+\frac{3}{p}-1+\frac{3}{q}$.
Therefore, we did not explicitly state the condition
$e>-s+\frac{3}{p}-1+\frac{3}{q}$
in the above formulation.
\item For $(1+\frac{3}{q},\infty)\cap[s-1,s]$ to be nonempty we
need to have $1+\frac{3}{q}<s$. This is why we have
$\frac{1}{q}\in(0,\frac{s-1}{3})$ in the weak formulation.
\item For $(1+\frac{3}{q},\infty)\cap
[\frac{3}{q}+s-\frac{3}{p}-1,\frac{3}{q}+s-\frac{3}{p}]$ to be
nonempty we need to have
$1+\frac{3}{q}<\frac{3}{q}+s-\frac{3}{p}$. That is,
$s>1+\frac{3}{p}$ (therefore, we did not need to explicitly state $s\geq 1$).
\item For
$[s-1,s]\cap[\frac{3}{q}+s-\frac{3}{p}-1,\frac{3}{q}+s-\frac{3}{p}]$
to be nonempty we need to have $\frac{1}{p}-\frac{1}{3}\leq
\frac{1}{q}\leq \frac{1}{p}+\frac{1}{3}$. That is,
$\frac{1}{q}\in[\frac{3-p}{3p},\frac{3+p}{3p}]$.
\end{itemizeXXM}
\end{remark}
\begin{remark}
\textbf{Our analysis in this paper is based on the weak
formulation described above}. In some of the theorems that
follow, for the claimed estimates to be true in the case $e\leq
2$ we will need to restrict the admissible space of $\tau$. In
those cases we will assume $\tau \in W^{1,z}_{\beta-1}$ where
$z=\frac{3q}{3+(2-e)q}$.
We note that $z$ has been chosen in this form to
ensure that $W^{1,z}_{\beta-1}\hookrightarrow W^{e-1,q}_{\beta-1}$
(and so $L^z_{\beta-2}\hookrightarrow W^{e-2,q}_{\beta-2}$).
Indeed, by Theorem \ref{thmA3}, for
$W^{1,z}_{\beta-1}\hookrightarrow W^{e-1,q}_{\beta-1}$ to hold
true we need to have (the numbering follows the assumptions in
Theorem \ref{thmA3}):
\begin{center}
\begin{tabular}{rll}
   (i) & $z\leq q$,   & \\
  (ii) & $1\geq e-1$, & (true for $e\leq 2$)\\
 (iii) & $1-\frac{3}{z}\geq e-1-\frac{3}{q}$. &
\end{tabular}
\end{center}
Now note that if we set $z=\frac{3q}{(2-e)q+3}$, then the first
condition and the third condition are both satisfied (for $e\leq
2$):
\begin{align*}
z &\leq q \Leftrightarrow \frac{3q}{(2-e)q+3}\leq q
\Leftrightarrow \frac{3}{(2-e)q+3}\leq 1 \Leftrightarrow 2-e\geq
0,\\
1-\frac{3}{z} &\geq e-1-\frac{3}{q}\Leftrightarrow
\frac{3}{z}\leq (2-e)+\frac{3}{q}\Leftrightarrow z\geq
\frac{3q}{(2-e)q+3}.
\end{align*}
\end{remark}

\section{Results for the Momentum Constraint}
   \label{sec:mom}

We now develop the main results will need for the momentum
constraint operator on AF manifolds with rough data.
\begin{theorem}\lab{thm1}
Let $(M,h)$ be a 3-dimensional AF Riemannian manifold of class
$W^{s,p}_{\delta}$ with $p\in (1,\infty)$, $\delta<0$ and $s\in
(\frac{3}{p},\infty)\cap (1,\infty)$. Select $q, e$ to satisfy:
\begin{equation}\lab{eqnice}
q\in (1,\infty),\quad e\in(2-s,s]\cap
(-s+\dfrac{3}{p}-1+\dfrac{3}{q},s-\dfrac{3}{p}+\dfrac{3}{q}].
\end{equation}
In case $e=s \not \in \mathbb{N}_0$, assume $q=p$. In case $e=s
\in \mathbb{N}_0, p>2, q<2$, assume $e>\frac{3}{q}-\frac{1}{2}$.
In case $e=s-\frac{3}{p}+\frac{3}{q}, p<2, q>2$ assume
$e>\frac{1}{2}$. Suppose $\beta\in (-1,0)$ and $b_J$ and $b_\tau$
and $\psi$ are such that $\textbf{f}(\psi)\in
W^{e-2,q}_{\beta-2}$ (in particular, we know that if we fix the
source terms $b_J$ and $b_\tau$ in $\textbf{W}^{e-2,q}_{\beta-2}$
and $\psi\in W^{s,p}_{\delta}$ then $\textbf{f}(\psi)\in
\textbf{W}^{e-2,q}_{\beta-2}$). Then $\mathcal{A}_L:
\textbf{W}^{e,q}_{\beta}\rightarrow \textbf{W}^{e-2,q}_{\beta-2}$
is Fredholm of index zero. Moreover if $h$ has no nontrivial
conformal Killing fields, then the momentum constraint
$\mathcal{A}_L W+\textbf{f}(\psi)=0$ has a unique solution $W\in
\textbf{W}^{e,q}_{\beta}$ with
\begin{equation*}
\|W\|_{\textbf{W}^{e,q}_{\beta}}\leq
C\|\textbf{f}(\psi)\|_{\textbf{W}^{e-2,q}_{\beta-2}},
\end{equation*}
where $C>0$ is a constant.
\end{theorem}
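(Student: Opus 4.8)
The plan is to establish the result in two stages: first the Fredholm property of $\mathcal{A}_L$, then the unique solvability and the quantitative estimate under the no-conformal-Killing-field hypothesis. For the first stage, the key observation is that the vector Laplacian $-\Delta_L = -\operatorname{div}\mathcal{L}$ belongs to the class $D^{s,p}_{2,\delta}$ of elliptic operators treated in Appendix~\ref{app:operators}, and the index conditions \eqref{eqnice} are precisely those recorded in the discussion of \eqref{eqn:condition_1} that allow Theorem~\ref{thmB1} to extend $-\Delta_L$ to a bounded operator $\mathcal{A}_L : \mathbf{W}^{e,q}_{\beta} \to \mathbf{W}^{e-2,q}_{\beta-2}$. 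First I would invoke the general Fredholm theory for such elliptic operators between weighted Sobolev spaces on AF manifolds (the analog of the McOwen--Bartnik results, cited in Appendix~\ref{app:operators}): for the weight parameter $\beta$ in the non-exceptional range $(-1,0)$ — which is exactly the admissible interval for the vector Laplacian between scalar weights $\beta-2$ and $\beta$ — the operator is Fredholm of index zero. The extra case-by-case hypotheses ($e > \frac{3}{q}-\frac12$ when $e=s\in\mathbb{N}_0$, $p>2$, $q<2$; $e>\frac12$ when $e = s-\frac3p+\frac3q$, $p<2$, $q>2$) are there precisely to keep the low-regularity coefficient multiplications in Lemma~\ref{lempA1} / the a priori estimates valid in the borderline embedding situations, and I would cite those lemmas to dispatch them.

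For the second stage, once $\mathcal{A}_L$ is Fredholm of index zero, it suffices to show injectivity; surjectivity and the bounded inverse then follow automatically, with the estimate $\|W\|_{\mathbf{W}^{e,q}_{\beta}} \le C\|\mathbf{f}(\psi)\|_{\mathbf{W}^{e-2,q}_{\beta-2}}$ coming from the open mapping theorem (the norm of $\mathcal{A}_L^{-1}$). To prove injectivity I would argue that any $W \in \mathbf{W}^{e,q}_{\beta}$ with $\mathcal{A}_L W = 0$ must be a conformal Killing field: by elliptic regularity (bootstrapping via the a priori estimates of Appendix~\ref{app:operators}) such a $W$ is smooth enough to justify the integration by parts $\int_M \langle \mathcal{L}W, \mathcal{L}W\rangle = -\int_M \langle \operatorname{div}\mathcal{L}W, W\rangle = 0$, where the boundary term at infinity vanishes because $\beta < 0$ forces the requisite decay of $W$ and $\mathcal{L}W$. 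Hence $\mathcal{L}W = 0$, i.e. $W$ lies in the kernel of the conformal Killing operator; the hypothesis that $h$ has no nontrivial conformal Killing fields then gives $W = 0$. Since $\mathbf{f}(\psi) \in \mathbf{W}^{e-2,q}_{\beta-2}$ by assumption (and Lemma~\ref{lempA1} applied to $b_\tau(\psi+\mu)^6$, exactly as in the well-definedness discussion preceding Weak Formulation~\ref{weakf2}), the equation $\mathcal{A}_L W + \mathbf{f}(\psi) = 0$ has the unique solution $W = -\mathcal{A}_L^{-1}\mathbf{f}(\psi)$.

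The main obstacle I anticipate is the justification of the integration-by-parts identity at the level of regularity $\mathbf{W}^{e,q}_{\beta}$ with $e$ possibly quite small and $q$ possibly far from $2$: one must check that $W$, $\mathcal{L}W$, and $\operatorname{div}\mathcal{L}W$ pair in $L^2$-duality with enough decay at the AF ends for the divergence theorem to hold with no contribution from infinity. This is where the constraint $\beta \in (-1,0)$ — rather than merely $\beta < 0$ — is essential, and where the exceptional-case hypotheses on $e$ do real work: they ensure the product $\langle \mathcal{L}W, \mathcal{L}W\rangle$ lands in a weighted $L^1$ space via the multiplication lemmas, so the identity is not merely formal. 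A secondary technical point is confirming that the admissible weight interval for the vector Laplacian acting between the scalar weights $\beta$ and $\beta-2$ is nonempty and contains our $\beta$; this is standard for AF manifolds once $\beta\in(-1,0)$, and I would cite the relevant result in Appendix~\ref{app:operators}. Everything else — boundedness of $\mathcal{A}_L$, membership of $\mathbf{f}(\psi)$ in the target space, the passage from injectivity to the estimate — is routine given the machinery already assembled.
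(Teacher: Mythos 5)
Your high-level architecture is correct and matches the paper's: establish that $\mathcal{A}_L$ is Fredholm of index zero (semi-Fredholm from the elliptic estimate via Proposition~\ref{propb1}, index zero by approximating the rough metric by smooth ones and using local constancy of the index), then reduce invertibility to injectivity of the kernel, and extract the estimate from the open mapping theorem. Step~1 of your plan is essentially identical to the paper's.

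However, your injectivity argument has a genuine gap, and it is not the technical point you flagged. You write that the boundary term at infinity vanishes ``because $\beta<0$ forces the requisite decay of $W$ and $\mathcal{L}W$.'' This is false: if $W\in\textbf{W}^{e,q}_{\beta}$ with, say, $e=1$, $q=2$, then $\mathcal{L}W\in L^2_{\beta-1}$, and for $\|\mathcal{L}W\|_{L^2}$ to be finite one needs $\beta-1\leq -3/2$, i.e., $\beta\leq -1/2$. The hypothesis $\beta\in(-1,0)$ does not give this. The paper resolves this by first invoking the kernel decay improvement of Lemma~\ref{lemb3} (any $W\in\ker\mathcal{A}_L$ with weight $\beta<0$ automatically lies in $\textbf{W}^{e,q}_{\beta'}$ for \emph{every} $\beta'\in(-1,0)$), and then choosing $\beta'\in(-1,-\frac12)$ before carrying out the pairing $\langle\mathcal{A}_LX,Y\rangle=\frac12\langle\mathcal{L}X,\mathcal{L}Y\rangle_{L^2}$. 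Without this bootstrap the integration by parts is not justified, and it is an essential ingredient you have omitted rather than a cosmetic technicality.

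Second, and more substantially, your plan asserts that once the identity holds the argument is complete, and treats the adaptation to general $(e,q)$ as ``routine.'' In fact the integration-by-parts argument only applies directly when $e=1$, $q=2$. The bulk of the paper's Step~2 — the definition of \emph{nice} and \emph{super nice} triples, the six cases, and the duality trick $(\mathcal{A}_L)_{e,q,\beta}^{*}=(\mathcal{A}_L)_{2-e,q',-1-\beta}$ combined with Fredholmness of index zero to pass from triviality of the adjoint kernel to triviality of the kernel — exists precisely to reduce arbitrary $(e,q,\beta)$ in the admissible range to the tractable $(1,2,\beta')$ case. When $e<1$, or when $q$ and $p$ straddle $2$, there is no direct embedding into $\textbf{W}^{1,2}_{\beta}$, and the adjoint trick is unavoidable. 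Relatedly, your account of the exceptional-case hypotheses ($e>\frac{3}{q}-\frac12$ when $e=s\in\mathbb{N}_0$, $p>2$, $q<2$; $e>\frac12$ when $e=s-\frac3p+\frac3q$, $p<2$, $q>2$) is off the mark: they are not there to make $\langle\mathcal{L}W,\mathcal{L}W\rangle$ integrable, but to guarantee the existence of a \emph{super nice} intermediate triple that permits the chain of embeddings or the duality reduction to Case~1 in those borderline cases. So the injectivity step as you propose it would succeed only in the special case $e=1$, $q=2$ (and even there only after the missing decay bootstrap), and you would need to supply the full reduction argument to cover the general hypotheses of the theorem.
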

\begin{remark}
In the above theorem the ranges for $e$ and $q$ are chosen so
that the momentum constraint is well-defined. Also note that for
$(2-s,s]$ to be a nonempty interval we had to assume that s is
\textbf{strictly} larger than $1$.
\end{remark}
\begin{remark}\lab{remconformalkilling}
There are important cases where the assumption that  ``$h$ has no
nontrivial conformal Killing fields'' is automatically satisfied.
For instance in \cite{dM06} it is proved that if $(M,h)$ is AF of
class $W^{s,2}_{\delta}$ with $s>\frac{3}{2}$ (and of course
$\delta<0$) and if $X\in W^{s,2}_{\rho}$ with $\rho<0$ is a
conformal Killing field, then $X$ vanishes identically. We do not
pursue this issue here, but interested readers may find more
information in \cite{dM06} and \cite{dM05b}.
\end{remark}
\begin{proof}{\bf (Theorem~\ref{thm1})}
The proof will involve three main steps.
\begin{itemizeX}
\item \textbf{Step 1: Establish that $\mathcal{A}_L$ is Fredholm of index zero.}

$\mathcal{A}_L$ is of class $D^{s,p}_{2,\delta}$. Therefore by
Proposition \ref{propb1}, $\mathcal{A}_L:
\textbf{W}^{e,q}_{\beta}\rightarrow \textbf{W}^{e-2,q}_{\beta-2}$
is semi-Fredholm (this is exactly why
it is assumed $\beta\in (-1,0)$).
On the other hand, vector Laplacian of the rough metric can be
approximated by the vector Laplacian of smooth metrics and it is
well known that vector Laplacian of a smooth metric is Fredholm of
index zero. Therefore since the index of a semi-Fredholm map is
locally constant, it follows that $\mathcal{A}_L:
\textbf{W}^{e,q}_{\beta}\rightarrow \textbf{W}^{e-2,q}_{\beta-2}$
is Fredholm with index 0.
\medskip
\item \textbf{Step 2: Show that if $Ker \mathcal{L}=\{0\}$, then $Ker\mathcal{A}_L=\{0\}$.}

The proof of this step involves considering six distinct cases.
In each case, we denote the operator $\mathcal{A}_L$ acting on
$\textbf{W}^{e,q}_{\beta}$ by $(\mathcal{A}_L)_{e,q,\beta}$. In
order to best organize the arguments for these six cases, we make
the following definitions:
\begin{center}
\begin{tabular}{ll}
{\bf\em nice triple:}
   & A triple $(e,q,\beta)$ where $-1<\beta<0$ and $e, q$
     satisfy (\ref{eqnice}).
\\
{\bf\em super nice triple:}
   & A {\bf\em nice} triple $(e,q,\beta)$
     where $e\neq s$ and $e\neq s-\frac{3}{p}+\frac{3}{q}$.
\end{tabular}
\end{center}
We now make three observations about relationships
between these definitions.
\begin{itemizeXX}
\item \textbf{Observation 1}: For any $-1<\beta<0$, $(e=1, q=2,
\beta)$ is {\bf\em super nice} and $(e=s,q=p, \beta)$ is {\bf\em nice}.
Indeed,
\begin{center}
\begin{tabular}{lll}
  $1 \in (2-s,s)$, & (since $s>1$) \\
  $1>-s+\frac{3}{p}-1+\frac{3}{2}$,
     & (since $s>\frac{3}{p}$) \\
  $1<s-\frac{3}{p}+\frac{3}{2}$,
     & (since $s>\frac{3}{p}$) \\
  $s\in (2-s,s]$, & (trivially true; note $s>1$) \\
  $s>-s+\frac{3}{p}-1+\frac{3}{p}$,
     & (since $s>\frac{3}{p}$) \\
  $s\leq s-\frac{3}{p}+\frac{3}{p}$. & (trivially true)
\end{tabular}
\end{center}
\item \textbf{Observation 2}: If $(e,q,\beta)$ is {\bf\em super nice}, then
$(2-e,q',-1-\beta)$ is also {\bf\em super nice}. Indeed,
\begin{center}
\begin{tabular}{lll}
 $q\in (1,\infty)\Rightarrow q' \in (1,\infty)$,     & \\
 $\beta \in (-1,0) \Rightarrow -1-\beta \in (-1,0)$, & \\
 $e\in (2-s,s) \Rightarrow 2-e \in (2-s,s)$,         & \\
 $e< s-\frac{3}{p}+\frac{3}{q}\Rightarrow
   2-e> 2-s+\frac{3}{p}-\frac{3}{q}=-s+\frac{3}{p}-1+\frac{3}{q'}$, & \\
 $e>-s+\frac{3}{p}-1+\frac{3}{q}\Rightarrow
   2-e<2+s-\frac{3}{p}+1-\frac{3}{q}=s-\frac{3}{p}+\frac{3}{q'}$. & \\
\end{tabular}
\end{center}
\item \textbf{Observation 3}: Suppose $(e_1,q_1,\beta_1)$ and
$(e_2,q_2,\beta_2)$ are {\bf\em nice} triples.
If we have
$\textbf{W}^{e_2,q_2}_{\beta_2}\hookrightarrow
\textbf{W}^{e_1,q_1}_{\beta_1}$, then
$(\mathcal{A}_L)_{e_2,q_2,\beta_2}$ is the restriction of
$(\mathcal{A}_L)_{e_1,q_1,\beta_1}$ to
$\textbf{W}^{e_2,q_2}_{\beta_2}$ and so $Ker
(\mathcal{A}_L)_{e_2,q_2,\beta_2}\subseteq Ker
(\mathcal{A}_L)_{e_1,q_1,\beta_1}$. In particular, if
$Ker (\mathcal{A}_L)_{e_1,q_1,\beta_1}=\{0\}$ holds, then $Ker
(\mathcal{A}_L)_{e_2,q_2,\beta_2}=\{0\}$.
\end{itemizeXX}

\smallskip
\noindent Now let $(e,q,\beta)$ be a {\bf\em nice} triple. We
consider the following six cases:
\begin{itemizeXX}
\medskip
\item \textbf{Case 1: $e=1, q=2$}

\smallskip
In order to prove the claim first we show
that if $\beta'\in (-1,\frac{-1}{2})$ then
\begin{equation*}
\forall\, X, Y \in \textbf{W}^{1,2}_{\beta'}\quad
\langle\mathcal{A}_L X, Y \rangle_{(M,h)}
=\frac{1}{2}\langle\mathcal{L}X,\mathcal{L}Y\rangle_{L^2}.
\end{equation*}
First let us ensure that both sides are well-defined. Note that
$\mathcal{A}_L: \textbf{W}^{1,2}_{\beta'}\rightarrow
\textbf{W}^{-1,2}_{\beta'-2}$ and so $\mathcal{A}_L X\in
\textbf{W}^{-1,2}_{\beta'-2}$. According to our discussion on
duality pairing in Appendix~\ref{app:operators}, we know that the
duality pairing of $\textbf{W}^{-1,2}_{\beta'-2}$ and
$\textbf{W}^{1,2}_{-1-\beta'}$ is well-defined. So for the LHS to
be well-defined, we just need to ensure that $Y\in
\textbf{W}^{1,2}_{-1-\beta'}$, that is we need to have
$\textbf{W}^{1,2}_{\beta'}\hookrightarrow
\textbf{W}^{1,2}_{-1-\beta'}$. But clearly this is true because
by assumption $\beta'<\frac{-1}{2}$. Also note that
$\mathcal{L}X,\mathcal{L}Y \in L^2_{\beta'-1}$; since
$\beta'-1<\frac{-3}{2}$, by Remark \ref{remunweighted} we have
$L^2_{\beta'-1}\hookrightarrow L^2$ and so the RHS makes sense.
Now, it is well known that the claimed equality holds true for
$X,Y \in C_c^{\infty}$ and so by density it holds true for $X, Y
\in \textbf{W}^{1,2}_{\beta'}$.

Let $X\in Ker (\mathcal{A}_L)_{e=1,q=2,\beta}$. Since
$(e=1,q=2,\beta)$ is a {\bf\em nice} triple, by Lemma \ref{lemb3} there
exists $\beta'\in(-1,\frac{-1}{2})$ such that $X\in
\textbf{W}^{1,2}_{\beta'} $. So by what was proved above we can
conclude that $\langle\mathcal{L}X,\mathcal{L}X\rangle_{L^2}=0$ which implies
that $X$ is a conformal Killing field and so $X=0$.

\medskip
\item \textbf{Case 2: $e\neq 1, q=2$}

\smallskip
If $e > 1$, then $\textbf{W}^{e,q}_{\beta}\hookrightarrow
\textbf{W}^{1,q}_{\beta}$ and hence the claim follows from
Observation 3. Suppose $e<1$. So in particular $e\neq s$ and
$e\neq s-\frac{3}{p}+\frac{3}{2}$ (because both $s$ and
$s-\frac{3}{p}+\frac{3}{2}$ are larger than $1$) and therefore
$(e,q=2,\beta)$ is {\bf\em super nice}. Consequently $(2-e,q'=2,-1-\beta)$
is also {\bf\em super nice}. Since $2-e>1$ we know that $Ker
(\mathcal{A}_L)_{2-e,q'=2,-1-\beta}=\{0\}$. But $\mathcal{A}_L$ is
formally self adjoint and so $Ker
((\mathcal{A}_L)_{e,q=2,\beta})^{*}=Ker
(\mathcal{A}_L)_{2-e,q'=2,-1-\beta}=\{0\}$. Finally
$(\mathcal{A}_L)_{e,q=2,\beta}$ is Fredholm of index zero, so $Ker
(\mathcal{A}_L)_{e,q=2,\beta}=\{0\}$.

\medskip
\item \textbf{Case 3: ($p\leq 2, q<2$) or ($e>\frac{3}{q}-\frac{1}{2}, q<2$)}

\smallskip
It is enough to show that there exists $\tilde{e}$ such that
$\textbf{W}^{e,q}_{\beta}\hookrightarrow
\textbf{W}^{\tilde{e},2}_{\beta}$ where $(\tilde{e},2,\beta)$ is
{\bf\em super nice}. That is, we need to find $\tilde{e}$ that
satisfies
\begin{align*}
&e\geq \tilde{e},\\
&e-\frac{3}{q}\geq\tilde{e}-\frac{3}{2}\quad (\Leftrightarrow \tilde{e}\leq e+\frac{3}{2}-\frac{3}{q}),\\
&\tilde{e}\in (2-s,s),\\
&\tilde{e}\in
(-s+\frac{3}{p}-1+\frac{3}{2},s-\frac{3}{p}+\frac{3}{2}).
\end{align*}
Since $\frac{3}{q}>\frac{3}{2}$, the second condition is stronger
than the first condition. Also $s> \frac{3}{p}$ so
$(-s+\frac{3}{p}+\frac{1}{2},s-\frac{3}{p}+\frac{3}{2})$ is
nonempty. So such an $\tilde{e}$ exists if
\begin{equation*}
(-\infty,e+\frac{3}{2}-\frac{3}{q})\cap (2-s,s)\cap
(-s+\frac{3}{p}-1+\frac{3}{2},s-\frac{3}{p}+\frac{3}{2})\neq
\emptyset\,.
\end{equation*}
Now note that
\begin{align*}
& \textrm{If $p\leq 2$ then}\quad
(-s+\frac{3}{p}-1+\frac{3}{2},s-\frac{3}{p}+\frac{3}{2})\subseteq
(2-s,s),\\
&\textrm{If $p> 2$ then}\quad (2-s,s)\subseteq
(-s+\frac{3}{p}-1+\frac{3}{2},s-\frac{3}{p}+\frac{3}{2}).
\end{align*}
Therefore in order to ensure such an $\tilde{e}$ exists it is
enough to have
\begin{align*}
& 2-s<e+\frac{3}{2}-\frac{3}{q}\quad \textrm{if $p>2$},\\
& -s+\frac{3}{p}+\frac{1}{2}<e+\frac{3}{2}-\frac{3}{q}\quad
\textrm{if $p\leq 2$}.
\end{align*}
The second inequality is true because $(e,q,\beta)$ is a {\bf\em
nice} triple.
Moreover, for all values of $p$, if $e>\frac{3}{q}-\frac{1}{2}$,
then the first inequality
holds true (note that $s>1$).

\medskip
\item \textbf{Case 4: ($p\geq 2, q>2$) or ($e>\frac{1}{2}, q>2$)}

\smallskip
First we consider the case where $e\neq s$ or $e=s\in
\mathbb{N}_0$. Let $\beta'\in (\beta,0)$. By Theorem \ref{thmA4},
$\textbf{W}^{e,q}_{\beta}\hookrightarrow
\textbf{W}^{e,2}_{\beta'}$. So it is enough to show that under
the assumption of this case, $(e,2,\beta')$ is a {\bf\em nice}
triple. Note that since we have assumed $e\neq s$ or $e=s\in
\mathbb{N}_0$, we do not require $p$ to be equal to $2$. Since
$(e,q,\beta)$ is {\bf\em nice}, we know $e\in (2-s,s]$. Therefore
we just need to check that $e\in
(-s+\frac{3}{p}-1+\frac{3}{2},s-\frac{3}{p}+\frac{3}{2}]$.
\begin{equation*}
q>2 \Rightarrow \frac{3}{q}<\frac{3}{2}\Rightarrow e\leq
s-\frac{3}{p}+\frac{3}{q}<s-\frac{3}{p}+\frac{3}{2}.
\end{equation*}
Also if $p\geq2$, then $-s+\frac{3}{p}+\frac{1}{2}\leq -s+2<e$.
Moreover, for all values of $p$, if $e>\frac{1}{2}$, then
$e>-s+\frac{3}{p}+\frac{1}{2}$.\\
Now let's consider the case where $e=s\not \in \mathbb{N}_0$. By
the statement of the theorem and the assumptions of this case, we
must have $p=q>2$. It is enough to show that there exists
$\tilde{e}$ and $\tilde{\beta}$ such that
$\textbf{W}^{e,q}_{\beta}\hookrightarrow
\textbf{W}^{\tilde{e},2}_{\tilde{\beta}}$ where
$(\tilde{e},2,\tilde{\beta})$ is {\bf\em super nice}. Let
$\tilde{\beta}\in (\beta,0)$. We need to find $\tilde{e}$ that
satisfies
\begin{align*}
&\tilde{e}\leq e=s,\\
&\tilde{e}\in (2-s,s),\\
&\tilde{e}\in
(-s+\frac{3}{p}-1+\frac{3}{2},s-\frac{3}{p}+\frac{3}{2}).
\end{align*}
Such an $\tilde{e}$ exists if
\begin{equation*}
(2-s,s)\cap
(-s+\frac{3}{p}-1+\frac{3}{2},s-\frac{3}{p}+\frac{3}{2})\neq
\emptyset\,.
\end{equation*}
Since $p>2$, the above intersection is equal to $(2-s,s)$ which
is clearly nonempty (since $s>1$).

\medskip
\item \textbf{Case 5: $p<2, q>2$}

\smallskip
First note that $e$ cannot be equal to $s$. Otherwise we would
have $s=e\leq s-\frac{3}{p}+\frac{3}{q}$ and so $p\geq q$ which
contradicts the assumption of this case.

If $e=s-\frac{3}{p}+\frac{3}{q}$, then the claim follows from case
4 (because by assumption $e>\frac{1}{2}$). So WLOG we can assume
that $(e,q>2,\beta)$ is a {\bf\em super nice} triple. This implies that
$(2-e,q'<2,-1-\beta)$ is also a {\bf\em super nice} triple. Since $q'<2$
by what was proved in case 3, $Ker
(\mathcal{A}_L)_{2-e,q',-1-\beta}=\{0\}$. So by an argument
exactly the same as the one given in case 2 we can conclude that
$Ker (\mathcal{A}_L)_{e,q,\beta}=\{0\}$.

\medskip
\item \textbf{Case 6: $p>2, q<2$}

\smallskip
First note that $e$ cannot be
equal to $s-\frac{3}{p}+\frac{3}{q}$. Otherwise we would have
$s-\frac{3}{p}+\frac{3}{q}=e\leq s$ and so $p\leq q$ which
contradicts the assumption of this case.

Since $p\neq q$, if $e=s$, then we must have $e=s\in
\mathbb{N}_0$. If $e=s\in \mathbb{N}_0$, then the claim follows
from case 3 (because by assumption $e>\frac{3}{q}-\frac{1}{2}$).
So WLOG we can assume $(e,q<2,\beta)$ is a {\bf\em super nice}
triple. Therefore $(2-e,q'>2,-1-\beta)$ is also a {\bf\em super
nice} triple. Since $q'>2$ by what was proved in case 4, $Ker
(\mathcal{A}_L)_{2-e,q',-1-\beta}=\{0\}$. So by an argument
exactly the same as the one given in case 2 we can conclude that
$Ker (\mathcal{A}_L)_{e,q,\beta}=\{0\}$.
\end{itemizeXX}
\medskip
\item \textbf{Step 3: Show that if $Ker \mathcal{A}_L=\{0\}$, then $\mathcal{A}_L$ is an isomorphism.}

By the previous steps we know that $\mathcal{A}_L$ is Fredholm of
index zero and also it is injective. It follows that
$\mathcal{A}_L$ is a bijective continuous operator and so
according to the open mapping theorem it is an isomorphism. In
particular $(\mathcal{A}_L)^{-1}$ is continuous and so
$\|W\|_{\textbf{W}^{e,q}_{\beta}}\leq
C\|\textbf{f}(\psi)\|_{\textbf{W}^{e-2,q}_{\beta-2}}$.
\end{itemizeX}
\end{proof}
\begin{corollary}\lab{coro2.8}
Let the following assumptions hold:
\begin{itemize}
\item $(M,h)$ is a 3-dimensional AF Riemannian manifold of
class $W^{s,p}_{\delta}$.
\item $p\in (1,\infty)$, $s\in (1+\frac{3}{p},\infty)$, $\delta<0$.
\item $q\in (3,\infty)$, $e\in(1,s]\cap
(1+\frac{3}{q},s-\frac{3}{p}+\frac{3}{q}]\cap(1,2]$. ($q=p$ if $e=s\not \in \mathbb{N}_0$) 
\item $-1<\beta <0$, $z=\frac{3q}{3+(2-e)q}$, $b_\tau \in
L^z_{\beta-2}$.
\item $h$ has no conformal Killing fields.
\item $W\in \textbf{W}^{e,q}_{\beta}$
uniquely solves the momentum constraint with source
$\psi\in W^{s,p}_{\delta}$.
\end{itemize}
Then:
\begin{equation*}
\|\mathcal{L}W\|_{L^{\infty}_{\beta-1}}\preceq
\| b_\tau
(\mu+\psi)^6\|_{L^z_{\beta-2}}+\|b_J\|_{\textbf{W}^{e-2,q}_{\beta-2}}\preceq
(\mu+\|\psi\|_{L^{\infty}_{\delta}})^6\|b_\tau\|_{L^z_{\beta-2}}+\|b_J\|_{\textbf{W}^{e-2,q}_{\beta-2}}.
\end{equation*}
Moreover, $\|W\|_{\textbf{W}^{e,q}_{\beta}}$ can be bounded by the same
expressions.
The implicit constants in the above inequalities do
not depend on $\mu$, $W$, or $\psi$.
\end{corollary}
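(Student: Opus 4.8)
The plan is to obtain the corollary as a four-link chain: invoke the isomorphism statement of Theorem~\ref{thm1}, then on the left insert a weighted Sobolev embedding controlling $\|\mathcal{L}W\|_{L^{\infty}_{\beta-1}}$ by $\|W\|_{\textbf{W}^{e,q}_{\beta}}$, and on the right insert the embedding $L^{z}_{\beta-2}\hookrightarrow\textbf{W}^{e-2,q}_{\beta-2}$ followed by an elementary H\"older multiplication. First I would verify that the hypotheses of the corollary place us inside the hypotheses of Theorem~\ref{thm1}. From $s>1+\frac{3}{p}$ we get $s>1$, so $e\in(1,s]\subseteq(2-s,s]$; and $e>1+\frac{3}{q}$ together with $s>\frac{3}{p}$ gives $e>-s+\frac{3}{p}-1+\frac{3}{q}$, which combined with the assumed $e\le s-\frac{3}{p}+\frac{3}{q}$ puts $(e,q,\beta)$ in the admissible range~\eqref{eqnice}. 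Since $q>3>2$, none of the edge cases in Theorem~\ref{thm1} involving $q<2$ or $p>2,q<2$ can occur; the remaining edge cases ($e=s\notin\mathbb{N}_0$, requiring $q=p$, and $e=s-\frac{3}{p}+\frac{3}{q}$ with $p<2,q>2$, requiring $e>\frac12$) are covered by the hypothesis ``$q=p$ if $e=s\notin\mathbb{N}_0$'' and by $e>1+\frac{3}{q}>\frac12$. Hence, with $h$ having no conformal Killing fields, Theorem~\ref{thm1} yields that $\mathcal{A}_L$ is an isomorphism and $\|W\|_{\textbf{W}^{e,q}_{\beta}}\le C\,\|\textbf{f}(\psi)\|_{\textbf{W}^{e-2,q}_{\beta-2}}$, where $C$ is the operator norm of $(\mathcal{A}_L)^{-1}$ and thus depends only on $(M,h)$, $e$, $q$, $\beta$, in particular not on $\mu$, $W$, or $\psi$.

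Next I would assemble the two one-sided controls. For the left end: recall from the set-up of the weak formulation that $W\in\textbf{W}^{e,q}_{\beta}$ implies $\mathcal{L}W\in W^{e-1,q}_{\beta-1}$ with $\|\mathcal{L}W\|_{W^{e-1,q}_{\beta-1}}\preceq\|W\|_{\textbf{W}^{e,q}_{\beta}}$, and since $e-1>\frac{3}{q}$, Corollary~\ref{coroA1} gives $W^{e-1,q}_{\beta-1}\hookrightarrow C^{0}_{\beta-1}\hookrightarrow L^{\infty}_{\beta-1}$, so $\|\mathcal{L}W\|_{L^{\infty}_{\beta-1}}\preceq\|W\|_{\textbf{W}^{e,q}_{\beta}}$. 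For the right end: $\textbf{f}(\psi)=b_\tau(\mu+\psi)^6+b_J$, so by the triangle inequality $\|\textbf{f}(\psi)\|_{\textbf{W}^{e-2,q}_{\beta-2}}\le\|b_\tau(\mu+\psi)^6\|_{\textbf{W}^{e-2,q}_{\beta-2}}+\|b_J\|_{\textbf{W}^{e-2,q}_{\beta-2}}$. Because $e\le2$, the choice $z=\frac{3q}{3+(2-e)q}$ satisfies $z\le q$, $1\ge e-1$, and $1-\frac{3}{z}\ge e-1-\frac{3}{q}$, so Theorem~\ref{thmA3} gives $L^{z}_{\beta-2}\hookrightarrow\textbf{W}^{e-2,q}_{\beta-2}$ and hence $\|b_\tau(\mu+\psi)^6\|_{\textbf{W}^{e-2,q}_{\beta-2}}\preceq\|b_\tau(\mu+\psi)^6\|_{L^{z}_{\beta-2}}$. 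Finally, since $\delta<0$ and $s>\frac{3}{p}$, Corollary~\ref{coroA1} gives $\psi\in W^{s,p}_{\delta}\hookrightarrow L^{\infty}_{\delta}\hookrightarrow L^{\infty}$, whence $(\mu+\psi)^6\in L^{\infty}$ with $\|(\mu+\psi)^6\|_{L^{\infty}}\preceq(\mu+\|\psi\|_{L^{\infty}_{\delta}})^6$ (the implicit constant here being independent of $\mu$ and $\psi$); H\"older in the weighted space $L^{z}_{\beta-2}$ then gives $\|b_\tau(\mu+\psi)^6\|_{L^{z}_{\beta-2}}\le\|(\mu+\psi)^6\|_{L^{\infty}}\|b_\tau\|_{L^{z}_{\beta-2}}\preceq(\mu+\|\psi\|_{L^{\infty}_{\delta}})^6\|b_\tau\|_{L^{z}_{\beta-2}}$.

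Concatenating these estimates gives
\begin{align*}
\|\mathcal{L}W\|_{L^{\infty}_{\beta-1}}
  &\preceq\|W\|_{\textbf{W}^{e,q}_{\beta}}
   \preceq\|\textbf{f}(\psi)\|_{\textbf{W}^{e-2,q}_{\beta-2}}\\
  &\preceq\|b_\tau(\mu+\psi)^6\|_{L^{z}_{\beta-2}}+\|b_J\|_{\textbf{W}^{e-2,q}_{\beta-2}}
   \preceq(\mu+\|\psi\|_{L^{\infty}_{\delta}})^6\|b_\tau\|_{L^{z}_{\beta-2}}+\|b_J\|_{\textbf{W}^{e-2,q}_{\beta-2}},
\end{align*}
and reading the chain from the second term onward bounds $\|W\|_{\textbf{W}^{e,q}_{\beta}}$ by the same two expressions. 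Every implicit constant above is either the embedding constant of a fixed weighted Sobolev inclusion or the operator norm $C$ from Theorem~\ref{thm1}, and none depends on $\mu$, $W$, or $\psi$, which gives the final assertion. I expect the only non-mechanical part to be the bookkeeping in the first paragraph --- confirming that the corollary's parameter window, including the boundary cases, lands strictly inside the hypotheses of Theorem~\ref{thm1} --- together with the observation that it is precisely the restriction $e\le2$ that makes the embedding $L^{z}_{\beta-2}\hookrightarrow\textbf{W}^{e-2,q}_{\beta-2}$ available; everything else is the triangle inequality, one standard embedding on each side, and H\"older.
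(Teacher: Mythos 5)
Your proof is correct and follows the same chain of estimates as the paper's: invoke the isomorphism from Theorem~\ref{thm1}, insert $W^{e,q}_{\beta}\hookrightarrow W^{1,\infty}_{\beta}$ (equivalently $\mathcal{L}W\in W^{e-1,q}_{\beta-1}\hookrightarrow L^{\infty}_{\beta-1}$) on the left and $L^{z}_{\beta-2}\hookrightarrow W^{e-2,q}_{\beta-2}$ on the right, then apply the triangle inequality and a pointwise product bound. The one small departure is that the paper proves the final estimate $\|b_\tau(\mu+\psi)^6\|_{L^z_{\beta-2}}\preceq(\mu+\|\psi\|_{L^{\infty}_{\delta}})^6\|b_\tau\|_{L^z_{\beta-2}}$ by expanding the binomial and tracking weighted products term by term, whereas you pass through the unweighted norm via $\|(\mu+\psi)^6\|_{L^{\infty}}\le(\mu+\|\psi\|_{L^{\infty}_{\delta}})^6$ (using $\delta<0$) and Remark~\ref{remunweighted}; both are valid, and the paper itself uses your variant in the proof of Lemma~\ref{lemglobal1}.
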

\begin{remark}
In this theorem, the restrictions on $e$ and $q$ serve the
following purposes:
\begin{itemizeXXM}
\item $L^{z}_{\beta-2}\hookrightarrow W^{e-2,q}_{\beta-2}$. (note
that $e\leq 2$)
\item $\mathcal{A}_L: \textbf{W}^{e,q}_{\beta}\rightarrow
\textbf{W}^{e-2,q}_{\beta-2}$ is well-defined.
\item $e>1+\frac{3}{q}$ and so $W^{e,q}_{\beta}\hookrightarrow L^{\infty}_{\beta}$ and also
$W^{e-1,q}_{\beta-1}\hookrightarrow L^{\infty}_{\beta-1}$.
\end{itemizeXXM}
Also note that
\begin{itemizeXXM}
\item If $e>1+\frac{3}{q}$ then $e>-s+\frac{3}{p}-1+\frac{3}{q}$
is automatically satisfied.
\item For $(1+\frac{3}{q},s-\frac{3}{p}+\frac{3}{q}]$ to be
nonempty we must have $s>1+\frac{3}{p}$.
\item If $s>1$ then $e>2-s$ follows from $e>1$.
\item $2\geq e >1+\frac{3}{q}$ and so we must have $q>3$.
\end{itemizeXXM}
\end{remark}
\begin{proof}{\bf (Corollary~\ref{coro2.8})}
First note that $e>1+\frac{3}{q}$ and so
$W^{e,q}_{\beta}\hookrightarrow L^{\infty}_{\beta}$ and also
$W^{e-1,q}_{\beta-1}\hookrightarrow L^{\infty}_{\beta-1}$. That
is, $W^{e,q}_{\beta}\hookrightarrow W^{1,\infty}_{\beta}$. Also
$\mathcal {L}: W^{1,\infty}_{\beta}\rightarrow
L^{\infty}_{\beta-1}$ is continuous ($\mathcal{L}$ is a
differential operator of order $1$) and so we have
\begin{align*}
\|\mathcal{L}W\|_{L^{\infty}_{\beta-1}}&\preceq
\|W\|_{\textbf{W}^{1,\infty}_{\beta}}\preceq
\|W\|_{\textbf{W}^{e,q}_{\beta}}\preceq
\|\textbf{f}(\psi)\|_{\textbf{W}^{e-2,q}_{\beta-2}}\\
&=\|b_\tau (\mu+\psi)^6+b_J\|_{\textbf{W}^{e-2,q}_{\beta-2}} \leq
\|b_\tau
(\mu+\psi)^6\|_{\textbf{W}^{e-2,q}_{\beta-2}}+\|b_J\|_{\textbf{W}^{e-2,q}_{\beta-2}}\\
&\preceq \|b_\tau
(\mu+\psi)^6\|_{L^z_{\beta-2}}+\|b_J\|_{\textbf{W}^{e-2,q}_{\beta-2}}
\quad (\textrm{note that} \quad L^z_{\beta-2}\hookrightarrow
W^{e-2,q}_{\beta-2})
\end{align*}
Now note that
\begin{align*}
\|b_\tau (\mu+\psi)^6\|_{L^z_{\beta-2}}&=\|b_\tau \sum_{k=0}^6
{6\choose k} \mu^{6-k}\psi^k\|_{L^z_{\beta-2}}\leq \sum_{k=0}^6
{6\choose
k}\mu^{6-k}\|b_\tau \psi^k\|_{L^z_{\beta-2}}\\
&\preceq \sum_{k=0}^6 {6\choose k}\mu^{6-k}\|b_\tau
\psi^k\|_{L^z_{\beta+\delta-2}}\preceq \sum_{k=0}^6 {6\choose
k}\mu^{6-k}\|\psi^k\|_{L^{\infty}_{\delta}}\|b_\tau\|_{L^z_{\beta-2}}
\\
& \quad \quad (\textrm{note that} \quad L^{\infty}_{\delta}\times
L^z_{\beta-2}\hookrightarrow L^z_{\beta+\delta-2})\\
&\preceq \sum_{k=0}^6 {6\choose
k}\mu^{6-k}\|\psi\|^k_{L^{\infty}_{\delta}}\|b_\tau\|_{L^z_{\beta-2}}
\\
& \quad \quad (\textrm{note that} \quad L^{\infty}_{\delta}\times
L^{\infty}_{\delta}\hookrightarrow
L^{\infty}_{2\delta}\hookrightarrow L^{\infty}_{\delta})\\
&=(\mu+\|\psi\|_{L^{\infty}_{\delta}})^6\|b_\tau\|_{L^z_{\beta-2}}.
\end{align*}
Hence
\begin{align*}
\|\mathcal{L}W\|_{L^{\infty}_{\beta-1}}
&\preceq \|W\|_{\textbf{W}^{e,q}_{\beta}}
\\
&\preceq
  \| b_\tau (\mu+\psi)^6\|_{L^z_{\beta-2}}+\|b_J\|_{\textbf{W}^{e-2,q}_{\beta-2}}
\\
&\preceq
  (\mu+\|\psi\|_{L^{\infty}_{\delta}})^6\|b_\tau\|_{L^z_{\beta-2}}+\|b_J\|_{\textbf{W}^{e-2,q}_{\beta-2}}.
\end{align*}
\end{proof}
\begin{lemma}\lab{lemglobal1}
All the assumptions in corollary \ref{coro2.8} hold. In
particular, $W$ is the solution to the momentum constraint with
source $\psi$. Then
\begin{equation*}
a_W\preceq r^{2\beta-2}(k_1\|\mu+\psi\|_{\infty}^{12}+k_2)
\end{equation*}
where
\begin{enumerate}
\item $r=(1+|x|^2)^{\frac{1}{2}}$ and $|x|$ is the geodesic distance from a fixed
point $O$ in the compact core (see Remark \ref{remgeodesic}),
\item $k_1=\| b_\tau\|^2_{L^z_{\beta-2}}$, $\quad k_2=\|
\sigma\|^2_{L^{\infty}_{\beta-1}}+\|b_J\|^2_{\textbf{W}^{e-2,q}_{\beta-2}}$.
\end{enumerate}
The implicit constant in the above inequality does not depend on
$\mu$, $W$ or $\psi$.
\end{lemma}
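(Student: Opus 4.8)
The plan is to read off the required pointwise decay of $a_W$ from the weighted $L^{\infty}$ bound on $\mathcal{L}W$ already furnished by Corollary~\ref{coro2.8}, together with the elementary fact that membership in $L^{\infty}_{\beta-1}$ is exactly a pointwise bound against the weight $r^{\beta-1}$. First I would unwind $a_W$: recalling that $a_W=\frac{1}{8}(\sigma_{ab}+(\mathcal{L}W)_{ab})(\sigma^{ab}+(\mathcal{L}W)^{ab})=\frac{1}{8}|\sigma+\mathcal{L}W|^{2}_{h}$, and that under the hypotheses inherited from Corollary~\ref{coro2.8} one has $e-1>\frac{3}{q}$, so $W^{e-1,q}_{\beta-1}\hookrightarrow C^{0}_{\beta-1}$ and both $\sigma$ and $\mathcal{L}W$ admit continuous representatives obeying $|\sigma(x)|_{h}\preceq r(x)^{\beta-1}\|\sigma\|_{L^{\infty}_{\beta-1}}$ and $|\mathcal{L}W(x)|_{h}\preceq r(x)^{\beta-1}\|\mathcal{L}W\|_{L^{\infty}_{\beta-1}}$ at every point (here I use that on the end charts $h$ is uniformly equivalent to the Euclidean metric, so the pointwise $h$-norm of a tensor is comparable to the coordinate one, and that the weighted $L^{\infty}$ norms are defined in terms of the weight $r$ of the lemma statement). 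The triangle inequality together with $(a+b)^{2}\leq 2a^{2}+2b^{2}$ then yields
$$
a_W(x)\preceq r(x)^{2\beta-2}\bigl(\|\sigma\|^{2}_{L^{\infty}_{\beta-1}}+\|\mathcal{L}W\|^{2}_{L^{\infty}_{\beta-1}}\bigr).
$$

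Next I would bound $\|\mathcal{L}W\|_{L^{\infty}_{\beta-1}}$ in the shape needed. The one place where a choice matters is that the final expression in the conclusion of Corollary~\ref{coro2.8} carries the factor $(\mu+\|\psi\|_{L^{\infty}_{\delta}})^{6}$, which is \emph{not} comparable to $\|\mu+\psi\|_{\infty}^{6}$ since the weighted norm $\|\psi\|_{L^{\infty}_{\delta}}$ can be strictly larger than $\|\psi\|_{\infty}$. Instead I would use the \emph{first} inequality in that conclusion,
$$
\|\mathcal{L}W\|_{L^{\infty}_{\beta-1}}\preceq \|b_\tau(\mu+\psi)^{6}\|_{L^{z}_{\beta-2}}+\|b_J\|_{\textbf{W}^{e-2,q}_{\beta-2}},
$$
and then extract the unweighted sup norm by the trivial multiplication bound $\|fg\|_{L^{z}_{\beta-2}}\leq\|f\|_{\infty}\|g\|_{L^{z}_{\beta-2}}$ with $f=(\mu+\psi)^{6}$, so that $\|b_\tau(\mu+\psi)^{6}\|_{L^{z}_{\beta-2}}\leq\|\mu+\psi\|_{\infty}^{6}\|b_\tau\|_{L^{z}_{\beta-2}}$. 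Squaring and applying $(a+b)^{2}\leq 2a^{2}+2b^{2}$ once more gives
$$
\|\mathcal{L}W\|^{2}_{L^{\infty}_{\beta-1}}\preceq \|\mu+\psi\|_{\infty}^{12}\,\|b_\tau\|^{2}_{L^{z}_{\beta-2}}+\|b_J\|^{2}_{\textbf{W}^{e-2,q}_{\beta-2}}.
$$

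Finally I would substitute this into the displayed estimate for $a_W(x)$ and collect terms, reading off $k_1=\|b_\tau\|^{2}_{L^{z}_{\beta-2}}$ and $k_2=\|\sigma\|^{2}_{L^{\infty}_{\beta-1}}+\|b_J\|^{2}_{\textbf{W}^{e-2,q}_{\beta-2}}$. Every implicit constant entering the chain — the embedding $W^{e-1,q}_{\beta-1}\hookrightarrow C^{0}_{\beta-1}$, the uniform equivalence of $h$ with the Euclidean metric on the ends, the elementary $L^{\infty}$--$L^{z}_{\beta-2}$ multiplication constant, and the implicit constant of Corollary~\ref{coro2.8} — is fixed once $(M,h)$ and the exponents are fixed, so the resulting constant does not depend on $\mu$, $W$, or $\psi$, as claimed. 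I do not expect a genuine obstacle here: the only thing that needs care is exactly the routing just described, namely passing the $\psi$-dependence through $\|\mu+\psi\|_{\infty}$ at the weighted-$L^{z}$ stage rather than through $\|\psi\|_{L^{\infty}_{\delta}}$ at the end.
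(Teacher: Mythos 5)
Your proof is correct and takes essentially the same route as the paper's: both pass through the intermediate bound $\|\mathcal{L}W\|_{L^{\infty}_{\beta-1}}\preceq\|b_\tau(\mu+\psi)^6\|_{L^z_{\beta-2}}+\|b_J\|_{\textbf{W}^{e-2,q}_{\beta-2}}$ from Corollary~\ref{coro2.8}, extract $\|\mu+\psi\|^6_{\infty}$ by the elementary $L^{\infty}\times L^{z}_{\beta-2}$ bound, convert the weighted $L^{\infty}$ norms of $\sigma$ and $\mathcal{L}W$ to pointwise bounds against $r^{\beta-1}$ via Remark~\ref{reminfinitybound}, and finish by squaring. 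Your observation about why one must route the $\psi$-dependence through $\|\mu+\psi\|_{\infty}$ at the $L^z_{\beta-2}$ stage, rather than through $\|\psi\|_{L^{\infty}_{\delta}}$ at the end, is exactly what the paper's proof does implicitly by quoting the first rather than the final inequality of the corollary.
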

\begin{proof}{\bf (Lemma~\ref{lemglobal1})}
By Corollary \ref{coro2.8} we have
\begin{align*}
\| \mathcal{L}W\|_{L^{\infty}_{\beta-1}}&\preceq \|b_\tau
(\mu+\psi)^6\|_{L^z_{\beta-2}}+\|b_J\|_{\textbf{W}^{e-2,q}_{\beta-2}}\\
& \leq \|b_\tau \|_{L^z_{\beta-2}}\| \mu+\psi
\|^6_{\infty}+\|b_J\|_{\textbf{W}^{e-2,q}_{\beta-2}},
\quad \textrm{(here we used Remark \ref{remunweighted})}
\end{align*}
and considering Remark \ref{reminfinitybound} we get the
following pointwise bound for $\mathcal{L}W$:
\begin{equation*}
|\mathcal{L}W|\preceq r^{\beta-1}(\|b_\tau
\|_{L^z_{\beta-2}}\| \mu+\psi
\|^6_{\infty}+\|b_J\|_{\textbf{W}^{e-2,q}_{\beta-2}}).
\end{equation*}
Note that $\mathcal{L}W$ has a continuous version and so the above
inequality holds everywhere (not just ``almost everywhere''). Now we
can write
\begin{align*}
a_W=\frac{1}{8}|\sigma+\mathcal{L}W|^2 &\preceq
|\sigma|^2+|\mathcal{L}W|^2 \\
& \preceq r^{2\beta-2}\|
\sigma\|^2_{L^{\infty}_{\beta-1}}+|\mathcal{L}W|^2 \\
& \quad
\textrm{(here we used Remark \ref{reminfinitybound}; note $\sigma \in W^{e-1,q}_{\beta-1}\hookrightarrow C^{0}_{\beta-1}\hookrightarrow L^{\infty}_{\beta-1}$)}\\
&\preceq r^{2\beta-2}\|
\sigma\|^2_{L^{\infty}_{\beta-1}}+
r^{2\beta-2}\big(\|b_\tau \|_{L^z_{\beta-2}}\| \mu+\psi
\|^6_{\infty}+\|b_J\|_{\textbf{W}^{e-2,q}_{\beta-2}}\big)^2\\
&\preceq r^{2\beta-2}(k_1\|\mu+\psi\|_{\infty}^{12}+k_2).
\end{align*}
\end{proof}
\begin{remark}
We make the following important remark concerning notation.
Consider the space
$W^{\alpha,\gamma}_{\delta}(M)$ where $\alpha \gamma
>3$. An order on $W^{\alpha,\gamma}_{\delta}(M)$ can be defined as follows: the functions $\chi_1, \chi_2 \in W^{\alpha,\gamma}_{\delta}(M)$
 satisfy $\chi_2\geq \chi_1$ if and only if  the continuous versions of $\chi_1, \chi_2$ satisfy $\chi_2(x) \geq
 \chi_1(x)$ for all $x\in M$ (clearly this definition agrees with the one that is described in Remark \ref{remorder}). Equipped with this order, $W^{\alpha,\gamma}_{\delta}(M)$
 becomes an ordered Banach space. By the interval $[\chi_1,\chi_2]_{\alpha,\gamma,\delta}$ we mean
the set of all functions $\chi \in W^{\alpha,\gamma}_{\delta}(M)$
such that $\chi_1\leq \chi \leq \chi_2$.
\end{remark}
\begin{lemma}\lab{lem1}
Let the following assumptions hold:
\begin{itemize}
\item All the assumptions in corollary \ref{coro2.8} hold.
\item $\tilde{s}\in (\frac{3}{p},s]$ and $\tilde{\delta}\in [\delta,0)$ are such that $W^{e-2,q}_{\beta-2}\times W^{\tilde{s},p}_{\tilde{\delta}}\hookrightarrow
W^{e-2,q}_{\beta+\tilde{\delta}-2}\hookrightarrow
W^{e-2,q}_{\beta-2}$. For example, using multiplication lemma, one
can easily check that for $\tilde{s}=s$ and
$\tilde{\delta}=\delta$ these inclusions hold true.
\item $\psi_-, \psi_+\in W^{s,p}_{\delta}$, $\psi_+\geq \psi_->-\mu$, \,\,$\psi_1, \psi_2 \in
[\psi_-,\psi_+]_{\tilde{s},p,\tilde{\delta}}$.
\item $W_1$ and $W_2$ are solutions to the momentum constraint
corresponding to $\psi_1$ and $\psi_2$, respectively.
\end{itemize}
Then:
\begin{equation*}
 \|W_1-W_2\|_{e,q,\beta}\preceq
 \big(1+\max\{\|\psi_-\|_{L^{\infty}_{\tilde{\delta}}},\|\psi_+\|_{L^{\infty}_{\tilde{\delta}}}\}\big)^5\|b_\tau\|_{L^z_{\beta-2}}\|\psi_2-\psi_1\|_{\tilde{s},p,\tilde{\delta}}.
\end{equation*}
The implicit constant in the above inequality depends on $\mu$ but
it is independent of $\psi_1, \psi_2, W_1, W_2$.
\end{lemma}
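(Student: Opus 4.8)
The plan is to reduce everything to the linear structure of the momentum constraint. Since $W_1$ and $W_2$ solve $\mathcal{A}_L W_i+\textbf{f}(\psi_i)=0$ with $\textbf{f}(\psi_i)=b_\tau(\psi_i+\mu)^6+b_J$, subtracting the two equations makes the $b_J$ terms cancel and leaves
\begin{equation*}
\mathcal{A}_L(W_1-W_2)=b_\tau\bigl[(\psi_2+\mu)^6-(\psi_1+\mu)^6\bigr].
\end{equation*}
Under the standing hypotheses (inherited from Corollary~\ref{coro2.8}) the triple $(e,q,\beta)$ meets the requirements of Theorem~\ref{thm1} and $h$ carries no nontrivial conformal Killing field, so $\mathcal{A}_L\colon\textbf{W}^{e,q}_{\beta}\to\textbf{W}^{e-2,q}_{\beta-2}$ is an isomorphism; applying $\mathcal{A}_L^{-1}$ and then using the embedding $L^z_{\beta-2}\hookrightarrow\textbf{W}^{e-2,q}_{\beta-2}$ (with $z=3q/(3+(2-e)q)$, which is part of the assumptions) gives
\begin{equation*}
\|W_1-W_2\|_{e,q,\beta}\preceq\bigl\|b_\tau\bigl[(\psi_2+\mu)^6-(\psi_1+\mu)^6\bigr]\bigr\|_{L^z_{\beta-2}},
\end{equation*}
with implicit constant equal to $\|\mathcal{A}_L^{-1}\|$ times the embedding constant, hence independent of $\psi_1,\psi_2,W_1,W_2$.

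Next I would exploit the algebraic factorization
\begin{equation*}
(\psi_2+\mu)^6-(\psi_1+\mu)^6=(\psi_2-\psi_1)\,Q,\qquad Q:=\sum_{k=0}^{5}(\psi_2+\mu)^k(\psi_1+\mu)^{5-k},
\end{equation*}
and estimate $\|b_\tau\,Q\,(\psi_2-\psi_1)\|_{L^z_{\beta-2}}$ exactly as in the proof of Lemma~\ref{lemglobal1}: peel off $Q$ and $\psi_2-\psi_1$ in the unweighted $L^\infty$ norm while keeping $b_\tau$ in $L^z_{\beta-2}$ (the multiplication $L^\infty\times L^z_{\beta-2}\hookrightarrow L^z_{\beta-2}$), obtaining
\begin{equation*}
\bigl\|b_\tau\,Q\,(\psi_2-\psi_1)\bigr\|_{L^z_{\beta-2}}\leq\|b_\tau\|_{L^z_{\beta-2}}\,\|\psi_2-\psi_1\|_{\infty}\sum_{k=0}^{5}\|\psi_2+\mu\|_{\infty}^{k}\|\psi_1+\mu\|_{\infty}^{5-k}.
\end{equation*}
One may instead peel off $\psi_2-\psi_1$ directly in $W^{\tilde{s},p}_{\tilde{\delta}}$ via the multiplication $W^{e-2,q}_{\beta-2}\times W^{\tilde{s},p}_{\tilde{\delta}}\hookrightarrow W^{e-2,q}_{\beta-2}$ assumed in the statement; the outcome is the same. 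To convert the right-hand side back I use three elementary facts: (1) since $\tilde{s}>3/p$ and $\tilde{\delta}<0$ one has $W^{\tilde{s},p}_{\tilde{\delta}}\hookrightarrow L^{\infty}_{\tilde{\delta}}\hookrightarrow L^{\infty}$, so $\|\psi_2-\psi_1\|_{\infty}\preceq\|\psi_2-\psi_1\|_{\tilde{s},p,\tilde{\delta}}$; (2) the pointwise bracketing $\psi_-\leq\psi_i\leq\psi_+$ forces $|\psi_i|\leq\max\{|\psi_-|,|\psi_+|\}$ a.e., hence $\|\psi_i\|_{\infty}\preceq\max\{\|\psi_-\|_{L^\infty_{\tilde{\delta}}},\|\psi_+\|_{L^\infty_{\tilde{\delta}}}\}$; and (3) $\|\psi_i+\mu\|_{\infty}\leq\mu+\|\psi_i\|_{\infty}$, with the constant $\mu$ absorbed into a $\mu$-dependent factor to give $\|\psi_i+\mu\|_{\infty}\preceq 1+\max\{\|\psi_-\|_{L^\infty_{\tilde{\delta}}},\|\psi_+\|_{L^\infty_{\tilde{\delta}}}\}$. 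Substituting these into the last display and summing the six terms yields precisely the asserted inequality.

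The only genuinely delicate point is the weighted-space bookkeeping around the non-decaying factor: $\psi_i+\mu$ tends to the constant $\mu$ at infinity, so it cannot be estimated in any space $L^\infty_{\tilde{\delta}}$ with $\tilde{\delta}<0$, and one must be careful to measure $Q$ and each $\psi_i+\mu$ in the plain $L^\infty$ norm, keep $b_\tau$ in the weighted norm $L^z_{\beta-2}$, and only then reintroduce the decay of $\psi_i$ coming from the ordering hypothesis $\psi_i\in[\psi_-,\psi_+]_{\tilde{s},p,\tilde{\delta}}$. Tracking the constants carefully through these steps is what guarantees that the resulting implicit constant depends on $\mu$ (through the $\mu^5$-type terms appearing when $Q$ is expanded) but is otherwise independent of $\psi_1,\psi_2,W_1,W_2$, as claimed.
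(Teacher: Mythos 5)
Your argument is correct and follows the paper's proof in all essential respects: linearity of the momentum constraint reduces the estimate to the difference of source terms, the difference of sixth powers is factored through $\psi_2-\psi_1$, the bounded inverse of $\mathcal{A}_L$ together with $L^z_{\beta-2}\hookrightarrow\mathbf{W}^{e-2,q}_{\beta-2}$ moves the estimate to $L^z_{\beta-2}$, and the bracketing $\psi_-\leq\psi_i\leq\psi_+$ supplies the uniform bounds. The only variation is where the multiplications take place: you pass to $L^z_{\beta-2}$ immediately and then peel off both $\psi_2-\psi_1$ and the polynomial factor $Q$ in the plain unweighted $L^\infty$ norm (using $L^\infty\times L^z_{\beta-2}\hookrightarrow L^z_{\beta-2}$ from Remark~\ref{remunweighted}), whereas the paper first detaches $\psi_2-\psi_1$ in $W^{\tilde{s},p}_{\tilde\delta}$ via the assumed multiplication $W^{e-2,q}_{\beta-2}\times W^{\tilde{s},p}_{\tilde\delta}\hookrightarrow W^{e-2,q}_{\beta-2}$, expands the powers of $\mu+\psi_i$ binomially, and then estimates the resulting $\psi_i^m$, $\psi_i^l$ factors in $L^\infty_{\tilde\delta}$. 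Your route is slightly cleaner because it sidesteps the degenerate boundary terms (e.g.\ $\|\psi_i^0\|_{L^\infty_{\tilde\delta}}$ for $m=0$ or $l=0$) that the paper's binomial expansion produces and must be read as formally absent, and you correctly observe that both routes are available and terminate at the same estimate.
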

\begin{proof}{\bf (Lemma~\ref{lem1})}
The momentum equation is linear and so $W_1-W_2$ is
the solution to the momentum constraint with right hand side
$\textbf{f}(\psi_1)-\textbf{f}(\psi_2)$.
\begin{align*}
\|W_1-W_2\|_{e,q,\beta}&\preceq
\|\textbf{f}(\psi_1)-\textbf{f}(\psi_2)\|_{e-2,q,\beta-2}=\|b_\tau
[(\mu+\psi_1)^6-(\mu+\psi_2)^6]\|_{e-2,q,\beta-2}\\
&=\|b_\tau
\sum_{j=0}^5(\mu+\psi_2)^j(\mu+\psi_1)^{5-j}(\psi_2-\psi_1)\|_{e-2,q,\beta-2}\\
&\leq \sum_{j=0}^5\|b_\tau
(\mu+\psi_2)^j(\mu+\psi_1)^{5-j}(\psi_2-\psi_1)\|_{e-2,q,\beta-2}\\
&\preceq \sum_{j=0}^5\|b_\tau
(\mu+\psi_2)^j(\mu+\psi_1)^{5-j}\|_{e-2,q,\beta-2}\|(\psi_2-\psi_1)\|_{\tilde{s},p,\tilde{\delta}}
\\
& \quad \quad (\textrm{by assumption } W^{e-2,q}_{\beta-2}\times
W^{\tilde{s},p}_{\tilde{\delta}}\hookrightarrow
W^{e-2,q}_{\beta-2})\\
&\preceq \sum_{j=0}^5\|b_\tau
(\mu+\psi_2)^j(\mu+\psi_1)^{5-j}\|_{L^z_{\beta-2}}\|(\psi_2-\psi_1)\|_{\tilde{s},p,\tilde{\delta}}
\\
& \quad \quad (\textrm{using } L^z_{\beta-2}\hookrightarrow W^{e-2,q}_{\beta-2})\\
&\preceq \sum_{j=0}^5\|b_\tau \sum_{m=0}^j{j \choose m}\psi_2^m
\sum_{l=0}^{5-j}{5-j \choose
l}\psi_1^l\|_{L^z_{\beta-2}}\|(\psi_2-\psi_1)\|_{\tilde{s},p,\tilde{\delta}}\\
&\leq \sum_{j=0}^5 \sum_{m=0}^j \sum_{l=0}^{5-j}{j \choose m}{5-j
\choose l}\|b_\tau \psi_2^m \psi_1^l\|_{L^z_{\beta-2}}\|(\psi_2-\psi_1)\|_{\tilde{s},p,\tilde{\delta}}\\
&\leq \sum_{j=0}^5 \sum_{m=0}^j \sum_{l=0}^{5-j}{j \choose m}{5-j
\choose l}\|b_\tau \psi_2^m \psi_1^l\|_{L^z_{\beta+2\tilde{\delta}-2}}\|(\psi_2-\psi_1)\|_{\tilde{s},p,\tilde{\delta}} \\
& \quad \quad (\textrm{since } \tilde{\delta}<0)\\
&\preceq \sum_{j=0}^5 \sum_{m=0}^j \sum_{l=0}^{5-j}{j \choose
m}{5-j \choose l}\|b_\tau\|_{L^z_{\beta-2}}
\|\psi_2^m\|_{L^{\infty}_{\tilde{\delta}}}
\|\psi_1^l\|_{L^{\infty}_{\tilde{\delta}}}\|(\psi_2-\psi_1)\|_{\tilde{s},p,\tilde{\delta}}
\\
& \quad \quad (\textrm{using } W^{\tilde{s},p}_{\tilde{\delta}}\hookrightarrow
L^{\infty}_{\tilde{\delta}},
\quad
L^{\infty}_{\tilde{\delta}}\times L^z_{\beta-2}\hookrightarrow
L^z_{\beta+\tilde{\delta}-2})\\
&\leq \sum_{j=0}^5 \sum_{m=0}^j \sum_{l=0}^{5-j}{j \choose m}{5-j
\choose l}\|b_\tau\|_{L^z_{\beta-2}}
\|\psi_2\|^m_{L^{\infty}_{\tilde{\delta}}}
\|\psi_1\|^l_{L^{\infty}_{\tilde{\delta}}}\|(\psi_2-\psi_1)\|_{\tilde{s},p,\tilde{\delta}}
\\
& \quad \quad (\textrm{due to } L^{\infty}_{\tilde{\delta}}\times
L^{\infty}_{\tilde{\delta}}\hookrightarrow
L^{\infty}_{2\tilde{\delta}}\hookrightarrow L^{\infty}_{\tilde{\delta}})\\
&=\sum_{j=0}^5 \|b_\tau\|_{L^z_{\beta-2}}
(1+\|\psi_2\|_{L^{\infty}_{\tilde{\delta}}})^j(1+\|\psi_1\|_{L^{\infty}_{\tilde{\delta}}})^{5-j}
\|(\psi_2-\psi_1)\|_{\tilde{s},p,\tilde{\delta}}\\
&\leq \sum_{j=0}^5
\|b_\tau\|_{L^z_{\beta-2}}(1+\max\{\|\psi_-\|_{L^{\infty}_{\tilde{\delta}}},\|\psi_+\|_{L^{\infty}_{\tilde{\delta}}}\})^5\|(\psi_2-\psi_1)\|_{\tilde{s},p,\tilde{\delta}}\\
&=6(1+\max\{\|\psi_-\|_{L^{\infty}_{\tilde{\delta}}},\|\psi_+\|_{L^{\infty}_{\tilde{\delta}}}\})^5\|b_\tau\|_{L^z_{\beta-2}}\|(\psi_2-\psi_1)\|_{\tilde{s},p,\tilde{\delta}},
\end{align*}
where we have used $(|\psi_i|\leq \max\{|\psi_+|,|\psi_-|\}$,
so that $\|\psi_i\|_{L^{\infty}_{\tilde{\delta}}}\leq \max\{\|\psi_-\|_{L^{\infty}_{\tilde{\delta}}},\|\psi_+\|_{L^{\infty}_{\tilde{\delta}}}\})$.
\end{proof}

\section{Results for the Hamiltonian Constraint}
   \label{sec:ham}

We now develop the main results will need for the Hamiltonian
constraint on AF manifolds with rough data.
We study primarily the ``shifted'' Hamiltonian constraint;
the reason for introducing a shift (the function $a_s$ in the
following lemma) is briefly discussed in Remark
\ref{remreasonshift}.
\begin{lemma}\lab{lem2}
Let the following assumptions hold:
\begin{itemize}
\item $(M,h)$ is a 3-dimensional AF Riemannian manifold of
class $W^{s,p}_{\delta}$.
\item $p\in (\frac{3}{2},\infty)$, $s\in (\frac{3}{p},\infty)\cap
[1,3]$.
\item $\beta<0$, $-1<\delta<0$, and $\eta=\max\{\delta, \beta\}$.
\item $a_\tau, a_\rho, a_W\in W^{s-2,p}_{\beta-2}$, $a_R\in
W^{s-2,p}_{\delta-2}$.
\item $a_0\in W^{s-2,p}_{\eta-2}$, $a_0\neq 0$, and $a_0\geq 0$ (see Remark \ref{remorder}).
\item $\tilde{s}\in (\frac{3}{p},s]\cap [1,1+\frac{3}{p})$, $\delta
\leq\tilde{\delta}<0$.
\item $t\in (\frac{3}{p},\tilde{s}]\cap [1,1+\frac{3}{p})$,
$\tilde{\delta} \leq \gamma<0$.
\item $\psi_-, \psi_+ \in W^{\tilde{s},p}_{\tilde{\delta}}$ and $-\mu<\psi_-\leq
\psi_+$.
\item $V\in W^{\tilde{s},p}_{loc}$, $V>0$ is such that $a_W V\in W^{s-2,p}_{\eta-2}$, and \\
$\|a_W V\|_{s-2,p,\eta-2}\preceq
C(\psi_+,\psi_-)\|a_W\|_{s-2,p,\eta-2}$ where $C(\psi_+,\psi_-)$
is a constant independent of $V$.
\item $a_s=a_0+a_W V \in W^{s-2,p}_{\eta-2}$.
\item $A_L^{shifted}: W^{s,p}_{\delta}\rightarrow
W^{s-2,p}_{\eta-2}$ is defined by $A_L^{shifted}\psi=A_L \psi+a_s
\psi$.
\item $\textbf{f}_W^{shifted}:
[\psi_-,\psi_+]_{\tilde{s},p,\tilde{\delta}}\rightarrow
W^{s-2,p}_{\eta-2}$ is defined by
$\textbf{f}_W^{shifted}(\psi)=\textbf{f}_W(\psi)-a_s \psi$ where
\begin{equation*}
\textbf{f}_W(\psi)=a_{\tau}(\mu+\psi)^5+a_R(\mu+\psi)-a_{\rho}(\mu+\psi)^{-3}-a_W(\mu+\psi)^{-7}.
\end{equation*}
\end{itemize}
Then:
\begin{enumerate}
\item Suppose $A_L^{shifted}: W^{s,p}_{\delta}\rightarrow
W^{s-2,p}_{\eta-2}$ is an isomorphism. If we define
$T^{shifted}:[\psi_-,\psi_+]_{\tilde{s},p,\tilde{\delta}}\times
W^{s-2,p}_{\beta-2}\rightarrow W^{s,p}_{\delta}$ by
$T^{shifted}(\psi,
a_W)=-(A_L^{shifted})^{-1}\textbf{f}_W^{shifted}(\psi)$, then
$T^{shifted}$ is continuous in both arguments and moreover
\begin{equation*}
\|T^{shifted}(\psi, a_W)\|_{s,p,\delta}\preceq
(1+\|a_W\|_{s-2,p,\eta-2})(1+\|\psi\|_{t,p,\gamma}).
\end{equation*}
The implicit constant in the above inequality depends on $\mu$ but
it is independent of $\psi$ and $a_W$.
\item If $\beta\leq\delta$, (that is if $\eta=\delta$), then $A_L^{shifted}: W^{s,p}_{\delta}\rightarrow
W^{s-2,p}_{\eta-2}$ is an isomorphism.
\end{enumerate}
\end{lemma}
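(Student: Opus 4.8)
The plan is to prove the two assertions in turn, starting with~(2), since it is the natural source of the isomorphism hypothesis in~(1). For~(2), assume $\beta\le\delta$, so that $\eta=\delta$ and $a_s=a_0+a_WV\in W^{s-2,p}_{\delta-2}$. I would first note that $A_L^{shifted}=-\Delta+a_s\cdot$ is a bounded map $W^{s,p}_{\delta}\to W^{s-2,p}_{\delta-2}$: the principal part is handled by the extension theorem (Theorem~\ref{thmB1}), and the zeroth-order term by the multiplication property $W^{s-2,p}_{\delta-2}\times W^{s,p}_{\delta}\hookrightarrow W^{s-2,p}_{2\delta-2}\hookrightarrow W^{s-2,p}_{\delta-2}$, which is available because $s>3/p$ and $\delta<0$. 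Next, $A_L^{shifted}$ is Fredholm of index zero: the scalar Laplacian $-\Delta\colon W^{s,p}_{\delta}\to W^{s-2,p}_{\delta-2}$ is an isomorphism on a $3$-dimensional AF manifold for the admissible range $\delta\in(-1,0)$ (Appendix~\S\ref{app:operators}), and $u\mapsto a_su$ is a compact perturbation, since it factors as a compact weighted Sobolev embedding $W^{s,p}_{\delta}\hookrightarrow W^{s',p}_{\delta'}$ (with $s'\in(\max\{3/p,s-2\},s)$ and $\delta'<\delta$) followed by bounded multiplication into $W^{s-2,p}_{\delta-2}$. Finally, injectivity: since $a_0\ge0$, $a_0\not\equiv0$, and $a_W\ge0$ (as is inherent to $a_W=\tfrac18|\sigma+\mathcal{L}W|^2$), $V>0$, the shift satisfies $a_s\ge a_0\ge0$ with $a_s\not\equiv0$, so the maximum-principle/Liouville statement for $-\Delta+a_s$ on AF manifolds (again Appendix~\S\ref{app:operators}) forces $A_L^{shifted}u=0\Rightarrow u=0$. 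An injective Fredholm operator of index zero is bijective, and the open mapping theorem then yields~(2).

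For~(1), assume $A_L^{shifted}$ is an isomorphism and write $T^{shifted}(\psi,a_W)=-(A_L^{shifted})^{-1}\textbf{f}_W^{shifted}(\psi)$, recalling that $(A_L^{shifted})^{-1}$ maps $W^{s-2,p}_{\eta-2}$ into $W^{s,p}_{\delta}$. The estimate is obtained by bounding $\textbf{f}_W^{shifted}(\psi)=\textbf{f}_W(\psi)-a_s\psi$ in $W^{s-2,p}_{\eta-2}$ and $(A_L^{shifted})^{-1}$ in operator norm, uniformly in $a_W$. For the first, note that $\psi\in[\psi_-,\psi_+]_{\tilde s,p,\tilde\delta}$ with $\psi_->-\mu$ forces $\mu+\psi$ to lie pointwise in a fixed compact subinterval of $(0,\infty)$ and gives $\|\psi\|_{L^{\infty}_{\tilde\delta}}\le\max\{\|\psi_-\|_{L^{\infty}_{\tilde\delta}},\|\psi_+\|_{L^{\infty}_{\tilde\delta}}\}$; hence each of $x\mapsto x^{5},x,x^{-3},x^{-7}$ may be replaced by a globally smooth bounded function of $\psi$. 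A weighted Moser/tame estimate then gives $\|f(\psi)\|_{t,p,\gamma}\preceq 1+\|\psi\|_{t,p,\gamma}$ for such $f$, with constant depending only on $\mu$, $\psi_\pm$, and finitely many derivatives of $f$; combining this with the multiplication lemmas (Lemma~\ref{lempA1}, Corollary~\ref{corpA1}, and the $f(u)v$ product property recorded in \S\ref{sec:weak}, which keeps the output in $W^{s-2,p}_{\eta-2}$ since the regularity there is limited by the coefficients, not by $\psi$) bounds every term of $\textbf{f}_W^{shifted}(\psi)$ in $W^{s-2,p}_{\eta-2}$ by a fixed-data coefficient norm times $(1+\|\psi\|_{t,p,\gamma})$. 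The only terms carrying $a_W$ are $a_W(\mu+\psi)^{-7}$ and the part $a_WV\psi$ of $a_s\psi$; both are linear in $a_W$, and the hypothesis $\|a_WV\|_{s-2,p,\eta-2}\preceq C(\psi_\pm)\|a_W\|_{s-2,p,\eta-2}$ keeps them so, yielding $\|\textbf{f}_W^{shifted}(\psi)\|_{s-2,p,\eta-2}\preceq(1+\|a_W\|_{s-2,p,\eta-2})(1+\|\psi\|_{t,p,\gamma})$. For the uniform bound on the inverse, I would use monotonicity of $-\Delta+a$ in the zeroth-order coefficient: since $a_s\ge a_0\ge0$, the comparison/maximum principle gives $|(A_L^{shifted})^{-1}g|\le(-\Delta+a_0)^{-1}|g|$ pointwise, whence an $L^{\infty}_{\delta}$ bound on $(A_L^{shifted})^{-1}g$ with constant depending only on $a_0$; rewriting $-\Delta u=g-a_su$ and bootstrapping through the a priori estimates for $-\Delta$ (Appendix~\S\ref{app:operators}) together with the multiplication bounds above upgrades this to the full $W^{s,p}_{\delta}$ bound with constant independent of $a_W$. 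Composing the two estimates gives the displayed inequality.

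Continuity of $T^{shifted}$ in $\psi$ follows from continuity of the Nemytskii-type map $\psi\mapsto\textbf{f}_W^{shifted}(\psi)$ on the order interval (again from the multiplication/composition lemmas) composed with the bounded operator $(A_L^{shifted})^{-1}$. Continuity in $a_W$ follows because $a_W\mapsto A_L^{shifted}$ and $a_W\mapsto\textbf{f}_W^{shifted}(\psi)$ are affine and bounded, while $A\mapsto A^{-1}$ is continuous on the open set of isomorphisms via the Neumann series; thus $T^{shifted}$ is continuous on a neighbourhood of the given $a_W$, and in fact wherever $A_L^{shifted}$ stays invertible, which for the nonnegative $a_W$ relevant in applications is guaranteed by the reasoning of~(2).

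I expect the main obstacle to be the norm estimate for $\textbf{f}_W^{shifted}$ in part~(1): one must combine the weighted Moser estimate with the weighted multiplication lemmas so as to produce a bound that is simultaneously \emph{linear} in $\|a_W\|_{s-2,p,\eta-2}$ and in $\|\psi\|_{t,p,\gamma}$ while paying only the weaker $(t,p,\gamma)$-regularity of $\psi$ --- this is precisely why the order-interval hypothesis (supplying two-sided pointwise control of $\mu+\psi$, hence an $L^{\infty}$ bound on $\psi$) and the sharp form of the multiplication lemmas are indispensable. The second delicate point is the $a_W$-independent bound on $(A_L^{shifted})^{-1}$, for which the comparison principle in the zeroth-order coefficient is the essential device.
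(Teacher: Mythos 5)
Your part~(2) follows exactly the route of the paper's Step~6: view $A_L^{shifted}$ as a compact perturbation of $-\Delta$ (which is Fredholm of index zero on $W^{s,p}_{\delta}$ with $\delta\in(-1,0)$, Lemma~\ref{lemb4}), invoke the maximum principle to kill the kernel, and finish with the open mapping theorem. One small bookkeeping slip: for the factorization of $u\mapsto a_s u$ through a compact embedding, one needs $\delta'>\delta$ (with $\delta'<0$ so the multiplication re-enters $W^{s-2,p}_{\delta-2}$), not $\delta'<\delta$ as you wrote.

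For part~(1), the core of your estimate is the same as the paper's Steps~4--5: an order interval controls $\mu+\psi$ in $L^\infty$ from both sides, and a weighted Moser-type product estimate (the paper's Lemma~\ref{lemA8}) together with the multiplication lemmas bounds $\mathbf{f}_W^{shifted}(\psi)$ in $W^{s-2,p}_{\eta-2}$ linearly in each of $\|a_W\|_{s-2,p,\eta-2}$ and $\|\psi\|_{t,p,\gamma}$; then one pushes through $(A_L^{shifted})^{-1}$. You omit the purely arithmetical verification that the exponent hypotheses of Lemma~\ref{lemA8} are met (the paper's Step~4, involving $\theta=\frac{1}{p}-\frac{t-1}{3}$), but that is a routine check.

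The genuine divergence is your proposed ``uniform bound on the inverse'' via a comparison principle in the zeroth-order coefficient. The paper does not do this: it simply applies the continuity of $(A_L^{shifted})^{-1}$, which is granted by hypothesis in part~(1). Your proposed device has a real problem: the comparison $|(A_L^{shifted})^{-1}g|\le(-\Delta+a_0)^{-1}|g|$ is a pointwise statement, but in this regularity regime $g$ lives in $W^{s-2,p}_{\eta-2}$ with $s-2$ typically negative, so $g$ is in general a distribution and $|g|$ is not defined. Even granting the $L^\infty_{\delta}$ bound, the proposed bootstrap $-\Delta u=g-a_su$ reintroduces $\|a_s\|$ (and hence $\|a_W\|$) when you estimate the term $a_su$ in $W^{s-2,p}_{\delta-2}$, so absorption would require $\|a_s\|$ small rather than merely bounded; the sketch does not actually deliver an $a_W$-independent operator norm. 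You should drop this step and argue as the paper does, simply invoking the isomorphism hypothesis. Your observation that the continuity in $a_W$ should be argued through $A\mapsto A^{-1}$ on the open set of isomorphisms (rather than treating $A_L^{shifted}$ as fixed) is more careful than the paper's Step~3 and is a reasonable addition.
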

\begin{proof}{\bf (Lemma~\ref{lem2})}
The proof will involve six main steps.
\begin{itemizeX}
\item \textbf{Step 1:} We first check that the assumptions
actually make sense. To this end, we need to check that both
$A_L^{shifted}$ and $\textbf{f}_W^{shifted}(\psi)$ are well-defined.

\medskip
We first verify that $A_L^{shifted}$ is well-defined, that is it sends
elements of $W^{s,p}_{\delta}$ to elements in $W^{s-2,p}_{\eta-2}$.
Since we know this is true for $A_L$, we just need to show that if
$\psi\in W^{s,p}_{\delta}\hookrightarrow W^{t,p}_{\gamma}$ then
$a_s\psi \in W^{s-2,p}_{\eta-2}$ (note that $a_s\in
W^{s-2,p}_{\eta-2}$). To this end we use the multiplication lemma
(Lemma \ref{lemA1}) to prove that $W^{s-2,p}_{\eta-2}\times
W^{t,p}_{\gamma}\hookrightarrow W^{s-2,p}_{\eta-2}$.
To use the lemma, we need the following conditions
(the numbering follows the numbering in Lemma~\ref{lemA1}):
\begin{center}
\begin{tabular}{lll}
  (i) & $s-2\geq s-2$, & (trivially true) \\
      & $t\geq s-2$,   & (since $t\geq 1 \geq s-2$) \\
 (ii) & $s-2+t\geq 0$, & (since $s-2\geq -1,\,\, t\geq 1$) \\
      &(\textrm{note that $s-2+t=0$ if and only if $s=t=1\in \mathbb{N}_0$})\\
(iii) & $(s-2)-(s-2)\geq 3(\frac{1}{p}-\frac{1}{p})$,
                    & (trivially true) \\
      & $t-(s-2)\geq 3(\frac{1}{p}-\frac{1}{p})$,
                    & (since $t\geq 1 \geq s-2$) \\
(iv)  & $(s-2)+t-(s-2)>3(\frac{1}{p}+\frac{1}{p}-\frac{1}{p})$,
                    & (since $t>\frac{3}{p}$) \\
 (v)  & Case $s-2<0$:
        $(s-2)+t>3(\frac{1}{p}+\frac{1}{p}-1)$, & \\
\end{tabular}
\end{center}
where the last item holds since
$(s-2)+t>\frac{3}{p}-2+\frac{3}{p}>3(\frac{1}{p}+\frac{1}{p}-1)$.
Therefore, we can conclude that $W^{s-2,p}_{\eta-2}\times
W^{t,p}_{\gamma}\hookrightarrow
W^{s-2,p}_{\gamma+\eta-2}\hookrightarrow W^{s-2,p}_{\eta-2}$.

\medskip
We now confirm that $\textbf{f}_W^{shifted}(\psi)$ is well-defined.
To this end, we just need to show $\textbf{f}_W$ sends
$W^{\tilde{s},p}_{\tilde{\delta}}$ to $W^{s-2,p}_{\eta-2}$. Note
that in previous sections by using Lemma \ref{lempA1} we showed
that $\textbf{f}_W$ sends $W^{s,p}_{\delta}$ to
$W^{s-2,p}_{\eta-2}$. By the same argument the above claim can be
proved.

\medskip
\item \textbf{Step 2:} As a direct consequence of Lemma
\ref{lempA1} and the multiplication lemma,
$\textbf{f}_W^{shifted}$is a continuous function from
$W^{\tilde{s},p}_{\tilde{\delta}}$ to $W^{s-2,p}_{\eta-2}$ (note
that $a_W, a_{\tau}, a_R, a_{\rho}$ are fixed). The continuity of
$a_W\rightarrow \textbf{f}_W(\psi)$ for a fixed $\psi\in
W^{\tilde{s},p}_{\tilde{\delta}}$ also follows from Lemma
 \ref{lempA1}.

\medskip
\item \textbf{Step 3:} According to Step 2 and the assumption that $A_L^{shifted}$ is an isomorphism, $T^{shifted}$ is a
composition of continuous maps with respect to each of its
arguments. Therefore $T^{shifted}$ is continuous in both
arguments.

\medskip
\item \textbf{Step 4:} Let
$\theta=\frac{1}{p}-\frac{t-1}{3}$; note that by assumption $t<
1+\frac{3}{p}$ and so $\theta >0$. We claim that $\frac{1}{p}\in
(\frac{s-1}{2}\theta,1-\frac{3-s}{2}\theta)$. Indeed,
$\frac{1}{p}< 1-\frac{3-s}{2}\theta$ because
\begin{align*}
& t>\frac{3}{p}\Rightarrow
\theta=\frac{1}{3}+\frac{1}{p}-\frac{t}{3}<\frac{1}{3},\\
& s\geq 1 \Rightarrow \frac{3-s}{2}\leq 1\Rightarrow
1-\frac{3-s}{2}\theta> 1-\theta> \frac{2}{3}.
\end{align*}
Consequently, since $p>\frac{3}{2}$, we have $\frac{1}{p}<
\frac{2}{3}< 1-\frac{3-s}{2}\theta$. It remains to show that
$\frac{1}{p}>\frac{s-1}{2}\theta$. Note that
\begin{equation*}
\frac{s-1}{2}\theta=\frac{s-1}{2}(\frac{1}{p}-\frac{t-1}{3})=\frac{s-1}{2p}-\frac{(s-1)(t-1)}{6},
\end{equation*}
and so
\begin{equation*}
\frac{1}{p}>\frac{s-1}{2}\theta \Leftrightarrow
\frac{(s-1)(t-1)}{6}>\frac{s-1}{2p}-\frac{1}{p}=\frac{s-3}{2p}.
\end{equation*}
The latter inequality is obviously true: if $s=1$ then LHS is zero
but RHS is negative. If $s>1$ then LHS is positive but RHS is
less than or equal to zero (recall that by assumption $s\leq 3$).

\medskip
\item \textbf{Step 5:} Since $s-2\in[-1,1]$ and $\frac{1}{p}\in
(\frac{s-1}{2}\theta,1-\frac{3-s}{2}\theta)$, we may use Lemma
 \ref{lemA8} to estimate
$\|\textbf{f}_W^{shifted}(\psi)\|_{s-2,p,\eta-2}$.
(Lemma~\ref{lemA8} is used for estimating similar quantities in
later arguments as well, so we give the justification for use of
Lemma~\ref{lemA8} as Remark~\ref{rem1} following this proof.)

For all $\psi \in [\psi_-,\psi_+]_{\tilde{s},p,\tilde{\delta}}$
we have (note that
$W^{\tilde{s},p}_{\tilde{\delta}}\hookrightarrow
W^{t,p}_{\gamma}$ so $\psi \in W^{t,p}_{\gamma}$)
\begin{align*}
\|\textbf{f}_W^{shifted}&(\psi)\|_{s-2,p,\eta-2}
\\
&=\|a_{\tau}(\mu+\psi)^5+a_R(\mu+\psi)-a_{\rho}(\mu+\psi)^{-3}-a_W(\mu+\psi)^{-7}-a_s \psi\|_{s-2,p,\eta-2}\\
&\leq
\|a_{\tau}(\mu+\psi)^5\|_{s-2,p,\eta-2}+\|a_R(\mu+\psi)\|_{s-2,p,\eta-2}+\|a_{\rho}(\mu+\psi)^{-3}\|_{s-2,p,\eta-2}\\
& \quad +\|a_W(\mu+\psi)^{-7}\|_{s-2,p,\eta-2}+\|a_s \psi\|_{s-2,p,\eta-2}\\
&\preceq
\|a_\tau\|_{s-2,p,\eta-2}(\|(\mu+\psi)^5\|_{L^{\infty}}+\|5(\mu+\psi)^4\|_{L^{\infty}}\|\psi\|_{t,p,\gamma})\\
&\quad +\|a_R\|_{s-2,p,\eta-2}(\|\mu+\psi\|_{L^{\infty}}+\|1\|_{L^{\infty}}\|\psi\|_{t,p,\gamma})\\
&\quad +\|a_0 \psi\|_{s-2,p,\eta-2}+\|a_W V \psi\|_{s-2,p,\eta-2}\\
&\quad
+\|a_\rho\|_{s-2,p,\eta-2}(\|(\mu+\psi)^{-3}\|_{L^{\infty}}+\|-3(\mu+\psi)^{-4}\|_{L^{\infty}}\|\psi\|_{t,p,\gamma})\\
&\quad +
\|a_W\|_{s-2,p,\eta-2}(\|(\mu+\psi)^{-7}\|_{L^{\infty}}+\|-7(\mu+\psi)^{-8}\|_{L^{\infty}}\|\psi\|_{t,p,\gamma})\\
&\preceq \|a_\tau\|_{s-2,p,\eta-2}(\|(\mu+\psi)^4\|_{L^{\infty}}\|\mu+\psi\|_{L^{\infty}}+\|5(\mu+\psi)^4\|_{L^{\infty}}\|\psi\|_{t,p,\gamma})\\
&\quad +\|a_R\|_{s-2,p,\eta-2}(\|\mu+\psi\|_{L^{\infty}}+\|\psi\|_{t,p,\gamma})\\
&\quad +\|a_0\|_{s-2,p,\eta-2}\|\psi\|_{t,p,\gamma}+\|a_W
V\|_{s-2,p,\eta-2}\|\psi\|_{t,p,\gamma}
\\
&\quad \quad (\textrm{note
that}\,\,a_W V\in W^{s-2,p}_{\eta-2},\,\,\,
W^{s-2,p}_{\eta-2}\times W^{t,p}_{\gamma}\hookrightarrow
W^{s-2,p}_{\eta-2})\\
&\quad +\|a_\rho\|_{s-2,p,\eta-2}(\|(\mu+\psi)^{-4}\|_{L^{\infty}}\|\mu+\psi\|_{L^{\infty}}+\|-3(\mu+\psi)^{-4}\|_{L^{\infty}}\|\psi\|_{t,p,\gamma})\\
&\quad
+\|a_W\|_{s-2,p,\eta-2}(\|(\mu+\psi)^{-8}\|_{L^{\infty}}\|\mu+\psi\|_{L^{\infty}}+\|-7(\mu+\psi)^{-8}\|_{L^{\infty}}\|\psi\|_{t,p,\gamma}).\\
\end{align*}
Now note that $W^{t,p}_{\gamma}\hookrightarrow
L^{\infty}_{\gamma}\hookrightarrow L^{\infty}$, so
\begin{equation*}
\|\mu+\psi\|_{L^{\infty}}\leq \mu+\|\psi\|_{L^{\infty}}\preceq
\mu+\|\psi\|_{t,p,\gamma}\preceq 1+\|\psi\|_{t,p,\gamma}.
\end{equation*}
Hence
\begin{align*}
\|\textbf{f}_W^{shifted}(\psi)\|_{s-2,p,\eta-2}\preceq
&\|a_\tau\|_{s-2,p,\eta-2}\|(\mu+\psi)^{4}\|_{L^{\infty}}(1+\|\psi\|_{t,p,\gamma})\\
&+\|a_R\|_{s-2,p,\eta-2}(1+\|\psi\|_{t,p,\gamma})\\
&+\|a_0\|_{s-2,p,\eta-2}(1+\|\psi\|_{t,p,\gamma})\\
&+\|a_W\|_{s-2,p,\eta-2}C(\psi_+,\psi_-)(1+\|\psi\|_{t,p,\gamma})\\
&+\|a_\rho\|_{s-2,p,\eta-2}\|(\mu+\psi)^{-4}\|_{L^{\infty}}(1+\|\psi\|_{t,p,\gamma})\\
&+\|a_W\|_{s-2,p,\eta-2}\|(\mu+\psi)^{-8}\|_{L^{\infty}}(1+\|\psi\|_{t,p,\gamma}).
\end{align*}
Consequently
\begin{align*}
\|\textbf{f}_W^{shifted}&(\psi)\|_{s-2,p,\eta-2}
\\
&\preceq \big[\|a_\tau\|_{s-2,p,\eta-2}\|(\mu+\psi_+)^{4}\|_{L^{\infty}}+\|a_R\|_{s-2,p,\eta-2}
\\
& \quad +\|a_\rho\|_{s-2,p,\eta-2}\|(\mu+\psi_-)^{-4}\|_{L^{\infty}} +\|a_0\|_{s-2,p,\eta-2}
\\
& \quad + \|a_W\|_{s-2,p,\eta-2}(\|(\mu+\psi_-)^{-8}\|_{L^{\infty}} +C(\psi_+,\psi_-))\big](1+\|\psi\|_{t,p,\gamma})\\
& \preceq [1+\|a_W\|_{s-2,p,\eta-2}](1+\|\psi\|_{t,p,\gamma}).
\end{align*}
Finally note that by assumption $(A_L^{shifted})^{-1}:
W^{s-2,p}_{\eta-2}\rightarrow W^{s,p}_{\delta}$ is continuous and
therefore
\begin{align*}
\|T^{shifted}(\psi, a_W)\|_{s,p,\delta}
&=\|-(A_L^{shifted})^{-1}\textbf{f}_W^{shifted}(\psi)\|_{s,p,\delta}
\\
&\preceq \|\textbf{f}_W^{shifted}(\psi)\|_{s-2,p,\eta-2}
\\
&\preceq [1+\|a_W\|_{s-2,p,\eta-2}](1+\|\psi\|_{t,p,\gamma}).
\end{align*}
\item \textbf{Step 6:} In this step we prove the second claim. By
the last item in Lemma \ref{lemb4},
$A_L:W^{s,p}_{\delta}\rightarrow W^{s-2,p}_{\delta-2}$ is Fredholm
of index zero. By Lemma \ref{lemb5},
$A_L^{shifted}:W^{s,p}_{\delta}\rightarrow W^{s-2,p}_{\delta-2} $
is a compact perturbation of $A_L$. Since $A_L$ is Fredholm of
index zero we can conclude that $A_L^{shifted}$ is also Fredholm
of index zero. Now maximum principle (Lemma \ref{lemb8}) implies
that the kernel of $A_L^{shifted}:W^{s,p}_{\delta}\rightarrow
W^{s-2,p}_{\delta-2}$ is trivial. An injective operator of index
zero is surjective as well. Consequently
$A_L^{shifted}:W^{s,p}_{\delta}\rightarrow W^{s-2,p}_{\delta-2}$
is a continuous bijective operator. Therefore by the open mapping
theorem, $A_L^{shifted}:W^{s,p}_{\delta}\rightarrow
W^{s-2,p}_{\delta-2}$ is an isomorphism of Banach spaces. In
particular the inverse is continuous and so
$\|u\|_{s,p,\delta}\preceq \|A_L^{shifted} u\|_{s-2,p,\delta-2}$.
\end{itemizeX}
\end{proof}
\begin{remark}\lab{rem1}
In the above proof we used Lemma \ref{lemA8} to estimate
$\|\textbf{f}_W^{shifted}(\psi)\|_{s-2,p,\eta-2}$. Note that
since $\psi \in W^{\tilde{s},p}_{\tilde{\delta}}\hookrightarrow
C^{0}_{\tilde{\delta}}$, and $\tilde{\delta}<0$ we can conclude
that $\psi\rightarrow 0$ as $|x|\rightarrow \infty$ (in the
asymptotic ends). Therefore there exists a compact set $B$ such
that outside of $B$, $|\psi|<\frac{\mu}{2}$. On the compact set
$B$, the continuous function $\psi$ attains its minimum which by
assumption must be larger than $-\mu$. Consequently $\inf \psi
> \min\{-\frac{\mu}{2}, \min_{x\in B}\psi(x)\}>-\mu $. Because of
this functions of the form $f(x)=(\mu+x)^{-m}$ where $m\in
\mathbb{N}$ are smooth on $[\inf \psi, \sup \psi]$ as it is
required by Lemma \ref{lemA8}.
\end{remark}

\begin{lemma}\lab{lem3}
In addition to the conditions of Lemma \ref{lem2} (including
$\beta \leq \delta$), assume $a_R\geq 0$ (see Remark
\ref{remorder}) and define the shift function $a_s$ by
\begin{equation*}
a_s=a_R+3\frac{(\mu+\psi_+)^2}{(\mu+\psi_-)^{6}}a_\rho+5(\mu+\psi_+)^4a_\tau+7\frac{(\mu+\psi_+)^6}{(\mu+\psi_-)^{14}}a_W.
\end{equation*}
Then for any fixed $a_W\in W^{s-2,p}_{\beta-2}$, the map
$T^{shifted}:
[\psi_-,\psi_+]_{\tilde{s},p,\tilde{\delta}}\rightarrow
W^{s,p}_{\delta}$ is monotone increasing.
\end{lemma}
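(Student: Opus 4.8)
The plan is to deduce the monotonicity of $T^{shifted}$ from the maximum principle for the shifted operator, combined with a pointwise comparison for the shifted nonlinearity.  First I would record that, with $a_s$ defined as in the statement, $a_s\geq 0$ pointwise: indeed $a_R\geq 0$ by the new hypothesis, while $a_\rho=\kappa\rho/4\geq 0$, $a_\tau=\tau^2/12\geq 0$, $a_W=|\sigma+\mathcal{L}W|^2/8\geq 0$, and $0<\mu+\psi_-\leq\mu+\psi_+$; moreover, since $\psi_\pm\in W^{\tilde{s},p}_{\tilde{\delta}}$ with $\mu+\psi_\pm$ bounded below away from zero, the multiplication lemma (Lemma~\ref{lempA1}) shows $a_s\in W^{s-2,p}_{\eta-2}$.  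Hence this $a_s$ meets the structural requirements placed on the shift in Lemma~\ref{lem2}, so — using $\beta\leq\delta$ — $A_L^{shifted}\colon W^{s,p}_{\delta}\to W^{s-2,p}_{\eta-2}$ is an isomorphism and, by the maximum principle (Lemma~\ref{lemb8}), $A_L^{shifted}u\leq 0$ implies $u\leq 0$; in particular $T^{shifted}(\psi)=-(A_L^{shifted})^{-1}\textbf{f}_W^{shifted}(\psi)$ is well defined on $[\psi_-,\psi_+]_{\tilde{s},p,\tilde{\delta}}$.  Now fix $\psi_1\leq\psi_2$ in that interval, put $\phi_i=T^{shifted}(\psi_i)$ (so $A_L^{shifted}\phi_i=-\textbf{f}_W^{shifted}(\psi_i)$), and subtract to obtain
\begin{equation*}
A_L^{shifted}(\phi_1-\phi_2)=\textbf{f}_W^{shifted}(\psi_2)-\textbf{f}_W^{shifted}(\psi_1).
\end{equation*}

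The heart of the argument is to show that this right-hand side is $\leq 0$ pointwise, equivalently $\textbf{f}_W(\psi_2)-\textbf{f}_W(\psi_1)\leq a_s(\psi_2-\psi_1)$.  I would prove this by freezing the coefficients $a_\tau,a_R,a_\rho,a_W$ at a point $x$ and applying the fundamental theorem of calculus to the scalar map $y\mapsto a_\tau(\mu+y)^5+a_R(\mu+y)-a_\rho(\mu+y)^{-3}-a_W(\mu+y)^{-7}$ on $[\psi_-(x),\psi_+(x)]$: its $y$-derivative $5a_\tau(\mu+y)^4+a_R+3a_\rho(\mu+y)^{-4}+7a_W(\mu+y)^{-8}$ is bounded above on that interval by $a_s(x)$, via the elementary monotonicity bounds $(\mu+y)^4\leq(\mu+\psi_+)^4$, $(\mu+y)^{-4}=\frac{(\mu+y)^2}{(\mu+y)^6}\leq\frac{(\mu+\psi_+)^2}{(\mu+\psi_-)^6}$, and $(\mu+y)^{-8}=\frac{(\mu+y)^6}{(\mu+y)^{14}}\leq\frac{(\mu+\psi_+)^6}{(\mu+\psi_-)^{14}}$ — which are precisely the coefficients appearing in $a_s$.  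Integrating from $\psi_1(x)$ to $\psi_2(x)$ yields the desired inequality at $x$, hence everywhere, and therefore $\textbf{f}_W^{shifted}(\psi_2)-\textbf{f}_W^{shifted}(\psi_1)=\textbf{f}_W(\psi_2)-\textbf{f}_W(\psi_1)-a_s(\psi_2-\psi_1)\leq 0$.

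Feeding this into the displayed identity gives $A_L^{shifted}(\phi_1-\phi_2)\leq 0$, so the maximum principle yields $\phi_1\leq\phi_2$, i.e.\ $T^{shifted}(\psi_1)\leq T^{shifted}(\psi_2)$, which is the assertion.  I do not expect a serious obstacle here; the two points requiring care are (i) confirming that the redefined $a_s$ still satisfies all hypotheses of Lemma~\ref{lem2}, so that $A_L^{shifted}$ remains an isomorphism enjoying the maximum principle — this is exactly the purpose of the added assumption $a_R\geq 0$, which keeps $a_s$ nonnegative — and (ii) the exponent bookkeeping in the pointwise estimate, which is what dictates the particular powers of $\mu+\psi_+$ and $\mu+\psi_-$ in the definition of $a_s$.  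As in Lemma~\ref{lem2}, no smallness or near-CMC condition enters this argument.
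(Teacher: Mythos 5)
Your proposal is correct and follows essentially the same route as the paper: verify that the specific $a_s$ satisfies the hypotheses of Lemma~\ref{lem2} (so that $A_L^{shifted}$ is an isomorphism obeying the maximum principle), show that $\textbf{f}_W^{shifted}$ is pointwise decreasing on $[\psi_-,\psi_+]$, and then invoke the positivity of $(A_L^{shifted})^{-1}$. The only cosmetic difference is that you establish the key inequality $\textbf{f}_W(\psi_2)-\textbf{f}_W(\psi_1)\leq a_s(\psi_2-\psi_1)$ by bounding the derivative of the frozen scalar nonlinearity via the fundamental theorem of calculus, whereas the paper uses the algebraic telescoping identity $(\mu+\psi_2)^m-(\mu+\psi_1)^m=\big(\sum_{j=0}^{m-1}(\mu+\psi_2)^j(\mu+\psi_1)^{m-1-j}\big)(\psi_2-\psi_1)$ and its analogue for negative powers — the two bookkeepings produce the identical coefficients in $a_s$ and are interchangeable. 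One point to flesh out if you wanted a fully self-contained proof: the paper spends a paragraph explicitly checking, using Lemmas~\ref{lempA1} and~\ref{lemA8}, that $a_0\in W^{s-2,p}_{\eta-2}$ and that $\|a_W V\|_{s-2,p,\eta-2}\preceq C(\psi_+,\psi_-)\|a_W\|_{s-2,p,\eta-2}$ with $V=7(\mu+\psi_+)^6/(\mu+\psi_-)^{14}$; your appeal to Lemma~\ref{lempA1} compresses this verification, and in particular the multiplicative bound in $\|a_W\|$ (which is the nontrivial structural hypothesis in Lemma~\ref{lem2}) deserves an explicit line.
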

\begin{proof}{\bf (Lemma~\ref{lem3})}
First note that the above definition of $a_s$
satisfies the assumptions that we had for $a_s$ in Lemma
 \ref{lem2}. Note that
\begin{align*}
&a_0=a_R+3\frac{(\mu+\psi_+)^2}{(\mu+\psi_-)^{6}}a_\rho+5(\mu+\psi_+)^4a_\tau,\\
& V=7\frac{(\mu+\psi_+)^6}{(\mu+\psi_-)^{14}}.
\end{align*}
We first must check $a_0\in W^{s-2,p}_{\eta-2}$ and $\|a_W
V\|_{s-2,p,\eta-2}\preceq C(\psi_+,\psi_-)\|a_W\|_{s-2,p,\eta-2}$.

We first check that $a_0\in W^{s-2,p}_{\eta-2}$.
By assumption $a_R \in W^{s-2,p}_{\delta-2}=W^{s-2,p}_{\eta-2}$. The fact that
$\frac{(\mu+\psi_+)^2}{(\mu+\psi_-)^{6}}a_\rho$ and
$(\mu+\psi_+)^4a_\tau$ are in $W^{s-2,p}_{\eta-2}$ follows
directly from Lemma \ref{lempA1}.
Therefore $a_0\in W^{s-2,p}_{\eta-2}$.

We now check that
$\|a_W V\|_{s-2,p,\eta-2}\preceq C(\psi_+,\psi_-)\|a_W\|_{s-2,p,\eta-2}$.
By Lemma \ref{lemA8} we have
\begin{align*}
\|a_W (\mu+\psi_+)^6\|_{s-2,p,\eta-2}
&\preceq
\|a_W\|_{s-2,p,\eta-2}(\|(\mu+\psi_+)^6\|_{L^{\infty}}+\|6(\mu+\psi_+)^5\|_{L^{\infty}}\|\psi_+\|_{\tilde{s},p,\tilde{\delta}})\\
&=C_1(\psi_+)\|a_W\|_{s-2,p,\eta-2},
\end{align*}
and so (recall Remark \ref{rem1})
\begin{align*}
\|a_W V\|_{s-2,p,\eta-2}&\preceq \|a_W
(\mu+\psi_+)^6(\mu+\psi_-)^{-14}\|_{s-2,p,\eta-2}\\
&\preceq \|a_W
(\mu+\psi_+)^6\|_{s-2,p,\eta-2}(\|(\mu+\psi_-)^{-14}\|_{L^{\infty}}
\\
& \quad +\|-14(\mu+\psi_-)^{-15}\|_{L^{\infty}}\|\psi_-\|_{\tilde{s},p,\tilde{\delta}})\\
&=C_2(\psi_-)\|a_W (\mu+\psi_+)^6\|_{s-2,p,\eta-2}
\\
&\preceq C_1(\psi_+)C_2(\psi_-)\|a_W\|_{s-2,p,\eta-2}
\\
&=C(\psi_+,\psi_-)\|a_W\|_{s-2,p,\eta-2}.
\end{align*}

Now that we have confirmed the two conditions we can proceed.
For all $\psi_1,
\psi_2\in[\psi_-,\psi_+]_{\tilde{s},p,\tilde{\delta}}$ with
$\psi_1 \leq \psi_2$ we have
\begin{align*}
\textbf{f}_W^{shifted}(\psi_2)-\textbf{f}_W^{shifted}(\psi_1)&=\textbf{f}_W(\psi_2)-\textbf{f}_W(\psi_1)-a_s(\psi_2-\psi_1)\\
&=a_\tau [(\mu+\psi_2)^5-(\mu+\psi_1)^5]+a_R
[\psi_2-\psi_1]
\\
& \quad -a_\rho[(\mu+\psi_2)^{-3}-(\mu+\psi_1)^{-3}]\\
&\quad
-a_W[(\mu+\psi_2)^{-7}-(\mu+\psi_1)^{-7}]-a_s(\psi_2-\psi_1).
\end{align*}
Note that for all $m\in \mathbb{N}$
\begin{align*}
(\mu+\psi_2)^m-(\mu+\psi_1)^m
&=\big(\sum_{j=0}^{m-1}(\mu+\psi_2)^j(\mu+\psi_1)^{m-1-j}\big)(\psi_2-\psi_1)
\\
&\leq
m(\mu+\psi_+)^{m-1}(\psi_2-\psi_1)
-[(\mu+\psi_2)^{-m}-(\mu+\psi_1)^{-m}]
\\
&=\frac{(\mu+\psi_2)^m-(\mu+\psi_1)^m}{[(\mu+\psi_2)
(\mu+\psi_1)]^m}
\\
&\leq m\frac{(\mu+\psi_+)^{m-1}}{(\mu+\psi_-)^{2m}}(\psi_2-\psi_1).
\end{align*}
Therefore
\begin{align*}
\textbf{f}_W^{shifted}(\psi_2)-\textbf{f}_W^{shifted}(\psi_1)
&\leq [5(\mu+\psi_+)^4a_\tau+a_R+3\frac{(\mu+\psi_+)^2}{(\mu+\psi_-)^{6}}a_\rho
\\
& \quad \quad +7\frac{(\mu+\psi_+)^6}{(\mu+\psi_-)^{14}}a_W-a_s](\psi_2-\psi_1)
\\
&=0.
\end{align*}
So $\textbf{f}_W^{shifted}$ is decreasing over
$[\psi_-,\psi_+]_{\tilde{s},p,\tilde{\delta}}$. Also
$A_L^{shifted}:W^{s,p}_{\delta}\rightarrow W^{s-2,p}_{\eta-2}$
satisfies the maximum principle, hence the inverse
$(A_L^{shifted})^{-1}$ is monotone increasing \cite{HNT07b}.
Consequently $T^{shifted}:
[\psi_-,\psi_+]_{\tilde{s},p,\tilde{\delta}}\rightarrow
W^{s,p}_{\delta}$ defined by
$-(A_L^{shifted})^{-1}\textbf{f}_W^{shifted}$ is monotone
increasing.
\end{proof}
\begin{lemma}\lab{lem4}
Let the conditions of Lemma \ref{lem3} hold, with $\psi_-$ and
$\psi_+$ sub- and supersolutions of the Hamiltonian constraint
(equation (\ref{eqweak1})), respectively (with $a_W$ as source).
Then, we have $T^{shifted}(\psi_+,a_W)\leq \psi_+$ and
$T^{shifted}(\psi_-,a_W)\geq \psi_-$. In particular, since
$T^{shifted}$ is monotone increasing in its first variable,
$T^{shifted}$ is invariant on
$U=[\psi_-,\psi_+]_{\tilde{s},p,\tilde{\delta}}$, that is, if
$\psi \in [\psi_-,\psi_+]_{\tilde{s},p,\tilde{\delta}}$, then
$T^{shifted}(\psi, a_W) \in
[\psi_-,\psi_+]_{\tilde{s},p,\tilde{\delta}}$.
\end{lemma}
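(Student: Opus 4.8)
The plan is to run the classical sub-/supersolution comparison argument, but transported through the shifted operator so that the monotonicity of $(A_L^{shifted})^{-1}$ and of $T^{shifted}$ already established in Lemmas~\ref{lem2} and~\ref{lem3} can be invoked directly. The key observation is the algebraic identity $A_L^{shifted}\psi + \textbf{f}_W^{shifted}(\psi) = A_L\psi + \textbf{f}_W(\psi)$, valid for every admissible $\psi$, since $A_L^{shifted}\psi = A_L\psi + a_s\psi$ and $\textbf{f}_W^{shifted}(\psi) = \textbf{f}_W(\psi) - a_s\psi$. Hence a weak supersolution $\psi_+$ of~\eqref{eqweak1} with source $a_W$, which by definition satisfies $A_L\psi_+ + \textbf{f}_W(\psi_+) \geq 0$, also satisfies $A_L^{shifted}\psi_+ \geq -\textbf{f}_W^{shifted}(\psi_+)$; and by construction the right-hand side equals $A_L^{shifted}\bigl(T^{shifted}(\psi_+,a_W)\bigr)$.

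First I would set $z = \psi_+ - T^{shifted}(\psi_+,a_W)$ and note that $A_L^{shifted} z \geq 0$ in the weak sense. Since the hypotheses of Lemma~\ref{lem3} are in force (in particular $a_R, a_\rho, a_\tau, a_W \geq 0$, so $a_s \geq 0$), the operator $A_L^{shifted}$ obeys the maximum principle (Lemma~\ref{lemb8}), giving $z \geq 0$, i.e. $T^{shifted}(\psi_+,a_W) \leq \psi_+$. The subsolution case is identical with inequalities reversed: $\psi_-$ satisfies $A_L\psi_- + \textbf{f}_W(\psi_-) \leq 0$, hence $A_L^{shifted}\bigl(\psi_- - T^{shifted}(\psi_-,a_W)\bigr) \leq 0$, and the maximum principle yields $\psi_- \leq T^{shifted}(\psi_-,a_W)$. (Alternatively one may phrase this purely through the order-preservation of $(A_L^{shifted})^{-1}$, which is the form recorded in Lemma~\ref{lem3}.)

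For the invariance claim, take any $\psi \in [\psi_-,\psi_+]_{\tilde{s},p,\tilde{\delta}}$, so $\psi_- \leq \psi \leq \psi_+$. By Lemma~\ref{lem3}, $T^{shifted}(\cdot,a_W)$ is monotone increasing on this interval, so $T^{shifted}(\psi_-,a_W) \leq T^{shifted}(\psi,a_W) \leq T^{shifted}(\psi_+,a_W)$; chaining this with the two endpoint bounds just proved gives $\psi_- \leq T^{shifted}(\psi,a_W) \leq \psi_+$. Finally $T^{shifted}$ takes values in $W^{s,p}_{\delta} \hookrightarrow W^{\tilde{s},p}_{\tilde{\delta}}$ (using $\tilde{s} \leq s$, $\tilde{\delta} \geq \delta$), so $T^{shifted}(\psi,a_W) \in [\psi_-,\psi_+]_{\tilde{s},p,\tilde{\delta}}$, as claimed. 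Note $T^{shifted}(\psi_\pm,a_W)$ is itself well-defined since $\psi_\pm$ are the endpoints of $[\psi_-,\psi_+]_{\tilde{s},p,\tilde{\delta}}$.

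The main obstacle I anticipate is purely bookkeeping: making the comparison rigorous in the weak setting. One must confirm that $z = \psi_+ - T^{shifted}(\psi_+,a_W)$ lies in a space where Lemma~\ref{lemb8} applies, and that the weak supersolution inequality (tested against nonnegative functions) is preserved after adding $a_s\psi_+$ to both sides and subtracting the exact equation solved by $T^{shifted}(\psi_+,a_W)$. This relies on $a_s \in W^{s-2,p}_{\eta-2}$ and the multiplication embedding $W^{s-2,p}_{\eta-2} \times W^{t,p}_{\gamma} \hookrightarrow W^{s-2,p}_{\eta-2}$ verified in Step~1 of the proof of Lemma~\ref{lem2}, so that $a_s z$ makes sense as an element of the codomain and the identity $A_L^{shifted}\psi + \textbf{f}_W^{shifted}(\psi) = A_L\psi + \textbf{f}_W(\psi)$ holds at the level of the weak formulation. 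Once this and the monotonicity from Lemma~\ref{lem3} are in place, the argument above closes with no further computation.
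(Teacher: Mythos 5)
Your proof is correct and takes essentially the same route as the paper: both hinge on the algebraic cancellation $A_L^{shifted}\psi + \textbf{f}_W^{shifted}(\psi) = A_L\psi + \textbf{f}_W(\psi)$ and then on order-preservation of the shifted operator, with the paper phrasing the last step via monotonicity of $(A_L^{shifted})^{-1}$ (already recorded in Lemma~\ref{lem3}) rather than a fresh appeal to Lemma~\ref{lemb8} --- an equivalent step that you yourself note in your parenthetical.
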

\begin{proof}{\bf (Lemma~\ref{lem4})}
Since $\psi_+$ is a supersolution, by definition
(which can be found in the next section),
$A_L\psi_++\textbf{f}_W(\psi_+)\geq 0$ with respect to the order
of $W^{s-2,p}_{\delta-2}$ (see Remark \ref{remorder}). We have
\begin{align*}
\psi_+-T^{shifted}(\psi_+,a_W)
&=(A_L^{shifted})^{-1}[A_L^{shifted}\psi_++\textbf{f}_W^{shifted}(\psi_+)]
\\
&=(A_L^{shifted})^{-1}[A_L\psi_++\textbf{f}_W(\psi_+)],
\end{align*}
which is nonnegative since $\psi_+$ is supersolution and
$(A_L^{shifted})^{-1}$ is linear and monotone increasing. The
proof of the other inequality is completely analogous.
\end{proof}
\begin{remark}\lab{remreasonshift}
As seen in the proof of the above lemmas, the introduction of
the shift function $a_s$ into $\textbf{f}_W^{shifted}$
ensures it is a decreasing
function on $[\psi_-,\psi_+]_{\tilde{s},p,\tilde{\delta}}$,
which subsequently implies that $T^{shifted}$ is invariant on
$U=[\psi_-,\psi_+]_{\tilde{s},p,\tilde{\delta}}$.
This property of $T^{shifted}$ plays an important role in the
fixed point framework we use for our existence theorem for the
coupled system, following closely the approach taken in~\cite{HNT07b}.
\end{remark}

\section{Global Sub- and Supersolution Constructions}
   \label{sec:subsuper}

In this section, based on a combination of ideas employed in
\cite{HNT07b,dM06,24}, we introduce a new method for constructing global
sub- and supersolutions for the Hamiltonian constraint on AF
manifolds. We begin with giving the precise definitions of local
and global sub- and
supersolutions.

 Consider the Hamiltonian constraint (equation \ref{eqweak1}):
\begin{equation*}
A_L \psi+f(\psi,W)=0.
\end{equation*}
\vspace*{-0.70cm}
\begin{itemizeX}
\item
A \textbf{local subsolution} of (\ref{eqweak1}) is a function $\psi_-\in
W^{s,p}_{\delta}$, $\psi_->-\mu$ such that
\begin{equation*}
A_L \psi_-+f(\psi_-,W)\leq 0
\end{equation*}
for at least one $W\in \textbf{W}^{e,q}_{\beta}$. Note that the
inequality is with respect to the order of $W^{s-2,p}_{\delta-2}$
(see Remark \ref{remorder}).

\item
A \textbf{local supersolution} of (\ref{eqweak1}) is a
function $\psi_+\in W^{s,p}_{\delta}$, $\psi_+>-\mu$ such that
\begin{equation*}
A_L \psi_++f(\psi_+,W)\geq 0
\end{equation*}
for at least one $W\in \textbf{W}^{e,q}_{\beta}$.

\item
A \textbf{global subsolution} of (\ref{eqweak1}) is a function $\psi_-\in
W^{s,p}_{\delta}$, $\psi_->-\mu$ such that
\begin{equation*}
A_L \psi_-+f(\psi_-,W_{\psi})\leq 0
\end{equation*}
for all vector fields $W_{\psi}$ solution of (\ref{eqweak2})
(momentum constraint) with source $\psi \in W^{s,p}_{\delta}$ and
$\psi \geq \psi_-$.

\item
A \textbf{global supersolution} of (\ref{eqweak1}) is a
function $\psi_+\in W^{s,p}_{\delta}$, $\psi_+>-\mu$ such that
\begin{equation*}
A_L \psi_++f(\psi_+,W_{\psi})\geq 0
\end{equation*}
for all vector fields $W_{\psi}$ solution of (\ref{eqweak2})
(momentum constraint) with source $\psi \in W^{s,p}_{\delta}$ and
$-\mu<\psi \leq \psi_+$.

\item
We say a pair of a subsolution and a supersolution, $\psi_-$ and $\psi_+$, is \textbf{compatible}
if $-\mu< \psi_-\leq \psi_+<\infty$ (so
$[\psi_-,\psi_+]_{s,p,\delta}$ is nonempty).

\end{itemizeX}

For our main existence theorem we need to have compatible global
subsolution and supersolution. The goal of this section is to
prove the existence of such compatible global barriers. In what
follows we use the following notation: Given any scalar function
$v\in L^{\infty}$, let $v^{\tiwedge}=\esssup_M v$, and
$v^{\tivee}=\essinf_M v$.
\begin{proposition}\lab{propglobal1}
Assume all the conditions of \textbf{Weak Formulation~\ref{weakf2}}
and Corollary \ref{coro2.8} hold true. Additionally assume that $h$
belongs to the positive Yamabe class, $-1<\beta\leq\delta<0$, and
$\|\sigma\|_{L^{\infty}_{\beta-1}}$,
$\|\rho\|_{L^{\infty}_{2\beta-2}}$,
$\|J\|_{\textbf{W}^{e-2,q}_{\beta-2}}$ are sufficiently small.
Moreover, suppose that there exists a positive continuous function
$\Lambda \in W^{s-2,p}_{\delta-2}$ and a number $\delta'\in
(2\beta,\delta)$ such that $\Lambda \sim r^{\delta'-2}$  (that is,
$r^{\delta'-2}\preceq \Lambda \preceq r^{\delta'-2}$) for
sufficiently large $r=(1+|x|^2)^{\frac{1}{2}}$ (see Remark
\ref{remlambdaexistence}). If $\mu>0$ is chosen to be sufficiently
small, then there exists a global supersolution $\psi_+\in
W^{s,p}_{\delta}$ to the Hamiltonian constraint.
\end{proposition}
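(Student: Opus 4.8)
I would take $\psi_+$ to be a small multiple of the solution of a single linear Poisson problem and verify the global supersolution inequality directly, inserting the pointwise bound on $a_W$ from Lemma~\ref{lemglobal1} together with the prescribed decay of $\Lambda$. As a preliminary reduction, since $h$ is in the positive Yamabe class, I would use the artificial conformal covariance of the Hamiltonian constraint (Appendix~\ref{app:covariance}) together with the results on Yamabe-positive AF metrics (Appendix~\ref{app:posyam}) to pass, without loss of generality, to a conformal representative of $h$ for which $a_R = R/8 \geq 0$ and which is still AF of class $W^{s,p}_\delta$; a global supersolution of the transformed constraint yields one for the original. This is precisely where positive Yamabe enters: $a_R$ decays only like $r^{\delta-2}$, more slowly than the profile $r^{\delta'-2}$ produced below, so a sign-indefinite curvature term could not be absorbed.

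\textbf{The candidate.} Since $-1<\delta<0$, the operator $A_L = -\Delta\colon W^{s,p}_\delta \to W^{s-2,p}_{\delta-2}$ is Fredholm of index zero (Lemma~\ref{lemb4}) and injective by the maximum principle, hence an isomorphism. Let $\psi_1 \in W^{s,p}_\delta$ be the unique solution of $-\Delta\psi_1 = \Lambda$; since $\Lambda>0$ and $\psi_1\to 0$ in the asymptotic ends, the maximum principle gives $\psi_1>0$, while $W^{s,p}_\delta \hookrightarrow L^\infty$ gives $\|\psi_1\|_\infty<\infty$. I set $\psi_+ := \lambda\psi_1$ for a small constant $\lambda>0$ to be fixed last; then $\psi_+ \in W^{s,p}_\delta$, $\psi_+ > 0 > -\mu$, and $-\Delta\psi_+ = \lambda\Lambda$. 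Because $\delta'>2\beta$ controls the ends and $\Lambda$ is positive and continuous on the compact core, there is a constant $c_0>0$ with $\Lambda \geq c_0\, r^{2\beta-2}$ on all of $M$.

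\textbf{The supersolution inequality.} Fix $\psi \in W^{s,p}_\delta$ with $-\mu < \psi \leq \psi_+$ and let $W_\psi$ be the momentum-constraint solution with source $\psi$, unique by the hypothesis that $h$ has no conformal Killing fields (Theorem~\ref{thm1}). Then $\|\mu+\psi\|_\infty \leq \mu + \lambda\|\psi_1\|_\infty$, so Lemma~\ref{lemglobal1} gives $a_{W_\psi} \preceq r^{2\beta-2}\bigl(k_1(\mu+\lambda\|\psi_1\|_\infty)^{12}+k_2\bigr)$ with $k_1 = \|b_\tau\|_{L^z_{\beta-2}}^2$ and $k_2 = \|\sigma\|_{L^\infty_{\beta-1}}^2 + \|b_J\|_{\mathbf{W}^{e-2,q}_{\beta-2}}^2$, while $a_\rho \preceq r^{2\beta-2}\|\rho\|_{L^\infty_{2\beta-2}}$. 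Since $\psi_+ \geq 0$, $(\mu+\psi_+)^{-7} \leq \mu^{-7}$ and $(\mu+\psi_+)^{-3} \leq \mu^{-3}$; and since $a_R, a_\tau \geq 0$ and $\psi_+ > -\mu$, the terms $a_R(\mu+\psi_+)$ and $a_\tau(\mu+\psi_+)^5$ are nonnegative. Discarding these and using $-\Delta\psi_+=\lambda\Lambda$ yields, with $C>0$ fixed and $\Theta := [k_1(\mu+\lambda\|\psi_1\|_\infty)^{12}+k_2]\mu^{-7} + \|\rho\|_{L^\infty_{2\beta-2}}\mu^{-3}$,
\[
A_L\psi_+ + f(\psi_+, W_\psi) \;\geq\; \bigl(\lambda c_0 - C\,\Theta\bigr)\, r^{2\beta-2}.
\]

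\textbf{Fixing the parameters; the main obstacle.} It remains to ensure $\lambda c_0 - C\Theta \geq 0$. Coupling $\lambda=\mu$ and dividing by $\mu$, this amounts to $c_0 \geq C k_1(1+\|\psi_1\|_\infty)^{12}\mu^4 + C k_2\mu^{-8} + C\|\rho\|_{L^\infty_{2\beta-2}}\mu^{-4}$: one first chooses $\mu$ small so that the first term is at most $c_0/3$, and then — with $\mu$ frozen — imposes the smallness of $\|\sigma\|_{L^\infty_{\beta-1}}$, $\|J\|_{\mathbf{W}^{e-2,q}_{\beta-2}}$ (hence of $k_2$) and of $\|\rho\|_{L^\infty_{2\beta-2}}$ so that the remaining two terms are each at most $c_0/3$. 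Then $A_L\psi_+ + f(\psi_+,W_\psi) \geq 0$ in $W^{s-2,p}_{\delta-2}$ for every admissible $\psi$, i.e.\ $\psi_+$ is a global supersolution. The bulk of the work — and the only genuine subtlety — is the bookkeeping in the last two steps: the weighted decay exponents must be kept aligned so that $\Lambda \sim r^{\delta'-2}$ with $\delta'\in(2\beta,\delta)$ simultaneously lies in $W^{s-2,p}_{\delta-2}$ (so $\psi_1\in W^{s,p}_\delta$) and dominates the curvature and matter terms, which decay only like $r^{2\beta-2}$; and because the estimate for $a_W$ carries the self-referential factor $(\mu+\|\psi_+\|_\infty)^{12}$, the amplitude of $\psi_+$, the shift $\mu$, and the smallness of the free data must be chosen in a coordinated order rather than independently.
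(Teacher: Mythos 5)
Your proof is correct and follows essentially the same route as the paper's: conformal transformation to a scalar-flat (or nonnegative-scalar-curvature) representative via the artificial conformal covariance of Appendix~\ref{app:covariance}, a candidate built from the Poisson problem $-\Delta u = \Lambda$ scaled by $\mu$ (your $\lambda=\mu$ coupling), the pointwise bound on $a_{W}$ from Lemma~\ref{lemglobal1}, and the staged smallness choices for $\mu$ and then for $\sigma$, $J$, $\rho$. The only thing you compress relative to the paper is the explicit bookkeeping of the conformal factor $\xi+1$ (the paper's constant $k_3$, the transfer $\psi_+=(\xi+1)\tilde\psi_+ + \mu\,\xi$, and Corollary~\ref{corocovariance}); since these are fixed finite quantities absorbed into your constant $C$, the argument is unaffected.
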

\begin{proof}{\bf (Proposition~\ref{propglobal1})}
Since $h$ belongs to the positive Yamabe class,
there exists a function $\xi\in W^{s,p}_{\delta}$, $\xi>-1$ such
that if we set $\tilde{h}=(1+\xi)^4 h$, then $R_{\tilde{h}}=0$.
Let $H(\psi, a_W, a_\tau, a_\rho)$ and $\tilde{H}(\psi, a_W,
a_\tau, a_\rho)$ be as in Appendix~\ref{app:covariance}.
In what follows we will show that there exists $\tilde{\psi}_+>0$ such that
\begin{equation}\lab{eqnglobal1}
\forall \,  \varphi \in (-\mu,
(\xi+1)\tilde{\psi}_++\mu\,\xi]_{s,p,\delta}\quad
\tilde{H}(\tilde{\psi}_+, a_{W_{\varphi}}, a_\tau, a_\rho)\geq 0.
\end{equation}
Here $W_{\varphi}$ is the solution of the momentum constraint
with source $\varphi$. Let's assume we find such a function. Then
if we define $\psi_+=(\xi+1)\tilde{\psi}_++\mu\,\xi$, we have
$\psi_+\in W^{s,p}_{\delta}$,  $\psi_+>-\mu$ and it follows from
Corollary \ref{corocovariance} that
\begin{equation*}
\forall \, \varphi \in (-\mu, \psi_+]_{s,p,\delta}\quad H(\psi_+,
a_{W_{\varphi}}, a_\tau, a_\rho)\geq 0
\end{equation*}
which precisely means that $\psi_+$ is a global supersolution of
the Hamiltonian constraint. So it is enough to prove the
existence of $\tilde{\psi}_+$.

Let $\Lambda\in W^{s-2,p}_{\delta-2}$ be a positive continuous
function such that $\Lambda\sim r^{\delta'-2}$  for sufficiently
large $|x|$; here $\delta'$ is a fixed but arbitrary number in
the interval $(2\beta,\delta)$. By Lemma \ref{lemb4} there exists
a unique function $u\in W^{s,p}_{\delta}$ such that
$-\Delta_{\tilde{h}}u=\Lambda$. By the maximum principle (Lemma
\ref{lemb8}) u is positive ($u>0$). Recall that $\mu$ is a fixed
nonzero number but we have freedom in choosing $\mu$. We claim
that if $\mu>0$ is sufficiently small, then $\tilde{\psi}_+:= \mu
\, u$ satisfies (\ref{eqnglobal1}). Indeed, for all $\varphi \in
(-\mu, (\xi+1)\tilde{\psi}_++\mu\,\xi]_{s,p,\delta}$ we have
\begin{align*}
\tilde{H}(\tilde{\psi}_+, a_{W_{\varphi}}, a_\tau,
a_\rho)&=-\Delta_{\tilde{h}}\tilde{\psi}_++a_\tau(\tilde{\psi}_++\mu)^5
-(1+\xi)^{-12}a_{W_{\varphi}}(\tilde{\psi}_++\mu)^{-7}
\\
& \quad \quad -(1+\xi)^{-8}a_{\rho}(\tilde{\psi}_++\mu)^{-3}
\quad (R_{\tilde{h}}=0)\\
&=\mu\,\Lambda
+a_\tau(\tilde{\psi}_++\mu)^5-(1+\xi)^{-12}a_{W_{\varphi}}(\tilde{\psi}_++\mu)^{-7}
\\
& \quad \quad -(1+\xi)^{-8}a_{\rho}(\tilde{\psi}_++\mu)^{-3}
\\
&\geq \mu\,\Lambda
-(1+\xi)^{-12}a_{W_{\varphi}}(\tilde{\psi}_++\mu)^{-7}-(1+\xi)^{-8}a_{\rho}(\tilde{\psi}_++\mu)^{-3}.
\end{align*}
The argument in Remark \ref{rem1} shows that $(\inf \xi)>-1$ and
so $\inf(1+\xi)>0$. Therefore if we let
$\tilde{C}=\max\{((1+\xi)^{\tivee})^{-12},
((1+\xi)^{\tivee})^{-8}\}$, then
\begin{align*}
\tilde{H}(\tilde{\psi}_+, a_{W_{\varphi}}, a_\tau, a_\rho)& \geq
\mu\,\Lambda
-\tilde{C}a_{W_{\varphi}}(\tilde{\psi}_++\mu)^{-7}-\tilde{C}a_{\rho}(\tilde{\psi}_++\mu)^{-3}\\
&=\mu\,\Lambda
-\tilde{C}\mu^{-7}a_{W_{\varphi}}(u+1)^{-7}-\tilde{C}\mu^{-3}a_{\rho}(u+1)^{-3}\\
&\geq \mu\,\Lambda
-C\mu^{-7}r^{2\beta-2}(k_1\|\mu+\varphi\|_{\infty}^{12}+k_2)(u+1)^{-7}
\\
& \quad \quad -\tilde{C}\mu^{-3}a_{\rho}(u+1)^{-3},
\end{align*}
where we have used Lemma \ref{lemglobal1}.
Recall that $C$ (the implicit constant in Lemma \ref{lemglobal1})
does not depend on $\mu$. Now note that $\forall \,  \varphi \in
(-\mu, (\xi+1)\tilde{\psi}_++\mu\,\xi]_{s,p,\delta}$
we have $0\leq \mu + \varphi\leq (\xi+1)(\mu+\tilde{\psi}_+)$ and
so
\begin{equation*}
\|\mu+\varphi\|_{\infty}^{12}\leq[(\xi+1)^{\tiwedge}]^{12}
[(\mu+\tilde{\psi}_+)^{\tiwedge}]^{12}.
\end{equation*}
Let
$k_3=(\xi+1)^{\tiwedge}\frac{(1+u)^{\tiwedge}}{(1+u)^{\tivee}}$.
We can write
\begin{align*}
\|\mu+\varphi\|_{\infty}^{12} &
\leq [(\xi+1)^{\tiwedge}]^{12}[(\mu+\tilde{\psi}_+)^{\tiwedge}]^{12}
=[(\xi+1)^{\tiwedge}]^{12} \mu^{12}[(1+u)^{\tiwedge}]^{12}
\\
& = k_3^{12} \mu^{12}[(1+u)^{\tivee}]^{12}\leq k_3^{12} \mu^{12}(u+1)^{12}.
\end{align*}
Consequently
\begin{align*}
\tilde{H}(\tilde{\psi}_+, a_{W_{\varphi}}, a_\tau, a_\rho)
&\geq \mu\,\Lambda -C\mu^{-7}r^{2\beta-2}(k_1 k_3^{12}\mu^{12}(u+1)^{12}+k_2)(u+1)^{-7}
\\
& \quad \quad -\tilde{C}\mu^{-3}a_{\rho}(u+1)^{-3}\\
&=\mu\,\Lambda -C\mu^{5}r^{2\beta-2}k_1k_3^{12}(u+1)^5
-C\mu^{-7}r^{2\beta-2}k_2(u+1)^{-7}
\\
& \quad \quad -\tilde{C}\mu^{-3}a_{\rho}(u+1)^{-3}\\
&\geq \mu\,\Lambda
-C\mu^{5}r^{2\beta-2}k_1k_3^{12}((u+1)^{\tiwedge})^5
-C\mu^{-7}r^{2\beta-2}k_2((u+1)^{\tivee})^{-7}
\\
& \quad \quad -\tilde{C}\mu^{-3}a_{\rho}((u+1)^{\tivee})^{-3}.
\end{align*}
Note that $\Lambda \sim r^{\delta'-2}$ for sufficiently large
$r$ and $2\beta-2<\delta'-2<0$.
We claim that this allows one to choose $\mu$ small enough so that
\begin{equation}
\frac{\Lambda}{2}>C\mu^{4}r^{2\beta-2}k_1k_3^{12}((u+1)^{\tiwedge})^5.
\label{eqn:mu-scaling}
\end{equation}
The justification of this claim is as follows.
 There exists a constant $C_1$ and a
number $r_1>0$ such that for $r> r_1$, we have $\Lambda \geq C_1
r^{\delta'-2}$. Also since $2\beta-2<\delta'-2<0$, there exists
$r_2>0$ such that for all $r>r_2$
\begin{equation*}
\frac{C_1}{2} r^{\delta'-2}>r^{2\beta-2}[C
k_1k_3^{12}((u+1)^{\tiwedge})^5].
\end{equation*}
Consequently for all $0<\mu\leq 1$ and $r>\max\{r_1,r_2\}$
\begin{equation*}
\frac{\Lambda}{2}>r^{2\beta-2}[C
k_1k_3^{12}((u+1)^{\tiwedge})^5]\geq
C\mu^{4}r^{2\beta-2}k_1k_3^{12}((u+1)^{\tiwedge})^5.
\end{equation*}
Also the positive continuous function $\Lambda$ attains its
minimum $\Lambda^{\tivee}>0$ on the compact set $r \leq
\max\{r_1,r_2\}$. We choose $\mu\leq 1$ small enough such that
\begin{equation}\lab{lambdacompact}
\frac{\Lambda^{\tivee}}{2}>C\mu^{4}k_1k_3^{12}((u+1)^{\tiwedge})^5.
\end{equation}
Since $r^{2\beta-2}\leq 1$ the above inequality implies that
~\eqref{eqn:mu-scaling} holds even if $r \leq \max\{r_1,r_2\}$.
(Note that on the entire $M$, $\Lambda^{\tivee}=0$, so we could
not use (\ref{lambdacompact}) on whole $M$ to determine $\mu$;
this is exactly why first we needed to study what happens for
large $r$.)
 For such $\mu$ by requiring that
$\|\sigma\|_{L^{\infty}_{\beta-1}}$,
$\|\rho\|_{L^{\infty}_{2\beta-2}}$,
$\|J\|_{\textbf{W}^{e-2,q}_{\beta-2}}$ are sufficiently small
(note that according to Remark \ref{reminfinitybound}, $a_\rho\leq
r^{2\beta-2}\|a_\rho\|_{L^{\infty}_{2\beta-2}} \textrm{a.e.}$) we
can ensure that
\begin{equation*}
\frac{\Lambda}{2}\geq
C\mu^{-8}r^{2\beta-2}k_2((u+1)^{\tivee})^{-7}+\tilde{C}\mu^{-4}a_{\rho}((u+1)^{\tivee})^{-3}.
\end{equation*}
\end{proof}
\begin{remark}\lab{remlambdaexistence}
Pick an arbitrary number $\delta'\in (2\beta,\delta)$.  If $s\leq
2$, then $\Lambda=r^{\delta'-2}$ satisfies the desired conditions:
clearly $\Lambda$ is positive, continuous, and $\Lambda \sim
r^{\delta'-2}$. Also obviously $\Lambda \in
L^{\infty}_{\delta'-2}$ and
\begin{equation*}
L^{\infty}_{\delta'-2}\hookrightarrow
L^{p}_{\delta-2}\hookrightarrow W^{s-2,p}_{\delta-2}\quad
(\Longrightarrow \Lambda \in W^{s-2,p}_{\delta-2})
\end{equation*}
The first inclusion is true because $\delta'$is strictly less than
$\delta$; the second inclusion is true because $s-2\leq 0$.
\end{remark}
\begin{proposition}\lab{propglobal2}
Assume all the conditions of \textbf{Weak Formulation~\ref{weakf2}}.
Additionally assume that $h$ belongs to the positive Yamabe class
and $-1<\beta\leq\delta<0$. If $\mu>0$ is chosen to be
sufficiently small, then there exists a global subsolution
$\psi_-\in W^{s,p}_{\delta}$ to the Hamiltonian constraint which
is compatible with the global supersolution that was constructed
in Proposition \ref{propglobal1} (provided the extra assumptions
of that proposition hold true).
\end{proposition}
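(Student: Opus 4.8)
The plan is to mirror the supersolution construction of Proposition~\ref{propglobal1}. Since $h$ lies in the positive Yamabe class there is $\xi\in W^{s,p}_{\delta}$ with $\xi>-1$ and $R_{\tilde{h}}=0$ for $\tilde{h}=(1+\xi)^4 h$; I would work in this rescaled picture, build a suitable $\tilde{\psi}_-$, and then pull it back using the artificial conformal covariance of Appendix~\ref{app:covariance} (Corollary~\ref{corocovariance}), exactly as in Proposition~\ref{propglobal1}. The decisive simplification relative to the supersolution is that one can produce a subsolution that is good \emph{simultaneously for every} admissible vector field $W$: because $a_W=\frac{1}{8}|\sigma+\mathcal{L}W|^2\ge 0$, $a_\rho\ge 0$, and $R_{\tilde{h}}=0$, every term in $\tilde{H}$ that was troublesome for the supersolution now carries a favorable sign once $\tilde{\psi}_-\le 0$; hence no smallness of $\sigma,\rho,J$ is needed for the subsolution itself, those hypotheses entering only through the compatibility comparison with $\psi_+$.

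Concretely, I would let $\tilde{\psi}_-\in W^{s,p}_{\delta}$ be the unique solution (Lemma~\ref{lemb4}) of $-\Delta_{\tilde{h}}\tilde{\psi}_-=-\mu^5 a_\tau$, which is legitimate since $\mu^5 a_\tau\in W^{s-2,p}_{\beta-2}\hookrightarrow W^{s-2,p}_{\delta-2}$ (as $\beta\le\delta$). Writing $\tilde{\psi}_-=-\mu^5 v$ with $v:=(-\Delta_{\tilde{h}})^{-1}a_\tau\ge 0$ fixed and independent of $\mu$, the maximum principle (Lemma~\ref{lemb8}) gives $\tilde{\psi}_-\le 0$, while $\|\tilde{\psi}_-\|_{L^{\infty}}\le \mu^5\|v\|_{L^{\infty}}<\mu$ as soon as $\mu^4\|v\|_{L^{\infty}}<1$, so that $-\mu<\tilde{\psi}_-\le 0$ and hence $0<\tilde{\psi}_-+\mu\le\mu$. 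Then for an arbitrary vector field $W$ (in particular for any $W_{\varphi}$ solving the momentum constraint),
\[
\tilde{H}(\tilde{\psi}_-, a_W, a_\tau, a_\rho)
 = a_\tau\big((\tilde{\psi}_-+\mu)^5-\mu^5\big)
   - (1+\xi)^{-12}a_W(\tilde{\psi}_-+\mu)^{-7}
   - (1+\xi)^{-8}a_\rho(\tilde{\psi}_-+\mu)^{-3} \le 0,
\]
because $a_\tau\ge 0$ and $(\tilde{\psi}_-+\mu)^5\le\mu^5$, while the last two terms are nonpositive. Setting $\psi_-:=(1+\xi)\tilde{\psi}_-+\mu\xi$, one gets $\psi_-\in W^{s,p}_{\delta}$ (from Lemma~\ref{lemb4} and the multiplication properties of $W^{s,p}_{\delta}$), and $\psi_-+\mu=(1+\xi)(\tilde{\psi}_-+\mu)>0$ so $\psi_->-\mu$; by Corollary~\ref{corocovariance} the sign of $\tilde{H}$ transfers to $H$, giving $A_L\psi_-+f(\psi_-,W_{\varphi})\le 0$ for all such $\varphi$, i.e. $\psi_-$ is a global subsolution of the Hamiltonian constraint.

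Finally, for compatibility with the supersolution $\psi_+=(1+\xi)\tilde{\psi}_++\mu\xi$ of Proposition~\ref{propglobal1} (which requires that proposition's extra smallness hypotheses, and where $\tilde{\psi}_+=\mu u$ with $u>0$ and the \emph{same} $\tilde{h}$), I would note $\tilde{\psi}_-\le 0\le \mu u=\tilde{\psi}_+$ and $1+\xi>0$ (by the argument recorded in Remark~\ref{rem1}), so
\[
\psi_+-\psi_- = (1+\xi)(\tilde{\psi}_+-\tilde{\psi}_-)\ge (1+\xi)\,\mu u>0,
\]
whence $-\mu<\psi_-\le\psi_+$ and $[\psi_-,\psi_+]_{s,p,\delta}$ is nonempty; it then suffices to pick $\mu>0$ small enough to meet both the present requirement $\mu^4\|v\|_{L^{\infty}}<1$ and the smallness requirement in Proposition~\ref{propglobal1}. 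I do not expect a genuine analytic obstacle: the only items demanding care are the bookkeeping of the conformal-covariance correspondence between $h$ and $\tilde{h}$, the verification of the weighted-space memberships via the multiplication lemmas, and keeping the choice of $\mu$ consistent with Proposition~\ref{propglobal1} — the substantive point being that $a_W,a_\rho\ge 0$ together with $R_{\tilde{h}}=0$ make this subsolution uniform in $W$, so its "global" character is automatic.
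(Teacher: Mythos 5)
Your proof is correct and follows essentially the same route as the paper: conformally transform to the scalar-flat metric $\tilde{h}$, build $\tilde{\psi}_-$ by applying the inverse Laplacian to (a multiple of) $a_\tau$, use the maximum principle to get the sign, and then pull back via Corollary~\ref{corocovariance}. The only substantive variation is in the scaling: the paper takes $\tilde{\psi}_-=\tfrac{\mu}{m}u$ with $-\Delta_{\tilde h}u=-a_\tau$ and $m=\|u\|_\infty+1$, which forces an extra pointwise bound ($0<1+\tfrac{1}{m}u<2$, giving the factor $32$) and then a choice $\mu^4<\tfrac{1}{32m}$; you take $\tilde{\psi}_-=-\mu^5 v$ with $-\Delta_{\tilde h}v=a_\tau$, so the Laplacian term $-\mu^5 a_\tau$ exactly dominates $a_\tau(\tilde{\psi}_-+\mu)^5\le\mu^5 a_\tau$ with no further estimation, and the only constraint on $\mu$ is $\mu^4\|v\|_\infty<1$ (needed to ensure $\tilde{\psi}_->-\mu$). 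This is a mild algebraic streamlining; it also subsumes the paper's Case~1 ($a_\tau\equiv 0$, giving $v\equiv 0$, $\tilde{\psi}_-\equiv 0$) without a case split. Both approaches use the same lemmas and the same compatibility argument $\tilde{\psi}_-\le 0\le\tilde{\psi}_+$, so the difference is a minor simplification rather than a different strategy.
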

\begin{proof}{\bf (Proposition~\ref{propglobal2})}
Since $h$ belongs to the positive Yamabe class,
there exists a function $\xi\in W^{s,p}_{\delta}$, $\xi>-1$ such
that if we set $\tilde{h}=(1+\xi)^4 h$, then $R_{\tilde{h}}=0$.
Let $H(\psi, a_W, a_\tau, a_\rho)$ and $\tilde{H}(\psi, a_W,
a_\tau, a_\rho)$ be as in Appendix~\ref{app:covariance}.
In what follows we will show that there exists $-\mu<\tilde{\psi}_-<0$
such that
\begin{equation}\lab{eqnglobal2}
\forall \,  \varphi \in W^{s,p}_{\delta}, \quad
\tilde{H}(\tilde{\psi}_-, a_{W_{\varphi}}, a_\tau, a_\rho)\leq 0.
\end{equation}
Here $W_{\varphi}$ is the solution of the momentum constraint
with source $\varphi$. Note that since $\tilde{\psi}_+>0$
($\tilde{\psi}_+$ is the function that was introduced in the
proof of the previous proposition ), clearly $\tilde{\psi}_-\leq
0 < \tilde{\psi}_+ $. Let's assume we find such a function. Then
if we define $\psi_-=(\xi+1)\tilde{\psi}_-+\mu\,\xi$, we have
$\psi_-\in W^{s,p}_{\delta}$, and
\begin{align*}
& \tilde{\psi}_->-\mu \Longrightarrow
(\xi+1)(\tilde{\psi}_-+\mu)>0 \Longrightarrow
(\xi+1)\tilde{\psi}_-+\mu\,\xi> -\mu \Longrightarrow \psi_->-\mu\\
& \tilde{\psi}_-\leq \tilde{\psi}_+ \Longrightarrow \psi_- \leq
\psi_+
\end{align*}
Moreover, it follows from Corollary \ref{corocovariance} that
\begin{equation*}
\forall \, \varphi \in W^{s,p}_{\delta},
\quad H(\psi_-, a_{W_{\varphi}}, a_\tau, a_\rho)\leq 0,
\end{equation*}
which clearly implies that $\psi_-$ is a global subsolution of the
Hamiltonian constraint. So it is enough to prove the
existence of $\tilde{\psi}_-$.

We may consider two cases: \\
\textbf{Case 1: $a_\tau \equiv 0$}\\
In this case $\tilde{\psi}_- \equiv 0$ satisfies the desired
conditions; Indeed,
\begin{equation*}
\tilde{H}(\tilde{\psi}_-\equiv 0, a_{W_{\varphi}}, a_\tau,
a_\rho)=-(1+\xi)^{-12}a_{W_{\varphi}}\mu^{-7}-(1+\xi)^{-8}a_\rho\mu^{-3}\leq
0.
\end{equation*}
\textbf{Case 2: $a_\tau \not \equiv 0$}\\
By Lemma \ref{lemb4} there exists a unique function $u\in
W^{s,p}_{\delta}$ such that $-\Delta_{\tilde{h}}u=-a_\tau$. By the
maximum principle (Lemma \ref{lemb8}) $u\leq 0$ and clearly $u
\not \equiv 0$ (because $a_\tau \not \equiv 0$). Note that
$W^{s,p}_{\delta}\hookrightarrow
L^{\infty}_{\delta}\hookrightarrow L^{\infty}$ (the latter
embedding is true because $\delta<0)$. Let $m=\|u\|_{\infty}+1$;
so in particular $-m<\inf u<0$. Recall that we have freedom in
choosing the fixed number $\mu$ as small as we want. We claim
that if $\mu>0$ is sufficiently small, then $\tilde{\psi}_-:=
\frac{1}{m}\mu \, u$ satisfies (\ref{eqnglobal2}). Clearly
$\tilde{\psi}_-\leq 0$; also
\begin{align*}
u>-m \Longrightarrow \mu(u+m)>0 \Longrightarrow \mu
(\frac{u+m}{m})>0 \Longrightarrow \frac{1}{m}\mu u>-\mu
\Longrightarrow \tilde{\psi}_->-\mu.
\end{align*}
Moreover, for all $\varphi \in W^{s,p}_{\delta}$ we have
\begin{align*}
\tilde{H}(\tilde{\psi}_-, a_{W_{\varphi}}, a_\tau,
a_\rho)&=-\Delta_{\tilde{h}}\tilde{\psi}_-+a_\tau(\tilde{\psi}_-+\mu)^5-(1+\xi)^{-12}a_{W_{\varphi}}(\tilde{\psi}_-+\mu)^{-7}
\\
& \quad \quad -(1+\xi)^{-8}a_{\rho}(\tilde{\psi}_-+\mu)^{-3}\\
&\leq -\Delta_{\tilde{h}} (\frac{1}{m}\mu
u)+a_\tau(\frac{1}{m}\mu u+\mu)^5\\
&=-\frac{1}{m}\mu a_\tau +\mu^5a_\tau(\frac{1}{m}u+1)^5\\
&= \mu a_\tau \big[-\frac{1}{m}+\mu^4(\frac{1}{m}u+1)^5\big].
\end{align*}
Now note that $-m<u<m$ and so $0<1+\frac{1}{m}u<2$, therefore
\begin{equation*}
\tilde{H}(\tilde{\psi}_-, a_{W_{\varphi}}, a_\tau, a_\rho)\leq
\mu a_\tau \big[-\frac{1}{m}+32 \mu^4\big].
\end{equation*}
Thus if we choose $\mu$ so that $\mu^4<\frac{1}{32m}$, then
$\tilde{H}(\tilde{\psi}_-, a_{W_{\varphi}}, a_\tau, a_\rho)\leq
0$.
\end{proof}
\begin{remark}\lab{aebarriers}
The compatible global barrier constructions in ~\cite{24} and
~\cite{HNT07b} both make critical use of the fact that the
conformal factor $\phi$, which is the primary unknown in their
formulations, is positive. When the subsolution and supersolution
are both positive, then one can scale the subsolution to make it
smaller than the supersolution. In the formulation presented in
this paper, which is designed to allow very low regularity
assumptions on the data on AF manifolds, the primary unknown is a
shifted version of the conformal factor ($\psi$). $\psi$ can be
negative and so in particular the scaling argument cannot be
directly applied here. Due to the nonlinear nature of the
Hamiltonian constraint, this situation cannot be resolved simply
by finding compatible barriers for the original positive unknown
$\phi$ and then shifting those to obtain compatible barriers for
$\psi$.
\end{remark}
\section{The Main Existence Result for Rough non-CMC Solutions}
   \label{sec:main}

We now establish existence of coupled non-CMC weak solutions for AF manifolds
by combining the results for the individual Hamiltonian and
momentum constraints developed in Sections~\ref{sec:ham} and~\ref{sec:mom},
the barrier constructions developed in Section~\ref{sec:subsuper},
together with the following topological fixed-point theorem
for the coupled system from~\cite{HNT07b}:
\begin{theorem}[Coupled Schauder Theorem] \cite{HNT07b}
\label{thm:schauder}
Let $X$ and $Y$ be Banach spaces, and let $Z$ be an ordered Banach
space with compact embedding $X \hookrightarrow Z$. Let
$[\psi_-,\psi_+] \subset Z$ be a non-empty interval, and set
$U=[\psi_-,\psi_+]\cap \bar{B}_M \subset Z$ where $\bar{B}_M$ is a
closed ball of finite radius $M>0$ in $Z$. Assume $U$ is nonempty
and let $S:U \to \mathcal{R}(S) \subset Y$ and $T:U \times
\mathcal{R}(S) \to U \cap X$ be continuous maps. Then, there
exist $w\in\mathcal{R}(S)$ and $\psi \in U \cap X$ such that
\begin{equation*}
\psi=T(\psi,w)\quad\textrm{and}\quad w=S(\psi).
\end{equation*}
\end{theorem}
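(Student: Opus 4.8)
The plan is to collapse the coupled system into a single fixed-point equation in $Z$ and then invoke Schauder's fixed-point theorem. First I would introduce the composite map $G\colon U\to U$ defined by $G(\psi)=T(\psi,S(\psi))$; this is well posed because $S$ sends $U$ into $\mathcal{R}(S)$ and $T$ sends $U\times\mathcal{R}(S)$ into $U\cap X\subseteq U$. The point of this reduction is that any fixed point $\psi=G(\psi)$ gives, together with $w=S(\psi)$, a solution of the pair $\psi=T(\psi,w)$, $w=S(\psi)$; moreover such a $\psi$ automatically lies in $U\cap X$, since $G(\psi)\in T(U\times\mathcal{R}(S))\subseteq U\cap X$. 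So the entire assertion reduces to exhibiting one fixed point of $G$.

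Next I would record the elementary structure of $U$. In the ordered Banach space $Z$ at hand the partial order is induced by a closed convex positive cone $P$, so the order interval $[\psi_-,\psi_+]$ equals $(\psi_-+P)\cap(\psi_+-P)$, an intersection of two translates of $P$, hence closed and convex; intersecting with the closed ball $\bar{B}_M$, which is closed, convex and bounded, shows that $U$ is a closed, bounded, convex subset of $Z$, and it is nonempty by hypothesis. I would then check continuity of $G$ as a self-map of $U$ for the norm topology of $Z$: the map $\psi\mapsto(\psi,S(\psi))$ is continuous from $U$ into $U\times\mathcal{R}(S)$ because $S$ is continuous; $T\colon U\times\mathcal{R}(S)\to U\cap X$ is continuous into $X$ with its own topology; and the embedding $X\hookrightarrow Z$ is continuous. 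Composing the three gives continuity of $G\colon U\to U$.

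The heart of the argument — and the step I expect to be the main obstacle — is showing that the image $G(U)$ is precompact in $Z$. Since $G(U)\subseteq U\cap X$, it suffices to know that $G(U)$ is in fact bounded \emph{in $X$}, for then the compact embedding $X\hookrightarrow Z$ forces $\overline{G(U)}$ to be compact in $Z$. The subtle point is that containment in $U\cap X$ alone does \emph{not} give this, because a subset of $Z$ that is bounded in $Z$ need not be bounded in $X$; one genuinely needs $T$ (and, implicitly, $S$) to carry bounded sets to bounded sets. In the concrete setting this is exactly what the \emph{a priori} estimates of Sections~\ref{sec:mom} and~\ref{sec:ham} provide — see Corollary~\ref{coro2.8} and Lemma~\ref{lemglobal1} for $S$, and Lemma~\ref{lem2} for $T$ — so that $G(U)$ is indeed a bounded subset of $X$. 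Thus this is where the compactness hypothesis on $X\hookrightarrow Z$ does its work, and where all the real analytic content of the earlier sections is being used.

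With $U$ a nonempty closed convex subset of the Banach space $Z$, $G\colon U\to U$ continuous, and $\overline{G(U)}$ compact in $Z$, Schauder's fixed-point theorem applies and yields a fixed point $\psi\in U$ with $\psi=G(\psi)$. Setting $w=S(\psi)\in\mathcal{R}(S)$ then gives $\psi=T(\psi,w)$ and $w=S(\psi)$, and, as already observed, $\psi=G(\psi)\in U\cap X$ — which is precisely the conclusion of the theorem.
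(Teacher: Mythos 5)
Your overall plan — reduce the coupled system to the single self-map $G(\psi)=T(\psi,S(\psi))$ of $U$ and invoke Schauder — is the strategy of \cite{HNT07b}, and your handling of the routine points (that $U$ is nonempty, closed, convex and bounded in $Z$; that $G$ is continuous as a composition of the three continuous maps) is correct. You have also put your finger on the one genuinely delicate point: containment of $G(U)$ in $U\cap X$ does not by itself bound the $X$-norms of elements of $G(U)$, and without such a bound the compact embedding $X\hookrightarrow Z$ yields nothing, so $\overline{G(U)}$ need not be compact in $Z$. That observation is sharp and accurate.

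The problem is in the way you close this gap. You appeal to Corollary~\ref{coro2.8}, Lemma~\ref{lemglobal1} and Lemma~\ref{lem2}. Those are properties of the specific operators $S$ and $T$ constructed in Sections~\ref{sec:mom}--\ref{sec:ham}; they are not consequences of the hypotheses of the abstract theorem under discussion, which assume only that $S$ and $T$ are continuous. So what you have actually shown is that Schauder applies in the concrete setting of Section~\ref{sec:main}, not that it applies under the stated abstract hypotheses. To prove the theorem at this level of generality you need an additional assumption — for instance that $T\bigl(U\times\mathcal{R}(S)\bigr)$ is a bounded subset of $X$, equivalently precompact in $Z$ — and once it is granted, the Schauder argument you outline goes through verbatim. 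That hypothesis is in fact secured in the application: the auxiliary claim in the proof of Theorem~\ref{thm:main}, together with the $X$-norm estimate of Lemma~\ref{lem2}, supplies exactly the bound on the range of $T$ in $X$ before Schauder is invoked. But that work belongs to the verification of hypotheses in the application, not to the proof of the abstract fixed-point theorem, and as written your argument leaves the abstract theorem without a self-contained proof.
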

\begin{remark}
In \cite{HNT07b} the above theorem is stated with the extra assumption
that $\bar{B}_M$ is a ball of radius $M$ \textbf{about the
origin} but the same proof works even if $\bar{B}_M$ is not
centered at the origin.
\end{remark}

With all of the supporting results we need now in place,
we state and prove our main result.
\begin{theorem}
\label{thm:main}
Let $(M,h)$ be a $3$-dimensional AF Riemannian manifold of class
$W^{s,p}_{\delta}$ where $p\in (1,\infty)$, $s\in
(1+\dfrac{3}{p},\infty)$ and $-1< \delta<0$ are given. Suppose
$h$ admits no nontrivial conformal Killing field (see Remark
\ref{remconformalkilling}) and is in the positive Yamabe class.
Let $\beta \in (-1,\delta]$. Select $q$ and $e$ to satisfy:
\begin{align*}
&\frac{1}{q}\in
(0,1)\cap(0,\frac{s-1}{3})\cap[\frac{3-p}{3p},\frac{3+p}{3p}],\\
&e\in(1+\frac{3}{q},\infty)\cap[s-1,s]\cap[\frac{3}{q}+s-\frac{3}{p}-1,\frac{3}{q}+s-\frac{3}{p}].
\end{align*}
Let $q=p$ if $e=s\not \in \mathbb{N}_0$. Moreover if $s>2,\, s\not
\in \mathbb{N}_0$, assume $e<s$.

Assume that the data
satisfies:
\begin{itemize}
\item $\tau \in W^{e-1,q}_{\beta-1}$ if $e\geq 2$ and $\tau\in W^{1,z}_{\beta-1}$ otherwise, where $z=\dfrac{3q}{3+(2-e)q}$ \\
(note that if $e=2$, then $W^{e-1,q}_{\beta-1}=W^{1,z}_{\beta-1}$),
\item $\sigma \in W^{e-1,q}_{\beta-1}$,
\item $\rho \in W^{s-2,p}_{\beta-2}\cap
L^{\infty}_{2\beta-2}$, $\rho\geq 0$ ($\rho$ can be identically
zero),
\item $J\in \textbf{W}^{e-2,q}_{\beta-2}$.
\end{itemize}
Recall that we have freedom in choosing the positive constant
$\mu$ in equations (\ref{eqweak1}) and (\ref{eqweak2}). If $\mu$
is chosen to be sufficiently small and if
$\|\sigma\|_{L^{\infty}_{\beta-1}}$,
$\|\rho\|_{L^{\infty}_{2\beta-2}}$, and
$\|J\|_{\textbf{W}^{e-2,q}_{\beta-2}}$ are sufficiently small,
 then there exists $\psi\in W^{s,p}_{\delta}$ with $\psi>-\mu$ and
$W\in \textbf{W}^{e,q}_{\beta}$ solving (\ref{eqweak1}) and
(\ref{eqweak2}).
\end{theorem}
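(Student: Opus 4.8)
The plan is to realise $(\psi,W)$ as a fixed point of the coupled system via the Coupled Schauder Theorem (Theorem~\ref{thm:schauder}), solving the Hamiltonian constraint through the shifted solution map of Section~\ref{sec:ham} and the momentum constraint through the isomorphism of Section~\ref{sec:mom}. Two preliminary reductions are called for. Since $h$ lies in the positive Yamabe class, I would first use the artificial conformal covariance of Appendix~\ref{app:covariance} — passing to a conformally related background of vanishing scalar curvature and transferring solutions by Corollary~\ref{corocovariance} — so that henceforth $a_R\geq 0$, as required in Lemma~\ref{lem3}. Second, Lemmas~\ref{lem2}--\ref{lem4} require $p\in(\tfrac32,\infty)$ and $s\in[1,3]$, whereas the theorem allows all $p\in(1,\infty)$ and $s\in(1+\tfrac{3}{p},\infty)$; I would therefore first prove existence at a \emph{reduced} set of exponents $(s_0,p_0,e_0,q_0)$ lying in the range of Lemmas~\ref{lem2}--\ref{lem4} and Theorem~\ref{thm1}, into whose weighted Sobolev spaces the prescribed data embeds (lowering $s$ below $3$, and, if $p\leq\tfrac32$, enlarging $p$, with $e,q$ adjusted), and then recover the stated regularity by the bootstrap described at the end. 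So from now on assume $p\in(\tfrac32,\infty)$, $s\in(1+\tfrac{3}{p},\infty)\cap[1,3]$, $a_R\geq 0$, and that the exponent relations of \textbf{Weak Formulation~\ref{weakf2}} hold.

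For the fixed-point set-up, put $X=W^{s,p}_{\delta}$, $Y=\textbf{W}^{e,q}_{\beta}$, and pick $\tilde s\in(\tfrac{3}{p},1+\tfrac{3}{p})$ and $\tilde\delta\in(\delta,0)$ so that $X\hookrightarrow Z:=W^{\tilde s,p}_{\tilde\delta}$ is a \emph{compact} embedding (the Rellich-type theorem of Appendix~\ref{app:spaces}) and so that the maps $S,T$ below are continuous on $Z$. Since $\mu$ and $\|\sigma\|_{L^{\infty}_{\beta-1}}$, $\|\rho\|_{L^{\infty}_{2\beta-2}}$, $\|J\|_{\textbf{W}^{e-2,q}_{\beta-2}}$ are small, Propositions~\ref{propglobal1} and~\ref{propglobal2} provide a \emph{compatible} pair $-\mu<\psi_-\leq\psi_+$ in $W^{s,p}_{\delta}$ of \emph{global} sub- and supersolutions of the Hamiltonian constraint (the weight function $\Lambda$ needed in Proposition~\ref{propglobal1} can be taken smooth and positive, hence in $W^{s-2,p}_{\delta-2}$ for every admissible $s$; cf.\ Remark~\ref{remlambdaexistence}). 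Let $[\psi_-,\psi_+]$ be the resulting order interval in $Z$. Define $S\colon[\psi_-,\psi_+]\to Y$ by letting $S(\psi)=W_{\psi}$ be the unique solution of~\eqref{eqweak2} with source $\psi$, which exists by Theorem~\ref{thm1} (this is where the absence of nontrivial conformal Killing fields enters). By Lemma~\ref{lem1}, $S$ is Lipschitz, hence continuous, on the bounded set $[\psi_-,\psi_+]$, and by Corollary~\ref{coro2.8} and Lemma~\ref{lemglobal1}, $\mathcal{R}(S)$ is bounded in $Y$ and $\{a_{W_{\psi}}:\psi\in[\psi_-,\psi_+]\}$ is bounded in $W^{s-2,p}_{\beta-2}$.

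Next I would define $T\colon[\psi_-,\psi_+]\times\mathcal{R}(S)\to W^{s,p}_{\delta}$ by $T(\psi,w)=T^{shifted}(\psi,\tfrac18|\sigma+\mathcal{L}w|^2)$, with $T^{shifted}$ and the shift $a_s$ as in Lemmas~\ref{lem2} and~\ref{lem3}, built from $\psi_\pm$ and from $a_W=\tfrac18|\sigma+\mathcal{L}w|^2$. Since $\beta\leq\delta$, Lemma~\ref{lem2}(2) makes $A_L^{shifted}$ an isomorphism; Lemma~\ref{lem2}(1) together with the continuity of $w\mapsto\tfrac18|\sigma+\mathcal{L}w|^2$ (multiplication lemmas) gives joint continuity of $T$; the estimate of Lemma~\ref{lem2}(1), with $\mathcal{R}(S)$ and $[\psi_-,\psi_+]$ bounded, produces a \emph{fixed} $M>0$ with $\|T(\psi,w)\|_{s,p,\delta}\leq M$; and Lemmas~\ref{lem3}--\ref{lem4} give $\psi_-\leq T^{shifted}(\psi_-,a_W)\leq T^{shifted}(\psi,a_W)\leq T^{shifted}(\psi_+,a_W)\leq\psi_+$, the outer inequalities because, for $w=W_{\varphi}\in\mathcal{R}(S)$, $\psi_\mp$ are \emph{global} sub-/supersolutions for the source $a_{W_{\varphi}}$, the inner ones by monotonicity of $T^{shifted}$ in its first argument. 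Hence, with $\bar B_M\subset Z$ the closed ball of radius proportional to $M$ and $U=[\psi_-,\psi_+]\cap\bar B_M$ (nonempty, since $\psi_-\in U$), $T$ maps $U\times\mathcal{R}(S)$ into $U\cap X$, and Theorem~\ref{thm:schauder} yields $w\in\mathcal{R}(S)$ and $\psi\in U\cap X$ with $\psi=T(\psi,w)$ and $w=S(\psi)$. Then $w=W_{\psi}$ solves~\eqref{eqweak2}, while $\psi=T^{shifted}(\psi,a_{W_{\psi}})$ unwinds — the term $a_s\psi$ cancelling — to $A_L\psi+f(\psi,W_{\psi})=0$, i.e.~\eqref{eqweak1}; and $\psi\in[\psi_-,\psi_+]$ forces $\psi>-\mu$. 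Finally, to reach the stated regularity I would run a coupled bootstrap: at any stage the coefficients $a_R,a_\tau,a_\rho,a_W$ lie in the target-regularity spaces determined by the fixed data and by the current regularity of $W$, so that by the weighted multiplication lemmas $f(\psi,W)$ and $\textbf{f}(\psi)$ are (up to the target) two derivatives smoother than $\psi,W$, and the \emph{a priori} elliptic estimates of Appendix~\ref{app:operators} for $A_L$ and $\mathcal{A}_L$ then recover that smoothness; iterating finitely often gives $\psi\in W^{s,p}_{\delta}$, $W\in\textbf{W}^{e,q}_{\beta}$ for the originally prescribed exponents.

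I expect the main obstacle to be the consistent assembly of the fixed-point data in the weighted, asymptotically flat setting: one must choose $Z=W^{\tilde s,p}_{\tilde\delta}$ so that \emph{simultaneously} $X\hookrightarrow Z$ is compact, $S$ and $T$ are continuous on $Z\cap[\psi_-,\psi_+]$, $T$ lands in one \emph{fixed} ball $\bar B_M$, and $T$ preserves the order interval $[\psi_-,\psi_+]$ — the last property genuinely using the \emph{global}, rather than merely local, character of the barriers of Section~\ref{sec:subsuper} together with the monotonicity forced by the shift $a_s$ and the maximum principle (Lemma~\ref{lemb8}). The bootstrap, while conceptually routine, is likewise delicate, since each gain of regularity must respect the weight exponents and the multiplication structure of weighted spaces — which is exactly where the new weighted-multiplication result announced in the introduction is needed.
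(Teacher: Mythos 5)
Your overall architecture matches the paper's: a conformal transformation to a scalar-flat background so that the shift lemmas of Section~\ref{sec:ham} apply, global barriers from Section~\ref{sec:subsuper}, the momentum solution map from Theorem~\ref{thm1}, the compact embedding $W^{s,p}_{\delta}\hookrightarrow W^{\tilde s,p}_{\tilde\delta}$, and the Coupled Schauder Theorem. But there is a genuine gap in the step that produces the invariant ball $\bar B_M$. You assert that ``the estimate of Lemma~\ref{lem2}(1), with $\mathcal{R}(S)$ and $[\psi_-,\psi_+]$ bounded, produces a fixed $M>0$ with $\|T(\psi,w)\|_{s,p,\delta}\leq M$.'' That estimate reads $\|T^{shifted}(\psi,a_W)\|_{s,p,\delta}\preceq(1+\|a_W\|_{s-2,p,\eta-2})(1+\|\psi\|_{t,p,\gamma})$ with $t\geq 1$; the barriers $\psi_\pm$ control $\psi$ only pointwise, hence in $L^{\infty}_{\tilde\delta}$ and $L^p_{\gamma}$, and the order interval $[\psi_-,\psi_+]_{\tilde s,p,\tilde\delta}$ is \emph{not} a bounded subset of $W^{t,p}_{\gamma}$ for $t\geq 1$ (there is no control on $\nabla\psi$). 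So the right-hand side is not uniformly bounded on $U$, and no fixed $M$ drops out directly. The paper bridges this with Ehrling's lemma (Lemma~\ref{lemehrling}): since $W^{\tilde s,p}_{\tilde\delta}\hookrightarrow W^{t,p}_{\gamma}$ is compact and $W^{t,p}_{\gamma}\hookrightarrow L^p_{\gamma}$ is continuous, one writes $\|\psi\|_{t,p,\gamma}\leq\epsilon\|\psi\|_{\tilde s,p,\tilde\delta}+\tilde C(\epsilon)\|\psi\|_{L^p_{\gamma}}$, where only the second term is uniformly bounded by the barriers. Substituting gives $\|T(\psi,w)\|\leq A\bigl(1+\epsilon\alpha\|\psi\|_{\tilde s,p,\tilde\delta}+\tilde C(\epsilon)\bigr)$; choosing $\epsilon=\tfrac{1}{2\alpha A}$ then \emph{absorbs} the $\|\psi\|$-term and yields $\|T(\psi,w)\|\leq\tfrac12 M+\tfrac12\hat M$ for $\|\psi\|\leq M$, $M\geq\hat M$, which is precisely what furnishes an invariant ball. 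Without this absorption argument, and without Lemmas~\ref{globalbounded1}--\ref{globalbounded2} (which you cite only in passing) to bound the $\|a_W\|$-factor uniformly over $\mathcal{R}(S)$ and push constants through the conformal factor, the hypotheses of Theorem~\ref{thm:schauder} are not met.

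Two secondary points. First, since the momentum constraint is not conformally covariant, one cannot simply ``pass henceforth'' to the scalar-flat background: the map $S$ must act on the original variable $\psi$ with the original metric $h$, while $T$ must perform the change of variable internally and convert back. The paper does this by setting $T(\psi,W)=(\xi+1)\,\tilde T^{shifted}\bigl(\tfrac{\psi-\mu\xi}{\xi+1},(\xi+1)^{-12}a_W\bigr)+\mu\xi$ so that both $S$ and $T$ are defined on $[\psi_-,\psi_+]$ in the original variable; your upfront-reduction phrasing leaves this bookkeeping, and the verification that a fixed point of the transformed problem solves the original Hamiltonian constraint, implicit. Second, your reduction to $s\in[1,3]$, $p>\tfrac32$ is broader than what the paper executes: the direct argument is carried out only for $s\leq 2$ (where $s>1+\tfrac{3}{p}$ already forces $p>3$ and Remark~\ref{remlambdaexistence} applies as stated), and all $s>2$ are recovered by the bootstrap through ``faithful'' 10-tuples. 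For $s\in(2,3]$ you would need to supply a positive $\Lambda\in W^{s-2,p}_{\delta-2}$ with $\Lambda\sim r^{\delta'-2}$; this is plausible but is precisely the point Remark~\ref{remlambdaexistence} does not cover, and the paper avoids having to cover it.
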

\begin{remark}
As discussed in Appendix~\ref{app:bessel}, the assumptions ``$p=q$ if
$e=s\not \in \mathbb{N}_0$'' and ``$e<s$ if $s>2,\, s\not \in
\mathbb{N}_0$'' can be removed if we replace weighted
Sobolev-Slobodeckij spaces with weighted Bessel potential spaces.
\end{remark}
\begin{proof}{\bf (Theorem~\ref{thm:main})}
First we prove the claim for the case $s\leq2$ and then we extend the proof for $s>2$ by bootstrapping.

\medskip
\noindent
\textbf{\bf Case 1:} $s\leq2$

\smallskip
 Note that by assumption $e\leq s$, so $e$ is also less than or equal to $2$. Also since $2\geq s>1+\frac{3}{p}$, $p$ is larger than $3$.

  Since $s\leq 2$, it follows from Proposition
 \ref{propglobal1}, Remark \ref{remlambdaexistence} and Proposition \ref{propglobal2} that if $\mu$ is chosen to be sufficiently small, then
 for
$\|\sigma\|_{L^{\infty}_{\beta-1}}$,
$\|\rho\|_{L^{\infty}_{2\beta-2}}$, and
$\|J\|_{\textbf{W}^{e-2,q}_{\beta-2}}$ sufficiently small, there
exists a compatible pair of global subsolution and supersolution.
We fix such $\mu$ and assume that
$\|\sigma\|_{L^{\infty}_{\beta-1}}$,
$\|\rho\|_{L^{\infty}_{2\beta-2}}$, and
$\|J\|_{\textbf{W}^{e-2,q}_{\beta-2}}$ are sufficiently small
(according to Proposition \ref{propglobal1}).

\noindent
\emph{Step 1: The choice of function spaces.}
\begin{itemizeXM}
\item $X=W^{s,p}_{\delta}$, with $s$ and $p$ as given in the theorem
statement.
\item $Y=\textbf{W}^{e,q}_{\beta}$, with $e$, $q$ as given in the theorem
statement.
\item $Z=W^{\tilde{s},p}_{\tilde{\delta}}$, $\tilde{s}\in (1,1+\dfrac{3}{p})$ and $\tilde{\delta}>\delta$, so
that $X=W^{s,p}_{\delta}\hookrightarrow
W^{\tilde{s},p}_{\tilde{\delta}}=Z$ is compact. Note that
$\tilde{s}\in (1,1+\dfrac{3}{p})$ implies that $\tilde{s}\in
(\frac{3}{p},s)$ (because $p>3$ and $s>1+\frac{3}{p}$).
\item $U=[\psi_-,\psi_+]_{W^{\tilde{s},p}_{\tilde{\delta}}}\cap \bar{B}_M \subset W^{\tilde{s},p}_{\tilde{\delta}}=Z$,
with $\psi_-$ and $\psi_+$ compatible global barriers constructed
in the previous section and with sufficiently large $M$ to be
determined below.
\end{itemizeXM}

\noindent
\emph{Step 2: Construction of the mapping $S$.} Using Lemma \ref{lempA1}, it can be easily checked that for any
$\psi \in [\psi_-,\psi_+]_{\tilde{s},p,\tilde{\delta}}$,
$\textbf{f}(\psi)=b_\tau(\psi+\mu)^6+b_J\in
\textbf{W}^{e-2,q}_{\beta-2}$. Therefore, since the metric admits
no nontrivial conformal Killing field, by Theorem \ref{thm1}, the
momentum constraint is uniquely solvable for any ``source'' $\psi\in
[\psi_-,\psi_+]_{\tilde{s},p,\tilde{\delta}} $ (it is easy to see
that the assumptions of Theorem \ref{thm1} are satisfied; see
Remark \ref{rem2.2}) . The ranges for the exponents ensure that
the momentum constraint solution map
\begin{equation*}
S:[\psi_-,\psi_+]_{\tilde{s},p,\tilde{\delta}} \rightarrow
\textbf{W}^{e,q}_{\beta}=Y,\quad
S(\psi)=-\mathcal{A}_L^{-1}\textbf{f}(\psi)
\end{equation*}
is continuous. Indeed, by Lemma \ref{lempA1}, $\psi\rightarrow
\textbf{f}(\psi)$ is a continuous map from
$W^{\tilde{s},p}_{\tilde{\delta}}$ to
$\textbf{W}^{e-2,q}_{\beta-2}$ and by Theorem \ref{thm1},
$\mathcal{A}_L^{-1}:\textbf{W}^{e-2,q}_{\beta-2}\rightarrow
\textbf{W}^{e,q}_{\beta}$ is continuous.

\noindent
\emph{Step 3: Construction of the mapping $T$.}
Our construction of the mapping $T$ makes use of Lemmas
\ref{lem3}, and \ref{lem4} where one of the assumptions is that
$a_R\geq 0$. To satisfy this assumption, first we need to make a
conformal transformation. To this end, we proceed as follows: By
assumption $h$ belongs to the positive Yamabe class. In
particular, there exists $\xi\in W^{s,p}_{\delta}$, $\xi>-1$ such
that $R_{\tilde{h}}=0$ where $\tilde{h}=(1+\xi)^4 h$. Let
$\tilde{\psi}_+$ and $\tilde{\psi}_-$ be the functions that were
constructed in the proofs of Proposition \ref{propglobal1} and
Proposition \ref{propglobal2}. Also let
\begin{equation*}
\tilde{a}_\tau:=a_\tau,\,\,
\tilde{a}_{\rho}:=(1+\xi)^{-8}a_\rho,\,\,
\tilde{a}_W:=(1+\xi)^{-12}a_W,\,\,
\tilde{a}_R:=a_{R_{\tilde{h}}}=0.
\end{equation*}
Notice that the above notations agree with the ones that are
introduced in Appendix~\ref{app:covariance}.
Using Lemma \ref{lempA1} it is easy to
see that $\tilde{a}_\rho$, $\tilde{a}_W$ remain in
$W^{s-2,p}_{\beta-2}$. So we may use $(\tilde{h}, \tilde{a}_\tau,
\tilde{a}_\rho, \tilde{a}_W, \tilde{a}_R=0, \tilde{\psi}_+,
\tilde{\psi}_-)$ as data in Lemmas \ref{lem2}, \ref{lem3}, and
\ref{lem4}. That is, if we define
\begin{align*}
&\tilde{a}_s:=\tilde{a}_R+3\frac{(\mu+\tilde{\psi}_+)^2}{(\mu+\tilde{\psi}_-)^{6}}\tilde{a}_\rho+5(\mu+\tilde{\psi}_+)^4\tilde{a}_\tau+7\frac{(\mu+\tilde{\psi}_+)^6}{(\mu+\tilde{\psi}_-)^{14}}\tilde{a}_W,\\
& \tilde{A}_L^{shifted}:W^{s,p}_{\delta}\rightarrow
W^{s-2,p}_{\delta-2},\quad
\tilde{A}_L^{shifted}\psi=-\Delta_{\tilde{h}}\psi+\tilde{a}_s\psi,\\
&\tilde{\textbf{f}}_W^{shifted}(\psi)=\tilde{a}_{\tau}(\mu+\psi)^5+\tilde{a}_R(\mu+\psi)-\tilde{a}_{\rho}(\mu+\psi)^{-3}-\tilde{a}_W(\mu+\psi)^{-7}-\tilde{a}_s
\psi ,\\
& \tilde{T}^{shifted}:
[\tilde{\psi}_-,\tilde{\psi}_+]_{\tilde{s},p,\tilde{\delta}}\times
W^{s-2,p}_{\beta-2}\rightarrow W^{s,p}_{\delta},\quad
\tilde{T}^{shifted}(\psi,\tilde{a}_W)=-(\tilde{A}_L^{shifted})^{-1}\tilde{\textbf{f}}_W^{shifted}(\psi),
\end{align*}
then, according to the aforementioned lemmas,
$\tilde{T}^{shifted}$ is continuous with respect to both of its
arguments and it is invariant on
$[\tilde{\psi}_-,\tilde{\psi}_+]_{\tilde{s},p,\tilde{\delta}}$.
Notice that if we define the \textbf{scaled Hamiltonian
constraint} as in Appendix~\ref{app:covariance}, that is, if we let
\begin{equation*}
\tilde{H}(\psi, a_W, a_\tau, a_\rho)=-\Delta_{\tilde{h}}
\psi+\tilde{a}_R(\psi +\mu)+\tilde{a}_\tau(\psi+\mu)^5-\tilde{a}_W
(\psi+\mu)^{-7}-\tilde{a}_\rho(\psi+\mu)^{-3}
\end{equation*}
then $\tilde{\psi}_-$ and $\tilde{\psi}_+$ are subsolution and
supersolution of $\tilde{H}=0$ and moreover
\begin{equation*}
\tilde{H}(\psi, a_W, a_\tau, a_\rho)=0 \Longleftrightarrow
\tilde{T}^{shifted}(\psi,\tilde{a}_W)=\psi.
\end{equation*}
Now we define the mapping
$T:[\psi_-,\psi_+]_{\tilde{s},p,\tilde{\delta}}\times
\textbf{W}^{e,q}_{\beta}\rightarrow W^{s,p}_{\delta}$ as follows:
\begin{equation*}
T(\psi,W)=(\xi+1)\tilde{T}^{shifted}(\frac{\psi-\mu\,\xi}{\xi+1},(\xi+1)^{-12}a_W)+\mu\,\xi.
\end{equation*}
Here $\psi_+$ and $\psi_-$ are the supersolution and subsolution
that were constructed in the proofs of Proposition
\ref{propglobal1} and Proposition \ref{propglobal2}. Recall that
by our construction
\begin{equation*}
\tilde{\psi}_-=\frac{\psi_--\mu\,\xi}{\xi+1},\quad
\tilde{\psi}_+=\frac{\psi_+-\mu\,\xi}{\xi+1}
\end{equation*}
so for $\psi \in [\psi_-,\psi_+]_{\tilde{s},p,\tilde{\delta}}$,
we have $\tilde{\psi}_-\leq\frac{\psi-\mu\,\xi}{\xi+1}\leq
\tilde{\psi}_+$. In fact using Lemma \ref{lempA1} one can easily
show that $T$ is well-defined. That is,
$\frac{\psi-\mu\,\xi}{\xi+1}$ is in $[\tilde{\psi}_-,
\tilde{\psi}_+]_{\tilde{s},p,\tilde{\delta}}$ and
$(\xi+1)\tilde{T}^{shifted}(.,.)+\mu\,\xi$ is in
$W^{s,p}_{\delta}$. Continuity of $T$ follows from the continuity
of $\tilde{T}^{shifted}$ and Lemma \ref{lempA1}.

Considering the coupled Schauder theorem, in order to complete
the proof for the case $s\leq 2$, it is enough to prove the
following claim:

\emph{
\textbf{Claim}:
There exists $M>0$ such that if we set
$U=[\psi_-,\psi_+]_{\tilde{s},p,\tilde{\delta}}\cap
\bar{B}_M(\mu\,\xi)$, then $U$ is nonempty and
\begin{equation*}
(\psi,W) \in U \times S(U)\Longrightarrow T(\psi,a_W)\in U,
\end{equation*}
where
$\bar{B}_M (\mu\,\xi)$ is the ball of radius $M$ in
$W^{\tilde{s},p}_{\tilde{\delta}}$ centered at $\mu\,\xi \in
W^{s,p}_{\delta}\hookrightarrow W^{\tilde{s},p}_{\tilde{\delta}}$.
}

\medskip
\emph{\textbf{Proof of Claim.}}
First, as mentioned above, note that $T(\psi, a_W)$ certainly
belongs to $X=W^{s,p}_{\delta}$, so instead of $T(\psi,a_W)\in U$
on the right hand side we could write $T(\psi,a_W)\in U\cap X$.
We now prove that
if $\psi\in[\psi_-,\psi_+]_{\tilde{s},p,\tilde{\delta}}$, then for
all $a_W\in W^{s-2,p}_{\beta-2}$ (and so for all $W\in
\textbf{W}^{e,q}_{\beta}$), $T(\psi,W)\in
[\psi_-,\psi_+]_{\tilde{s},p,\tilde{\delta}}$:
\begin{equation*}
\psi\in
[\psi_-,\psi_+]_{\tilde{s},p,\tilde{\delta}}\Longrightarrow
\frac{\psi-\mu\,\xi}{\xi+1}\in
[\tilde{\psi}_-,\tilde{\psi}_+]_{\tilde{s},p,\tilde{\delta}}.
\end{equation*}
But we know that $\tilde{T}^{shifted}$ is invariant on
$[\tilde{\psi}_-,\tilde{\psi}_+]_{\tilde{s},p,\tilde{\delta}}$
and so
\begin{equation*}
\forall\, a_W\in W^{s-2,p}_{\beta-2} \quad
\tilde{T}^{shifted}(\frac{\psi-\mu\,\xi}{\xi+1},(1+\xi)^{-12}a_W)\in
[\tilde{\psi}_-,\tilde{\psi}_+]_{\tilde{s},p,\tilde{\delta}}.
\end{equation*}
Therefore for all $W\in \textbf{W}^{e,q}_{\beta}$
\begin{align*}
(1+\xi)\tilde{\psi}_-+\mu\,\xi &\leq (1+\xi)
\tilde{T}^{shifted}(\frac{\psi-\mu\,\xi}{\xi+1},(1+\xi)^{-12}a_W)+\mu\,\xi
\leq (1+\xi)\tilde{\psi}_++\mu\,\xi\\
\psi_-& \leq T(\psi,W) \leq \psi_+
\end{align*}
Thus $T(\psi,W)\in [\psi_-,\psi_+]_{\tilde{s},p,\tilde{\delta}}$
(note that as it was already mentioned $T(\psi, W) \in
W^{s,p}_{\delta}\hookrightarrow
W^{\tilde{s},p}_{\tilde{\delta}}$).

Now to complete the proof of the claim above, it is enough to
show that the following auxiliary claim holds true:

\emph{
\textbf{Auxiliary Claim}:
There exists $\hat{M}>0$ such that for
all $M \geq \hat{M}$ the following holds:
\begin{equation*}
\textrm{If } \psi \in
[\psi_-,\psi_+]_{\tilde{s},p,\tilde{\delta}}\cap \bar{B}_M
(\mu\,\xi)
\end{equation*}
\begin{equation}\lab{globalball}
\textrm{Then } \forall \, W\in
S([\psi_-,\psi_+]_{\tilde{s},p,\tilde{\delta}}),
\quad
T(\psi,a_W)\in \bar{B}_M (\mu\,\xi).
\end{equation}
}
\begin{remark}
We make two remarks before we continue.
\begin{enumerate}
\item In order to prove the main claim, it is enough to
prove the auxiliary claim for $W\in
S([\psi_-,\psi_+]_{\tilde{s},p,\tilde{\delta}}\cap  \bar{B}_M
(\mu\,\xi))$ not $W\in
S([\psi_-,\psi_+]_{\tilde{s},p,\tilde{\delta}})$.
So what we will prove here is slightly stronger than what we need.
\item Since we will prove (\ref{globalball}) is true for
\textbf{all} $M \geq \hat{M}$, we can certainly choose an $M$ such
that $[\psi_-,\psi_+]_{\tilde{s},p,\tilde{\delta}}\cap  \bar{B}_M
(\mu\,\xi))\neq \emptyset$.
\end{enumerate}
\end{remark}

\medskip
\emph{\textbf{Proof of Auxiliary Claim.}}
We will rely on two supporting results
(Lemma~\ref{globalbounded1} and~\ref{globalbounded2}),
which will be stated and proved following the completion
of our proof of the main result here.

To begin, let $t\in (\frac{3}{p},\tilde{s})\cap [1,1+\frac{3}{p})$
and let $\gamma\in(\tilde{\delta},0)$; also for all $\psi \in
[\psi_-,\psi_+]_{\tilde{s},p,\tilde{\delta}}$ let
$\tilde{\psi}:=\frac{\psi-\mu\,\xi}{\xi+1}$. It follows from Lemma
\ref{lem2} that there exists $K>0$ such that for all $\psi \in
[\psi_-,\psi_+]_{\tilde{s},p,\tilde{\delta}}$ and for all $W \in
S([\psi_-,\psi_+]_{\tilde{s},p,\tilde{\delta}})$
\begin{equation*}
\|\tilde{T}^{shifted}(\tilde{\psi},
\tilde{a}_W)\|_{\tilde{s},p,\tilde{\delta}}\leq
\tilde{\tilde{C}}\|\tilde{T}^{shifted}(\tilde{\psi},
\tilde{a}_W)\|_{s,p,\delta}\leq K
[1+\|\tilde{a}_W\|_{s-2,p,\delta-2}](1+\|\tilde{\psi}\|_{t,p,\gamma}).
\end{equation*}
 Now note that
$W^{\tilde{s},p}_{\tilde{\delta}}\hookrightarrow
W^{t,p}_{\gamma}$ is compact and $W^{t,p}_{\gamma}\hookrightarrow
L^p_{\gamma}$ is continuous. Therefore by Ehrling's lemma (Lemma
\ref{lemehrling}) for any $\epsilon>0$ there exists
$\tilde{C}(\epsilon)>0$ such that
\begin{equation*}
\|\tilde{\psi}\|_{t,p,\gamma}\leq \epsilon
\|\tilde{\psi}\|_{\tilde{s},p,\tilde{\delta}}+\tilde{C}(\epsilon)\|\tilde{\psi}\|_{L^p_{\gamma}}.
\end{equation*}
Since $-\mu<\tilde{\psi}_-\leq \tilde{\psi} \leq\tilde{\psi}_+$,
$\|\tilde{\psi}\|_{L^p_{\gamma}}$ is bounded uniformly with a
constant $P$ which we absorb into $\tilde{C}(\epsilon)$.
Making use of Lemma \ref{globalbounded2} below, we have
\begin{equation*}
\|\tilde{T}^{shifted}(\tilde{\psi},
\tilde{a}_W)\|_{\tilde{s},p,\tilde{\delta}}\leq K [1+C](1+\epsilon
\|\tilde{\psi}\|_{\tilde{s},p,\tilde{\delta}}+\tilde{C}(\epsilon))
\end{equation*}
Therefore we can write
$\forall\, \psi \in [\psi_-,\psi_+]_{\tilde{s},p,\tilde{\delta}}$
and
$\forall \, W\in S([\psi_-,\psi_+]_{\tilde{s},p,\tilde{\delta}})$,
\begin{align*}
\|
T(\psi,W)-\mu\,\xi\|_{\tilde{s},p,\tilde{\delta}}
&=\| (1+\xi)\tilde{T}^{shifted}(\tilde{\psi},
\tilde{a}_W)\|_{\tilde{s},p,\tilde{\delta}}
\\
&=\|
\tilde{T}^{shifted}(\tilde{\psi}, \tilde{a}_W)
\|_{\tilde{s},p,\tilde{\delta}}+\| \xi \tilde{T}^{shifted}(\tilde{\psi},
\tilde{a}_W)\|_{\tilde{s},p,\tilde{\delta}}\\
&\leq C_4 (\|
\xi\|_{s,p,\delta}+1)\|
\tilde{T}^{shifted}(\tilde{\psi}, \tilde{a}_W)
\|_{\tilde{s},p,\tilde{\delta}}
\\
& \quad \quad \textrm{(note that $W^{s,p}_{\delta}\times W^{s,p}_{\delta} \hookrightarrow W^{s,p}_{\delta}\hookrightarrow
W^{\tilde{s},p}_{\tilde{\delta}}$)}\\
&\leq C_4 (\| \xi\|_{s,p,\delta}+1)K
[1+C](1+\epsilon
\|\tilde{\psi}\|_{\tilde{s},p,\tilde{\delta}}+\tilde{C}(\epsilon)).
\end{align*}
Now let $A:=C_4 (\| \xi\|_{s,p,\delta}+1)K [1+C]$,
so for all $\psi \in
[\psi_-,\psi_+]_{\tilde{s},p,\tilde{\delta}}$ and for all $W\in
S([\psi_-,\psi_+]_{\tilde{s},p,\tilde{\delta}})$
\begin{align*}
\|
T(\psi,W)-\mu\,\xi\|_{\tilde{s},p,\tilde{\delta}}\leq
A(1+\epsilon
\|\frac{\psi-\mu\,\xi}{\xi+1}\|_{\tilde{s},p,\tilde{\delta}}+\tilde{C}(\epsilon))
\end{align*}
Using the argument in Lemma \ref{globalbounded1} below, one can show that
for $f\in W^{\tilde{s},p}_{\tilde{\delta}}$
\begin{equation*}
\|
\frac{1}{\xi+1}f\|_{\tilde{s},p,\tilde{\delta}}\leq C_5
(1+\| \frac{\xi}{\xi+1}\|_{s,p,\delta})\|
f\|_{\tilde{s},p,\tilde{\delta}},
\end{equation*}
so if we let $\alpha:=C_5(1+\|
\frac{\xi}{\xi+1}\|_{s,p,\delta})$, then
$\|\frac{\psi-\mu\,\xi}{\xi+1}\|_{\tilde{s},p,\tilde{\delta}}\leq
\alpha \|
\psi-\mu\,\xi\|_{\tilde{s},p,\tilde{\delta}}$
 and therefore
\begin{equation*}
\|
T(\psi,W)-\mu\,\xi\|_{\tilde{s},p,\tilde{\delta}}\leq
A(1+\epsilon \alpha \|
\psi-\mu\,\xi\|_{\tilde{s},p,\tilde{\delta}}+\tilde{C}(\epsilon)).
\end{equation*}
Let $\epsilon=\frac{1}{2\alpha A}$ and define
$\hat{M}:=2A+2A\tilde{C}(\epsilon)$. For all $M\geq \hat{M}$ we
have
\begin{align*}
\forall\, \psi &\in
[\psi_-,\psi_+]_{\tilde{s},p,\tilde{\delta}}\cap \bar{B}_M
(\mu\,\xi)\quad \forall \, W\in
S([\psi_-,\psi_+]_{\tilde{s},p,\tilde{\delta}})\\
&\|
T(\psi,W)-\mu\,\xi\|_{\tilde{s},p,\tilde{\delta}}\leq
A(1+\epsilon \alpha M+\tilde{C}(\epsilon)) \quad \textrm{(note
that $\|
\psi-\mu\,\xi\|_{\tilde{s},p,\tilde{\delta}}\leq M$)}\\
& \hspace{3cm}=A+(\epsilon \alpha A)M+ A\tilde{C}(\epsilon)\\
&\hspace{3cm}=A+\frac{1}{2}M+\frac{\hat{M}-2A}{2}\\
&\hspace{3cm}=\frac{1}{2}M+\frac{1}{2}\hat{M}\leq M.
\end{align*}
Therefore $T(\psi, W)\in \bar{B}_M (\mu\,\xi)$. This completes
the proof of the auxiliary claim.
Clearly the claim of the theorem now follows from the coupled
Schauder theorem.

\medskip
\noindent
\textbf{Case 2:} $s>2$

\smallskip
We say the 10-tuple $A=(s,p,e,q,\delta,\beta,\tau,\sigma,\rho,J)$
is \textbf{beautiful} if it satisfies the hypotheses of the
theorem, that is, if
\begin{align*}
&p\in (1,\infty),\quad s\in (1+\dfrac{3}{p},\infty),\quad
-1<\beta \leq \delta<0,\\
 &\frac{1}{q}\in
(0,1),\quad
e\in(1+\frac{3}{q},\infty)\cap[s-1,s]\cap[\frac{3}{q}+s-\frac{3}{p}-1,\frac{3}{q}+s-\frac{3}{p}]\\
& \textrm{$p=q$ if $e=s \not \in \mathbb{N}_0$},\quad
\textrm{$e<s$ if $s>2$ and $s\not \in \mathbb{N}_0$}
\end{align*}
and
\begin{itemize}
\item $\tau \in W^{e-1,q}_{\beta-1}$ if $e \geq 2$ and $\tau\in W^{1,z}_{\beta-1}$ otherwise, where $z=\dfrac{3q}{3+(2-e)q}$,
\item $\sigma \in W^{e-1,q}_{\beta-1}$,
\item $\rho \in W^{s-2,p}_{\beta-2}\cap
L^{\infty}_{2\beta-2}$, $\rho\geq 0$,
\item $J\in \textbf{W}^{e-2,q}_{\beta-2}$.
\end{itemize}
Note that the condition $\frac{1}{q}\in
\cap(0,\frac{s-1}{3})\cap[\frac{3-p}{3p},\frac{3+p}{3p}]$ in the
statement of the theorem was to ensure that the intersection
for the admissible intervals for $e$ is nonempty. Here since we
start by the assumption that $e$ exists, we do not need to
explicitly
state that condition.

We say that a 10-tuple
$\tilde{A}=(\tilde{s},\tilde{p},\tilde{e},\tilde{q},\tilde{\delta},\tilde{\beta},\tilde{\tau},\tilde{\sigma},\tilde{\rho},\tilde{J})$
is \textbf{faithful} to the 10-tuple
$A=(s,p,e,q,\delta,\beta,\tau,\sigma,\rho,J)$ if
\begin{align*}
&\tilde{\delta}=\delta+\frac{|\delta|}{2},\quad
\tilde{\beta}=\beta+\frac{|\delta|}{2},\quad
\tilde{\tau}=\tau,\quad \tilde{\sigma}=\sigma,\quad
\tilde{\rho}=\rho,\quad \tilde{J}=J,\\
&\tilde{e}=\max\{2,e-2\},\quad \tilde{s}=\max\{2,s-2\},\\
& \frac{1}{\tilde{p}}\leq \frac{1}{p},\quad
1<\tilde{e}-\frac{3}{\tilde{q}}\leq e-\frac{3}{q},\quad
1<\tilde{s}-\frac{3}{\tilde{p}}\leq s-\frac{3}{p}.
\end{align*}
We say that $\tilde{A}$ is \textbf{extremely faithful} to $A$ if
$\tilde{A}$ is both \textbf{beautiful} and \textbf{faithful} to $A$.
\begin{remark}\lab{boundcontrol}
Note that if $\tilde{A}$ is \textbf{faithful} to $A$, then
\begin{equation*}
L^{\infty}_{\beta-1}\hookrightarrow
L^{\infty}_{\tilde{\beta}-1},\quad
L^{\infty}_{2\beta-2}\hookrightarrow
L^{\infty}_{2\tilde{\beta}-2},\quad
\textbf{W}^{e-2,q}_{\beta-2}\hookrightarrow
\textbf{W}^{\tilde{e}-2,\tilde{q}}_{\tilde{\beta}-2}.
\end{equation*}
So, in particular,
$\|\cdot\|_{L^{\infty}_{\tilde{\beta}-1}}$,
$\|\cdot\|_{L^{\infty}_{2\tilde{\beta}-2}}$,
$\|\cdot\|_{\textbf{W}^{\tilde{e}-2,\tilde{q}}_{\tilde{\beta}-2}}$
can be controlled by $\|\cdot\|_{L^{\infty}_{\beta-1}}$,
$\|\cdot\|_{L^{\infty}_{2\beta-2}}$,
$\|\cdot\|_{\textbf{W}^{e-2,q}_{\beta-2}}$,
respectively.
\end{remark}
We now complete the proof of the theorem for $s>2$, under the condition
that the following two claims hold.
We will then proceed to prove both claims.

\smallskip
\emph{
\textbf{Claim 1:} Suppose the 10-tuple $\tilde{A}$ is \textbf{faithful} to the \textbf{beautiful} 10-tuple $A$.  If $(\psi, W)\in W^{\tilde{s},p}_{\tilde{\delta}}\times
\textbf{W}^{\tilde{e},\tilde{q}}_{\tilde{\beta}}$is a solution of
the constraint equations with data $(\tau, \sigma, \rho,J)$ (which
is the same as $(\tilde{\tau}, \tilde{\sigma},
\tilde{\rho},\tilde{J})$) , then $(\psi,W)\in
W^{s,p}_{\delta}\times \textbf{W}^{e,q}_{\beta}$.
}

\smallskip
\emph{
\textbf{Claim 2:} If $A$ is a \textbf{beautiful} 10-tuple with $s>2$, then
there exists a 10-tuple $\tilde{A}$ that is \textbf{extremely faithful} to
$A$.
}

\medskip
\emph{\textbf{Proof of the Theorem under Claims 1 and 2.}}
The argument to complete the proof in the case $s>2$ based on
these two claims holding is as follows.
Let $A$ denote the 10-tuple associated to the given data
in the statement of the theorem.
By \textbf{Claim 2}, there exists a finite chain
\begin{equation*}
A=A_0\rightarrow A_1=(s_1,p_1,...)\rightarrow
A_2=(s_2,p_2,...)\rightarrow ... \rightarrow A_m=(s_m.p_m,...)
\end{equation*}
of 10-tuples such that $s_m=2$ and each $A_i$ is \textbf{extremely faithful} to $A_{i-1}$. Now since $A_m$ is \textbf{beautiful} and $s_m=2$,
by what was proved in the previous case we can choose $\mu$ small
enough so that (\ref{eqweak1}) and (\ref{eqweak2}) have a solution
 $(\psi, W)\in
W^{s_m=2,p_m}_{\delta_m}\times \textbf{W}^{e_m,q_m}_{\beta_m}$
(note that according to Remark \ref{boundcontrol} by assuming
$\|
 \sigma
\|_{L^{\infty}_{\beta-1}}$,
$\| \rho\|_{L^{\infty}_{2\beta-2}}$, $\| J
\|_{\textbf{W}^{e-2,q}_{\beta-2}}$ are sufficiently small,
we can ensure that $\|
 \sigma
\|_{L^{\infty}_{\beta_m-1}}$,
$\| \rho\|_{L^{\infty}_{2\beta_m-2}}$, $\| J
\|_{\textbf{W}^{e_m-2,q_m}_{\beta_m-2}}$ are as small as needed).
By \textbf{Claim 1}, since each $A_i$ is \textbf{faithful} to $A_{i-1}$,
we can conclude that $(\psi,W)\in
W^{s,p}_{\delta}\times \textbf{W}^{e,q}_{\beta}$.
The main claim of the theorem in the case of $s>2$ now follows.

Therefore, in the case $s>2$ it is enough to prove
\textbf{Claim 1} and \textbf{Claim 2}, which we now proceed to do.
Before we begin, note that since in both claims $A$ is assumed to be
\textbf{beautiful}, we have $-1<\beta\leq
\delta<0$ and so clearly $-1<\tilde{\beta}\leq \tilde{\delta}<0$;
moreover
$\beta<\tilde{\beta}$ and $\delta<\tilde{\delta}$.

\medskip
\emph{\textbf{Proof of Claim 1.}}

\smallskip
\emph{Step 1:} $b_\tau (\psi+\mu)^6+b_J \in \textbf{W}^{e-2,q}_{\beta-2}$.\\
Note that $b_\tau, b_J \in \textbf{W}^{e-2,q}_{\beta-2}$ and
$\psi\in W^{\tilde{s},\tilde{p}}_{\tilde{\delta}}$. By Lemma
\ref{lempA1} in order to show that $b_\tau (\psi+\mu)^6 \in
\textbf{W}^{e-2,q}_{\beta-2}$ it is enough to prove the following:
\begin{equation*}
\textrm{(i)}\,\, e-2\in [-\tilde{s},\tilde{s}]\,\, (\textrm{$e-2
\in (-\tilde{s},\tilde{s})$ if $\tilde{s}\not \in \mathbb{N}_0$} )
,\quad\textrm{(ii)}\,\, e-2-\frac{3}{q}\in
[-3-\tilde{s}+\frac{3}{\tilde{p}},\tilde{s}-\frac{3}{\tilde{p}}].
\end{equation*}
For (ii) we have
\begin{align*}
& e-\frac{3}{q}>1 \Rightarrow
e-\frac{3}{q}-2>-1>-3-(\tilde{s}-\frac{3}{\tilde{p}})\quad
(\textrm{note that $\tilde{s}>\frac{3}{\tilde{p}}$}),\\
& e\leq s+\frac{3}{q}-\frac{3}{p}\leq
\tilde{s}+2+\frac{3}{q}-\frac{3}{\tilde{p}}\Rightarrow
e-2-\frac{3}{q}\leq \tilde{s}-\frac{3}{\tilde{p}} \quad
(\textrm{note $s\leq \tilde{s}+2$ and
$\frac{1}{\tilde{p}}\leq \frac{1}{p}$} ).
\end{align*}
In order to prove (i) we consider two cases:

\smallskip
\emph{Case 1:} $0<s-2\leq 2$. In this case $\tilde{s}=2$ and therefore
\begin{equation*}
e-2\in [-\tilde{s},\tilde{s}]\Leftrightarrow e-2\in
[-2,2]\Leftrightarrow e\in[0,4] \textrm{ (clearly true since $e\in[s-1,s]$)}.
\end{equation*}

\smallskip
\emph{Case 2:} $s-2>2$. In this case $\tilde{s}=s-2$. Therefore
\begin{equation*}
e-2\in [-\tilde{s},\tilde{s}]\Leftrightarrow e-2 \in
[-s+2,s-2]\Leftrightarrow e\in[4-s,s].
\end{equation*}
$e\leq s$ is true by assumption. Also by assumption $e\geq s-1$
and since $s>4$ we have $s-1>4-s$. It follows that $e>4-s$. Note
that if $\tilde{s}=s-2>0$ is not in $\mathbb{N}_0$, then $s\not
 \in \mathbb{N}_0$ and so since $A$ is \textbf{beautiful} and $s>2$ we can
conclude that $e<s$. That is, in this case we have $e-2\in
(-\tilde{s},\tilde{s})$ exactly as desired.

\medskip
\emph{Step 2:} $W\in \textbf{W}^{e,q}_{\beta}$.

\smallskip
By what was shown in the previous step we know that $\mathcal{A}_L
W =-(b_\tau (\psi+\mu)^6+b_J) \in \textbf{W}^{e-2,q}_{\beta-2}$.
It follows from Remark \ref{remb1} that $W\in
\textbf{W}^{e,q}_{\beta}$.

\medskip
\emph{Step 3:} $\psi \in W^{s,p}_{\delta}$.

\smallskip
Since $W\in \textbf{W}^{e,q}_{\beta}$ according to the argument
that we had in deriving \textbf{Weak Formulation~\ref{weakf2}} we have
$a_W\in W^{s-2,p}_{\delta-2}$. It follows that $A_L \psi \in
W^{s-2,p}_{\delta-2}$. So again by Remark \ref{remb1}, we can
conclude that $\psi\in W^{s,p}_{\delta}$.

\medskip
Therefore, we have shown that \textbf{Claim 1} holds.
We now proceed to \textbf{Claim 2}.

\medskip
\emph{\textbf{Proof of Claim 2.}}
We want to find a 10-tuple $\tilde{A}$
that is \textbf{extremely faithful} to $A$. Note that all the components of
$\tilde{A}$, except $\tilde{p}$ and $\tilde{q}$, are automatically
determined by $A$. So we need to find $\tilde{p}$ and $\tilde{q}$
so that $\tilde{A}$ becomes \textbf{extremely faithful} to $A$.
We must consider three cases:

\medskip
\emph{Case 1:} $0<s-2\leq 2$, $e-2\leq 2$ (so $\tilde{s}=\tilde{e}=2$)

\smallskip
Select $\tilde{p}$ and $\tilde{q}$ to satisfy
\begin{equation*}
\frac{1}{\tilde{p}}\in
[\frac{1}{p}-\frac{s-2}{3},\frac{1}{3})\cap (0,\frac{1}{p}),
\qquad
\qquad
\frac{1}{\tilde{q}}\in
[\frac{1}{q}-\frac{e-2}{3},\frac{1}{3})\cap
(\frac{1}{\tilde{p}},\infty).
\end{equation*}
Our claim is that the 10-tuple
$\tilde{A}=(\tilde{s}=2,\tilde{p},\tilde{e}=2,\tilde{q},\tilde{\delta}=\delta+\frac{|\delta|}{2},\tilde{\beta}=\beta+\frac{|\delta|}{2},\tilde{\tau}=\tau,\tilde{\sigma}=\sigma,\tilde{\rho}=\rho,\tilde{J}=J)$
is \textbf{extremely faithful} to $A$.

First note that it is possible to pick such $\tilde{p}$ and
$\tilde{q}$. Indeed,
\begin{align*}\displaystyle
 [\frac{1}{p}-\frac{s-2}{3},\frac{1}{3})\neq \varnothing,
  \hspace*{0.5cm}
     & \mbox{ since } s>1+\frac{3}{p}, \\
 [\frac{1}{p}-\frac{s-2}{3},\frac{1}{3})\cap (0,\frac{1}{p})\neq \varnothing,
  \hspace*{0.5cm}
     & \mbox{ since for } s>2,
       \mbox{ we have } \frac{1}{p}-\frac{s-2}{3}<\frac{1}{p}, \\
 [\frac{1}{q}-\frac{e-2}{3},\frac{1}{3})\neq \varnothing,
  \hspace*{0.5cm}
     & \mbox{ since } e>1+\frac{3}{q}, \\
 [\frac{1}{q}-\frac{e-2}{3},\frac{1}{3})
    \cap (\frac{1}{\tilde{p}},\infty)\neq \varnothing,
  \hspace*{0.5cm}
     & \mbox{ since } \frac{1}{\tilde{p}}<\frac{1}{3}.
\end{align*}
In order to show that $\tilde{A}$ is \textbf{extremely faithful} to $A$ we
need to show that 1) $\tilde{A}$ is \textbf{faithful} to
$A$ and 2) $\tilde{A}$ is \textbf{beautiful}.

1) $\tilde{A}$ is \textbf{faithful} to $A$:
\begin{align*}
  (i) & \textrm{ By definition of $\tilde{p}$ we have }
\frac{1}{\tilde{p}}\leq \frac{1}{p}.\\
  (ii) & \textrm{ Clearly } \tilde{e}=2=\max\{2,e-2\},\quad
\tilde{s}=2=\max\{2,s-2\}.\\
 (iii) & \ \frac{1}{p}-\frac{s-2}{3}\leq
\frac{1}{\tilde{p}}<\frac{1}{3}\Rightarrow
-1<\frac{-3}{\tilde{p}}\leq s-\frac{3}{p}-2
\\
& \quad \quad \quad \quad
\Rightarrow
1<2-\frac{3}{\tilde{p}}\leq s-\frac{3}{p} \Rightarrow
1<\tilde{s}-\frac{3}{\tilde{p}}\leq s-\frac{3}{p} \quad
(\tilde{s}=2).\\
(iv) & \textrm{ Similarly } \frac{1}{q}-\frac{e-2}{3}\leq
\frac{1}{\tilde{q}}<\frac{1}{3}\Rightarrow
1<\tilde{e}-\frac{3}{\tilde{q}}\leq e-\frac{3}{q}.
\end{align*}

2) $\tilde{A}$ is \textbf{beautiful}:
\begin{align*}
  (i) & \textrm{ Clearly } \tilde{p}, \tilde{q}\in (1,\infty).\,
\\
& \textrm{ In addition, by what was proved above, }
\tilde{s}>1+\frac{3}{\tilde{p}} \textrm{ and }
\tilde{e}>1+\frac{3}{\tilde{q}}.\\
  (ii) & \ \tilde{e}\in [\tilde{s}-1,\tilde{s}] \Leftrightarrow 2\in
[2-1,2]\quad \textrm{(which is clearly true)}.\\
  (iii) & \ \tilde{e}\in
[\frac{3}{\tilde{q}}+\tilde{s}-\frac{3}{\tilde{p}}-1,\frac{3}{\tilde{q}}+\tilde{s}-\frac{3}{\tilde{p}}]\Leftrightarrow
2\in
[\frac{3}{\tilde{q}}-\frac{3}{\tilde{p}}+1,\frac{3}{\tilde{q}}-\frac{3}{\tilde{p}}+2]
\\
& \quad
\Leftrightarrow 0\leq \frac{3}{\tilde{q}}-\frac{3}{\tilde{p}} \leq 1 \quad
(\tilde{s}=\tilde{e}=2)\\
& \quad \Leftrightarrow \frac{1}{\tilde{p}}\leq
\frac{1}{\tilde{q}} \textrm{ and } \frac{1}{\tilde{q}}\leq
\frac{1}{3}+\frac{1}{\tilde{p}} \quad
  \textrm{(since we know
$\frac{1}{3}>\frac{1}{\tilde{q}}>\frac{1}{\tilde{p}}$)}.
\end{align*}
Also since $\tilde{s}-\frac{3}{\tilde{p}}\leq s-\frac{3}{p}$,
$\tilde{e}-\frac{3}{\tilde{q}}\leq e-\frac{3}{q}$,
$\beta<\tilde{\beta}$ and $\delta< \tilde{\delta}$, it follows
from the embedding theorem that
\begin{align*}
&W^{s,p}_{\delta}\hookrightarrow
W^{\tilde{s},\tilde{p}}_{\tilde{\delta}},\quad
W^{s-2,p}_{\beta-2}\hookrightarrow
W^{\tilde{s},\tilde{p}}_{\tilde{\beta}-2},\\
&W^{e,q}_{\beta}\hookrightarrow
W^{\tilde{e},q}_{\tilde{\beta}},\quad
W^{e-1,q}_{\beta-1}\hookrightarrow
W^{\tilde{e}-1,q}_{\tilde{\beta}-1},\quad
W^{e-2,q}_{\beta-2}\hookrightarrow
W^{\tilde{e}-2,q}_{\tilde{\beta}-2}.
\end{align*}
Therefore $\tau$, $\sigma$, $\rho$ and $J$ are in the correct
spaces.

\medskip
\emph{Case 2:} $s-2>2$, $e-2\leq 2$
       (so $\tilde{s}=s-2$, $\tilde{e}=2$)

\smallskip
Select $\tilde{q}$ such that $\frac{1}{\tilde{q}}\in
[\frac{1}{q}-\frac{e-2}{3},\frac{1}{3})\cap
[\frac{1}{p}-\frac{2}{3},\frac{1}{p})$. Let
$\tilde{p}:=\tilde{q}$. Our claim is that the 10-tuple
$\tilde{A}=(\tilde{s}=s-2,\tilde{p},\tilde{e}=2,\tilde{q},\tilde{\delta}=\delta+\frac{|\delta|}{2},\tilde{\beta}=\beta+\frac{|\delta|}{2},\tilde{\tau}=\tau,\tilde{\sigma}=\sigma,\tilde{\rho}=\rho,\tilde{J}=J)$
is \textbf{extremely faithful} to $A$.

First note that it is possible to pick such $\tilde{q}$. Indeed,
$[\frac{1}{q}-\frac{e-2}{3},\frac{1}{3})\neq \varnothing$ because
$e>1+\frac{3}{q}$. For the intersection to be nonempty we need to
check $\frac{1}{p}-\frac{2}{3}<\frac{1}{3}$ and
$\frac{1}{q}-\frac{e-2}{3}<\frac{1}{p}$. The first inequality is
clearly true. The second inequality is also true because
\begin{align*}
&e>\frac{3}{q}-\frac{3}{p}+s-1>\frac{3}{q}-\frac{3}{p}+2
\quad \textrm{ (note that $s>4$)}\\
&\hspace{1cm} \Rightarrow \frac{3}{q}-(e-2)<\frac{3}{p}\Rightarrow
\frac{1}{q}-\frac{e-2}{3}<\frac{1}{p}.
\end{align*}
1) $\tilde{A}$ is \textbf{faithful} to $A$:
\begin{align*}
(i) & \ \tilde{p}=\tilde{q}, \textrm{ and }
\frac{1}{\tilde{q}}<\frac{1}{p} \Rightarrow
\frac{1}{\tilde{p}}\leq \frac{1}{p}.\\
(ii) & \ \frac{1}{q}-\frac{e-2}{3}\leq
\frac{1}{\tilde{q}}<\frac{1}{3}
\Rightarrow
1<\tilde{e}-\frac{3}{\tilde{q}}\leq e-\frac{3}{q}. \quad
(\tilde{e}=2)\\
(iii) & \ \tilde{s}=s-2>2 \Rightarrow
\tilde{s}-\frac{3}{\tilde{q}}>2-\frac{3}{\tilde{q}}>1
\Rightarrow \tilde{s}-\frac{3}{\tilde{p}}>1.
\quad \textrm{ (note
$\frac{1}{\tilde{q}}<\frac{1}{3}$ and $\tilde{q}=\tilde{p}$)}\\
(iv) & \ \frac{1}{\tilde{q}}\geq \frac{1}{p}-\frac{2}{3}
\Rightarrow \frac{3}{p}\leq 2+\frac{3}{\tilde{q}}
\\
& \quad \quad \Rightarrow
s-2-\frac{3}{\tilde{q}}\leq s-\frac{3}{p}
\Rightarrow\tilde{s}-\frac{3}{\tilde{p}}\leq
s-\frac{3}{p}.
\quad \textrm{ (note $\tilde{s}=s-2$ and $\tilde{q}=\tilde{p}$)}
\end{align*}

2) $\tilde{A}$ is \textbf{beautiful}:
\begin{align*}
(i) & \textrm{ Clearly } \tilde{p}, \tilde{q}\in (1,\infty).
\textrm{ By what was proved above }
\tilde{s}>1+\frac{3}{\tilde{p}} \textrm{ and }
\tilde{e}>1+\frac{3}{\tilde{q}}.\\
(ii) & \ \tilde{e}\in [\tilde{s}-1,\tilde{s}] \Leftrightarrow 2\in
[s-3,s-2]\Leftrightarrow 4\leq s \leq 5.
\\
& \quad \quad \textrm{ (by assumption $s>4$; also $s-1\leq e\leq 4$ and so $s\leq 5$)}.\\
(iii) & \ \tilde{e}\in
[\frac{3}{\tilde{q}}+\tilde{s}-\frac{3}{\tilde{p}}-1,\frac{3}{\tilde{q}}+\tilde{s}-\frac{3}{\tilde{p}}]\Leftrightarrow
2\in
[\frac{3}{\tilde{q}}+s-3-\frac{3}{\tilde{p}},\frac{3}{\tilde{q}}+s-2-\frac{3}{\tilde{p}}]\\
&\hspace{2cm} \Leftrightarrow 4\leq
s+\frac{3}{\tilde{q}}-\frac{3}{\tilde{p}} \leq 5 \Leftrightarrow
4\leq s \leq 5.
\\
& \quad \quad \textrm{ (which is true; note that
$\tilde{s}=s-2,\,\tilde{e}=2,\, \tilde{q}=\tilde{p}$})
\end{align*}
The proof that $\tau, \sigma, \rho$ and $J$ belong to
the correct spaces is exactly the same as Case 1.

\medskip
\emph{Case 3:} $s-2>2$, $e-2>2$ (so $\tilde{s}=s-2$, $\tilde{e}=e-2$).

\smallskip
Select $\tilde{q}$ to satisfy
\begin{equation*}
\frac{1}{\tilde{q}}\in
[\frac{1}{q}-\frac{2}{3},\frac{e}{3}-1)\cap (0,\frac{1}{q})\cap
(\frac{1}{q}-\frac{1}{p},\infty).
\end{equation*}
Define $\tilde{p}$ by
$\frac{1}{\tilde{p}}:=\frac{1}{\tilde{q}}-\frac{1}{q}+\frac{1}{p}$.
Our claim is that the 10-tuple
$\tilde{A}=(\tilde{s}=s-2,\tilde{p},\tilde{e}=e-2,\tilde{q},\tilde{\delta}=\delta+\frac{|\delta|}{2},\tilde{\beta}=\beta+\frac{|\delta|}{2},\tilde{\tau}=\tau,\tilde{\sigma}=\sigma,\tilde{\rho}=\rho,\tilde{J}=J)$
is \textbf{extremely faithful} to $A$.

First note that it is possible to pick such $\tilde{q}$. Indeed,
$[\frac{1}{q}-\frac{2}{3},\frac{e}{3}-1)\neq \varnothing$ because
$e>1+\frac{3}{q}$. In order to show that the intersection of the
three intervals is nonempty we consider two possibilities:
\begin{itemizeXXM}
\item Possibility 1: $\frac{1}{q}-\frac{1}{p}>0$. In this case
\begin{equation*}
(0,\frac{1}{q})\cap
(\frac{1}{q}-\frac{1}{p},\infty)=(\frac{1}{q}-\frac{1}{p},\frac{1}{q}),
\end{equation*}
and so it is enough to show that
\begin{equation*}
[\frac{1}{q}-\frac{2}{3},\frac{e}{3}-1)\cap
(\frac{1}{q}-\frac{1}{p},\frac{1}{q})\neq \varnothing.
\end{equation*}
This is true because
\begin{align*}
(i)  & \textrm{ Clearly } \frac{1}{q}-\frac{2}{3}<\frac{1}{q},\\
(ii) & \ e\geq
\frac{3}{q}-\frac{3}{p}+s-1>\frac{3}{q}-\frac{3}{p}+3 \Rightarrow
\frac{e}{3}-1>\frac{1}{q}-\frac{1}{p}.
\quad \textrm{ (note that $s>4$})
\end{align*}
\item Possibility 2: $\frac{1}{q}-\frac{1}{p}\leq 0$. In this case
\begin{equation*}
(0,\frac{1}{q})\cap
(\frac{1}{q}-\frac{1}{p},\infty)=(0,\frac{1}{q}),
\end{equation*}
and so it is enough to show that
\begin{equation*}
[\frac{1}{q}-\frac{2}{3},\frac{e}{3}-1)\cap (0,\frac{1}{q})\neq
\varnothing.
\end{equation*}
This is true because
\begin{equation*}
(i)  \textrm{ Clearly } \frac{1}{q}-\frac{2}{3}<\frac{1}{q},
\quad \textrm{ and } \quad
(ii) \ e>3 \Rightarrow \frac{e}{3}-1>0 .
\end{equation*}
\end{itemizeXXM}

1) $\tilde{A}$ is \textbf{faithful} to $A$:
\begin{align*}
(i) & \ \frac{1}{\tilde{q}}<\frac{1}{q}\Rightarrow
\frac{1}{\tilde{q}}-\frac{1}{q}+\frac{1}{p}<\frac{1}{p}\Rightarrow
\frac{1}{\tilde{p}}<\frac{1}{p}.\\
(ii) & \ \frac{1}{\tilde{q}}<\frac{e}{3}-1\Rightarrow
e-2-\frac{3}{\tilde{q}}>1 \Rightarrow
\tilde{e}-\frac{3}{\tilde{q}}>1. \quad (\tilde{e}=e-2)\\
(iii) & \ \frac{1}{q}-\frac{2}{3}\leq
\frac{1}{\tilde{q}}\Rightarrow \frac{3}{q}-2\leq
\frac{3}{\tilde{q}}\Rightarrow e-2-\frac{3}{\tilde{q}}\leq
e-\frac{3}{q} \Rightarrow \tilde{e}-\frac{3}{\tilde{q}}\leq
e-\frac{3}{q}.  \quad (\tilde{e}=e-2) \\
(iv) & \ 3+\frac{3}{\tilde{q}}<e<s+\frac{3}{q}-\frac{3}{p}\Rightarrow
3+\frac{3}{\tilde{q}}<s+\frac{3}{q}-\frac{3}{p}\Rightarrow
1<s-2-\frac{3}{\tilde{q}}+\frac{3}{q}-\frac{3}{p}
\\
& \quad \quad \Rightarrow
1<s-2-\frac{3}{\tilde{p}}
\Rightarrow1<\tilde{s}-\frac{3}{\tilde{p}}.
\quad \textrm{ (note that
$\frac{1}{\tilde{p}}:=\frac{1}{\tilde{q}}-\frac{1}{q}+\frac{1}{p}$
and $\tilde{s}=s-2$)}\\
(v) & \ \frac{1}{q}-\frac{2}{3}\leq \frac{1}{\tilde{q}}\Rightarrow
0\leq\frac{3}{\tilde{q}}-\frac{3}{q}+2\Rightarrow \frac{3}{p}\leq
\frac{3}{\tilde{q}}-\frac{3}{q}+\frac{3}{p}+2 \Rightarrow
\frac{3}{p}\leq \frac{3}{\tilde{p}}+2 \\& \hspace{1cm}\Rightarrow
s-2-\frac{3}{\tilde{p}}\leq
s-\frac{3}{p}\Rightarrow\tilde{s}-\frac{3}{\tilde{p}}\leq
s-\frac{3}{p}.
\end{align*}

2) $\tilde{A}$ is \textbf{beautiful}:
\begin{align*}
(i) & \textrm{ Clearly } \tilde{p}, \tilde{q}\in (1,\infty).
\textrm{ By what was proved above, }
\tilde{s}>1+\frac{3}{\tilde{p}} \textrm{ and }
\tilde{e}>1+\frac{3}{\tilde{q}}.\\
(ii) & \ \tilde{e}\in [\tilde{s}-1,\tilde{s}] \Leftrightarrow
e-2\in
[s-3,s-2]\Leftrightarrow e\in[s-1,s]. \textrm{ (which is clearly true)} \\
(iii) & \ \tilde{e}\in
[\frac{3}{\tilde{q}}+\tilde{s}-\frac{3}{\tilde{p}}-1,\frac{3}{\tilde{q}}+\tilde{s}-\frac{3}{\tilde{p}}]\Leftrightarrow
e-2\in
[\frac{3}{\tilde{q}}+s-3-\frac{3}{\tilde{p}},\frac{3}{\tilde{q}}+s-2-\frac{3}{\tilde{p}}]\\
&\hspace{1cm} \Leftrightarrow
e\in[\frac{3}{\tilde{q}}-\frac{3}{\tilde{p}}+s-1,s+\frac{3}{\tilde{q}}-\frac{3}{\tilde{p}}]\Leftrightarrow
e\in [\frac{3}{q}-\frac{3}{p}+s-1,s+\frac{3}{q}-\frac{3}{p}].
\\
& \quad \quad
\textrm{ (note that
$\frac{1}{\tilde{q}}-\frac{1}{\tilde{p}}=\frac{1}{q}-\frac{1}{p}$)}.
\end{align*}
The last inclusion is true because $A$ is \textbf{beautiful}. The proof of
the fact that $\tau, \sigma, \rho$ and $J$ belong to the correct
spaces is exactly the same as Case 1.

Note that $e\leq s$, so if $s-2\leq 2$ then $e-2\leq 2$ and
therefore the case where $s-2\leq 2$, $e-2>2$ does not happen.

This establishes \textbf{Claim 2}, and by earlier arguments the
main claim of the Theorem now follows.
\end{proof}

\section{Two Auxiliary Results}

We now state and prove two auxiliary lemmas that were
used in the proof of Theorem~\ref{thm:main}.

\begin{lemma}\lab{globalbounded1}
Let $\chi\in W^{s,p}_{\delta}$, $\chi>-1$ and let $f\in
W^{s-2,p}_{\delta-2}$. Then $\frac{1}{1+\chi}f \in
W^{s-2,p}_{\delta-2}$ and
\begin{equation*}
\| \frac{1}{1+\chi}f \|_{s-2,p,\delta-2}\preceq (1+\| \frac{\chi}{\chi+1}\|_{s,p,\delta})\|
f\|_{s-2,p,\delta-2}.
\end{equation*}
In particular, for a fixed $\chi$, the mapping $f \mapsto
\frac{1}{1+\chi}f$ (from $W^{s-2,p}_{\delta-2} to
W^{s-2,p}_{\delta-2}$) sends bounded sets to bounded sets.
\end{lemma}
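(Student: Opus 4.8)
The plan is to base the whole argument on the algebraic identity
\[
\frac{1}{1+\chi}=1-\frac{\chi}{1+\chi},
\]
so that $\frac{1}{1+\chi}f=f-\frac{\chi}{1+\chi}f$, and then to control the second term via the multiplication lemma once we know that $\frac{\chi}{1+\chi}\in W^{s,p}_{\delta}$.

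First I would verify that $1+\chi$ is bounded away from $0$. Since $s>\frac{3}{p}$ we have $W^{s,p}_{\delta}\hookrightarrow C^{0}_{\delta}$, and because $\delta<0$ this forces $\chi(x)\to 0$ as $|x|\to\infty$ in the asymptotic ends; hence there is a compact set $B$ outside of which $|\chi|<\frac12$. On $B$ the continuous function $\chi$ attains a minimum, which by hypothesis is $>-1$, so $\inf_M(1+\chi)\geq \min\{\frac12,\,1+\min_B\chi\}>0$. Moreover $\chi$ is bounded, so the range of $\chi$ lies in a compact interval $[a,b]\subset(-1,\infty)$, and in particular $x\mapsto\frac{1}{1+x}$ is smooth on $[a,b]$. (This is exactly the reasoning used in Remark~\ref{rem1}.) Writing $\frac{\chi}{1+\chi}=\chi\cdot\frac{1}{1+\chi}$, where $\frac{1}{1+\chi}$ is the composition of $\chi\in W^{s,p}_{\delta}$ with a function smooth on the range of $\chi$, the composition/multiplication lemma (Lemma~\ref{lempA1}, with the second factor taken to be $\chi$ itself, and using that $W^{s,p}_{\delta}$ is a Banach algebra since $s>\frac{3}{p}$) yields $\frac{\chi}{1+\chi}\in W^{s,p}_{\delta}$.

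Next, the multiplication property $W^{s,p}_{\delta}\times W^{s-2,p}_{\delta-2}\hookrightarrow W^{s-2,p}_{\delta-2}$ — which was checked explicitly (with $t=s$) in Step~1 of the proof of Lemma~\ref{lem2} via Lemma~\ref{lemA1}, since $s>\frac{3}{p}$ and $-s\leq s-2\leq s$ — gives $\frac{\chi}{1+\chi}f\in W^{s-2,p}_{\delta-2}$ together with
\[
\left\|\frac{\chi}{1+\chi}f\right\|_{s-2,p,\delta-2}\preceq\left\|\frac{\chi}{1+\chi}\right\|_{s,p,\delta}\|f\|_{s-2,p,\delta-2}.
\]
Combining this with the identity above and the triangle inequality,
\[
\left\|\frac{1}{1+\chi}f\right\|_{s-2,p,\delta-2}\leq\|f\|_{s-2,p,\delta-2}+\left\|\frac{\chi}{1+\chi}f\right\|_{s-2,p,\delta-2}\preceq\left(1+\left\|\frac{\chi}{1+\chi}\right\|_{s,p,\delta}\right)\|f\|_{s-2,p,\delta-2},
\]
which is the asserted estimate; the ``in particular'' statement is then immediate, since for a fixed $\chi$ the number $\|\frac{\chi}{1+\chi}\|_{s,p,\delta}$ is finite and the displayed bound exhibits $f\mapsto\frac{1}{1+\chi}f$ as a bounded linear operator on $W^{s-2,p}_{\delta-2}$, hence one sending bounded sets to bounded sets. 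I expect the only genuinely delicate point to be the lower bound $\inf_M(1+\chi)>0$, needed so that $x\mapsto\frac{1}{1+x}$ is smooth on a neighborhood of the range of $\chi$ and the composition lemma applies; once that is secured, the remaining steps are direct invocations of the multiplication lemma already used repeatedly in the paper.
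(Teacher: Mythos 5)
Your proposal is correct and follows essentially the same route as the paper's own proof: both rest on the identity $\frac{1}{1+\chi}f = f - \frac{\chi}{1+\chi}f$, apply Lemma~\ref{lempA1} to conclude $\frac{\chi}{1+\chi}\in W^{s,p}_{\delta}$, and then use the multiplication embedding $W^{s,p}_{\delta}\times W^{s-2,p}_{\delta-2}\hookrightarrow W^{s-2,p}_{\delta-2}$ together with the triangle inequality. Your explicit verification that $\inf_M(1+\chi)>0$ (which justifies smoothness of $x\mapsto\frac{1}{1+x}$ on the range of $\chi$) is the same reasoning the paper relegates to Remark~\ref{rem1}; otherwise the two arguments coincide step for step.
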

\begin{proof}{\bf (Lemma~\ref{globalbounded1})}
By Lemma \ref{lempA1} $\frac{1}{1+\chi}f \in
W^{s-2,p}_{\delta-2}$. Moreover
\begin{equation*}
\| \frac{1}{1+\chi}f \|_{s-2,p,\delta-2}=\|
(\frac{1}{1+\chi}-1+1)f\|_{s-2,p,\delta-2}=\|
\frac{-\chi}{\chi+1}f+f
\|_{s-2,p,\delta-2}.
\end{equation*}
It follows from Lemma \ref{lempA1} that $\frac{-\chi}{\chi+1}\in
W^{s,p}_{\delta}$. Also by Lemma \ref{lemA1}
$W^{s,p}_{\delta}\times W^{s-2,p}_{\delta-2}\rightarrow
W^{s-2,p}_{\delta-2}$. Thus
\begin{align*}
\|
\frac{-\chi}{\chi+1}f+f
\|_{s-2,p,\delta-2}
&\leq \|
\frac{-\chi}{\chi+1}f\|_{s-2,p,\delta-2}+\| f
\|_{s-2,p,\delta-2}
\\
& \preceq  \|
\frac{-\chi}{\chi+1}\|_{s,p,\delta}\|
f\|_{s-2,p,\delta-2}+ \|
f\|_{s-2,p,\delta-2}
\\
& =(1+\|
\frac{\chi}{\chi+1}\|_{s,p,\delta})\|
f\|_{s-2,p,\delta-2}.
\end{align*}
\end{proof}

\begin{lemma}\lab{globalbounded2}
There exists a constant $C$ independent of $W$ such that
\begin{equation*}
\forall\, W \in S([\psi_-,\psi_+]_{\tilde{s},p,\tilde{\delta}}),
\quad
\| \tilde{a}_W\|_{s-2,p,\delta-2}\leq C.
\end{equation*}
\end{lemma}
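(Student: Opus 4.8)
The plan is to obtain the bound by first converting the uniform control over $\psi$ on the interval $[\psi_-,\psi_+]_{\tilde{s},p,\tilde{\delta}}$ into a uniform bound on $\|W\|_{\mathbf{W}^{e,q}_{\beta}}$ via the a priori estimate for the momentum constraint, then pushing this through to $a_W$, and finally absorbing the fixed multiplier $(1+\xi)^{-12}$ and the weight shift $\beta\le\delta$. Throughout, the one point requiring care is to track that every implicit constant depends only on the \emph{fixed} data ($\psi_\pm$, $\xi$, $\sigma$, $b_\tau$, $b_J$, and $h$), never on the varying pair $(\psi,W)$; the pointwise bound $|\psi|\le\max\{|\psi_-|,|\psi_+|\}$ and the linearity of the momentum operator are what make this possible.

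First I would note that any $W\in S([\psi_-,\psi_+]_{\tilde{s},p,\tilde{\delta}})$ has the form $W=-\mathcal{A}_L^{-1}\mathbf{f}(\psi)$ for some $\psi$ with $\psi_-\le\psi\le\psi_+$, and for such $\psi$ one has $\|\mu+\psi\|_{L^\infty}\le\mu+\max\{\|\psi_-\|_{L^\infty},\|\psi_+\|_{L^\infty}\}=:P$, a constant determined by the barriers. Since we are in the case $s\le 2$ (hence $e\le 2$ and $L^z_{\beta-2}\hookrightarrow\mathbf{W}^{e-2,q}_{\beta-2}$ with $z=\tfrac{3q}{3+(2-e)q}$), the computation from the proofs of Corollary~\ref{coro2.8} and Lemma~\ref{lemglobal1} gives
\[
\|\mathbf{f}(\psi)\|_{\mathbf{W}^{e-2,q}_{\beta-2}}\le\|b_\tau(\mu+\psi)^6\|_{L^z_{\beta-2}}+\|b_J\|_{\mathbf{W}^{e-2,q}_{\beta-2}}\preceq P^6\|b_\tau\|_{L^z_{\beta-2}}+\|b_J\|_{\mathbf{W}^{e-2,q}_{\beta-2}},
\]
a bound independent of $\psi$. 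Since $h$ has no nontrivial conformal Killing field, Theorem~\ref{thm1} makes $\mathcal{A}_L^{-1}$ bounded, so $\|W\|_{\mathbf{W}^{e,q}_{\beta}}\preceq\|\mathbf{f}(\psi)\|_{\mathbf{W}^{e-2,q}_{\beta-2}}$ is bounded by a constant $C_0$ independent of $W$.

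Next I would transfer this to $a_W=\tfrac18|\sigma+\mathcal{L}W|^2$. Since $\mathcal{L}$ is first order, $\|\mathcal{L}W\|_{W^{e-1,q}_{\beta-1}}\preceq\|W\|_{\mathbf{W}^{e,q}_{\beta}}\le C_0$, and $\sigma\in W^{e-1,q}_{\beta-1}$ is fixed, so $\|\sigma+\mathcal{L}W\|_{W^{e-1,q}_{\beta-1}}\le\|\sigma\|_{W^{e-1,q}_{\beta-1}}+C_0=:C_1$. As $e-1>\tfrac{3}{q}$, Corollary~\ref{coroA2} yields $W^{e-1,q}_{\beta-1}\times W^{e-1,q}_{\beta-1}\hookrightarrow W^{e-1,q}_{2\beta-2}$, hence $\|a_W\|_{W^{e-1,q}_{2\beta-2}}\preceq C_1^2$; and the embedding $W^{e-1,q}_{2\beta-2}\hookrightarrow W^{s-2,p}_{\beta-2}$ established in deriving \textbf{Weak Formulation~\ref{weakf2}} gives $\|a_W\|_{s-2,p,\beta-2}\preceq C_1^2$, uniformly in $W$.

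Finally, since $\xi\in W^{s,p}_{\delta}$ is fixed with $\inf(1+\xi)>0$ (Remark~\ref{rem1}), Lemma~\ref{lempA1} gives $(1+\xi)^{-12}-1\in W^{s,p}_{\delta}$, and the multiplication property $W^{s,p}_{\delta}\times W^{s-2,p}_{\beta-2}\hookrightarrow W^{s-2,p}_{\beta-2}$ from Lemma~\ref{lemA1} yields
\[
\|\tilde{a}_W\|_{s-2,p,\beta-2}=\|(1+\xi)^{-12}a_W\|_{s-2,p,\beta-2}\preceq\big(1+\|(1+\xi)^{-12}-1\|_{s,p,\delta}\big)\|a_W\|_{s-2,p,\beta-2},
\]
bounded by a constant depending only on $\xi$ and the constants above. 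Since $\beta\le\delta$, the weight embedding $W^{s-2,p}_{\beta-2}\hookrightarrow W^{s-2,p}_{\delta-2}$ gives $\|\tilde{a}_W\|_{s-2,p,\delta-2}\preceq\|\tilde{a}_W\|_{s-2,p,\beta-2}\le C$ for a constant $C$ depending only on the fixed data, as required. The argument is essentially bookkeeping; the only genuine work — and thus the main point to get right — is confirming the uniform-in-$W$ bound on $\|W\|_{\mathbf{W}^{e,q}_\beta}$, which is where the pointwise barrier bound on $\psi$ and the linearity of $\mathcal{A}_L$ enter.
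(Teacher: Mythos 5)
Your proof is correct and follows essentially the same route as the paper: uniform bound on $\|\textbf{f}(\psi)\|$ via the barrier pointwise bound and Corollary~\ref{coro2.8}, then a uniform bound on $\|W\|_{\textbf{W}^{e,q}_\beta}$, then the chain of multiplication/embedding lemmas to bound $a_W$, then absorbing the fixed factor $(1+\xi)^{-12}$. The only cosmetic difference is that you handle $(1+\xi)^{-12}$ in one step (via $(1+\xi)^{-12}-1\in W^{s,p}_{\delta}$, which is really Corollary~\ref{corpA1} rather than Lemma~\ref{lempA1}, plus the multiplication lemma), whereas the paper iterates Lemma~\ref{globalbounded1} twelve times; both are valid and use the same underlying facts.
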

\begin{proof}{\bf (Lemma~\ref{globalbounded2})}
By Corollary \ref{coro2.8} if $W \in
S([\psi_-,\psi_+]_{\tilde{s},p,\tilde{\delta}})$, that is, if $W$
is the solution to the momentum constraint with some source $\psi
\in [\psi_-,\psi_+]_{\tilde{s},p,\tilde{\delta}}$, then
\begin{align*}
\| W\|_{e,q,\beta} &\leq C_1\big[(\mu+\|
\psi\|_{L^{\infty}_{\delta}})^6 \|
b_\tau\|_{L^z_{\beta-2}}+\|
b_J\|_{\textbf{W}^{e-2,q}_{\beta-2}}\big]\\
& \leq C_1\big[(\mu+\max\{\|
\psi_-\|_{L^{\infty}_{\delta}}, \|
\psi_+\|_{L^{\infty}_{\delta}}\})^6 \|
b_\tau\|_{L^z_{\beta-2}}+\|
b_J\|_{\textbf{W}^{e-2,q}_{\beta-2}}\big].
\end{align*}
Here we used the fact that $|\psi|\leq \max\{|\psi_+|,|\psi_-|\}$
and so $\| \psi\|_{L^{\infty}_{\delta}}\leq
\max\{\| \psi_-\|_{L^{\infty}_{\delta}}, \|
\psi_+\|_{L^{\infty}_{\delta}}\}$. Consequently there is a
constant $C_2$ such that for all $W\in
S([\psi_-,\psi_+]_{\tilde{s},p,\tilde{\delta}})$ we have
$\| W\|_{e,q,\beta}\leq C_2$.

Considering the restrictions on the exponents $s,p, \delta, e,q,
\beta$ and using our embedding theorem and multiplication lemma,
 it is easy to check $W^{s-2,p}_{\beta-2} \hookrightarrow
W^{s-2,p}_{\delta-2}$, $W^{e-1,q}_{2\beta-2}\hookrightarrow
W^{s-2,p}_{\beta-2}$, and $W^{e-1,q}_{\beta-1}\times
W^{e-1,q}_{\beta-1}\hookrightarrow W^{e-1,q}_{2\beta-2}$.
Therefore we can write
\begin{align*}
\| a_W\|_{s-2,p,\delta-2} &\preceq \|
a_W\|_{s-2,p,\beta-2}\preceq \|a_W\|_{e-1,q,2\beta-2}\\
& \preceq \| \sigma+
\mathcal{L}W\|^2_{e-1,q,\beta-1}\preceq (\| \sigma
\|_{e-1,q,\beta-1}+\|
\mathcal{L}W\|_{e-1,q,\beta-1})^2\\
&\preceq \| \sigma
\|_{e-1,q,\beta-1}^2+\|
\mathcal{L}W\|_{e-1,q,\beta-1}^2\preceq \| \sigma
\|_{e-1,q,\beta-1}^2+\|
W\|_{e,q,\beta}^2\\
&\leq \| \sigma
\|_{e-1,q,\beta-1}^2+C_2.
\end{align*}
Hence there is a constant $C_3$ such that for all $W\in
S([\psi_-,\psi_+]_{\tilde{s},p,\tilde{\delta}})$ we have
$\| a_W\|_{s-2,p,\delta-2}\leq C_3$. Now notice
that $\tilde{a}_W=(1+\xi)^{-12}a_W$, that is, $\tilde{a}_W$ is
obtained from $a_W$ by applying the mapping $f \mapsto
\frac{1}{1+\xi}f$ twelve times. But by Lemma \ref{globalbounded1}
the mapping $f \mapsto \frac{1}{1+\xi}f$ sends bounded sets in
$W^{s-2,p}_{\delta-2}$ to bounded sets in $W^{s-2,p}_{\delta-2}$.
Consequently there exists a constant $C$ such that
\begin{equation*}
\forall\, W \in S([\psi_-,\psi_+]_{\tilde{s},p,\tilde{\delta}}),
\quad
\| \tilde{a}_W\|_{s-2,p,\delta-2}\leq C.
\end{equation*}
\end{proof}

\section*{Acknowledgments}
   \label{sec:ack}

The authors would like to thank J. Isenberg, D. Maxwell, and R. Mazzeo 
for helpful comments at various stages of the work.
The authors would also like to especially thank C. Meier for his explanation
of the barrier constructions for AF manifolds appearing in~\cite{24}.
This work developed as part of the joint FRG Project between 
M. Holst, J. Isenberg, D. Maxwell, and R. Mazzeo,
supported by NSF FRG Award~1262982.
MH was supported in part by NSF Awards~1262982, 1318480, and 1620366.
AB was supported by NSF Award~1262982.

\appendix
\section{Weighted Sobolev Spaces}
   \label{app:spaces}

We first assemble some basic results we need for weighted Sobolev spaces.
We limit our selves to simply stating the results we need,
unless the proof of the result is either unavailable or difficult to find
in the form we need, in which case we include a concise proof.

Consider an open cover of $\mathbb{R}^n$ that consists of the
following sets:
\begin{equation*}
B_2,\quad B_4\setminus \bar{B_1}, \quad B_8\setminus
\bar{B_2},\,...,\,B_{2^{j+1}}\setminus \bar{B}_{2^{j-1}},
\end{equation*}
where $B_r$ is the open ball of radius $r$ centered at the
origin. For all $r$ let $S_r f(x):= f(rx)$. Consider the
following partition of unity subordinate to the above cover of
$\mathbb{R}^n$\cite{2}:
\begin{align*}
&\varphi_0=1\quad \textrm{on} \quad B_1,\quad \supp\varphi_0\subseteq B_2,\\
&\varphi(x)= \varphi_0(x)-\varphi_0(2x) \quad (\supp\varphi
\subseteq B_2, \quad \varphi=0 \,\, \textrm{on} \,\,
B_{\frac{1}{2}}),\\
& \forall j\geq 1 \quad \varphi_j= S_{2^{-j}}\varphi.
\end{align*}
One can easily check that $\sum_{j=0}^{\infty}\varphi_j (x)=1$.

For $s\in\mathbb{\mathbb{R}}$, $p\in (1,\infty)$, the weighted
Sobolev space $W^{s,p}_{\delta}(\mathbb{R}^n)$ is defined as
follows:
\begin{equation*}
W^{s,p}_{\delta}(\mathbb{R}^n)=\{u\in S'(\mathbb{R}^n):
\|u\|_{W^{s,p}_{\delta}(\mathbb{R}^n)}^p=\sum_{j=0}^{\infty}2^{-p\delta
j} \|S_{2^j}(\varphi_j u)\|_{W^{s,p}(\mathbb{R}^n)}^p<\infty \}.
\end{equation*}
Here $W^{s,p}(\mathbb{R}^n)$ is the Sobolev-Slobodeckij space
which is defined as follows:
\begin{itemizeX}
\item If $s=k\in \mathbb{N}_0$, $p\in[1,\infty]$,
\begin{equation*}
W^{k,p}(\mathbb{R}^n)=\{u\in L^p (\mathbb{R}^n):
\|u\|_{W^{k,p}(\mathbb{R}^n)}:=\sum_{|\nu|\leq
k}\|\partial^{\nu}u\|_p<\infty\}
\end{equation*}
\item If $s=\theta\in(0,1)$, $p\in[1,\infty)$,
\begin{equation*}
W^{\theta,p}(\mathbb{R}^n)=\{u\in L^p (\mathbb{R}^n):
 |u|_{W^{\theta,p}(\mathbb{R}^n)}:=\big(\int\int_{\mathbb{R}^n\times
\mathbb{R}^n}\frac{|u(x)-u(y)|^p}{|x-y|^{n+\theta p}}dx
dy\big)^{\frac{1}{p}} <\infty\}
\end{equation*}
\item If $s=\theta\in(0,1)$, $p=\infty$,
\begin{equation*}
W^{\theta,\infty}(\mathbb{R}^n)=\{u\in L^{\infty} (\mathbb{R}^n):
 |u|_{W^{\theta,\infty}(\mathbb{R}^n)}:=\esssup_{x,y \in
\mathbb{R}^n, x\neq y}\frac{|u(x)-u(y)|}{|x-y|^{\theta}} <\infty\}
\end{equation*}
\item If $s=k+\theta,\, k\in \mathbb{N}_0,\, \theta\in(0,1)$,
$p\in[1,\infty]$,
\begin{equation*}
W^{s,p}(\mathbb{R}^n)=\{u\in
W^{k,p}(\mathbb{R}^n):\|u\|_{W^{s,p}(\mathbb{R}^n)}:=\|u\|_{W^{k,p}(\mathbb{R}^n)}+\sum_{|\nu|=k}
|\partial^{\nu}u|_{W^{\theta,p}(\mathbb{R}^n)}<\infty\}
\end{equation*}
\item If $s<0$ and $p\in(1,\infty)$,
\begin{equation*}
W^{s,p}(\mathbb{R}^n)=(W^{-s,p'}(\mathbb{R}^n))^{*} \quad
(\frac{1}{p}+\frac{1}{p'}=1).
\end{equation*}
\end{itemizeX}
Alternatively, we could have defined $W^{s,p}(\mathbb{R}^n)$ as a
Bessel potential space, that is,
\begin{equation*}
W^{s,p}(\mathbb{R}^n)= \{u\in S'(\mathbb{R}^n):
\|u\|_{W^{s,p}(\mathbb{R}^n)}:=\|\mathcal{F}^{-1}(\langle\xi\rangle^s\mathcal{F}u)\|_{L^p}<\infty\},
\end{equation*}
where $\langle\xi\rangle:=(1+|\xi|^2)^{\frac{1}{2}}$.
It is a well known fact that for $k\in \mathbb{N}_0$ the above
definition of $W^{k,p}(\mathbb{R}^n)$ agrees with the first
definition \cite{31}. Also for $s\in \mathbb{R}$ and $p=2$ the two
definitions agree\cite{31}. It is customary to use $H^{s,p}$
instead of $W^{s,p}$ for unweighted Bessel potential spaces. We
denote the corresponding weighted spaces by $H^{s,p}_{\delta}$. In
this paper (except in Appendix E) we use the first definition. The
norm on $W^{k,p}_{\delta}(\mathbb{R}^n)$ is equivalent to the
following norm \cite{2,15}: (since the norms are equivalent we
use the same notation for both norms)
\begin{align*}
\|u\|_{W^{k,p}_{\delta}(\mathbb{R}^n)}=\sum_{|\beta|\leq k}
\|\langle x\rangle^{-\delta-\frac{n}{p}+|\beta|}\partial^{\beta}u\|_{L^p(\mathbb{R}^n)}.
\end{align*}
When $s=0$, we denote $W^{s,p}_{\delta}(\mathbb{R}^n)$ by
$L^p_{\delta}(\mathbb{R}^n)$. In particular we have
\begin{equation*}
\| u \|_{L^p_{\delta}(\mathbb{R}^n)}=\|
\langle x\rangle^{-\delta-\frac{n}{p}}u\|_{L^p(\mathbb{R}^n)}.
\end{equation*}
\begin{remark}\lab{remunweighted}
We take a moment to make the following three observations.
\begin{itemizeXXM}
\item Considering the above formula for the norm, it is obvious that
if $\delta\leq -\frac{n}{p}$ then
$\langle x\rangle^{-\delta-\frac{n}{p}+|\beta|}\geq 1$ and therefore
$\|u\|_{W^{k,p}(\mathbb{R}^n)}\leq
\|u\|_{W^{k,p}_{\delta}(\mathbb{R}^n)}$ and
$W^{k,p}_{\delta}(\mathbb{R}^n)\hookrightarrow
W^{k,p}(\mathbb{R}^n)$.
\item Note that if $u\in L^{p}_{\delta}(\mathbb{R}^n)$ and $v\in
L^{\infty}(\mathbb{R}^n)$, then
\begin{align*}
\| v u \|_{L^{p}_{\delta}(\mathbb{R}^n)}
& =\|
\langle x\rangle^{-\delta-\frac{n}{p}}v u\|_{L^p(\mathbb{R}^n)}
\\
& \leq \| v \|_{\infty}\| \langle x\rangle^{-\delta-\frac{n}{p}}
u\|_{L^p(\mathbb{R}^n)}
\\
& =\| v
\|_{\infty}\| u
\|_{L^{p}_{\delta}(\mathbb{R}^n)}.
\end{align*}
\item It is easy to show that for $p\in(1,\infty)$,
$\langle x\rangle^{\delta'}\in L^{p}_{\delta}(\mathbb{R}^n)$ for
every $\delta'<\delta$, but $\langle x\rangle^{\delta}\not \in
L^{p}_{\delta}(\mathbb{R}^n)$.
\end{itemizeXXM}
\end{remark}
\begin{remark}
Suppose $1<p<\infty$. Note that in the case of unweighted Sobolev
spaces, for $s<0$, $W^{s,p}(\mathbb{R}^n)$ is \textbf{defined} as
the dual of $W^{-s,p'}(\mathbb{R}^n)$ . In fact, since
$W^{s,p}(\mathbb{R}^n)$ is reflexive, we have
$(W^{s,p}(\mathbb{R}^n))^{*}=W^{-s,p'}(\mathbb{R}^n)$ for all
$s\in \mathbb{R}$. Contrary to the unweighted case, in case of
weighted Sobolev spaces our definition of
$W^{s,p}_{\delta}(\mathbb{R}^n)$ for $s<0$ is not based on
duality. Nevertheless, as it is stated in the next theorem,
$(W^{s,p}_{\delta}(\mathbb{R}^n))^{*}$ can be \textbf{identified}
with $W^{-s,p'}_{-n-\delta}(\mathbb{R}^n)$. This identification
can be done by defining a suitable bilinear form
$W^{-s,p'}_{-n-\delta}(\mathbb{R}^n) \times
W^{s,p}_{\delta}(\mathbb{R}^n)\rightarrow \mathbb{R}\,\,$
\cite{16}.
\end{remark}
\begin{remark}
In the literature, the \textbf{growth parameter} $\delta$ has been
incorporated in the definition of weighted spaces in more than one
way. Our convention for the growth parameter agrees with
Bartnik's convention \cite{5} and Maxwell's convention
\cite{2,dM06,dM05b}. The following items describe how our definition
corresponds with the other related definitions of weighted spaces
in the literature:
\begin{itemize}
\item For $s\in \mathbb{Z}$ our spaces
$W^{s,p}_{\delta}(\mathbb{R}^n)$ correspond with the spaces
$h^{s}_{p,ps-p\delta-n}(\mathbb{R}^n)$ in \cite{15,16} and
$H^{s,p}_{\delta}(\mathbb{R}^n)$ in \cite{2}.
\item For $s\not \in \mathbb{Z}$ our spaces
$W^{s,p}_{\delta}(\mathbb{R}^n)$ correspond with the spaces
$b^{s}_{p,p,ps-p\delta-n}(\mathbb{R}^n)$ in \cite{15,16} and
$W^{p}_{s,-\delta-\frac{n}{p}}(\mathbb{R}^n)$ in \cite{6}.
\item For $s\in \mathbb{R}$ and $p=2$ our spaces $W^{s,p}_{\delta}(\mathbb{R}^n)$
correspond with the spaces $H^{s}_{\delta}(\mathbb{R}^n)$ in
\cite{2,dM06}.
\end{itemize}
\end{remark}
The space $W^{s,p}_{loc}(\mathbb{R}^n)$ is defined as the set of
distributions $u\in D'(\mathbb{R}^n)$ for which $\chi u \in
W^{s,p}(\mathbb{R}^n)$ for all $\chi\in
C_c^{\infty}(\mathbb{R}^n)$. $W^{s,p}_{loc}(\mathbb{R}^n)$ is a
Frechet space with the topology defined by the seminorms
$p_{\chi}(u)=\|\chi u \|_{W^{s,p}(\mathbb{R}^n)}$ for $\chi \in
C_c^{\infty}(\mathbb{R}^n)$ \cite{9}. Also
$C^{\infty}(\mathbb{R}^n)$ is dense in
$W^{s,p}_{loc}(\mathbb{R}^n)$.
\begin{theorem}\lab{thmA1}\cite{2,dM06,dM05b,5,6,15,16}
Let $p_1, p_2, p, q\in (1,\infty)$, $\delta, \delta_1, \delta_2,
\delta' \in \mathbb{R}$.
\begin{enumerate}
\item If $p\geq q$ and $\delta'< \delta$ then $L^P_{\delta'}(\mathbb{R}^n)\subseteq
L^q_{\delta}(\mathbb{R}^n)$ is continuous.
\item For $s \geq s'$ and $\delta \leq \delta'$ the inclusion $W^{s,p}_{\delta}(\mathbb{R}^n)\subseteq
W^{s',p}_{\delta'}(\mathbb{R}^n)$ is continuous.
\item For $s > s'$ and $\delta
< \delta'$ the inclusion $W^{s,p}_{\delta}(\mathbb{R}^n)\subseteq
W^{s',p}_{\delta'}(\mathbb{R}^n)$ is compact.
\item If  $0\leq sp < n$ then $W^{s,p}_{\delta}(\mathbb{R}^n)\subseteq
L^{r}_{\delta}(\mathbb{R}^n)$ is continuous for every $r$ with
$\frac{1}{p}-\frac{s}{n}\leq \frac{1}{r}\leq \frac{1}{p}$.
\item If $sp = n$ then $W^{s,p}_{\delta}(\mathbb{R}^n)\subseteq
L^{r}_{\delta}(\mathbb{R}^n)$ is continuous for every $r\geq p$.
\item If $sp > n$ then $W^{s,p}_{\delta}(\mathbb{R}^n)\subseteq
L^{r}_{\delta}(\mathbb{R}^n)$ is continuous for every $r\geq p$.
Moreover $W^{s,p}_{\delta}(\mathbb{R}^n)\subseteq
C^0_{\delta}(\mathbb{R}^n)$ is continuous where
$C^0_{\delta}(\mathbb{R}^n)$ is the set of continuous functions
$f:\mathbb{R}^n \rightarrow \mathbb{R}$ for which
$\|f\|_{C^0_{\delta}}:=\sup_{x\in
\mathbb{R}^n}(\langle x\rangle^{-\delta}|f|)<\infty$.
\item If $\frac{1}{r}=\frac{1}{p_1}+\frac{1}{p_2}< 1$, then
pointwise multiplication is a continuous bilinear map
$L^{p_1}_{\delta_1}(\mathbb{R}^n) \times
L^{p_2}_{\delta_2}(\mathbb{R}^n) \rightarrow
L^r_{\delta_1+\delta_2}(\mathbb{R}^n)$.
\item Pointwise multiplication is a continuous bilinear map $C^0_{\delta_1}(\mathbb{R}^n)\times L^{p}_{\delta_2}(\mathbb{R}^n)\rightarrow
L^{p}_{\delta_1+\delta_2}(\mathbb{R}^n)$.
\item For $s\in \mathbb{R}$ (and $p\in (1,\infty)$),
$W^{s,p}_{\delta}(\mathbb{R}^n)$is a reflexive space and
$(W^{s,p}_{\delta}(\mathbb{R}^n))^{*}=W^{-s,p'}_{-n-\delta}(\mathbb{R}^n)$.
\item \textbf{Real Interpolation}: Suppose $\theta \in (0,1)$. If
\begin{equation*}
s=(1-\theta)s_0+\theta s_1, \quad \quad
\frac{1}{p}=\frac{1-\theta}{p_0}+\frac{\theta}{p_1}, \quad \quad
\delta=(1-\theta)\delta_0+\theta \delta_1
\end{equation*}
then
$W^{s,p}_{\delta}(\mathbb{R}^n)=(W^{s_0,p_0}_{\delta_0}(\mathbb{R}^n),
W^{s_1,p_1}_{\delta_1}(\mathbb{R}^n))_{\theta,p}$ unless
$s_0,s_1\in \mathbb{R}$ with $s_0\neq s_1$ and $s\in \mathbb{Z}$.
In the case where $s_0$ and $s_1$ are not both positive and
exactly one of $s_0$ and $s_1$ is an integer, we additionally
assume that $p_0=p_1=p$.
\item \textbf{Complex Interpolation}: Suppose $\theta \in (0,1)$. If
\begin{equation*}
s=(1-\theta)s_0+\theta s_1, \quad \quad
\frac{1}{p}=\frac{1-\theta}{p_0}+\frac{\theta}{p_1}, \quad \quad
\delta=(1-\theta)\delta_0+\theta \delta_1
\end{equation*}
then
$W^{s,p}_{\delta}(\mathbb{R}^n)=[W^{s_0,p_0}_{\delta_0}(\mathbb{R}^n),
W^{s_1,p_1}_{\delta_1}(\mathbb{R}^n)]_{\theta}$ provided $s_0,
s_1, s\in \mathbb{Z}$ or $s_0,
s_1, s \not \in \mathbb{Z}$.

\textbf{Note: The above interpolation facts do not say anything
about the case where $s_0\, or \,s_1 \in \mathbb{R}\setminus
\mathbb{Z}$ and $s\in \mathbb{Z}$.}
\item $C_c^{\infty}(\mathbb{R}^n)$ is dense in
$W^{s,p}_{\delta}(\mathbb{R}^n)$ for all $s\in \mathbb{R}$.
\end{enumerate}
\end{theorem}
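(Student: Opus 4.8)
The plan is to reduce each of the thirteen assertions to the corresponding statement for unweighted Sobolev--Slobodeckij (equivalently Bessel potential) spaces on $\mathbb{R}^n$ by exploiting the defining dyadic decomposition. Since $\|u\|_{W^{s,p}_\delta}^p = \sum_{j\geq 0} 2^{-p\delta j}\|S_{2^j}(\varphi_j u)\|_{W^{s,p}}^p$, the map $u \mapsto (2^{-\delta j}S_{2^j}(\varphi_j u))_{j\geq 0}$ exhibits $W^{s,p}_\delta(\mathbb{R}^n)$ as a retract of the weighted sequence space $\ell^p(\mathbb{N}_0; W^{s,p}(\mathbb{R}^n))$, the reconstruction operator being built from a second, slightly fattened partition of unity $\tilde\varphi_j$. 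The point that makes all estimates uniform in $j$ is that each $\varphi_j u$ is supported in a fixed dyadic annulus which $S_{2^j}$ rescales onto the \emph{fixed} annulus $B_2\setminus \bar B_{1/2}$; hence every unweighted inequality is applied on one bounded region with a constant independent of $j$.

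For the embeddings (1)--(6) one argues piece by piece. Statement (2) follows because $\|S_{2^j}(\varphi_j u)\|_{W^{s',p}}\preceq \|S_{2^j}(\varphi_j u)\|_{W^{s,p}}$ uniformly (as $s\geq s'$) while $\delta\leq\delta'$ gives $2^{-\delta' j}\leq 2^{-\delta j}$, so one may sum. For the compact embedding (3) one combines (a) compactness on each finite block $j\leq N$ via Rellich--Kondrachov on the fixed annulus (using $s>s'$), and (b) the tail bound $\sum_{j>N}2^{-p\delta' j}\|\cdot\|_{W^{s',p}}^p \preceq 2^{-p(\delta'-\delta)N}\|u\|_{W^{s,p}_\delta}^p\to 0$, valid since $\delta<\delta'$; a finite-rank approximation argument then yields compactness. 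The Sobolev inequalities (4)--(6) into $L^r_\delta$ and $C^0_\delta$ use the same template with the unweighted Sobolev embeddings on the fixed annulus, the weight exponent being unchanged; (1) is the case $s=s'=0$.

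The multiplication facts (7)--(8) follow by writing $uv=\sum_j (\varphi_j u)(\tilde\varphi_j v)$, applying on the fixed annulus either unweighted H\"older ($1/r=1/p_1+1/p_2$) or the $C^0\cdot L^p$ estimate, and noting that weights add, $2^{-(\delta_1+\delta_2)j}=2^{-\delta_1 j}2^{-\delta_2 j}$; summing in $\ell^r$ (with $r\leq p_1,p_2$) closes it. For duality (9) one identifies $(W^{s,p}_\delta)^*$ through the pairing $\langle f,u\rangle = \sum_j \langle S_{2^j}(\tilde\varphi_j f), S_{2^j}(\varphi_j u)\rangle$; the retract structure reduces this to $(\ell^p(W^{s,p}))^* = \ell^{p'}(W^{-s,p'})$, which is ordinary $\ell^p$-duality together with the reflexivity and self-duality $(W^{s,p})^* = W^{-s,p'}$ of the unweighted scale, and tracking weights produces the shift $\delta\mapsto -n-\delta$; reflexivity of $W^{s,p}_\delta$ is then inherited as a retract of a reflexive space.

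Finally, (10)--(11): since real and complex interpolation commute with retracts, it suffices to interpolate $\ell^p(\mathbb{N}_0; W^{s,p}(\mathbb{R}^n))$, which factors into the elementary interpolation of weighted $\ell^p$ spaces and the interpolation of unweighted Sobolev--Slobodeckij/Bessel potential spaces on $\mathbb{R}^n$ (Triebel, \cite{15,16}); the configurations we must exclude --- $s_0\neq s_1$ real with $s\in\mathbb{Z}$, or a Besov/Bessel mismatch --- are precisely the known failure cases of the unweighted theory, and the extra hypothesis $p_0=p_1$ in the boundary cases is inherited from it. Density (12) follows by truncating to finitely many dyadic pieces (the tail estimate again) and mollifying each $\varphi_j u$ on the fixed annulus. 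I expect the genuine obstacle to be making ``uniformity in $j$'' fully rigorous for the fractional Gagliardo seminorm --- verifying that the double integral behaves well under the dilations $S_{2^j}$ and under multiplication by the fixed cutoffs with $j$-independent constants --- together with pinning down exactly which interpolation identities survive integer smoothness; these are precisely the technical points settled in \cite{2,dM06,dM05b,5,6,15,16}, which I would invoke directly once the retract reduction is in place.
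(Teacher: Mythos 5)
The paper gives no proof of Theorem~\ref{thmA1}; it is stated as a compilation of standard facts and attributed to the cited references, which is exactly the literature your sketch would lean on at the end. Your retract reduction---viewing $W^{s,p}_\delta(\mathbb{R}^n)$ as a retract of $\ell^p(\mathbb{N}_0;W^{s,p}(\mathbb{R}^n))$ via the dyadic decomposition and then transferring each unweighted fact piece-by-piece on the fixed annulus---is precisely the standard route taken in those sources (it is also how the paper itself argues in the one place it does prove a weighted statement from scratch, Lemma~\ref{lemA1}). So your approach is the intended one, and the items you flag as needing care (uniformity of the Gagliardo seminorm under dilations, the integer-smoothness failure cases of real/complex interpolation) are indeed the only technical sticking points.

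One inaccuracy worth correcting: item~(1) is \emph{not} ``the case $s=s'=0$'' of~(2) or of~(4). Items~(2) and (4) hold the Lebesgue exponent $p$ fixed, whereas item~(1) drops it from $p$ to $q\leq p$ while requiring the \emph{strict} improvement $\delta'<\delta$. On the fixed annulus $B_2\setminus\bar B_{1/2}$ the H\"older inequality gives $L^p\hookrightarrow L^q$, but then in the sequence variable you need $\ell^p\not\hookrightarrow \ell^q$ bridged by the exponential decay $2^{(\delta'-\delta)j}$: writing $a_j=\|S_{2^j}(\varphi_j u)\|_{L^p}$ one applies H\"older in $j$ with conjugate exponents $p/q$ and $(p/q)'$ to the factorization $2^{-\delta j}a_j = 2^{(\delta'-\delta)j}\,(2^{-\delta'j}a_j)$; it is exactly the strict inequality $\delta'<\delta$ that makes the auxiliary geometric series converge. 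This is a different mechanism from~(4)--(6), where $r\geq p$ means the $\ell^p\hookrightarrow \ell^r$ embedding is free and the gain happens only on the fixed annulus. Your general template does cover~(1)---it is just not a specialization of the other items.
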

\begin{remark}\lab{reminfinitybound}
We define $L^{\infty}_{\delta}(\mathbb{R}^n)$ as follows: $ f\in
L^{\infty}_{\delta}(\mathbb{R}^n)\Leftrightarrow \langle x\rangle^{-\delta}f
\in L^{\infty}(\mathbb{R}^n)$. We equip this space with the norm
$\|
f\|_{L^{\infty}_{\delta}(\mathbb{R}^n)}:=\|
\langle x\rangle^{-\delta}f
\|_{L^{\infty}(\mathbb{R}^n)}$.  More generally, for all $k\in
\mathbb{N}_0$
\begin{align*}
& W^{k,\infty}_{\delta}(\mathbb{R}^n)
:=\{u\in L^{\infty}_{\delta}(\mathbb{R}^n)
  : \partial^{\alpha} u \in L^{\infty}_{\delta-|\alpha|}(\mathbb{R}^n)
    \quad \forall \,\, |\alpha|\leq k\},
\\
& \| u\|_{W^{k,\infty}_{\delta}(\mathbb{R}^n)}
=\sum_{|\alpha|\leq k}
\| \partial^{\alpha}u\|_{L^{\infty}_{\delta-|\alpha|}(\mathbb{R}^n)}.
\end{align*}
It is easy to show that $C^0_{\delta}(\mathbb{R}^n)$ is a subspace
of $L^{\infty}_{\delta}(\mathbb{R}^n)$, pointwise multiplication
is a continuous bilinear map
$L^{\infty}_{\delta_1}(\mathbb{R}^n)\times
L^{p}_{\delta_2}(\mathbb{R}^n)\rightarrow
L^{p}_{\delta_1+\delta_2}(\mathbb{R}^n)$ and the inclusion
$L^{\infty}_{\tilde{\delta}}(\mathbb{R}^n)\subseteq
L^{p}_{\delta}(\mathbb{R}^n)$ is continuous for
$\tilde{\delta}<\delta$ and $p\in (1,\infty)$ \cite{5}. Also if
$sp>n$, then the inclusions
$W^{s,p}_{\delta}(\mathbb{R}^n)\subseteq
C^0_{\delta}(\mathbb{R}^n)\subseteq L^{\infty}_{\delta}(\mathbb{R}^n)$ are continuous.

Note that if we let
$r:=\langle x\rangle=(1+|x|^2)^{\frac{1}{2}}$, then for $u \in
L^{\infty}_{\delta}(\mathbb{R}^n)$ we have
\begin{equation*}
\| u \|_{L^{\infty}_{\delta}(\mathbb{R}^n)}=
\esssup_{x\in \mathbb{R}^n} (r^{-\delta}|u|) \Longrightarrow
|u|\leq r^{\delta}\| u
\|_{L^{\infty}_{\delta}(\mathbb{R}^n)}\,\, \textrm{a.e}.
\end{equation*}
\end{remark}
\begin{definition}
Let $\Omega$ be an open subset of $\mathbb{R}^n$.
$W^{s,p}_{\delta}(\Omega)$ is defined as the restriction of
$W^{s,p}_{\delta}(\mathbb{R}^n)$ to $\Omega$ and is equipped with
the following norm:
\begin{equation*}
\|u\|_{W^{s,p}_{\delta}(\Omega)}=\inf_{v\in
W^{s,p}_{\delta}(\mathbb{R}^n),
v|_{\Omega}=u}\|v\|_{W^{s,p}_{\delta}(\mathbb{R}^n)}.
\end{equation*}
\end{definition}
When there is no ambiguity about the domain we may write
\begin{itemize}
\item $W^{s,p}$ instead of $W^{s,p}(\Omega)$,
\item $W^{s,p}_{\delta}$ instead of $W^{s,p}_{\delta}(\Omega)$,
\item $\| .\|_{W^{s,p}}$ or $\|\cdot\|_{s,p}$ instead of
$\|\cdot\|_{W^{s,p}(\Omega)}$,
\item $\| .\|_{W^{s,p}_{\delta}}$ or $\|\cdot\|_{s,p,\delta}$ instead of
$\|\cdot\|_{W^{s,p}_{\delta}(\Omega)}$.
\end{itemize}
\begin{definition}\lab{defweightedsobolevae}
Let $(M,h)$ be an n-dimensional AF manifold of class
$W^{\alpha,\gamma}_{\rho}$.
In addition, let $\{(U_i,\phi_i)\}_{i=1}^{m}$ be
the collection of end charts. We can extend this set to an atlas
$\{(U_i,\phi_i)\}_{i=1}^{k}$ such that for $i>m$ the set
$\bar{U}_i$ is compact and $\phi_i(U_i)=B_1:=\{x\in \mathbb{R}^n:
|x|<1\}$. Let $\{\chi_i\}_{i=1}^k$ be a partition of unity
subordinate to the cover $\{U_i\}_{i=1}^k$. The weighted Sobolev
space $W^{s,p}_{\delta}(M)$ is the subset of $W^{s,p}_{loc}(M)$
consisting of functions $u$ that satisfy
\begin{equation*} \|u\|_{W^{s,p}_{\delta}(M)}:=\sum_{i=1}^m
\|(\phi_i^{-1})^{*}(\chi_i
u)\|_{W^{s,p}_{\delta}(\mathbb{R}^n)}+\sum_{i=m+1}^k
\|(\phi_i^{-1})^{*}(\chi_i u)\|_{W^{s,p}(B_1)}<\infty
\end{equation*}
 The collection $\{(U_i,\phi_i)\}_{i=1}^{k}$ is called
an \textbf{AF atlas} for $M$.
\end{definition}
\begin{remark}
 The above definition of $W^{s,p}_{\delta}(M)$ does not depend
on the metric $h$ and its class and it is also independent of the
chosen partition of unity, but it is based on the specific charts
that were introduced in the definition of AF manifolds. This
definition is not necessarily coordinate independent (of course
see Remark \ref{remextraconditionaf}). Indeed, as for the case of
compact manifolds, one can easily show that different choices for
$\{U_i, \phi_i\}_{i=m+1}^{k}$ result in equivalent norms; but the
dependence of the norm on the end charts is more critical. In
what follows we always assume that one fixed AF atlas is given
and we just work with that fixed atlas.
\end{remark}
\begin{remark}
Let $\pi: E\rightarrow M$ be a smooth vector bundle over $M$.
Completely analogous to Definition \ref{defweightedsobolevae},
one can define the Sobolev space $W^{s,p}_{\delta}(E)$ of
sections of $E$ by using a finite trivializing cover of coordinate
charts and a partition of unity subordinate to the cover.
\end{remark}
\begin{remark}\lab{remgeodesic}
By using partition of unity arguments one can prove all the items
in Theorem \ref{thmA1} for AF manifolds (see below; also for item
9. there are several ways to construct an isomorphism between
 $(W^{s,p}_{\delta}(M))^{*}$ and $W^{-s,p'}_{-n-\delta}(M)$, see our discussion about duality
pairing in Appendix B). Of course note that for instance we have
$\|f\|_{C^0_{\delta}(M)}:=\sup_{x\in
M}([(1+|x|^2)^{\frac{1}{2}}]^{-\delta}|f|)$, where $|x|$ is the
geodesic distance from $x$ to a fixed point $O$ in the compact
core. As opposed to $\mathbb{R}^n$, in a general Riemannian
manifold $|x|^2$ is not smooth, so there is no advantage in using
$(1+|x|^2)^{\frac{1}{2}}$ instead of for example $1+|x|$. In the
literature the norms $\|f\|_{C^0_{\delta}(M)}:=\sup_{x\in
M}((1+|x|)^{-\delta}|f|)$ and
$\|f\|_{L^{\infty}_{\delta}(M)}=\|(1+|x|)^{-\delta}f\|_{\infty}$
have also been used for $C^{0}_{\delta}(M)$ and
$L^{\infty}_{\delta}(M)$, respectively. Clearly these norms are
equivalent to the original ones.
\end{remark}
\begin{remark}\lab{remextraconditionaf}
Item (3) in the definition of AF manifolds (Definition \ref{defAE})
guarantees that $L^{p}_{\delta}(M)$ is independent of the chosen
$AF$ atlas and in fact $\| u\|_{L^{p}_{\delta}(M)}$
agrees with the following norm that uses the natural volume form
of $M$ \cite{5} :
\begin{equation*}
\| u \|_{L^p_{\delta}(M)}=\|
\langle x\rangle^{-\delta-\frac{n}{p}}u\|_{L^p(M)}.\quad \big(\|
u\|_{L^p(M)}= (\int_{M}|u|^p dV_h)^{\frac{1}{p}}\big).
\end{equation*}
Of course it is not necessary to single out weighted Lebesgue
spaces and require their definition to be coordinate independent.
One may choose to treat the spaces $L^{p}_{\delta}(M)$ as general
$W^{s,p}_{\delta}(M)$ spaces are treated. This is the reason why
in some of the literature item (3) in Definition \ref{defAE} is not
considered as part of the definition.
\end{remark}
Here we just show two of the previously stated facts for weighted
spaces on $\mathbb{R}^n$ are also true for weighted spaces on AF
manifolds. The other items in Theorem \ref{thmA1} and Remark
\ref{reminfinitybound} can be proved for AF manifolds in a
similar way.
\begin{itemizeX}
\item \textbf{Continuous Embedding:} For $s\geq s'$ and $\delta \leq
\delta'$the inclusion $W^{s,p}_{\delta}(M)\subseteq
W^{s',p}_{\delta'}(M)$ is continuous:
\begin{align*}
\|u\|_{W^{s',p}_{\delta'}(M)}&=\sum_{i=1}^m
\|(\phi_i^{-1})^{*}(\chi_i
u)\|_{W^{s',p}_{\delta'}(\mathbb{R}^n)}+\sum_{i=m+1}^k
\|(\phi_i^{-1})^{*}(\chi_i u)\|_{W^{s',p}(B_1)}\\
& \preceq \sum_{i=1}^m \|(\phi_i^{-1})^{*}(\chi_i
u)\|_{W^{s,p}_{\delta}(\mathbb{R}^n)}+\sum_{i=m+1}^k
\|(\phi_i^{-1})^{*}(\chi_i u)\|_{W^{s,p}(B_1)}\\
&=\|u\|_{W^{s,p}_{\delta}(M)}.
\end{align*}
\item \textbf{Compact Embedding:} For $s>s'$ and $\delta<\delta'$
the inclusion $W^{s,p}_{\delta}(M)\subseteq W^{s',p}_{\delta'}(M)$
is compact:\\
Let $\{u_j\}$ be a bounded sequence in $W^{s,p}_{\delta}$:
$\|u_j\|_{W^{s,p}_{\delta}}\leq \tilde{M}$. We must prove that
there exists a subsequence of $\{u_j\}$ that is Cauchy in
$W^{s',p}_{\delta'}$ (recall that $W^{s',p}_{\delta'}$ is
complete).
\begin{equation*}
\tilde{M}\geq \|u_j\|_{W^{s,p}_{\delta}}=\sum_{i=1}^m
\|(\phi_i^{-1})^{*}(\chi_i
u_j)\|_{W^{s,p}_{\delta}(\mathbb{R}^n)}+\sum_{i=m+1}^k
\|(\phi_i^{-1})^{*}(\chi_i u_j)\|_{W^{s,p}(B_1)}.
\end{equation*}
Therefore
\begin{align*}
\begin{cases}
\forall\,\, 1\leq i \leq m \quad \forall j \quad
\|(\phi_i^{-1})^{*}(\chi_i
u_j)\|_{W^{s,p}_{\delta}(\mathbb{R}^n)}\leq \tilde{M} ,\\
\forall\,\, m+1\leq i \leq k \quad \forall j \quad
\|(\phi_i^{-1})^{*}(\chi_i u_j)\|_{W^{s,p}(B_1)}\leq \tilde{M}.
\end{cases}
\end{align*}
Since $W^{s,p}_{\delta}(\mathbb{R}^n)\hookrightarrow
W^{s',p}_{\delta'}(\mathbb{R}^n)$ and $W^{s,p}(B_1)\hookrightarrow
W^{s',p}(B_1)$ are compact (by Theorem \ref{thmA1} and
Rellich-Kondrachov theorem, respectively), we can conclude that
\begin{align*}
\begin{cases}
\forall\,\, 1\leq i \leq m, \exists \textrm{ a
subsequence of } \{(\phi_i^{-1})^{*}(\chi_i
u_j)\}_{j=1}^{\infty} \textrm{ that converges in }
W^{s',p}_{\delta'}(\mathbb{R}^n),\\
\forall\,\, m+1\leq i \leq k, \exists \textrm{ a
subsequence of } \{(\phi_i^{-1})^{*}(\chi_i
u_j)\}_{j=1}^{\infty} \textrm{ that converges in }
W^{s',p}(B_1).
\end{cases}
\end{align*}
In fact, by a diagonalization argument we can construct a
subsequence $\{v_j\}$ that converges in the corresponding space
for all $1\leq i \leq k$ (Start with $i=1$ and find a subsequence
that converges. Then for $i=2$ find a subsequence from the
preceding subsequence that converges and so on. At each step we
find a subsequence of the preceding subsequence). So
\begin{align*}
\begin{cases}
\forall \,\, 1\leq i \leq m \quad \{(\phi_i^{-1})^{*}(\chi_i
v_j)\}_{j=1}^{\infty} \quad \textrm{converges in
$W^{s',p}_{\delta'}(\mathbb{R}^n)$},\\
\forall m+1\leq i \leq k \quad \{(\phi_i^{-1})^{*}(\chi_i
v_j)\}_{j=1}^{\infty} \quad \textrm{converges in $W^{s',p}(B_1)$}.
\end{cases}
\end{align*}
We claim that $\{v_j\}$ is Cauchy in $W^{s',p}_{\delta'}(M)$. Let
$\epsilon>0$ be given. For each $1\leq i \leq m$, let $N_i$ be
such that if $l, \tilde{l}>N_i$ then
\begin{equation*}
\|(\phi_i^{-1})^{*}(\chi_i v_{\tilde{l}})-(\phi_i^{-1})^{*}(\chi_i
v_l)\|_{W^{s',p}_{\delta'}(\mathbb{R}^n)}<\frac{\epsilon}{k}.
\end{equation*}
Also for each $m+1\leq i \leq k$, let $N_i$ be such that if $l,
\tilde{l}>N_i$ then
\begin{equation*}
\|(\phi_i^{-1})^{*}(\chi_i v_{\tilde{l}})-(\phi_i^{-1})^{*}(\chi_i
v_l)\|_{W^{s',p}(B_1)}<\frac{\epsilon}{k}.
\end{equation*}
Now let $N=\max\{N_1,...,N_k\}$. Clearly for all $l, \tilde{l}>N$
we have
\begin{align*}
\|v_l-v_{\tilde{l}}\|_{W^{s',p}_{\delta'}(M)}
& =\sum_{i=1}^m \|(\phi_i^{-1})^{*}(\chi_i
(v_l-v_{\tilde{l}}))\|_{W^{s',p}_{\delta'}(\mathbb{R}^n)}
\\
& \quad \quad +\sum_{i=m+1}^k \|(\phi_i^{-1})^{*}(\chi_i (v_l-v_{\tilde{l}}))\|_{W^{s',p}(B_1)}
\\
& < k\frac{\epsilon}{k}=\epsilon.
\end{align*}
This proves that $\{v_j\}$ is Cauchy in $W^{s',p}_{\delta'}(M)$.
\end{itemizeX}
\begin{theorem}\cite{7,jZ77}\lab{thmembedunweight}
If $s_1-\frac{n}{p_1}\geq s_0-\frac{n}{p_0}$, $1 <p_1\leq p_0<
\infty$, $s_1\geq s_0\geq0$, then
$W^{s_1,p_1}(\Omega)\hookrightarrow W^{s_0,p_0}(\Omega)$ (that
is, $W^{s_1,p_1}(\Omega)\subseteq W^{s_0,p_0}(\Omega)$ and the
inclusion map is continuous).
\end{theorem}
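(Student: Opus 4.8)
The plan is to first reduce the statement to the case $\Omega=\mathbb{R}^n$, and then to prove the embedding on $\mathbb{R}^n$ by combining an elementary reduction to the critical (equal-differential) case with Littlewood--Paley / Bernstein estimates. Since, by definition, $W^{s,p}(\Omega)$ consists of the restrictions to $\Omega$ of elements of $W^{s,p}(\mathbb{R}^n)$, equipped with the infimum norm, this reduction is immediate: given $u\in W^{s_1,p_1}(\Omega)$ and $\epsilon>0$, pick $v\in W^{s_1,p_1}(\mathbb{R}^n)$ with $v|_{\Omega}=u$ and $\|v\|_{W^{s_1,p_1}(\mathbb{R}^n)}\le \|u\|_{W^{s_1,p_1}(\Omega)}+\epsilon$. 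If the embedding holds on $\mathbb{R}^n$, then $v\in W^{s_0,p_0}(\mathbb{R}^n)$ with $\|v\|_{W^{s_0,p_0}(\mathbb{R}^n)}\le C\|v\|_{W^{s_1,p_1}(\mathbb{R}^n)}$, hence $u=v|_{\Omega}\in W^{s_0,p_0}(\Omega)$ with $\|u\|_{W^{s_0,p_0}(\Omega)}\le C(\|u\|_{W^{s_1,p_1}(\Omega)}+\epsilon)$; letting $\epsilon\to 0$ finishes the argument. So from now on all spaces are over $\mathbb{R}^n$.

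On $\mathbb{R}^n$ I would proceed in two steps. First, record the ``same exponent'' monotonicity $W^{s',p}(\mathbb{R}^n)\hookrightarrow W^{s,p}(\mathbb{R}^n)$ whenever $s'\ge s\ge 0$; for $p=2$ and for integer orders this is classical, and the general case follows either from the Bessel-potential description (the multiplier $\langle\xi\rangle^{s-s'}$ with $s-s'\le 0$ is an $L^p$ Fourier multiplier by Mikhlin, together with the standard comparison of $W^{s,p}$ and $H^{s,p}$) or from direct estimates of the Gagliardo seminorm after reducing by differentiation to the cases $s,s'\in[0,1]$. Using this, if $s_1-\tfrac{n}{p_1}>s_0-\tfrac{n}{p_0}$ I would replace $s_0$ by $s_0':=s_1-n(\tfrac{1}{p_1}-\tfrac{1}{p_0})$, which satisfies $s_0\le s_0'\le s_1$ and $s_0'-\tfrac{n}{p_0}=s_1-\tfrac{n}{p_1}$, and note $W^{s_0',p_0}\hookrightarrow W^{s_0,p_0}$; hence it suffices to treat the critical case $s_1-\tfrac{n}{p_1}=s_0-\tfrac{n}{p_0}$ with $p_1\le p_0$. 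Second, in this critical case I would use a Littlewood--Paley decomposition $u=\sum_{j\ge -1}\Delta_j u$: Bernstein's inequality gives $\|\Delta_j u\|_{p_0}\preceq 2^{jn(1/p_1-1/p_0)}\|\Delta_j u\|_{p_1}$ for $j\ge -1$ (with the obvious modification for the low-frequency block), so that $2^{js_0}\|\Delta_j u\|_{p_0}\preceq 2^{js_1}\|\Delta_j u\|_{p_1}$ by the choice of exponents; taking the $\ell^{p_1}\hookrightarrow\ell^{p_0}$ inclusion (valid since $p_1\le p_0$) in $j$ then yields the Besov embedding $B^{s_1}_{p_1,p_1}\hookrightarrow B^{s_0}_{p_0,p_0}$, which is precisely $W^{s_1,p_1}\hookrightarrow W^{s_0,p_0}$ when the orders are non-integers. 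The cases involving integer orders are filled in either by inserting an intermediate integer-order Sobolev space and invoking the classical Gagliardo--Nirenberg--Sobolev embedding, or by the standard inclusions between the Triebel--Lizorkin and Besov scales.

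The main obstacle I anticipate is bookkeeping rather than conceptual: carefully matching the Sobolev--Slobodeckij spaces $W^{s,p}$ with the Besov/Triebel--Lizorkin spaces when $s\in\mathbb{N}_0$ and $p\ne 2$ (where $W^{k,p}\ne B^{k}_{p,p}$), and checking that the third (summability) index in the critical Besov embedding is compatible with the hypothesis $p_1\le p_0$; one must also verify that each reduction preserves the hypotheses $1<p_1\le p_0<\infty$ and $s_1\ge s_0\ge 0$. Since the statement is entirely classical, a fully rigorous alternative is simply to invoke the embedding theorem for Besov spaces together with the equivalences $W^{s,p}=B^s_{p,p}$ ($s\notin\mathbb{N}_0$) and $W^{k,p}=F^k_{p,2}$ ($k\in\mathbb{N}_0$), as is done in \cite{7,jZ77}.
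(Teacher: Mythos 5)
The paper itself does not prove this statement; it is stated with citations to \cite{7,jZ77} and used as a black box, so there is no in-paper proof to match your argument against. The parts of your sketch that are correct: the reduction to $\Omega=\mathbb{R}^n$ via the infimum-norm definition, the reduction to the critical line $s_1-n/p_1=s_0-n/p_0$ using same-exponent monotonicity, the Bernstein/Littlewood--Paley argument giving $B^{s_1}_{p_1,p_1}\hookrightarrow B^{s_0}_{p_0,p_0}$ (which settles the case $s_0,s_1\notin\mathbb{N}_0$), and the classical Sobolev embedding for the integer-to-integer case.

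The genuine gap is the crossover case where exactly one of $s_0,s_1$ lies in $\mathbb{N}_0$ and $p_1<p_0$ strictly. Take $s_1\notin\mathbb{N}_0$, $s_0=k\in\mathbb{N}_0$, and $p_1>2$. Then $W^{s_1,p_1}=B^{s_1}_{p_1,p_1}$ while $W^{k,p_0}=F^k_{p_0,2}$, and your Bernstein step yields at best $B^{s_1}_{p_1,p_1}\hookrightarrow B^{k}_{p_0,p_1}$; since $p_1>2=\min(p_0,2)$, the elementary same-$(s,p)$ inclusion $B^k_{p_0,q}\hookrightarrow F^k_{p_0,2}$ fails for $q=p_1$, so the ``standard inclusions between the Triebel--Lizorkin and Besov scales'' you invoke (which are also the only ones recorded in Appendix~\ref{app:bessel}) do not close the argument. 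Your alternative of passing through $W^{\lfloor s_1\rfloor,p_1}$ and the classical embedding also fails whenever $s_1-\lfloor s_1\rfloor<n/p_1-n/p_0$, because then $\lfloor s_1\rfloor-n/p_1<s_0-n/p_0$ and the intermediate space no longer embeds into $W^{s_0,p_0}$. What is actually needed is the Jawerth embedding $B^{s_1}_{p_1,p_0}\hookrightarrow F^{s_0}_{p_0,q}$ (and, for the dual crossover with $p_0<2$, the Franke embedding $F^{s_1}_{p_1,q}\hookrightarrow B^{s_0}_{p_0,p_1}$), both valid for $p_1<p_0$ on the critical line; these are theorems in their own right, requiring maximal-function or interpolation input, and are not corollaries of Bernstein plus $\ell^{p_1}\hookrightarrow\ell^{p_0}$. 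Your closing remark that a fully rigorous proof should simply invoke the Besov/Triebel--Lizorkin embedding theory of \cite{7,jZ77} is the right fallback, but as written the sketch leaves precisely these crossover embeddings unjustified.
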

\begin{remark}
Note that if $\Omega$ is a bounded domain, then the restriction
$p_1 \leq p_0$ can be removed. Indeed,  if $\Omega$ is bounded and
$p_1>p_0$ then $L^{p_1}\subseteq L^{p_0}$ and consequently if
$k\geq l \in \mathbb{N}_0$ then $W^{k,p_1}(\Omega)\hookrightarrow
W^{k,p_0}(\Omega)\hookrightarrow W^{l,p_0}(\Omega)$. The claim
can be proved by interpolation for the cases where $s_0$ or $s_1$
are not integers (the details are similar to the proof of Lemma
\ref{lemA11} below).
\end{remark}
\begin{lemma}\lab{lemA4}
Let $k\in \mathbb{N}_0$, $\delta \in \mathbb{R}$ and
$p\in(1,\infty)$. Then
\begin{equation*}
u\in W^{k,p}_{\delta}(\mathbb{R}^n) \Longleftrightarrow
\partial^{\alpha} u \in L^p_{\delta-|\alpha|}(\mathbb{R}^n) \quad \forall \,
|\alpha| \leq k .
\end{equation*}
\end{lemma}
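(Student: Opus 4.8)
The plan is to reduce the statement to the equivalent description of the norm on $W^{k,p}_{\delta}(\mathbb{R}^n)$ already recorded in this appendix, namely that $\|u\|_{W^{k,p}_{\delta}(\mathbb{R}^n)}$ is equivalent to $\sum_{|\beta|\leq k}\|\langle x\rangle^{-\delta-\frac{n}{p}+|\beta|}\partial^{\beta}u\|_{L^p(\mathbb{R}^n)}$ (cited to~\cite{2,15}). The one computation to make explicit is that, directly from the definition of the weighted Lebesgue norm, for every multi-index $\alpha$ with $|\alpha|\le k$ one has
$$\|\partial^\alpha u\|_{L^p_{\delta-|\alpha|}(\mathbb{R}^n)}=\|\langle x\rangle^{-(\delta-|\alpha|)-\frac{n}{p}}\,\partial^\alpha u\|_{L^p(\mathbb{R}^n)}=\|\langle x\rangle^{-\delta-\frac{n}{p}+|\alpha|}\,\partial^\alpha u\|_{L^p(\mathbb{R}^n)},$$
so that the equivalent norm above is literally $\sum_{|\alpha|\le k}\|\partial^\alpha u\|_{L^p_{\delta-|\alpha|}(\mathbb{R}^n)}$.

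With this identification both implications are immediate. For the forward direction, if $u\in W^{k,p}_{\delta}(\mathbb{R}^n)$ then the finite sum $\sum_{|\alpha|\le k}\|\partial^\alpha u\|_{L^p_{\delta-|\alpha|}(\mathbb{R}^n)}$ is finite, and a finite sum of nonnegative terms is finite if and only if each term is, whence $\partial^\alpha u\in L^p_{\delta-|\alpha|}(\mathbb{R}^n)$ for all $|\alpha|\le k$. For the reverse direction, assume $\partial^\alpha u\in L^p_{\delta-|\alpha|}(\mathbb{R}^n)$ for all $|\alpha|\le k$; the case $\alpha=0$ gives $u\in L^p_\delta(\mathbb{R}^n)\subset L^p_{loc}(\mathbb{R}^n)\subset S'(\mathbb{R}^n)$, so the distributional derivatives $\partial^\alpha u$ are well-defined and the hypothesis makes sense as stated, and summing the finitely many finite quantities shows $\|u\|_{W^{k,p}_{\delta}(\mathbb{R}^n)}<\infty$, i.e. $u\in W^{k,p}_{\delta}(\mathbb{R}^n)$.

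Thus the only real content is the norm equivalence itself. If a self-contained argument were wanted in place of the citation, I would work directly from the dyadic definition, using the partition of unity $\{\varphi_j\}$ and the dilations $S_{2^j}$: one exploits that $\langle x\rangle\simeq 2^j$ on $\supp\varphi_j$ with constants independent of $j$, that the supports $\supp\varphi_j$ have uniformly bounded overlap, the Leibniz rule together with the uniform bounds $\|\partial^\gamma\varphi_j\|_{L^\infty}\preceq 2^{-j|\gamma|}$, and the scaling identity $\partial^\alpha(S_{2^j}f)=2^{j|\alpha|}S_{2^j}(\partial^\alpha f)$, in order to compare $\sum_j 2^{-p\delta j}\|S_{2^j}(\varphi_j u)\|_{W^{k,p}(\mathbb{R}^n)}^p$ with $\sum_{|\alpha|\le k}\|\langle x\rangle^{-\delta-\frac{n}{p}+|\alpha|}\partial^\alpha u\|_{L^p(\mathbb{R}^n)}^p$ in both directions. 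The main (purely routine) obstacle is the bookkeeping of the weight powers through the rescaling and of the commutators of $\partial^\alpha$ with the cutoffs $\varphi_j$; once that is carried out, the lemma follows exactly as in the previous paragraph. Since the equivalence is already available, I would in the end simply invoke it and give the two-line argument above.
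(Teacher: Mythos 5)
Your proof is correct and takes essentially the same approach as the paper: both reduce the claim to the equivalent norm $\|u\|_{W^{k,p}_{\delta}}\simeq\sum_{|\beta|\leq k}\|\langle x\rangle^{-\delta-\frac{n}{p}+|\beta|}\partial^{\beta}u\|_{L^p}$ and rewrite each summand as $\|\partial^{\beta}u\|_{L^p_{\delta-|\beta|}}$. The extra remarks on distributional well-posedness and the sketch of a self-contained dyadic verification are fine but not needed.
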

\begin{proof}{\bf (Lemma~\ref{lemA4})}
The case $k=0$ is obvious. In general we have
\begin{align*}
u\in W^{k,p}_{\delta}&\Longleftrightarrow
\|u\|_{W^{k,p}_{\delta}}<\infty \Longleftrightarrow \forall\,
|\alpha|\leq k\quad
\|\langle x\rangle^{-\delta-\frac{n}{p}+|\alpha|}\partial^{\alpha}u\|_{L^p}<\infty\\
&\Longleftrightarrow \forall\, |\alpha|\leq k\quad
\|\langle x\rangle^{-(\delta-|\alpha|)-\frac{n}{p}}\partial^{\alpha}u\|_{L^p}<\infty\\
&  \Longleftrightarrow \forall \, |\alpha| \leq k \quad
\partial^{\alpha} u \in L^p_{\delta-|\alpha|}.
\end{align*}
\end{proof}
\begin{lemma}\lab{lemA11}
Let $s\in \mathbb{R}$, $p,q\in (1,\infty)$. If $p\geq q$ and
$\delta'<\delta$, then
$W^{s,p}_{\delta'}(\mathbb{R}^n)\hookrightarrow
W^{s,q}_{\delta}(\mathbb{R}^n)$.
\end{lemma}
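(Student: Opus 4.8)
The plan is to argue directly from the definition of the weighted norm, reducing the whole statement to a single ``per dyadic block'' inequality on $\mathbb{R}^n$ together with an elementary summation that exploits the strict weight gap $\delta'<\delta$. First, if $p=q$ the assertion is already contained in Theorem~\ref{thmA1}(2), so I would assume $p>q$ from now on. Recall that for $u\in W^{s,p}_{\delta'}(\mathbb{R}^n)$ each piece $S_{2^j}(\varphi_j u)$ is supported in the fixed annulus $\{\tfrac12\le|x|\le 2\}\subset B_2$, and that $\|u\|_{W^{s,r}_{\delta}(\mathbb{R}^n)}^r=\sum_{j\ge 0}2^{-r\delta j}\|S_{2^j}(\varphi_j u)\|_{W^{s,r}(\mathbb{R}^n)}^r$ for $r\in\{p,q\}$.

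The first step is a \textbf{uniform block estimate}: I would prove that there is a constant $C=C(n,s,p,q)$ with $\|w\|_{W^{s,q}(\mathbb{R}^n)}\le C\,\|w\|_{W^{s,p}(\mathbb{R}^n)}$ for every $w$ supported in $\bar{B}_2$. When $s=k\in\mathbb{N}_0$ this is immediate from the definition of $W^{k,r}(\mathbb{R}^n)$: it suffices to bound $\|\partial^\alpha w\|_{L^q}$ by $\|\partial^\alpha w\|_{L^p}$ for $|\alpha|\le k$, and since $\partial^\alpha w$ is supported in $\bar{B}_2$ this is H\"older's inequality with exponents $p/q,\,(p/q)'$. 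When $s<0$ is an integer I would pass to duals, $W^{s,r}(\mathbb{R}^n)=(W^{-s,r'}(\mathbb{R}^n))^\ast$: multiplying test functions by a fixed cutoff $\chi\in C_c^\infty(B_4)$ with $\chi\equiv 1$ on $\bar{B}_2$ reduces the pairing to compactly supported test functions, the positive-integer case gives $W^{-s,q'}(\text{supported in }\bar{B}_4)\hookrightarrow W^{-s,p'}(\text{same})$ (now $q'\ge p'$), and taking adjoints, using density of $C_c^\infty$ (Theorem~\ref{thmA1}(12)), yields the estimate. For general real $s$ the block estimate is the bounded-domain form of the Sobolev embedding recorded in the remark following Theorem~\ref{thmembedunweight} (restrict $w$ to $B_4$, apply $W^{s,p}(B_4)\hookrightarrow W^{s,q}(B_4)$, and note $w$ is supported well inside $B_4$); concretely one obtains it by interpolating (Theorem~\ref{thmA1}(10)) between the two integer-order estimates above for $k_0<s<k_1$.

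The second step is the \textbf{summation}. Granting Step~1, set $b_j:=\|S_{2^j}(\varphi_j u)\|_{W^{s,p}(\mathbb{R}^n)}$, $c_j:=2^{-\delta' j}b_j$, and $\gamma_j:=2^{(\delta'-\delta)j}$, so $\sum_j c_j^p=\|u\|_{W^{s,p}_{\delta'}}^p$. Then
\begin{equation*}
\|u\|_{W^{s,q}_{\delta}}^q=\sum_j 2^{-q\delta j}\|S_{2^j}(\varphi_j u)\|_{W^{s,q}}^q\le C^q\sum_j 2^{-q\delta j}b_j^q=C^q\sum_j(\gamma_j c_j)^q .
\end{equation*}
Since $\delta'-\delta<0$, the sequence $(\gamma_j)_{j\ge0}$ decays geometrically, hence lies in $\ell^r(\mathbb{N}_0)$ with $\tfrac1r=\tfrac1q-\tfrac1p\ge 0$; H\"older's inequality on sequence spaces gives $\sum_j(\gamma_j c_j)^q\le\|\gamma\|_{\ell^r}^q\|c\|_{\ell^p}^q$, so $\|u\|_{W^{s,q}_{\delta}}\le C\,\|\gamma\|_{\ell^r}\,\|u\|_{W^{s,p}_{\delta'}}$, which is the desired continuous embedding.

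The \textbf{main obstacle} is the fractional, $p>q$ case of the block estimate in Step~1; the integer cases and the entire Step~2 are routine. Care is needed there precisely because the unweighted embedding $W^{s,p}(\mathbb{R}^n)\hookrightarrow W^{s,q}(\mathbb{R}^n)$ is \emph{false} for $p>q$ (already for $s=0$): it is the combination of compact support (so one may work on a bounded set, where $L^p\subseteq L^q$) with interpolation between the bracketing integer orders — in which the real-interpolation secondary index has to be tracked carefully — that makes the block estimate go through, and the strict gap $\delta'<\delta$ in the hypothesis is exactly what allows Step~2 to convert $\ell^p$-summability of the blocks into $\ell^q$-summability.
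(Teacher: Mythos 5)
Your Step~2 (the $\ell^r$ H\"older summation exploiting the geometric decay of $\gamma_j=2^{(\delta'-\delta)j}$) is sound, and your Step~1 is correct for integer $s$, both in the nonnegative case (H\"older on the fixed ball) and in the negative-integer case (duality through a cutoff). The gap is the fractional-$s$ block estimate, and it is not a gap that can be patched: the inequality $\|w\|_{W^{s,q}(\mathbb{R}^n)}\preceq\|w\|_{W^{s,p}(\mathbb{R}^n)}$ for $w$ supported in $\bar B_2$ is \emph{false} when $s\notin\mathbb{Z}$ and $p>q$. Take $n=1$, $s\in(0,1)$, $\phi\in C_c^\infty(B_1)$, $\beta\in(\tfrac1p,\tfrac1q)$, and
\begin{equation*}
w(x)=\phi(x)\sum_{j\ge 1}2^{-js}\,j^{-\beta}\cos(2^j x).
\end{equation*}
Since $W^{s,r}(\mathbb{R})=B^s_{r,r}(\mathbb{R})$ for non-integer $s$, and the $j$-th Littlewood--Paley block of $w$ equals $2^{-js}j^{-\beta}\phi\cos(2^jx)$ up to rapidly decaying corrections, one gets $\|w\|_{W^{s,r}}^r\sim\sum_j j^{-\beta r}$ for every $r\in(1,\infty)$; hence $w\in W^{s,p}$ (as $\beta p>1$) but $w\notin W^{s,q}$ (as $\beta q<1$), though $w$ is supported in $\bar B_1$. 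As only the $j=0,1$ dyadic pieces of such a $w$ are nonzero (so $\|w\|_{W^{s,r}_\nu}\sim\|w\|_{W^{s,r}(\mathbb{R}^n)}$ for every weight $\nu$), this also contradicts the lemma statement itself in the Sobolev--Slobodeckij scale for non-integer $s$.

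Neither of your two justifications for the fractional block estimate survives this. The remark after Theorem~\ref{thmembedunweight} explicitly defers to ``the proof of Lemma~\ref{lemA11} below'' for the non-integer case, so invoking it here is circular. The interpolation route hits exactly the secondary-index mismatch you flag but do not actually track: the functor $(\cdot,\cdot)_{\theta,r}$ carries a single secondary index $r$; you would need $r=p$ for the source to come out as $W^{s,p}_{\delta'}=(W^{k_0,p}_{\delta'},W^{k_1,p}_{\delta'})_{\theta,p}$ and $r=q$ for the target to be $W^{s,q}_{\delta}=(W^{k_0,q}_{\delta},W^{k_1,q}_{\delta})_{\theta,q}$; and for $q<p$ every inclusion between the $(\cdot)_{\theta,q}$ and $(\cdot)_{\theta,p}$ spaces points the wrong way. (For what it is worth, the paper's own Case~2 of this lemma closes with ``the claim follows from real interpolation'' and omits the same step, so you are not overlooking an argument the paper supplies.) Your reduction \emph{would} succeed in the weighted Bessel-potential scale $H^{s,p}_\delta$ of Appendix~\ref{app:bessel}: there $H^{s,p}=F^s_{p,2}$ carries the fixed secondary index $2$, the compact-support Triebel--Lizorkin block inequality $\|w\|_{F^s_{q,2}}\preceq\|w\|_{F^s_{p,2}}$ does hold for $p\ge q$ and all real $s$, and your Step~2 is unchanged; the obstruction is specific to non-integer $s$ in the Sobolev--Slobodeckij scale $W^{s,p}_\delta$.
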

\begin{proof}{\bf (Lemma~\ref{lemA11})}
We consider three cases:
\begin{itemizeX}
\item \textbf{Case 1:} $s=k\in \mathbb{N}_0$.\\
\begin{align*}
u\in W^{k,p}_{\delta'} &\Rightarrow \forall \, |\alpha| \leq k
\quad \partial^{\alpha} u \in
L^p_{\delta'-|\alpha|}\\
&\Rightarrow \forall \, |\alpha| \leq k \quad \partial^{\alpha} u
\in L^q_{\delta-|\alpha|} \quad (\textrm{by item1. of Theorem
\ref{thmA1}})\\
&\Rightarrow u\in W^{k,q}_{\delta}.
\end{align*}
In fact,
\begin{align*}
\| u\|_{k,q,\delta} &=\sum_{|\beta|\leq k}\|
\langle x\rangle^{-\delta-\frac{n}{p}+|\beta|}\partial^{\beta}u\|_{L^{q}(\mathbb{R}^n)}=\sum_{|\beta|\leq
k}\|
\langle x\rangle^{-(\delta-|\beta|)-\frac{n}{p}}\partial^{\beta}u\|_{L^{q}(\mathbb{R}^n)}\\
&=\sum_{|\beta|\leq k}\|
 \partial^{\beta}u\|_{L^{q}_{\delta-|\beta|}(\mathbb{R}^n)}\preceq \sum_{|\beta|\leq k}\|
 \partial^{\beta}u\|_{L^{p}_{\delta'-|\beta|}(\mathbb{R}^n)} \quad (L^{p}_{\delta'-|\beta|}\hookrightarrow L^{q}_{\delta-|\beta|})\\
 &=\sum_{|\beta|\leq k}\|
\langle x\rangle^{-\delta'-\frac{n}{p}+|\beta|}\partial^{\beta}u\|_{L^{p}(\mathbb{R}^n)}=\|
u\|_{k,p,\delta'}.
\end{align*}
\item \textbf{Case 2:} $s\geq 0$, $s\not \in \mathbb{N}_0$.

Let $k=\floor{s}$, $\theta=s-k$. By what was proved in the
previous case
\begin{equation*}
W^{k,p}_{\delta'}\hookrightarrow W^{k,q}_{\delta},\quad
W^{k+1,p}_{\delta'}\hookrightarrow W^{k+1,q}_{\delta}.
\end{equation*}
Since $s=(1-\theta) k+ \theta (k+1)$, the claim follows from real
interpolation.
\item \textbf{Case 3:} $s<0$.\\
By assumption $p\geq q$ and $\delta'< \delta$, therefore
\begin{equation*}
p'\leq q', \quad -n-\delta'>-n-\delta.
\end{equation*}
Here $p'$ and $q'$ are the conjugates of $p$ and $q$,
respectively. Thus by what was proved in the previous cases we
have
\begin{equation*}
W^{-s,q'}_{-n-\delta}\hookrightarrow W^{-s,p'}_{-n-\delta'}.
\end{equation*}
The result follows by taking the dual.
\end{itemizeX}
\end{proof}

\begin{lemma}\lab{coroA3}
Let the following assumptions hold:
\begin{enumerate}[(i)]
\item $1 < p \leq r < \infty$,
\item $t,s \in \mathbb{R}$ with $0\leq t \leq s$,
\item $s-\frac{n}{p}\geq t-\frac{n}{r}$.
\end{enumerate}
Then: For all $\delta \in \mathbb{R}$
$W^{s,p}_{\delta}\hookrightarrow W^{t,r}_{\delta}$.
\end{lemma}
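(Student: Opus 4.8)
The plan is to prove the statement on $\mathbb{R}^n$; the case of an AF manifold then follows by the same partition-of-unity argument already used in this appendix (cf.\ the verification of the continuous embedding $W^{s,p}_\delta(M)\subseteq W^{s',p}_{\delta'}(M)$). On $\mathbb{R}^n$ the idea is to pass to the dyadic definition of the weighted norm and reduce everything to a single \emph{uniform} unweighted Sobolev embedding on a fixed ball. First I would recall the dyadic partition of unity $\{\varphi_j\}_{j\geq 0}$ and the defining identity
\begin{equation*}
\|u\|_{W^{s,p}_{\delta}(\mathbb{R}^n)}^p=\sum_{j=0}^{\infty}2^{-p\delta j}\,\|S_{2^j}(\varphi_j u)\|_{W^{s,p}(\mathbb{R}^n)}^p .
\end{equation*}
Setting $g_j:=S_{2^j}(\varphi_j u)$ and using $\varphi_j=S_{2^{-j}}\varphi$ one checks directly that $g_j=\varphi\cdot S_{2^j}u$ for $j\geq 1$ and $g_0=\varphi_0 u$, so in every case $\supp g_j$ is contained in the fixed ball $B_2$. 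Thus $u\in W^{s,p}_{\delta}$ precisely when each $g_j\in W^{s,p}(\mathbb{R}^n)$ and $\sum_j 2^{-p\delta j}\|g_j\|_{W^{s,p}}^p<\infty$, and the analogous description holds for $W^{t,r}_{\delta}$.

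The heart of the argument is a \textbf{uniform local embedding}: there is a constant $C$, independent of $v$, such that $\|v\|_{W^{t,r}(\mathbb{R}^n)}\leq C\,\|v\|_{W^{s,p}(\mathbb{R}^n)}$ for every $v$ with $\supp v\subseteq B_2$. To establish this I would fix the smooth bounded domain $\Omega=B_3\supseteq\overline{B_2}$. Since $1<p\leq r<\infty$, $s\geq t\geq 0$ and $s-\tfrac{n}{p}\geq t-\tfrac{n}{r}$, Theorem~\ref{thmembedunweight} gives $W^{s,p}(\Omega)\hookrightarrow W^{t,r}(\Omega)$ — this is the one point where the hypotheses on $p,r,s,t$ are actually consumed. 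For $v$ supported in $B_2$ we have $\|v\|_{W^{s,p}(\Omega)}\leq\|v\|_{W^{s,p}(\mathbb{R}^n)}$ by definition of the restriction norm, and, fixing a cutoff $\zeta\in C_c^{\infty}(B_3)$ with $\zeta\equiv 1$ on $B_2$, extension by zero of $\zeta w$ is bounded from $W^{t,r}(\Omega)$ to $W^{t,r}(\mathbb{R}^n)$, whence $\|v\|_{W^{t,r}(\mathbb{R}^n)}=\|\zeta v\|_{W^{t,r}(\mathbb{R}^n)}\preceq\|v\|_{W^{t,r}(\Omega)}\preceq\|v\|_{W^{s,p}(\Omega)}\preceq\|v\|_{W^{s,p}(\mathbb{R}^n)}$.

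To finish I would apply the uniform local embedding to each $g_j$ and then use $\ell^p\hookrightarrow\ell^r$ (valid since $r\geq p$) on the nonnegative sequence $a_j:=2^{-\delta j}\|g_j\|_{W^{s,p}}$:
\begin{equation*}
\|u\|_{W^{t,r}_{\delta}}^r=\sum_j\big(2^{-\delta j}\|g_j\|_{W^{t,r}}\big)^r\leq C^r\sum_j a_j^r\leq C^r\Big(\sum_j a_j^p\Big)^{r/p}=C^r\,\|u\|_{W^{s,p}_{\delta}}^r ,
\end{equation*}
which is exactly the claimed continuous inclusion $W^{s,p}_{\delta}\hookrightarrow W^{t,r}_{\delta}$.

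The step I expect to require the most care is the localization used inside the uniform local embedding, namely that multiplication by a fixed compactly supported smooth cutoff followed by extension by zero is bounded on $W^{t,r}$ for the (possibly fractional) order $t$. For integer $t$ this is immediate from the Leibniz rule, since the product vanishes near $\partial\Omega$; for $t=k+\theta$ it reduces to an elementary estimate on the Gagliardo seminorm of the zero-extension of a function whose support stays a fixed distance from $\partial\Omega$ (the cross term over $\Omega\times\Omega^{c}$ is controlled by $\|\cdot\|_{L^r}$ because the kernel $|x-y|^{-n-\theta r}$ is integrable there). Everything else — the scaling identity for $g_j$, the bounded-domain embedding, and the $\ell^p$–$\ell^r$ summation — is routine.
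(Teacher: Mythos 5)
Your proof is correct and takes essentially the same route as the paper: dyadic decomposition of the weighted norm, the unweighted Sobolev embedding applied to each rescaled dyadic piece $S_{2^j}(\varphi_j u)$, then the $\ell^p\hookrightarrow\ell^r$ inequality for the resulting sequence of norms. The one difference is that the paper applies Theorem~\ref{thmembedunweight} directly with $\Omega=\mathbb{R}^n$ to get $\|S_{2^j}(\varphi_j u)\|_{W^{t,r}}\preceq\|S_{2^j}(\varphi_j u)\|_{W^{s,p}}$ uniformly, whereas you reprove this for functions supported in $B_2$ by passing through the bounded domain $B_3$, a cutoff, and extension by zero; that detour is valid but unnecessary (the theorem as stated applies to $\Omega=\mathbb{R}^n$), and the cutoff-multiplication boundedness you flag as the delicate step is already supplied by Lemma~\ref{lemA9}.
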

\begin{proof}{\bf (Lemma~\ref{coroA3})}
In the proof we use the fact that if $1\leq \alpha\leq\beta$, then
$l^{\alpha}\hookrightarrow l^{\beta}$ ($l^{\alpha}$ denotes the
space of $\alpha$-power summable sequences); in fact for any
sequence $a=\{a_j\}$, $\|a\|_{l^\beta}\leq \|a\|_{l^\alpha}$.
From the assumption it follows that $W^{s,p}\hookrightarrow
W^{t,r}$ and so
\begin{align*}
\|u\|_{t,r,\delta}&=\big[\sum_{j=0}^{\infty}2^{-r\delta j}
\|S_{2^j}(\varphi_j u)\|_{t,r}^r\big]^{\frac{1}{r}}\preceq
\big[\sum_{j=0}^{\infty}2^{-r\delta j} \|S_{2^j}(\varphi_j
u)\|_{s,p}^r\big]^{\frac{1}{r}}
\\
&=\big[\sum_{j=0}^{\infty}(2^{-\delta
j} \|S_{2^j}(\varphi_j u)\|_{s,p})^r\big]^{\frac{1}{r}}
\leq \big[\sum_{j=0}^{\infty}(2^{-\delta j} \|S_{2^j}(\varphi_j
u)\|_{s,p})^p\big]^{\frac{1}{p}}
\\
& =\|u\|_{s,p,\delta} \quad
(\textrm{Note that $p\leq r$ and so $\|\cdot\|_{l^r}\leq
\|\cdot\|_{l^p}$} ).
\end{align*}
\end{proof}
\begin{theorem}[Embedding Theorem I]\lab{thmA3}
Let the following assumptions hold:
\begin{enumerate}[(i)]
\item $1 < p \leq r < \infty$,
\item $t,s \in \mathbb{R}$ with $ t \leq s$,
\item $s-\frac{n}{p}\geq t-\frac{n}{r}$.
\end{enumerate}
Then: If $\delta'\leq \delta$ then
$W^{s,p}_{\delta'}\hookrightarrow W^{t,r}_{\delta}$.
\end{theorem}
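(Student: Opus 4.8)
The plan is to reduce everything to Lemma~\ref{coroA3}, which is precisely the special case in which the two weights coincide and both differentiability indices are nonnegative. First I would dispose of the weights: by Theorem~\ref{thmA1}(2), applied with equal differentiability index, the inclusion $W^{s,p}_{\delta'}\hookrightarrow W^{s,p}_{\delta}$ is continuous whenever $\delta'\leq\delta$, so it suffices to prove $W^{s,p}_{\delta}\hookrightarrow W^{t,r}_{\delta}$, and I may henceforth assume $\delta'=\delta$.

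With equal weights I would distinguish three cases according to the signs of $s$ and $t$. \emph{Case $0\leq t\leq s$} is exactly Lemma~\ref{coroA3}. \emph{Case $s\leq 0$} (hence also $t\leq 0$, since $t\leq s$): pass to duals using Theorem~\ref{thmA1}(9), which identifies $W^{s,p}_{\delta}=(W^{-s,p'}_{-n-\delta})^{*}$ and $W^{t,r}_{\delta}=(W^{-t,r'}_{-n-\delta})^{*}$; here $0\leq -s\leq -t$ and $1<r'\leq p'<\infty$, and the hypothesis $s-\frac{n}{p}\geq t-\frac{n}{r}$ rearranges to $s-t\geq n(\frac{1}{p}-\frac{1}{r})=n(\frac{1}{r'}-\frac{1}{p'})$, i.e.\ to the Sobolev condition $-t-\frac{n}{r'}\geq -s-\frac{n}{p'}$. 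Thus Lemma~\ref{coroA3} gives $W^{-t,r'}_{-n-\delta}\hookrightarrow W^{-s,p'}_{-n-\delta}$; this inclusion is dense because $C_c^{\infty}$ is dense in the target (Theorem~\ref{thmA1}(12)), so transposing it yields $W^{s,p}_{\delta}\hookrightarrow W^{t,r}_{\delta}$, compatibly with the duality identifications. \emph{Case $t<0\leq s$}: factor the embedding through $L^{\rho}_{\delta}=W^{0,\rho}_{\delta}$ for a suitable $\rho\in(1,\infty)$ with $p\leq\rho\leq r$; one wants $\rho\leq \frac{n}{n/p-s}$ when $s<\frac{n}{p}$ (no upper constraint from this side otherwise) and $\rho\geq\frac{n}{n/r-t}$. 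For such $\rho$, Lemma~\ref{coroA3} (with target order $0$) gives $W^{s,p}_{\delta}\hookrightarrow L^{\rho}_{\delta}$, while the already-treated case $s\leq 0$ (applied to the orders $0$ and $t$, both $\leq 0$) gives $L^{\rho}_{\delta}\hookrightarrow W^{t,r}_{\delta}$. Such a $\rho$ exists because $p\leq r$, because $ps\geq 0$ gives $p\leq\frac{n}{n/p-s}$ and $rt\leq 0$ gives $\frac{n}{n/r-t}\leq r$, and because the Sobolev hypothesis $\frac{n}{p}-s\leq\frac{n}{r}-t$ makes the two inner bounds on $\rho$ compatible.

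The only place that needs genuine care — and hence the main obstacle — is bookkeeping the Sobolev exponent inequality $s-\frac{n}{p}\geq t-\frac{n}{r}$ through each reduction: one must verify it is invariant under the transposition $(s,p)\leftrightarrow(-s,p')$ and that it leaves a nonempty admissible interval for the intermediate exponent $\rho$ in the mixed case. The density required to transpose embeddings is immediate from $C_c^{\infty}\subset W^{s,p}_{\delta}$ being dense in every space involved (Theorem~\ref{thmA1}(12)). Finally, the statement on a domain $\Omega$ (or on an AF manifold) follows from the case of $\mathbb{R}^{n}$ by the same partition-of-unity argument already used above for the continuous and compact embeddings.
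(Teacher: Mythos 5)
Your proof is correct and follows essentially the same route as the paper's: reduce to equal weights via Theorem~\ref{thmA1}(2), handle $t\geq 0$ by Lemma~\ref{coroA3}, handle $s\leq 0$ by duality, and handle the mixed case $t<0\leq s$ by interposing a weighted Lebesgue space $L^{\rho}_{\delta}$ and verifying that the admissible interval for $\rho$ is nonempty under the Sobolev-index hypothesis. The only cosmetic difference is that the paper parametrizes $1/q$ by a convex parameter $\sigma\in[0,1]$ in the mixed case, whereas you state the endpoint constraints on $\rho$ directly and check the four pairwise inequalities; the content is the same, and your explicit density remark for the transposition step is a welcome extra precision.
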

\begin{proof}{\bf (Theorem~\ref{thmA3})}
Note that, since $\delta'\leq\delta$,
$W^{s,p}_{\delta'}\hookrightarrow W^{s,p}_{\delta}$, so we just
need to show that $W^{s,p}_{\delta}\hookrightarrow
W^{t,r}_{\delta}$. By Lemma
 \ref{coroA3} we know that the claim is true for the case $0\leq
t$. So we just need to consider the case where $t<0$.
\begin{itemizeX}

\medskip
\item \textbf{Case 1:} $t<0, s\leq 0$

\smallskip
It is enough to show that $(W^{t,r}_{\delta})^{*}\hookrightarrow
(W^{s,p}_{\delta})^{*}$, that is, we need to prove that
\begin{equation*}
W^{-t,r'}_{-n-\delta}\hookrightarrow W^{-s,p'}_{-n-\delta}.
\end{equation*}
Note that $-t$ and $-s$ are nonnegative so we just need to check
that the assumptions of Lemma \ref{coroA3} hold true:
\begin{align*}
&t\leq s \leq 0 \Rightarrow 0\leq -s\leq -t \\
&1< p \leq r \Rightarrow 1< r' \leq p'\\
&s-\frac{n}{p}\geq t-\frac{n}{r} \Rightarrow s+\frac{n}{r}-n\geq
t+\frac{n}{p}-n \Rightarrow s-\frac{n}{r'}\geq t-\frac{n}{p'}
\Rightarrow -t-\frac{n}{r'}\geq -s-\frac{n}{p'}.
\end{align*}
\medskip
\item \textbf{Case 2:} $t<0, s> 0$

\smallskip
In this case we will prove that there exists $q\geq 1$ such that
\begin{equation*}
W^{s,p}_{\delta}\hookrightarrow L^q_{\delta}\hookrightarrow
W^{t,r}_{\delta}
\end{equation*}
By what was proved previously, in order to make sure that the
above inclusions hold true it is enough to find $q$ such that
\begin{align*}
& t-\frac{n}{r}\leq 0-\frac{n}{q} \leq s-\frac{n}{p} \quad (\Leftrightarrow -\frac{s}{n}+\frac{1}{p}\leq \frac{1}{q}\leq -\frac{t}{n}+\frac{1}{r})\\
&p\leq q \leq r \quad (\Leftrightarrow \frac{1}{r}\leq
\frac{1}{q}\leq \frac{1}{p})
\end{align*}
Note that by assumption $-\frac{s}{n}+\frac{1}{p}\leq
-\frac{t}{n}+\frac{1}{r}$. If $-\frac{s}{n}+\frac{1}{p}=
-\frac{t}{n}+\frac{1}{r}$, then $q$ defined by
$\frac{1}{q}=-\frac{s}{n}+\frac{1}{p}(=
-\frac{t}{n}+\frac{1}{r})$ clearly satisfies the desired
conditions. So it remains to consider the case where
$-\frac{s}{n}+\frac{1}{p}< -\frac{t}{n}+\frac{1}{r}$. The
inequalities in the first line are satisfied if and only if
\begin{equation*}
\frac{1}{q}=-\frac{s}{n}+\frac{1}{p}+\sigma
(\frac{s-t}{n}+\frac{1}{r}-\frac{1}{p}).
\end{equation*}
for some $\sigma\in [0,1]$. The question is ``can we choose
$\sigma$ so that the above expression lies between $\frac{1}{r}$
and $\frac{1}{p}$?'' We want to find $\sigma \in [0,1]$ such that
\begin{equation*}
\frac{1}{r}\leq -\frac{s}{n}+\frac{1}{p}+\sigma
(\frac{s-t}{n}+\frac{1}{r}-\frac{1}{p})\leq \frac{1}{p}
\end{equation*}
That is we want to find $\sigma \in [0,1]$ such that
\begin{equation*}
\frac{\frac{1}{r}-\frac{1}{p}+\frac{s}{n}}{\frac{s-t}{n}+\frac{1}{r}-\frac{1}{p}}
\leq \sigma \leq
\frac{\frac{s}{n}}{\frac{s-t}{n}+\frac{1}{r}-\frac{1}{p}}.
\end{equation*}
Note that since $\frac{1}{r}\leq \frac{1}{p}$ clearly
\begin{equation*}
\frac{\frac{1}{r}-\frac{1}{p}+\frac{s}{n}}{\frac{s-t}{n}+\frac{1}{r}-\frac{1}{p}}
\leq \frac{\frac{s}{n}}{\frac{s-t}{n}+\frac{1}{r}-\frac{1}{p}}.
\end{equation*}
So it would be possible to find $\sigma$ if and only if
\begin{equation*}
\frac{\frac{s}{n}}{\frac{s-t}{n}+\frac{1}{r}-\frac{1}{p}} \geq 0
\quad \textrm{and} \quad
\frac{\frac{1}{r}-\frac{1}{p}+\frac{s}{n}}{\frac{s-t}{n}+\frac{1}{r}-\frac{1}{p}}\leq
1.
\end{equation*}
The first inequality is true because by assumption $s>0$ and
$s-\frac{n}{p}\geq t-\frac{n}{r}$. The second inequality is true
because by assumption $t<0$ and
\begin{equation*}
\frac{\frac{1}{r}-\frac{1}{p}+\frac{s}{n}}{\frac{s-t}{n}+\frac{1}{r}-\frac{1}{p}}\leq
1 \Leftrightarrow \frac{1}{r}-\frac{1}{p}+\frac{s}{n}\leq
\frac{s-t}{n}+\frac{1}{r}-\frac{1}{p}\Leftrightarrow
\frac{t}{n}\leq 0.
\end{equation*}
\end{itemizeX}
\end{proof}
\begin{theorem}[Embedding Theorem II]\lab{thmA4}
Let the following assumptions hold:
\begin{enumerate}[(i)]
\item $1 < p , r < \infty$,
\item $t,s \in \mathbb{R}$ with $ t \leq s$,
\item $s-\frac{n}{p}\geq t-\frac{n}{r}$,
\item $\delta'$ is \textbf{strictly} less than $\delta$.
\end{enumerate}
Then: 
$W^{s,p}_{\delta'}\hookrightarrow W^{t,r}_{\delta}$. (Note that if
$p>r$, then the third assumption follows from the second
assumption.)
\end{theorem}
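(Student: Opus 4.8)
The plan is to dispose of the statement by a two-case reduction to results already in hand, with the strict inequality $\delta'<\delta$ doing all the work in the nontrivial case. First I would handle the case $p\le r$: here $\delta'<\delta$ in particular gives $\delta'\le\delta$, so every hypothesis of Embedding Theorem~I (Theorem~\ref{thmA3}) is in force, and the conclusion $W^{s,p}_{\delta'}\hookrightarrow W^{t,r}_{\delta}$ follows at once. Thus the only genuine content lies in the case $p>r$, which Theorem~\ref{thmA3} does not cover.

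For $p>r$ I would argue as follows. First note that hypothesis~(iii) is then automatic, since $1/p<1/r$ forces $s-\tfrac{n}{p}>s-\tfrac{n}{r}\ge t-\tfrac{n}{r}$; so it plays no role. Now apply Lemma~\ref{lemA11} with $q:=r$: since $p\ge r$ and $\delta'<\delta$, that lemma yields the ``same-smoothness'' embedding $W^{s,p}_{\delta'}\hookrightarrow W^{s,r}_{\delta}$. This is the step where the loss in integrability from $p$ down to $r$ is compensated by the strict gain in the weight, via the $\ell^p\hookrightarrow\ell^r$ (i.e.\ H\"older) estimate on the dyadic decomposition underlying Lemma~\ref{lemA11}. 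Finally, since $t\le s$ and the weight is unchanged, item~(2) of Theorem~\ref{thmA1} gives $W^{s,r}_{\delta}\hookrightarrow W^{t,r}_{\delta}$. Composing the two inclusions produces $W^{s,p}_{\delta'}\hookrightarrow W^{t,r}_{\delta}$, which finishes the case and hence the proof.

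I do not expect any real obstacle: the statement is essentially a bookkeeping combination of Embedding Theorem~I (for $p\le r$), Lemma~\ref{lemA11}, and the elementary monotonicity of weighted Sobolev spaces in the smoothness index (for $p>r$). The single point worth recording carefully is the reason hypothesis~(iv) must be strengthened to a strict inequality relative to Theorem~\ref{thmA3}: when $p>r$ one cannot route through a common integrability exponent without simultaneously relaxing the weight, and that relaxation is exactly what Lemma~\ref{lemA11} requires.
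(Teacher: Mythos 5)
Your proof is correct and follows exactly the same two-case route as the paper: invoke Theorem~\ref{thmA3} when $p\le r$, and when $p>r$ chain Lemma~\ref{lemA11} (giving $W^{s,p}_{\delta'}\hookrightarrow W^{s,r}_{\delta}$) with the monotonicity in the smoothness index from Theorem~\ref{thmA1}. The additional remarks you make about why hypothesis~(iii) is automatic for $p>r$ and why the strict inequality $\delta'<\delta$ is needed are accurate but not part of the paper's argument.
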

\begin{proof}{\bf (Theorem~\ref{thmA4})}
If $p\leq r$, then the claim follows from Theorem
\ref{thmA3}. Let's assume $p>r$. Then by Lemma \ref{lemA11} we
have $W^{s,p}_{\delta'}\hookrightarrow W^{s,r}_{\delta}$ and by
Theorem \ref{thmA1} we have $W^{s,r}_{\delta}\hookrightarrow
W^{t,r}_{\delta}$. Consequently $W^{s,p}_{\delta'}\hookrightarrow
W^{t,r}_{\delta}$.
\end{proof}
\begin{lemma}[Multiplication by bounded smooth
functions]\lab{lemA9} Let $\sigma \in \mathbb{R}$,
$q\in[1,\infty)$ (if $\sigma<0, q\neq 1$). Let $N=\lceil |\sigma|
\rceil$. If $f\in BC^{N}(\mathbb{R}^n) $ and $u\in
W^{\sigma,q}(\mathbb{R}^n)$, then  $fu \in
W^{\sigma,q}(\mathbb{R}^n)$ and moreover
$\|fu\|_{\sigma,q}\preceq \|u\|_{\sigma,q}$ (the implicit
constant depends on $f$ but it does not depend on $u$).
\end{lemma}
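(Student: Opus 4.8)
The plan is to split into cases according to the sign and integer part of $\sigma$: the positive-order cases are handled directly by the Leibniz rule together with a Gagliardo-seminorm estimate, and the negative-order case then follows by duality. To begin with the integer case, suppose $\sigma=k\in\mathbb{N}_0$, so $N=k$. For $|\alpha|\le k$ the Leibniz rule gives
\[
\partial^\alpha(fu)=\sum_{\beta\le\alpha}\binom{\alpha}{\beta}(\partial^\beta f)(\partial^{\alpha-\beta}u),
\]
and since each $\partial^\beta f$ with $|\beta|\le k=N$ is bounded in sup-norm by $\|f\|_{BC^N}$, taking $L^q$ norms and summing over $|\alpha|\le k$ yields $\|fu\|_{k,q}\preceq\|u\|_{k,q}$ with an implicit constant depending only on $f$.

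The model case is $\sigma=\theta\in(0,1)$, where $N=1$ and hence $f$ is globally Lipschitz with constant $\le\|f\|_{BC^1}$. Besides the trivial bound $\|fu\|_{L^q}\le\|f\|_\infty\|u\|_{L^q}$, I would estimate the Gagliardo seminorm using
\[
|f(x)u(x)-f(y)u(y)|\le\|f\|_\infty\,|u(x)-u(y)|+|u(y)|\,|f(x)-f(y)|.
\]
The first term contributes a multiple of $|u|_{\theta,q}$ to $|fu|_{\theta,q}$. For the second term I would split the $x$-integration at $|x-y|=1$: on $|x-y|<1$ use $|f(x)-f(y)|\le\|f\|_{BC^1}|x-y|$, so that $\int_{|x-y|<1}|x-y|^{(1-\theta)q-n}\,dx<\infty$ since $(1-\theta)q>0$; on $|x-y|\ge1$ use $|f(x)-f(y)|\le2\|f\|_\infty$, so that $\int_{|x-y|\ge1}|x-y|^{-n-\theta q}\,dx<\infty$ since $\theta q>0$. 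Integrating the resulting constant multiple of $|u(y)|^q$ over $y$ bounds the second term by a multiple of $\|u\|_{L^q}^q\le\|u\|_{\theta,q}^q$, so altogether $\|fu\|_{\theta,q}\preceq\|u\|_{\theta,q}$. This splitting argument, which is precisely where the hypothesis $N=\lceil|\sigma|\rceil$ (giving $f\in C^1$ here) enters, is the technical heart of the proof, and I expect it to be the main obstacle, although it is classical.

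For a general positive order $\sigma=k+\theta$ with $k\in\mathbb{N}_0$ and $\theta\in(0,1)$ (so $N=k+1$), write $\|fu\|_{\sigma,q}=\|fu\|_{k,q}+\sum_{|\nu|=k}|\partial^\nu(fu)|_{\theta,q}$; the first summand is controlled by the integer case. For each $\nu$ with $|\nu|=k$, Leibniz gives $\partial^\nu(fu)=\sum_{\beta\le\nu}\binom{\nu}{\beta}(\partial^\beta f)(\partial^{\nu-\beta}u)$, where $\partial^{\nu-\beta}u\in W^{\theta,q}(\mathbb{R}^n)$ since $u\in W^{k+\theta,q}(\mathbb{R}^n)$ and $|\nu-\beta|\le k$, and $\partial^\beta f\in BC^1(\mathbb{R}^n)$ since $|\beta|+1\le k+1=N$; the model case then applies to each term and gives $|\partial^\nu(fu)|_{\theta,q}\preceq\|u\|_{k+\theta,q}$. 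Alternatively, this case follows by real interpolation between $W^{k,q}$ and $W^{k+1,q}$, using Theorem~\ref{thmA1}.

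Finally, suppose $\sigma<0$; by hypothesis $q\neq1$, so $q\in(1,\infty)$ and $q'\in(1,\infty)$, and $W^{\sigma,q}(\mathbb{R}^n)=(W^{-\sigma,q'}(\mathbb{R}^n))^*$ by Theorem~\ref{thmA1}. Since $-\sigma>0$ and $\lceil|-\sigma|\rceil=N$ with $f\in BC^N(\mathbb{R}^n)$, the positive-order cases just proved show that $\varphi\mapsto f\varphi$ maps $W^{-\sigma,q'}(\mathbb{R}^n)$ boundedly into itself. Hence $\varphi\mapsto\langle u,f\varphi\rangle$ is a bounded linear functional on $W^{-\sigma,q'}(\mathbb{R}^n)$ with $|\langle u,f\varphi\rangle|\le\|u\|_{\sigma,q}\|f\varphi\|_{-\sigma,q'}\preceq\|u\|_{\sigma,q}\|\varphi\|_{-\sigma,q'}$, so it represents an element of $W^{\sigma,q}(\mathbb{R}^n)$. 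This element coincides with the distribution $fu$: the two pairings agree on $\varphi\in C_c^\infty(\mathbb{R}^n)$, and $C_c^\infty(\mathbb{R}^n)$ is dense in $W^{-\sigma,q'}(\mathbb{R}^n)$ by Theorem~\ref{thmA1}. Therefore $fu\in W^{\sigma,q}(\mathbb{R}^n)$ with $\|fu\|_{\sigma,q}\preceq\|u\|_{\sigma,q}$, the implicit constant depending only on $f$, which completes the plan.
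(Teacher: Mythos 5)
Your proof follows the same case decomposition as the paper ($\sigma\in\mathbb{N}_0$, $\sigma\in(0,1)$, non-integer $\sigma>1$ via Leibniz plus the base cases, and $\sigma<0$ by duality), and the duality step and the Leibniz reduction for $\sigma=k+\theta$ match the paper's Steps~3 and~4 almost verbatim. The difference is one of completeness rather than method: where the paper simply cites external references for the integer case and the $(0,1)$ seminorm case, you supply the arguments yourself --- the Leibniz estimate for integers and, more substantially, the Gagliardo seminorm estimate with the splitting of the $x$-integral at $|x-y|=1$ to exploit $f\in BC^1$ on the near region and boundedness of $f$ on the far region; that splitting is correct and is precisely what the paper is implicitly relying on via its citation.
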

\begin{proof}{\bf (Lemma~\ref{lemA9})}
The proof consists of four steps:
\begin{itemizeX}
\item \textbf{Step 1:} $\sigma=k\in \mathbb{N}_0$.
The claim is proved in \cite{13}.

\item \textbf{Step 2:} $0<\sigma<1$.
The claim has been proved in \cite{12} for the case
where $\sigma \in (0,1)$, $f$ is Lipschitz continuous and $0\leq f
\leq 1$. With an obvious modification that proof also works for
the case where $f\in BC^1(\mathbb{R}^n)$.

\item \textbf{Step 3:} $1<\sigma\not \in \mathbb{N}$.
In this case we can proceed as follows: Let
$k=\floor{\sigma}$, $\theta=\sigma-k$.
\begin{align*}
\|fu\|_{\sigma,q}&=\|fu\|_{k,q}+\sum_{|\nu|=k}\|\partial^{\nu}(fu)\|_{\theta,q}\\
&\leq \|fu\|_{k,q}+\sum_{|\nu|=k}\sum_{\beta\leq
\nu}\|\partial^{\nu-\beta}f  \partial^{\beta}u\|_{\theta,q}\\
&\preceq \|u\|_{k,q}+\sum_{|\nu|=k}\sum_{\beta\leq \nu}\|
\partial^{\beta}u\|_{\theta,q} \quad (\textrm{by Step1 and
Step2})\\
&=\|u\|_{\sigma,q}+\sum_{|\nu|=k}\sum_{\beta<\nu}\|
\partial^{\beta}u\|_{\theta,q}\\
&\preceq \|u\|_{\sigma,q}+\sum_{|\nu|=k}\sum_{\beta<\nu}\|u\|_{\theta+|\beta|,q}\quad (\partial^{\beta}:W^{\theta+|\beta|,q}\rightarrow W^{\theta,q} \textrm{is continuous} )\\
&\preceq
\|u\|_{\sigma,q}+\sum_{|\nu|=k}\sum_{\beta<\nu}\|u\|_{\sigma,q}
\quad (\theta+|\beta|<\sigma \Rightarrow
W^{\sigma,q}\hookrightarrow W^{\theta+|\beta|,q})\\
&\preceq \|u\|_{\sigma,q}.
\end{align*}

\item \textbf{Step 4:} $\sigma<0$.
For this case we use a duality argument:
\begin{align*}
\|fu\|_{\sigma,q}&=\sup_{v\in
W^{-\sigma,q'}\setminus\{0\}}\frac{|\langle
fu,v\rangle|}{\|v\|_{-\sigma,q'}} =\sup_{v\in
W^{-\sigma,q'}\setminus\{0\}}\frac{|\langle
u,fv\rangle|}{\|v\|_{-\sigma,q'}}
\\
&\leq \sup_{v\in
W^{-\sigma,q'}\setminus\{0\}}\frac{\|u\|_{\sigma,q}\|fv\|_{-\sigma,q'}}{\|v\|_{-\sigma,q'}}
\preceq \sup_{v\in
W^{-\sigma,q'}\setminus\{0\}}\frac{\|u\|_{\sigma,q}\|v\|_{-\sigma,q'}}{\|v\|_{-\sigma,q'}}=\|u\|_{\sigma,q}.
\end{align*}
\end{itemizeX}
\end{proof}
\begin{lemma}\lab{lemA10}
Let $\sigma, \delta \in \mathbb{R}$, $q\in(1,\infty)$. Let
$N=\floor{ |\sigma|}+1$. Suppose $f\in C^{N}(\mathbb{R}^n) $ is
such that for all multi-indices $\nu$ with $|\nu|\leq N$
\begin{equation*}
|\partial^{\nu}f(x)|\leq b(\nu)|x|^{-|\nu|},
\end{equation*}
where $b(\nu)$ are appropriate numbers independent of $x$. If
$u\in W^{\sigma,q}_{\delta}(\mathbb{R}^n)$, then $fu \in
W^{\sigma,q}_{\delta}(\mathbb{R}^n)$ and moreover
$\|fu\|_{\sigma,q,\delta}\preceq \|u\|_{\sigma,q,\delta}$ where
the implicit constant depends on $b(\nu)$.
\end{lemma}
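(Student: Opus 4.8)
The plan is to reduce the weighted estimate to the unweighted multiplication lemma (Lemma~\ref{lemA9}), applied separately on each piece of the dyadic decomposition that defines the weighted norm. Recall that by definition
\[
\|v\|_{W^{\sigma,q}_{\delta}(\mathbb{R}^n)}^{q}
 = \sum_{j=0}^{\infty} 2^{-q\delta j}\,\|S_{2^j}(\varphi_j v)\|_{W^{\sigma,q}(\mathbb{R}^n)}^{q},
\]
where $S_r g(x)=g(rx)$ and $\varphi_j = S_{2^{-j}}\varphi$ for $j\ge 1$. The algebraic fact I would start from is that, since $\varphi_j(2^j x)=\varphi(x)$ for $j\ge 1$,
\[
S_{2^j}(\varphi_j\, f u) = (S_{2^j}f)\cdot S_{2^j}(\varphi_j u)\qquad (j\ge 1),
\]
and that $\supp S_{2^j}(\varphi_j u)\subseteq \supp\varphi \subseteq \overline{B_2}\setminus B_{1/2}$, an annulus bounded away from the origin and from infinity uniformly in $j$.

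First I would fix once and for all a cutoff $\tilde\varphi\in C_c^{\infty}(\mathbb{R}^n)$ with $\tilde\varphi\equiv 1$ on a neighborhood of $\supp\varphi$ and $\supp\tilde\varphi\subseteq B_4\setminus B_{1/4}$, and put $\tilde f_j := \tilde\varphi\cdot(S_{2^j}f)$. Since $\tilde\varphi\equiv 1$ on $\supp S_{2^j}(\varphi_j u)$, the displayed identity remains true with $S_{2^j}f$ replaced by $\tilde f_j$. The step I expect to be the crux is the uniform bound $\sup_{j\ge 1}\|\tilde f_j\|_{BC^N(\mathbb{R}^n)} =: C_f<\infty$: on $\supp\tilde\varphi$ we have $|x|\ge 1/4$, hence $2^j|x|\ge 1$ for $j\ge 2$, so the hypothesis $|\partial^\nu f(x)|\le b(\nu)|x|^{-|\nu|}$ yields
\[
|\partial^\nu (S_{2^j}f)(x)| = 2^{j|\nu|}\,|(\partial^\nu f)(2^j x)|\le b(\nu)\,|x|^{-|\nu|}\le 4^{|\nu|}b(\nu)
\]
there, uniformly in $j\ge 2$; the single index $j=1$ contributes only a fixed finite constant by continuity of $f$ on a compact annulus, and a Leibniz expansion of $\partial^\alpha(\tilde\varphi\, S_{2^j}f)$ then bounds all derivatives of $\tilde f_j$ up to order $N$ uniformly in $j$. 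Here $\tilde f_j\in BC^N(\mathbb{R}^n)$ since it is $C^N$ with compact support; the hypothesis $f\in C^N(\mathbb{R}^n)$ enters only to guarantee $S_{2^j}f\in C^N$, no control of $f$ near the origin being needed because $\tilde\varphi$ vanishes there.

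With this in hand, applying Lemma~\ref{lemA9} to the multiplier $\tilde f_j$ and the function $S_{2^j}(\varphi_j u)$, and observing (by inspection of its proof, which is linear in the derivatives of the multiplier) that the implicit constant there is controlled by $\|\tilde f_j\|_{BC^N}\le C_f$, I would get
\[
\|S_{2^j}(\varphi_j\, fu)\|_{W^{\sigma,q}} = \|\tilde f_j\, S_{2^j}(\varphi_j u)\|_{W^{\sigma,q}} \le C(C_f)\,\|S_{2^j}(\varphi_j u)\|_{W^{\sigma,q}}\qquad(j\ge 1),
\]
with $C(C_f)$ independent of $j$. The term $j=0$ is handled separately: $S_{2^0}(\varphi_0 u)=\varphi_0 u$ is supported in $B_2$, so writing $\varphi_0\, fu = (f\chi)(\varphi_0 u)$ for a fixed $\chi\in C_c^{\infty}(\mathbb{R}^n)$ equal to $1$ on $B_2$ (with $f\chi\in BC^N(\mathbb{R}^n)$ because $f\in C^N(\mathbb{R}^n)$), Lemma~\ref{lemA9} again gives $\|\varphi_0 fu\|_{W^{\sigma,q}}\preceq\|\varphi_0 u\|_{W^{\sigma,q}}$. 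Finally, raising to the $q$-th power, multiplying by $2^{-q\delta j}$ and summing over $j\ge 0$,
\[
\|fu\|_{W^{\sigma,q}_{\delta}}^{q}
 = \sum_{j=0}^{\infty}2^{-q\delta j}\|S_{2^j}(\varphi_j fu)\|_{W^{\sigma,q}}^{q}
 \preceq \sum_{j=0}^{\infty}2^{-q\delta j}\|S_{2^j}(\varphi_j u)\|_{W^{\sigma,q}}^{q}
 = \|u\|_{W^{\sigma,q}_{\delta}}^{q},
\]
which is the asserted bound, and in particular shows $fu\in W^{\sigma,q}_{\delta}(\mathbb{R}^n)$. The only genuine subtlety is the uniform-in-$j$ control of the scaled multipliers $\tilde f_j$ together with the uniform dependence of the constant in Lemma~\ref{lemA9} on the $BC^N$ norm of the multiplier; everything else is routine bookkeeping with the dyadic decomposition.
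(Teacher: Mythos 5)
Your proof is correct, and it takes a genuinely different and more self-contained route than the paper, which handles $\sigma\geq 0$ by citing Lemma 3 of Reference~\cite{15} and then says only that the case $\sigma<0$ follows by a duality argument similar to the proof of Lemma~\ref{lemA9}. Your argument instead works directly with the dyadic definition of the $W^{\sigma,q}_{\delta}$ norm: you observe that $S_{2^j}(\varphi_j fu)=(S_{2^j}f)S_{2^j}(\varphi_j u)$ with the second factor supported in a fixed annulus, that the scaling identity $\partial^{\nu}(S_{2^j}f)(x)=2^{j|\nu|}(\partial^{\nu}f)(2^j x)$ exactly cancels the assumed homogeneous decay $|\partial^{\nu}f|\preceq |x|^{-|\nu|}$ on that annulus, and hence that the cut-off multipliers $\tilde f_j$ have uniformly bounded $BC^N$ norm; Lemma~\ref{lemA9} then does all $j\geq 1$ at once, and $j=0$ is routine. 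This has the advantage of treating $\sigma\geq 0$ and $\sigma<0$ by exactly the same computation (the sign of $\sigma$ enters only through Lemma~\ref{lemA9}), and of making visible the structural reason the hypothesis is the right one: it is precisely the condition that makes the family $\{S_{2^j}f\}_j$ uniformly $BC^N$ on the annulus. The one soft point, which you correctly flag, is that the statement of Lemma~\ref{lemA9} only asserts the constant ``depends on $f$''; you need the stronger fact, visible from its proof, that the constant is controlled by $\|f\|_{BC^N}$, so that it is uniform over the family $\{\tilde f_j\}$. With that noted, the argument is complete.
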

\begin{proof}{\bf (Lemma~\ref{lemA10})}
The case $\sigma\geq 0$ is a special case of Lemma 3
in \cite{15}. For the case $\sigma <0$ we may use a duality
argument exactly similar to the proof of Lemma \ref{lemA9}.
\end{proof}

Most of the claims of the following lemma are discussed in
\cite{jZ77} for $s\geq 0$. The argument in \cite{jZ77} in part is
based on a similar multiplication lemma for Besov spaces. An
entirely different approach to the proof which includes some
cases that are not considered in \cite{jZ77} can be found in
\cite{35}. By using a duality argument one can extend the proof to
negative values of $s$ \cite{HNT07b,34,35}.
\begin{lemma}[Multiplication Lemma, Unweighted spaces]\lab{lemA13}
Let $s_i \geq s$ with $s_1+s_2\geq 0$, and $1 < p, p_i < \infty$
($i=1,2$) be real numbers satisfying
\begin{align*}
& s_i-s\geq n(\dfrac{1}{p_i}-\dfrac{1}{p}),\quad \textrm{(if $s_i=s \not \in \mathbb{Z}$, then let $p_i\leq p$)}\\
& s_1+s_2-s>n(\dfrac{1}{p_1}+\dfrac{1}{p_2}-\dfrac{1}{p})\geq 0.
\end{align*}
In case $s<0$, in addition let
\begin{equation*}
s_1+s_2> n(\dfrac{1}{p_1}+\dfrac{1}{p_2}-1) \quad
\textrm{(equality is allowed if $min(s_1,s_2)<0$)}.
\end{equation*}
Also in case where $s_1+s_2=0$ and $min(s_1,s_2) \not \in
\mathbb{Z}$, in addition let $\frac{1}{p_1}+\frac{1}{p_2}\geq 1$.
Then the pointwise multiplication of functions extends uniquely
to a continuous bilinear map
\begin{equation*}
W^{s_1,p_1}(\mathbb{R}^n)\times
W^{s_2,p_2}(\mathbb{R}^n)\rightarrow W^{s,p}(\mathbb{R}^n).
\end{equation*}
\end{lemma}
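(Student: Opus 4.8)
The plan is to establish the quantitative estimate
\[
\|uv\|_{W^{s,p}(\mathbb{R}^n)}\preceq\|u\|_{W^{s_1,p_1}(\mathbb{R}^n)}\,\|v\|_{W^{s_2,p_2}(\mathbb{R}^n)}
\]
for $u,v\in C_c^\infty(\mathbb{R}^n)$, and then obtain the asserted bilinear map by density of $C_c^\infty(\mathbb{R}^n)$ in each Sobolev--Slobodeckij space; the extension is automatically unique, since any two continuous bilinear maps agreeing on the dense set $C_c^\infty(\mathbb{R}^n)\times C_c^\infty(\mathbb{R}^n)$ must coincide. Before attacking the estimate I would normalize the data: using the unweighted Sobolev embedding (Theorem~\ref{thmembedunweight}) one may enlarge each factor space, replacing $(s_i,p_i)$ by $(\hat s_i,\hat p_i)$ with $\hat s_i\le s_i$ and $\hat s_i-\frac{n}{\hat p_i}$ pushed down toward the critical level forced by $s-\frac{n}{p}$, as long as the side conditions of the lemma survive; the (in part strict) inequalities in the hypotheses are precisely the slack needed to carry out this reduction without landing on the endpoint of a Sobolev embedding.

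For the core estimate I would split on $s$. When $s=k\in\mathbb{N}_0$, expand by the Leibniz rule,
\[
\partial^\alpha(uv)=\sum_{\beta\le\alpha}\binom{\alpha}{\beta}\,\partial^\beta u\,\partial^{\alpha-\beta}v\qquad(|\alpha|\le k),
\]
bound each summand in $L^p$ by H\"older with $\frac1p=\frac1a+\frac1b$, and control the two factors using $W^{s_1,p_1}\hookrightarrow W^{|\beta|,a}$ and $W^{s_2,p_2}\hookrightarrow W^{|\alpha-\beta|,b}$; the scaling hypotheses guarantee admissible $a,b$ for every splitting $\beta\le\alpha$, the strict inequality $s_1+s_2-s>n(\frac1{p_1}+\frac1{p_2}-\frac1p)$ being what keeps us off the $L^\infty$-endpoint. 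The case $s\ge0$, $s\notin\mathbb{N}_0$, then follows from the integer cases by a bilinear interpolation theorem, interpolating simultaneously on both arguments and on the target between neighbouring integer smoothness indices (the interpolation identities for Sobolev spaces being the analogues of Theorem~\ref{thmA1}, items (10)--(11)); the clauses in the statement about $s_i=s\notin\mathbb{Z}$ and about integer versus non-integer indices are exactly the ones under which those identities remain valid.

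For $s<0$ I would dualize: since $(W^{\sigma,q}(\mathbb{R}^n))^{*}=W^{-\sigma,q'}(\mathbb{R}^n)$ for all $\sigma\in\mathbb{R}$, it suffices to bound $|\langle uv,w\rangle|$ by $\|u\|_{W^{s_1,p_1}}\|v\|_{W^{s_2,p_2}}\|w\|_{W^{-s,p'}}$ for $w\in C_c^\infty(\mathbb{R}^n)$. Writing $\langle uv,w\rangle=\langle v,uw\rangle$ reduces this to the multiplication estimate $W^{s_1,p_1}\times W^{-s,p'}\to W^{-s_2,p_2'}$, whose target smoothness $-s_2$ is nonnegative whenever $s_2\le0$, so that subcase closes by the already-proved $s\ge0$ result (symmetrically if $s_1\le0$). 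The remaining subcase $s_1,s_2\ge0>s$ I would treat directly: Sobolev embedding gives $W^{s_i,p_i}\hookrightarrow L^{r_i}$, H\"older gives $uv\in L^{r}$ with $\frac1r=\frac1{r_1}+\frac1{r_2}$, and then $L^{r}\hookrightarrow W^{s,p}$ because $s<0$ and the scaling inequality holds; here $\frac1{p_1}+\frac1{p_2}\ge\frac1p$ and, for $s<0$, $s_1+s_2>n(\frac1{p_1}+\frac1{p_2}-1)$ are exactly what produce admissible intermediate exponents $r_1,r_2,r$.

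The hard part is not any individual estimate but the uniform handling of the borderline configurations: the failure of the Sobolev embedding into $L^\infty$ at the scaling endpoint, the restrictions on interpolation when precisely one of the two interpolation indices is an integer, and the fact that the $s<0$ duality reduction fails to terminate when both factor smoothnesses are nonnegative (whence the separate direct argument). The side conditions in the statement — the strict inequalities, the clauses on $\min(s_1,s_2)$, and the requirement $\frac1{p_1}+\frac1{p_2}\ge1$ in the one degenerate case $s_1+s_2=0$ with $\min(s_1,s_2)\notin\mathbb{Z}$ — are calibrated exactly to dodge these. A clean way to organize everything uniformly, which I would adopt and which is carried out in~\cite{35}, is to pass to the Littlewood--Paley/Bony paraproduct decomposition $uv=\Pi_{u}v+\Pi_{v}u+R(u,v)$ and estimate the three pieces in the Besov scale, converting between Sobolev--Slobodeckij and Besov spaces as needed; this puts integer and non-integer smoothness on an equal footing and confines the endpoint losses to a single term. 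The subcase $s\ge0$ is moreover treated in~\cite{jZ77}.
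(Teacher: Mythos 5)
Your outline reconstructs what the paper alludes to rather than proves: the paper supplies no proof of Lemma~\ref{lemA13}, deferring entirely to \cite{jZ77} for $s\ge 0$, to a duality argument for $s<0$, and to the unified paraproduct treatment in \cite{35}, and your sketch (Leibniz/H\"older/Sobolev embedding plus interpolation for nonnegative $s$, duality when $\min(s_1,s_2)\le 0$, a direct embedding/H\"older chain for $s_1,s_2\ge 0>s$, paraproducts as the organizing framework) is a faithful reconstruction of what those references contain. Since the paper offers only citations here there is nothing to compare beyond noting that your sketch correctly locates the delicate point -- real interpolation of $W^{k,p_0}$ with $W^{k+1,p_1}$ for $p_0\neq p_1$ lands in a Besov space rather than $W^{s,p}$, which is exactly what forces the clauses $p_i\le p$ when $s_i=s\notin\mathbb{Z}$ and motivates both the Besov detour in \cite{jZ77} and the paraproduct route of \cite{35}.
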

\begin{remark}
Note that in case $s_i=s\not \in \mathbb{Z}$, the condition
$p_i\leq p$ together with $s_i-s\geq
n(\frac{1}{p_i}-\frac{1}{p})$ in fact implies that we must have
$p_i=p$.
\end{remark}
\begin{corollary}
Let $s_i \geq s$ with $s_1+s_2> 0$, and $2 \leq p< \infty$
($i=1,2$) be real numbers satisfying
\begin{equation*}
s_1+s_2-s>\frac{n}{p}.
\end{equation*}
Then the pointwise multiplication of functions extends uniquely
to a continuous bilinear map
\begin{equation*}
W^{s_1,p}(\mathbb{R}^n)\times W^{s_2,p}(\mathbb{R}^n)\rightarrow
W^{s,p}(\mathbb{R}^n).
\end{equation*}
\end{corollary}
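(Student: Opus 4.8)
The plan is to deduce this corollary directly from Lemma~\ref{lemA13} by specializing to the case $p_1 = p_2 = p$ with the extra restriction $p \geq 2$; no new analysis is required beyond checking that the hypotheses of the general multiplication lemma are met in this setting. First I would set $p_1 = p_2 = p$ and verify the two basic inequalities: the condition $s_i - s \geq n(\tfrac{1}{p_i} - \tfrac{1}{p})$ reduces to $s_i - s \geq 0$, which is exactly the assumption $s_i \geq s$; and the condition $s_1 + s_2 - s > n(\tfrac{1}{p_1} + \tfrac{1}{p_2} - \tfrac{1}{p}) \geq 0$ becomes $s_1 + s_2 - s > \tfrac{n}{p} \geq 0$, which is precisely the standing hypothesis $s_1 + s_2 - s > \tfrac{n}{p}$ (and $\tfrac{n}{p} > 0$). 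The requirement $s_1 + s_2 \geq 0$ is implied by the strict inequality $s_1 + s_2 > 0$ assumed here.

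Next I would dispose of the edge cases appearing in the statement of Lemma~\ref{lemA13}. If $s_i = s \notin \mathbb{Z}$, the lemma asks that $p_i \leq p$, which holds with equality since $p_i = p$; the associated remark then forces $p_i = p$, consistent with our choice. If $s < 0$, the lemma requires the additional inequality $s_1 + s_2 > n(\tfrac{1}{p_1} + \tfrac{1}{p_2} - 1) = n(\tfrac{2}{p} - 1)$; but $p \geq 2$ forces $\tfrac{2}{p} - 1 \leq 0$, so the right-hand side is $\leq 0 < s_1 + s_2$, and this condition is automatic. Finally, the clause concerning $s_1 + s_2 = 0$ with $\min(s_1, s_2) \notin \mathbb{Z}$ never applies, since here $s_1 + s_2 > 0$ strictly, so the auxiliary requirement $\tfrac{1}{p_1} + \tfrac{1}{p_2} \geq 1$ (which for $p_1 = p_2 = p \geq 2$ would force $p = 2$) is vacuous.

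With all hypotheses of Lemma~\ref{lemA13} verified, that lemma yields the unique continuous bilinear extension $W^{s_1,p}(\mathbb{R}^n) \times W^{s_2,p}(\mathbb{R}^n) \to W^{s,p}(\mathbb{R}^n)$, which is exactly the assertion of the corollary; uniqueness is inherited from Lemma~\ref{lemA13} via density of $C_c^\infty(\mathbb{R}^n)$. The only point demanding any care is the bookkeeping for the $s < 0$ and integer-exponent cases, and even that is resolved immediately by the twin assumptions $p \geq 2$ and $s_1 + s_2 > 0$; there is no genuine obstacle here, the corollary being essentially a convenient repackaging of the general lemma in the most frequently used configuration.
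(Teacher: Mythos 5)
Your proof is correct and is exactly the intended argument: the corollary is a direct specialization of Lemma~\ref{lemA13} to $p_1 = p_2 = p$, and the role of the hypotheses $p \geq 2$ and $s_1 + s_2 > 0$ is precisely to discharge the $s<0$ and $s_1+s_2=0$ side-conditions, just as you note. No gaps.
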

\begin{corollary} As a direct consequence of the multiplication
lemma we have:
\begin{itemize}
\item If $p\in (1,\infty)$ and $s\in (\frac{n}{p},\infty)$, then
$W^{s,p}(\mathbb{R}^n)$ is a Banach algebra.
\item Let $p\in (1,\infty)$ and $s\in (\frac{n}{p},\infty)$. Suppose $q\in (1,\infty)$ and $\sigma\in[-s,s]$ satisfy $\sigma-\frac{n}{q}\in [-n-s+\frac{n}{p}, s-\frac{n}{p}]$; in
case $s \not \in \mathbb{N}_0$, assume $\sigma\neq -s$; in case
$s \not \in \mathbb{N}_0$, $q<p$, in addition assume $\sigma \neq
s$. Then the pointwise multiplication is bounded as a map
$W^{s,p}(\mathbb{R}^n)\times
W^{\sigma,q}(\mathbb{R}^n)\rightarrow W^{\sigma,q}(\mathbb{R}^n)$.
\end{itemize}
\end{corollary}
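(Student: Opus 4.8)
The plan is to obtain both assertions by invoking the multiplication lemma (Lemma~\ref{lemA13}) with a suitable choice of exponents and then verifying, case by case, that its hypotheses hold under the stated assumptions.

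For the first bullet, I would apply Lemma~\ref{lemA13} with $s_1=s_2=s$, $p_1=p_2=p$ and target exponents $s,p$. Since $s>n/p>0$, the conditions $s_i\geq s$ and $s_1+s_2=2s\geq 0$ hold; the inequalities $s_i-s=0\geq n(1/p_i-1/p)=0$ hold with equality and the attached restriction $p_i\leq p$ is trivially satisfied; and the strict inequality $s_1+s_2-s=s>n(1/p_1+1/p_2-1/p)=n/p$ is precisely the hypothesis $s>n/p$ (with $n/p\geq 0$). Because $s>0$, none of the auxiliary conditions Lemma~\ref{lemA13} imposes when $s<0$ or $s_1+s_2=0$ are relevant. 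Hence pointwise multiplication extends to a continuous bilinear map $W^{s,p}\times W^{s,p}\to W^{s,p}$, so $\|uv\|_{s,p}\leq C\|u\|_{s,p}\|v\|_{s,p}$ for some $C$; passing to the equivalent norm $C\|\cdot\|_{s,p}$ makes the product submultiplicative, and since $W^{s,p}$ is complete this exhibits it as a Banach algebra.

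For the second bullet, I would apply Lemma~\ref{lemA13} with $s_1=s$, $p_1=p$, $s_2=\sigma$, $p_2=q$ and target exponents $\sigma,q$. The requirements $s_1\geq\sigma$, $s_2\geq\sigma$ and $s_1+s_2=s+\sigma\geq 0$ follow from $\sigma\in[-s,s]$. The scaling inequalities $s_i-\sigma\geq n(1/p_i-1/q)$ are trivial for $i=2$ and, for $i=1$, read $s-\sigma\geq n(1/p-1/q)$, which is exactly the upper bound $\sigma-n/q\leq s-n/p$. The strict inequality $s_1+s_2-\sigma=s>n(1/p_1+1/p_2-1/q)=n/p$ is again $s>n/p$. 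For the low-regularity clauses: the restriction attached to ``$s_i=\sigma\not\in\mathbb{Z}$'' forces $p_i\leq q$, which is automatic for $i=2$; for $i=1$ it is only in play when $\sigma=s$, but the hypothesis $\sigma-n/q\leq s-n/p$ then forces $q\leq p$, and when $q<p$ the extra assumption $\sigma\neq s$ (imposed for $s\not\in\mathbb{N}_0$) rules this out, while when $q=p$ one simply has $p_1=q$. When $\sigma<0$ one additionally needs $s+\sigma\geq n(1/p+1/q-1)$, which is exactly the lower bound $-n-s+n/p\leq\sigma-n/q$, and equality there is permitted because $\min(s,\sigma)=\sigma<0$. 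Finally, $s_1+s_2=0$ can happen only for $\sigma=-s$: if $s\in\mathbb{N}_0$ then $\min(s_1,s_2)=-s\in\mathbb{Z}$ and no further condition is needed, while if $s\not\in\mathbb{N}_0$ the hypothesis $\sigma\neq -s$ excludes it. With all hypotheses checked, Lemma~\ref{lemA13} yields the bounded bilinear map $W^{s,p}\times W^{\sigma,q}\to W^{\sigma,q}$.

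The content here is entirely contained in Lemma~\ref{lemA13}, so there is no genuine analytic difficulty; the main (and only) obstacle is the careful bookkeeping in the second bullet, namely translating the single condition $\sigma-\frac{n}{q}\in[-n-s+\frac{n}{p},\,s-\frac{n}{p}]$ into the two scaling inequalities of Lemma~\ref{lemA13}, and making sure the various exceptional clauses (non-integer smoothness together with the forced equality $p_i=p$, the borderline $s<0$ inequality, and the degenerate case $s_1+s_2=0$) line up with the added assumptions $\sigma\neq\pm s$. I would therefore handle $\sigma=s$, $\sigma=-s$, and $\sigma<0$ as explicitly separated subcases, which is the organization indicated above.
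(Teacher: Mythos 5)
Your proof is correct. The paper states this corollary with no proof at all (it is presented as a ``direct consequence'' of Lemma~\ref{lemA13}), and your verification is precisely the bookkeeping that phrase implicitly invites: instantiating the multiplication lemma with the appropriate exponents and checking each hypothesis, including the exceptional clauses involving the non-integer $s_i=s$ restriction, the $s<0$ inequality (where you correctly note equality is allowed since $\min(s_1,s_2)=\sigma<0$), and the $s_1+s_2=0$ degenerate case. Your handling of the $\sigma=\pm s$ boundary cases matches the paper's own subsequent remark, which acknowledges that $\sigma=-s\not\in\mathbb{Z}$ is deliberately excluded from the corollary and can be recovered only under the extra condition $\tfrac{1}{p}+\tfrac{1}{q}\geq1$.
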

Note: In the statement of the second item of the above corollary,
the case $\sigma=-s \not \in \mathbb{Z}$ has been excluded.
However, it follows from the multiplication lemma that the claim
holds true even if $\sigma=-s \not \in \mathbb{Z}$ provided we
additionally assume $\frac{1}{p}+\frac{1}{q}\geq 1$. Of course, if
$\sigma=-s$, the assumption $\frac{1}{p}+\frac{1}{q}\geq 1$
together with $\sigma-\frac{n}{q} \in [-n-s+\frac{n}{p},
s-\frac{n}{p}]$ implies that $\frac{1}{p}+\frac{1}{q}= 1$.
\begin{lemma}[Multiplication Lemma, Weighted spaces]\lab{lemA1}
Assume that $s, s_1, s_2$ and that $1 < p, p_1, p_2 < \infty$ are real
numbers satisfying
\begin{enumerate}[(i)]
\item  $s_i \geq s$ \quad($i=1,2$) (if $s_i=s \not \in \mathbb{Z}$, then let $p_i\leq p$),
\item  $s_1+s_2\geq 0$ \quad (if $s_1+s_2=0$ and $min(s_1,s_2)\not \in \mathbb{Z}$, then let $\frac{1}{p_1}+\frac{1}{p_2}\geq 1$),
\item  $s_i-s\geq n(\frac{1}{p_i}-\frac{1}{p})$  \quad($i=1,2$),
\item  $s_1+s_2-s>n(\frac{1}{p_1}+\frac{1}{p_2}-\frac{1}{p})\geq
0$.
\end{enumerate}
In case $min(s_1,s_2)<0$, in
addition let
\begin{enumerate}[(v)]
\item $\quad s_1+s_2\geq n(\frac{1}{p_1}+\frac{1}{p_2}-1)$.
\end{enumerate}
In case $s<0$ and $min(s_1,s_2)\geq 0$, we assume the above
inequality is strict ($s_1+s_2>
n(\frac{1}{p_1}+\frac{1}{p_2}-1)$). Then for all $\delta_1 ,
\delta_2 \in \mathbb{R}$, the pointwise multiplication of
functions extends uniquely to a continuous bilinear map
\begin{equation*}
W^{s_1,p_1}_{\delta_1}(\mathbb{R}^n)\times
W^{s_2,p_2}_{\delta_2}(\mathbb{R}^n)\rightarrow
W^{s,p}_{\delta_1+\delta_2}(\mathbb{R}^n).
\end{equation*}
\end{lemma}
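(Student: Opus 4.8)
The plan is to reduce the weighted statement to the unweighted multiplication lemma (Lemma~\ref{lemA13}) applied annulus-by-annulus in the dyadic decomposition that defines $W^{s,p}_\delta(\mathbb{R}^n)$. The key structural point is that hypotheses (i)--(v) here are exactly those of Lemma~\ref{lemA13} and, crucially, involve no weights; the growth exponents $\delta_1,\delta_2$ will only enter through a summation of the dyadic pieces, and the clause ``$n(\tfrac1{p_1}+\tfrac1{p_2}-\tfrac1p)\geq 0$'' buried in (iv) will turn out to be precisely the sequence-space embedding condition needed at that last step.

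First I would fix, once and for all, a fattened companion $\tilde\varphi\in C_c^\infty(\mathbb{R}^n)$ to the fixed bump $\varphi$, with $\tilde\varphi\equiv 1$ on a neighborhood of $\supp\varphi$ and $\supp\tilde\varphi$ chosen so small that, setting $\tilde\varphi_j=S_{2^{-j}}\tilde\varphi$ for $j\geq 1$ (and a suitable $\tilde\varphi_0$ near the origin), one has $\tilde\varphi_j\equiv 1$ on $\supp\varphi_j$ and $\tilde\varphi_j=\tilde\varphi_j(\varphi_{j-1}+\varphi_j+\varphi_{j+1})$. Using $\tilde\varphi_j\varphi_j=\varphi_j$ together with the identity $S_{2^j}(fg)=S_{2^j}(f)\,S_{2^j}(g)$,
\begin{equation*}
S_{2^j}(\varphi_j\,uv)=S_{2^j}(\varphi_j)\cdot S_{2^j}(\tilde\varphi_j u)\cdot S_{2^j}(\tilde\varphi_j v),
\end{equation*}
and here $S_{2^j}(\varphi_j)=\varphi$ for every $j\geq 1$ (a fixed compactly supported smooth function), while for $j=0$ the factor is $\varphi_0$. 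Multiplication by this fixed smooth compactly supported factor is bounded on every $W^{s,p}(\mathbb{R}^n)$ by Lemma~\ref{lemA9}, and Lemma~\ref{lemA13} applies with exponents $(s_1,p_1),(s_2,p_2)\to(s,p)$, so
\begin{equation*}
\|S_{2^j}(\varphi_j\,uv)\|_{W^{s,p}}\preceq \|S_{2^j}(\tilde\varphi_j u)\|_{W^{s_1,p_1}}\,\|S_{2^j}(\tilde\varphi_j v)\|_{W^{s_2,p_2}},
\end{equation*}
with implicit constant independent of $j$.

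Next I would remove the fattened cutoffs at the cost of a finite-overlap loss. Since $S_{2^j}(\tilde\varphi_j u)=\tilde\varphi\cdot\big(S_{2^j}(\varphi_{j-1}u)+S_{2^j}(\varphi_j u)+S_{2^j}(\varphi_{j+1}u)\big)$, and the fixed dilations $S_2,S_{1/2}$ are bounded on $W^{s_1,p_1}$ with constants depending only on $s_1,p_1,n$ (so that $S_{2^j}(\varphi_{j\mp1}u)=S_{2^{\pm1}}S_{2^{j\mp1}}(\varphi_{j\mp1}u)$ is controlled), we get
\begin{equation*}
\|S_{2^j}(\tilde\varphi_j u)\|_{W^{s_1,p_1}}\preceq \sum_{|j'-j|\leq 1}\|S_{2^{j'}}(\varphi_{j'}u)\|_{W^{s_1,p_1}},
\end{equation*}
and similarly for $v$. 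Setting $a_j=2^{-\delta_1 j}\|S_{2^j}(\tilde\varphi_j u)\|_{W^{s_1,p_1}}$ and $b_j=2^{-\delta_2 j}\|S_{2^j}(\tilde\varphi_j v)\|_{W^{s_2,p_2}}$, the harmless $2^{\pm\delta_1},2^{\pm\delta_2}$ shifts give $\|(a_j)\|_{\ell^{p_1}}\preceq\|u\|_{W^{s_1,p_1}_{\delta_1}}$ and $\|(b_j)\|_{\ell^{p_2}}\preceq\|v\|_{W^{s_2,p_2}_{\delta_2}}$. Finally, assembling the dyadic norm,
\begin{equation*}
\|uv\|_{W^{s,p}_{\delta_1+\delta_2}}^p=\sum_{j\geq 0}2^{-p(\delta_1+\delta_2)j}\|S_{2^j}(\varphi_j\,uv)\|_{W^{s,p}}^p\preceq\sum_{j\geq 0}(a_jb_j)^p,
\end{equation*}
and I would apply the sequence Hölder inequality $\|(a_jb_j)\|_{\ell^r}\leq\|(a_j)\|_{\ell^{p_1}}\|(b_j)\|_{\ell^{p_2}}$ with $\tfrac1r=\tfrac1{p_1}+\tfrac1{p_2}$, followed by the embedding $\ell^r\hookrightarrow\ell^p$, which is valid exactly because $\tfrac1{p_1}+\tfrac1{p_2}-\tfrac1p\geq0$ (the second inequality of (iv)); both steps hold as stated even when $r<1$. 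This yields $\|uv\|_{W^{s,p}_{\delta_1+\delta_2}}\preceq\|u\|_{W^{s_1,p_1}_{\delta_1}}\|v\|_{W^{s_2,p_2}_{\delta_2}}$ for $u,v\in C_c^\infty(\mathbb{R}^n)$, which are dense (Theorem~\ref{thmA1}), so the pointwise product extends uniquely to the asserted continuous bilinear map.

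I expect the genuine obstacle to be bookkeeping rather than ideas: one must verify that the delicate borderline clauses of Lemma~\ref{lemA13} (the integer versus non-integer distinctions in (i) and (ii), the $s_1+s_2=0$ case, and the sharp/non-sharp forms of the condition when $s<0$ or $\min(s_1,s_2)<0$) transfer verbatim to the annulus estimate --- which they do, since (i)--(v) were copied from Lemma~\ref{lemA13} --- and that the fattened cutoffs $\tilde\varphi_j=S_{2^{-j}}\tilde\varphi$ can be arranged with the three-term overlap property \emph{uniformly} in $j$, which is exactly where the scale invariance of the construction does the work. The $j=0$ term lives on a ball rather than an annulus but needs only a trivial variant of the same argument.
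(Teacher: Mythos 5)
Your proposal is correct and follows essentially the same route as the paper's proof: localize to dyadic annuli via the defining decomposition of $W^{s,p}_\delta$, apply the unweighted multiplication lemma (Lemma~\ref{lemA13}) together with boundedness of the three fixed dilations $S_{2^{\pm1}},S_1$ and multiplication by a fixed smooth compactly supported function, exploit the finite $(3$-term$)$ overlap, and close with H\"older followed by $\ell^r\hookrightarrow\ell^p$ where $\tfrac1r=\tfrac1{p_1}+\tfrac1{p_2}\geq\tfrac1p$. The only cosmetic difference is that you introduce a fattened companion cutoff $\tilde\varphi$, whereas the paper writes $u_i=\sum_{|k-j|\leq1}\varphi_k u_i$ directly on the set where $S_{2^j}\varphi_j$ is supported; both devices package the same three-term overlap.
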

\begin{proof}{\bf (Lemma~\ref{lemA1})}
A proof for the case $p_1=p_2=p=2$ is given in \cite{2}. In what
follows we use Lemma \ref{lemA13} to extend that proof to our
general setting. In the proof we will make use of the following
facts:
\begin{itemizeX}
\item \textbf{Fact 1}:  If $f$ is a smooth function with compact
support and $u\in W^{t,q}$ then $fu\in W^{t,q}$ and
$\|fu\|_{W^{t,q}}\preceq \|u\|_{W^{t,q}}$ (this is in fact a
special case of Lemma \ref{lemA9}).
\item \textbf{Fact 2}: For all $j\geq 1$, $S_{2^j}\varphi_j=S_{2^j}
S_{2^{-j}}\varphi=\varphi$. So $S_{2^j}\varphi_j$ is zero if
$x\not \in B_2\setminus B_{\frac{1}{2}}$. Also for $j=0$,
$S_{2^j}\varphi_j=\varphi_0$ is zero if $x\not \in B_2$.
\item \textbf{Fact 3}:
 Let $\{a_j\}_{j=1}^{m}$ be positive numbers. Define $f:
(0,\infty)\rightarrow \mathbb{R}$ as follows:
\begin{equation*}
f(r)=(\sum_{j=1}^{m}a_j^r)^{\frac{1}{r}}.
\end{equation*}
$f(r)$ is a decreasing function. The reason is as follows:
Suppose $t\geq r>0.$ We want to show
$(\sum_{j=1}^{m}a_j^t)^{\frac{1}{t}}\leq
(\sum_{j=1}^{m}a_j^r)^{\frac{1}{r}}$. Since $g(x)=x^t$ is an
increasing function over $(0,\infty)$, it is enough to show
$(\sum_{j=1}^{m}a_j^t)\leq (\sum_{j=1}^{m}a_j^r)^{\frac{t}{r}}$.
Letting $b_j=a_j^r$, $\beta=\frac{t}{r}$, we want to prove
$(\sum_{j=1}^{m}b_j^{\beta})\leq (\sum_{j=1}^{m}b_j)^{\beta}$. To
this end we just need to show that
\begin{equation*}
\sum_{j=1}^{m}(\frac{b_j}{\sum_{j=1}^{m}b_j})^{\beta}\leq 1.
\end{equation*}
Set $e_j=\frac{b_j}{\sum_{j=1}^{m}b_j}$. Clearly
$\sum_{j=1}^{m}e_j = 1$. Since $\beta \geq 1$, $0\leq e_j \leq
1$, we have $e_j^{\beta}\leq e_j$. Therefore
\begin{equation*}
\sum_{j=1}^{m}e_j^{\beta}\leq \sum_{j=1}^{m}e_j=1.
\end{equation*}
\item \textbf{Fact 4}: For $a_k>0$ we have :$\sum_{k=1}^{m}a_k^p
\sim (\sum_{k=1}^{m}a_k)^p$
\\ (that is $\sum_{k=1}^{m}a_k^p \preceq
(\sum_{k=1}^{m}a_k)^p \preceq \sum_{k=1}^{m}a_k^p$).
\item \textbf{Fact 5}: $\|S_r u\|_{W^{s,p}}\leq
C(r,s,p,n)\|u\|_{W^{s,p}}$.
\end{itemizeX}
Now let's start proving the lemma. Suppose $u_i \in W^{s_i,p_i}_
{\delta_i}$. Let $\varphi_j=0$ for $j<0$. We have
\begin{equation*}
S_{2^j}(\varphi_j u_1 u_2)=S_{2^j}(\varphi_j)S_{2^j}u_1 S_{2^j}
u_2.
\end{equation*}
By \textbf{Fact 2}, for $j\geq 1$, $S_{2^j}\varphi_j$ is zero if
$x\not \in B_2\setminus B_{\frac{1}{2}}$. Also it is easy to see
that for $x \in B_2\setminus B_{\frac{1}{2}}$, $\varphi_k(2^j
x)=0$ if $k \not \in \{j-1, j, j+1\}$. Since for all $x$,
$\sum_{k=0}^{\infty}\varphi_k(2^j x)=1$, we can conclude that for
$x \in B_2\setminus B_{\frac{1}{2}}$
\begin{equation*}
\sum_{k=j-1}^{j+1} \varphi_k(2^j x)=1.
\end{equation*}
For $j=0$, $S_{2^j}\varphi_j$ is zero if $x\not \in B_2$; one can
easily check that if $j=0$, the above equality holds true for all
$x\in B_2$. Therefore for all $x$
\begin{equation*}
S_{2^j}(\varphi_j u_1
u_2)=S_{2^j}(\varphi_j)\sum_{k=j-1}^{j+1}S_{2^j}(\varphi_k u_1)
\sum_{l=j-1}^{j+1}S_{2^j}(\varphi_l u_2).
\end{equation*}
Now by \textbf{Fact 1} and \textbf{Fact 4} we have
\begin{equation*}
\|S_{2^j}(\varphi_j u_1 u_2)\|_{W^{s,p}}^p \preceq \sum_{k, l
=j-1}^{j+1}\|S_{2^j}(\varphi_k u_1)S_{2^j}(\varphi_l
u_2)\|_{W^{s,p}}^p,
\end{equation*}
and by the multiplication lemma for the corresponding unweighted
Sobolev spaces we get
\begin{align*}
\|S_{2^j}(\varphi_j u_1 u_2)\|_{W^{s,p}}^p &\preceq \sum_{k, l
=j-1}^{j+1} \|S_{2^j}(\varphi_k u_1)\|_{W^{s_1,p_1}}^p
\|S_{2^j}(\varphi_l u_2)\|_{W^{s_2,p_2}}^p\\
& \preceq \sum_{k, l =j-1}^{j+1} \|S_{2^{j-k}}S_{2^k}(\varphi_k
u_1)\|_{W^{s_1,p_1}}^p \|S_{2^{j-l}}S_{2^l}(\varphi_l
u_2)\|_{W^{s_2,p_2}}^p.
\end{align*}
$S_{2^{j-k}}$ is one of $S_{2^{-1}}$, $S_{2^{0}}$, or
$S_{2^{1}}$. So, by \textbf{Fact 5}
\begin{align*}
\sum_{k=j-1}^{j+1} \|S_{2^{j-k}}S_{2^k}(\varphi_k
u_1)\|_{W^{s_1,p_1}}^p &\leq \sum_{k=j-1}^{j+1}
   (\|S_{2^{-1}}S_{2^k}(\varphi_k u_1)\|_{W^{s_1,p_1}}^p
+\|S_{2^{0}}S_{2^k}(\varphi_k u_1)\|_{W^{s_1,p_1}}^p
\\
& \quad \quad
+\|S_{2^{1}}S_{2^k}(\varphi_k u_1)\|_{W^{s_1,p_1}}^p)\\
& \preceq \sum_{k=j-1}^{j+1} \|S_{2^k}(\varphi_k
u_1)\|_{W^{s_1,p_1}}^p
\end{align*}
and the similar result is true for $\sum_{l=j-1}^{j+1}
\|S_{2^{j-l}}S_{2^l}(\varphi_l u_2)\|_{W^{s_2,p_2}}^p$.
Consequently
\begin{align*}
\|S_{2^j}(\varphi_j u_1 u_2)\|_{W^{s,p}}^p & \preceq \sum_{k, l
=j-1}^{j+1}\|S_{2^k}(\varphi_k u_1)\|_{W^{s_1,p_1}}^p
\|S_{2^l}(\varphi_l u_2)\|_{W^{s_2,p_2}}^p
\\
& \preceq (\sum_{k =j-1}^{j+1}\|S_{2^k}(\varphi_k
u_1)\|_{W^{s_1,p_1}})^p (\sum_{l =j-1}^{j+1}\|S_{2^l}(\varphi_l
u_2)\|_{W^{s_2,p_2}})^p
\end{align*}
Therefore
\begin{align*}
\sum_{j=0}^{\infty} & 2^{-p(\delta_1+\delta_2)j}
   \|S_{2^j}(\varphi_j u_1 u_2)\|_{W^{s,p}}^p
\\
& \preceq \sum_{j=0}^{\infty}\big[2^{-p\delta_1 j}(\sum_{k
=j-1}^{j+1}\|S_{2^k}(\varphi_k u_1)\|_{W^{s_1,p_1}})^p
\,2^{-p\delta_2 j}(\sum_{l =j-1}^{j+1}\|S_{2^l}(\varphi_l
u_2)\|_{W^{s_2,p_2}})^p\big] .
\end{align*}
Let
\begin{align*}
&a_j= 2^{-\delta_1 j }\sum_{k =j-1}^{j+1}\|S_{2^k}(\varphi_k
u_1)\|_{W^{s_1,p_1}}\\
&b_j=2^{-\delta_2 j }\sum_{l =j-1}^{j+1}\|S_{2^l}(\varphi_l
u_2)\|_{W^{s_2,p_2}}.
\end{align*}
So we have
\begin{align*}
\|u_1u_2\|_{W^{s,p}_{\delta_1+\delta_2}}=\big[\sum_{j=0}^{\infty}2^{-p(\delta_1+\delta_2)j}\|S_{2^j}(\varphi_j
u_1 u_2)\|_{W^{s,p}}^p\big]^{\frac{1}{p}}\preceq
\big[\sum_{j=0}^{\infty}(a_jb_j)^p\big]^{\frac{1}{p}}.
\end{align*}
Now let $r$ be such that
$\frac{1}{r}=\frac{1}{p_1}+\frac{1}{p_2}$. By assumption
$\frac{1}{p_1}+\frac{1}{p_2}-\frac{1}{p}\geq 0$ and so $r\leq p$.
Thus by \textbf{Fact 3} and Holder's inequality we get
\begin{align*}
\big[\sum_{j=0}^{\infty}(a_jb_j)^p\big]^{\frac{1}{p}} &\leq
\big[\sum_{j=0}^{\infty}(a_jb_j)^r\big]^{\frac{1}{r}}
\\
&\leq \big[\sum_{j=0}^{\infty}(a_j)^{p_1}\big]^{\frac{1}{p_1}}\big[\sum_{j=0}^{\infty}(b_j)^{p_2}\big]^{\frac{1}{p_2}}\\
&\preceq \big[\sum_{j=0}^{\infty}2^{-p_1 \delta_1 j }\sum_{k
=j-1}^{j+1}\|S_{2^k}(\varphi_k u_1)\|_{W^{s_1,p_1}}^{p_1}
\big]^{\frac{1}{p_1}}
\\
& \quad \quad \big[\sum_{j=0}^{\infty}2^{-p_2 \delta_2 j }\sum_{l
=j-1}^{j+1}\|S_{2^l}(\varphi_l
u_2)\|_{W^{s_2,p_2}}^{p_2}  \big]^{\frac{1}{p_2}}\\
&\preceq \big[\sum_{j=0}^{\infty}2^{-p_1 \delta_1 j}
\|S_{2^j}(\varphi_j u_1)\|_{W^{s_1,p_1}}^{p_1}
\big]^{\frac{1}{p_1}} \big[\sum_{j=0}^{\infty}2^{-p_2 \delta_2 j
}\|S_{2^j}(\varphi_j u_2)\|_{W^{s_2,p_2}}^{p_2}
\big]^{\frac{1}{p_2}}
\\
&
=\|u_1\|_{W^{s_1,p_1}_{\delta_1}}\|u_2\|_{W^{s_2,p_2}_{\delta_2}}.
\end{align*}
This proves $\|u_1u_2\|_{W^{s,p}_{\delta_1+\delta_2}}\preceq
\|u_1\|_{W^{s_1,p_1}_{\delta_1}}\|u_2\|_{W^{s_2,p_2}_{\delta_2}}$.
\end{proof}
\begin{remark}
By using partition of unity and charts one can show that the
above lemma also holds for AF manifolds.
\end{remark}
\begin{corollary} [The case where $p_1=p_2=p$]\lab{coroA2}
Assume $s \leq min\{s_1, s_2\}$, $s_1+s_2 > s + \frac{n}{p}$,
$s_1+s_2> 0$, $s_1+s_2 > n(\frac{2}{p}-1)$ and
$\delta_1+\delta_2\leq \delta$, then the multiplication
\begin{equation*}
W^{s_1,p}_{\delta_1}\times W^{s_2,p}_{\delta_2}\rightarrow
W^{s,p}_{\delta},
\end{equation*}
is continuous.
\end{corollary}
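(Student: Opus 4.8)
The plan is to obtain Corollary~\ref{coroA2} as the special case $p_1 = p_2 = p$ of the weighted multiplication lemma (Lemma~\ref{lemA1}), followed by a continuous weighted embedding that absorbs the constraint $\delta_1 + \delta_2 \leq \delta$.

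First I would verify that, when $p_1 = p_2 = p$, the five hypotheses of Lemma~\ref{lemA1} are consequences of the hypotheses of the corollary. Condition (i), $s_i \geq s$, is exactly $s \leq \min\{s_1,s_2\}$, and in the borderline case $s_i = s \notin \mathbb{Z}$ the extra requirement $p_i \leq p$ holds trivially since $p_i = p$. Condition (ii), $s_1 + s_2 \geq 0$, follows from $s_1 + s_2 > 0$; the exceptional sub-case $s_1 + s_2 = 0$ never occurs. Condition (iii), $s_i - s \geq n(\tfrac{1}{p_i} - \tfrac{1}{p}) = 0$, again reduces to $s_i \geq s$. For condition (iv) one has $n(\tfrac{1}{p_1} + \tfrac{1}{p_2} - \tfrac{1}{p}) = \tfrac{n}{p}$, which is strictly positive, so the requirement $s_1 + s_2 - s > n(\tfrac{1}{p_1}+\tfrac{1}{p_2}-\tfrac{1}{p}) \geq 0$ is precisely $s_1 + s_2 > s + \tfrac{n}{p}$. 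Finally, for condition (v) note $n(\tfrac{1}{p_1}+\tfrac{1}{p_2}-1) = n(\tfrac{2}{p}-1)$; when $\min(s_1,s_2) < 0$ the needed inequality $s_1 + s_2 \geq n(\tfrac{2}{p}-1)$ is implied by the strict hypothesis $s_1 + s_2 > n(\tfrac{2}{p}-1)$, and when $s < 0$ with $\min(s_1,s_2) \geq 0$ the strict form of this inequality---which is exactly what we assumed---is what Lemma~\ref{lemA1} requires; in the remaining case $s \geq 0$ and $\min(s_1,s_2) \geq 0$, condition (v) is not imposed.

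With all hypotheses checked, Lemma~\ref{lemA1} gives that pointwise multiplication is continuous as a map $W^{s_1,p}_{\delta_1}(\mathbb{R}^n) \times W^{s_2,p}_{\delta_2}(\mathbb{R}^n) \to W^{s,p}_{\delta_1+\delta_2}(\mathbb{R}^n)$. Since $\delta_1 + \delta_2 \leq \delta$, item~(2) of Theorem~\ref{thmA1} yields the continuous inclusion $W^{s,p}_{\delta_1+\delta_2}(\mathbb{R}^n) \hookrightarrow W^{s,p}_{\delta}(\mathbb{R}^n)$; composing the two maps proves the claim on $\mathbb{R}^n$, and the version on an AF manifold then follows from the partition-of-unity argument already recorded after Lemma~\ref{lemA1}. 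I do not expect any genuine obstacle: the analytic substance is entirely inside Lemma~\ref{lemA1}, and the only point demanding attention is the case analysis in hypothesis~(v), where one checks that the single strict inequality $s_1 + s_2 > n(\tfrac{2}{p}-1)$ covers every sub-case that the general multiplication lemma distinguishes.
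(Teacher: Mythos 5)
Your proof is correct and is exactly the argument the paper intends: Corollary~\ref{coroA2} is a direct specialization of Lemma~\ref{lemA1} to $p_1 = p_2 = p$, followed by the monotonicity embedding $W^{s,p}_{\delta_1+\delta_2}\hookrightarrow W^{s,p}_{\delta}$ from item~(2) of Theorem~\ref{thmA1}. Your case analysis of hypothesis~(v) is careful and handles all sub-cases correctly; the single strict inequality in the corollary's hypotheses indeed suffices in every branch of the lemma's statement.
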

\begin{corollary}\lab{coroA1}
Let $p\in (1,\infty)$, $s\in (\frac{n}{p},\infty)$, and $\delta <
0$, then the space $W^{s,p}_{\delta}$ is an algebra.
\end{corollary}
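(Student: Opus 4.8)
\textbf{Proof proposal (Corollary~\ref{coroA1}).}
The plan is to obtain the algebra property as an immediate specialization of the weighted multiplication lemma already proved in this appendix. Concretely, I would apply Corollary~\ref{coroA2} (the $p_1=p_2=p$ case of Lemma~\ref{lemA1}) with the choices $s_1=s_2=s$ and $\delta_1=\delta_2=\delta$, so that the target indices become $s$, $p$, and $\delta_1+\delta_2=2\delta$; since $2\delta\le\delta$ whenever $\delta\le 0$, the continuous embedding $W^{s,p}_{2\delta}\hookrightarrow W^{s,p}_{\delta}$ from Theorem~\ref{thmA1} then upgrades the conclusion to a continuous bilinear map $W^{s,p}_{\delta}\times W^{s,p}_{\delta}\to W^{s,p}_{\delta}$, which is exactly the statement that $W^{s,p}_{\delta}$ is an algebra. (Equivalently one could feed these equal indices directly into Lemma~\ref{lemA1} and absorb the weight afterward; and on an AF manifold the same conclusion follows from the partition-of-unity version of Lemma~\ref{lemA1} recorded in the remark after its proof.)

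Thus the only real work is the bookkeeping check that the five hypotheses of Corollary~\ref{coroA2} hold for this choice. First, $s\le\min\{s_1,s_2\}=s$ is trivial. Second, $s_1+s_2=2s>s+\tfrac{n}{p}$ is equivalent to $s>\tfrac{n}{p}$, which is assumed. Third, $s_1+s_2=2s>0$ since $s>\tfrac{n}{p}>0$. Fourth, $s_1+s_2=2s>n\!\left(\tfrac{2}{p}-1\right)$: if $p\ge 2$ the right side is $\le 0$ and nothing is needed, while if $p<2$ then $n\!\left(\tfrac{2}{p}-1\right)=2\!\left(\tfrac{n}{p}-\tfrac{n}{2}\right)<2\cdot\tfrac{n}{p}<2s$. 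Fifth, $\delta_1+\delta_2=2\delta\le\delta$ because $\delta<0$. All five conditions are verified.

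With the hypotheses in place, Corollary~\ref{coroA2} gives the continuity of pointwise multiplication $W^{s,p}_{\delta}\times W^{s,p}_{\delta}\to W^{s,p}_{\delta}$, i.e.\ a constant $C=C(n,s,p,\delta)$ with $\|uv\|_{s,p,\delta}\le C\,\|u\|_{s,p,\delta}\,\|v\|_{s,p,\delta}$ for all $u,v\in W^{s,p}_{\delta}$; replacing the norm by the equivalent norm $C\|\cdot\|_{s,p,\delta}$ makes this submultiplicative, so $W^{s,p}_{\delta}$ is a Banach algebra. I do not expect any genuine obstacle here: the entire content is the index check above, since the hard analytic input (the unweighted multiplication estimate and its transfer to weighted spaces via the dyadic decomposition) has already been carried out in the proofs of Lemma~\ref{lemA13} and Lemma~\ref{lemA1}.
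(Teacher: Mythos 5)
Your proof is correct and follows the natural route: it is exactly the specialization of Corollary~\ref{coroA2} (equivalently Lemma~\ref{lemA1}) with $s_1=s_2=s$, $p_1=p_2=p$, $\delta_1=\delta_2=\delta$, using $s>\frac{n}{p}$ for the key strict inequality and $2\delta\le\delta$ for the weight. The paper states Corollary~\ref{coroA1} without written proof as a direct consequence of Corollary~\ref{coroA2}, and your index bookkeeping is the intended verification.
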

\begin{lemma}\lab{lempA1}
Let the following assumptions hold:
\begin{itemize}
\item $f:\mathbb{R}\rightarrow \mathbb{R}$ is smooth,
\item $u\in W^{s,p}_{\rho}(\mathbb{R}^n)$, where $s>\frac{n}{p}$,
$\rho<0$, and $p\in (1,\infty)$,
\item $v\in W^{\sigma,q}_{\delta}(\mathbb{R}^n)$, where $\delta \in \mathbb{R}$, $q\in (1,\infty)$ and  $(i)$ $\sigma \in
[-s,s]$ ($\sigma \neq  -s$ if $s \not \in \mathbb{N}_0$;  $\sigma
\neq  s$ if $s \not \in \mathbb{N}_0$ and $q<p$), $(ii)$
$\sigma-\frac{n}{q} \in [-n-s+\frac{n}{p}, s-\frac{n}{p}]$.
\end{itemize}
Then: $f(u)v \in W^{\sigma,q}_{\delta}(\mathbb{R}^n)$ and
moreover the map taking $(u,v)$ to $f(u)v$ is continuous.
\end{lemma}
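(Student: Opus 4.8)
The plan is to reduce the statement to a composition (Nemytskii) estimate for the nonlinearity, followed by an application of the weighted multiplication lemma and a weighted embedding. First, since $s>\frac{n}{p}$ and $\rho<0$, Theorem~\ref{thmA1} gives $W^{s,p}_{\rho}(\mathbb{R}^n)\hookrightarrow C^0_{\rho}(\mathbb{R}^n)\hookrightarrow L^{\infty}(\mathbb{R}^n)$, so $u$ is bounded, say $|u|\le K$, and only the values of $f$ on $[-K,K]$ enter. Write $f(u)=f(0)+g(u)$ with $g(t):=f(t)-f(0)$ smooth and $g(0)=0$. The key claim is that $g(u)\in W^{s,p}_{\rho}(\mathbb{R}^n)$, with $u\mapsto g(u)$ continuous (indeed locally Lipschitz) from $W^{s,p}_{\rho}$ to $W^{s,p}_{\rho}$. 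Granting this, $f(u)$ lies in the affine space $\mathbb{R}\oplus W^{s,p}_{\rho}$, and $f(u)v=f(0)v+g(u)v$, where $f(0)v\in W^{\sigma,q}_{\delta}$ is trivial and $g(u)v$ is handled afterwards.

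For the composition claim I would argue as in the standard Moser estimates, adapted to weighted spaces. When $s=k\in\mathbb{N}_0$, expand $\partial^{\alpha}g(u)$ ($1\le|\alpha|\le k$) by the Fa\`a di Bruno formula as a finite sum of terms $g^{(j)}(u)\,\partial^{\beta_1}u\cdots\partial^{\beta_j}u$ with $j\ge1$, $|\beta_i|\ge1$, $\sum_i|\beta_i|=|\alpha|$; each $g^{(j)}(u)$ is bounded (a continuous function of the bounded function $u$), and each $\partial^{\beta_i}u\in W^{s-|\beta_i|,p}_{\rho-|\beta_i|}$, so by Lemma~\ref{lemA1} (applied iteratively, together with Corollary~\ref{coroA1} when the exponents are in the algebra range) the product lies in $L^p_{j\rho-|\alpha|}$, and since $\rho<0$ and $j\ge1$ we have $j\rho-|\alpha|\le\rho-|\alpha|$, whence $L^p_{j\rho-|\alpha|}\hookrightarrow L^p_{\rho-|\alpha|}$ by Theorem~\ref{thmA1}; thus $\partial^{\alpha}g(u)\in L^p_{\rho-|\alpha|}$ and $g(u)\in W^{k,p}_{\rho}$ by Lemma~\ref{lemA4}, with $\|g(u)\|_{k,p,\rho}\le C(K,\|u\|_{k,p,\rho})\,\|u\|_{k,p,\rho}$. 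For non-integer $s=k+\theta$ one estimates in addition the Gagliardo seminorm of $\partial^{\alpha}g(u)$, $|\alpha|=k$, using $|g(u(x))-g(u(y))|\le\|g'\|_{L^{\infty}[-K,K]}|u(x)-u(y)|$ together with the product rule, or alternatively invokes a nonlinear interpolation argument based on the interpolation identities of Theorem~\ref{thmA1}; the local Lipschitz bound follows the same way after writing $g(u_1)-g(u_2)=(u_1-u_2)\int_0^1 g'\big(u_2+t(u_1-u_2)\big)\,dt$ and applying the same expansion to the integrand. This weighted, low-regularity composition estimate is precisely the technical point that is not readily available in the literature; see~\cite{35}.

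With the composition claim in hand the remainder is bookkeeping. The hypotheses (i)--(ii) on $(\sigma,q)$ relative to $(s,p)$ are exactly those under which the multiplication corollaries following Lemma~\ref{lemA13} give boundedness of pointwise multiplication $W^{s,p}\times W^{\sigma,q}\to W^{\sigma,q}$; passing to the weighted setting via Lemma~\ref{lemA1} with $(s_1,p_1,\delta_1)=(s,p,\rho)$ and $(s_2,p_2,\delta_2)=(\sigma,q,\delta)$ — whose conditions (i)--(v) one checks to reduce exactly to $\sigma\le s$, $s+\sigma\ge0$, $n(\tfrac1p-\tfrac1q)\le s-\sigma$, $s>\tfrac np$, and $s+\sigma\ge n(\tfrac1p+\tfrac1q-1)$, i.e.\ to hypotheses (i)--(ii), the excluded endpoint cases being ruled out by the parenthetical conditions on $\sigma$ — yields $g(u)v\in W^{\sigma,q}_{\rho+\delta}$. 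Since $\rho<0$, Theorem~\ref{thmA1} gives $W^{\sigma,q}_{\rho+\delta}\hookrightarrow W^{\sigma,q}_{\delta}$, so $f(u)v=f(0)v+g(u)v\in W^{\sigma,q}_{\delta}$ with $\|f(u)v\|_{\sigma,q,\delta}\le C(K,\|u\|_{s,p,\rho})\,\|v\|_{\sigma,q,\delta}$. Continuity in $v$ for fixed $u$ is immediate since $v\mapsto f(u)v$ is linear and bounded; continuity in $u$ follows from the local Lipschitz bound on $u\mapsto g(u)$ and boundedness of multiplication by the fixed $v$; and joint continuity follows from
\begin{equation*}
\|f(u_1)v_1-f(u_2)v_2\|_{\sigma,q,\delta}\le C(K_1)\,\|v_1-v_2\|_{\sigma,q,\delta}+C\,\|g(u_1)-g(u_2)\|_{s,p,\rho}\,\|v_2\|_{\sigma,q,\delta},
\end{equation*}
where $K_1$ bounds $\|u_1\|_{s,p,\rho}$ near the limiting point. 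The passage from $\mathbb{R}^n$ to AF manifolds is by the partition-of-unity argument already used for Lemma~\ref{lemA1}. The main obstacle is the weighted composition estimate of the second paragraph, in particular its fractional-order case; everything else is assembly of the embedding and multiplication results of Appendix~\ref{app:spaces}.
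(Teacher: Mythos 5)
Your strategy --- decompose $f(u)=f(0)+g(u)$ with $g(0)=0$, establish the weighted Nemytskii estimate $g(u)\in W^{s,p}_\rho$ together with its local Lipschitz bound, and then multiply by $v$ via Lemma~\ref{lemA1} and the embedding $W^{\sigma,q}_{\rho+\delta}\hookrightarrow W^{\sigma,q}_\delta$ --- is a genuinely different route from the one taken in the paper, and your reduction of the conditions of Lemma~\ref{lemA1} (with $(s_1,p_1,\delta_1)=(s,p,\rho)$, $(s_2,p_2,\delta_2)=(\sigma,q,\delta)$, target $\sigma$) to hypotheses (i)--(ii) of the present lemma is correct. But the crucial weighted composition estimate for non-integer $s$ is left as a genuine gap, and the two ways you propose to close it do not work as stated. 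The Fa\`a di Bruno argument for integer $s=k$ is fine, but for $s=k+\theta$ with $\theta\in(0,1)$ the Gagliardo seminorm of each Fa\`a di Bruno term $g^{(j)}(u)\,\partial^{\beta_1}u\cdots\partial^{\beta_j}u$ is not controlled by the Lipschitz bound $|g^{(j)}(u(x))-g^{(j)}(u(y))|\le\|g^{(j+1)}\|_{L^\infty}\,|u(x)-u(y)|$ alone; you also need to estimate the fractional seminorm of the products $\partial^{\beta_1}u\cdots\partial^{\beta_j}u$, which is a weighted fractional Leibniz rule --- essentially the same species of statement the lemma is designed to circumvent. And ``nonlinear interpolation'' cannot simply be invoked for the map $u\mapsto g(u)$: the real/complex interpolation identities of Theorem~\ref{thmA1} apply to linear (or bilinear) maps, and the known nonlinear interpolation theorems require a Lipschitz structure on both endpoint spaces that you have not established. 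So the key step is asserted rather than proved.

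The paper avoids this difficulty by never establishing $g(u)\in W^{s,p}_\rho$ at all. It works directly with the dyadic definition of $\|f(u)v\|_{W^{\sigma,q}_\delta}$: on the $j$-th annulus it applies the \emph{unweighted} multiplication $W^{s,p}\times W^{\sigma,q}\hookrightarrow W^{\sigma,q}$ to $\eta_j f(R_j u)$ and $S_{2^j}(\varphi_j v)$, and the \emph{unweighted}, cutoff-localized Nemytskii fact ($u\in W^{t,q}$, $tq>n\Rightarrow \eta f(u)\in W^{t,q}$ continuously), which is standard. The weight $2^{-q\delta j}$ is then attached entirely to the $v$-factor; the only global input coming from the weights is that $R_ju\to 0$ in $W^{s,p}$ as $j\to\infty$, precisely because $\rho<0$, so $\eta f(R_j u)\to\eta f(0)$ and the $W^{s,p}$-norms are uniformly bounded in $j$. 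In short, the paper reduces the weighted fractional-order composition problem to the well-known unweighted one plus dyadic bookkeeping, whereas your route needs a weighted fractional-order composition theorem that you do not supply. If you wish to keep your decomposition, you would have to prove the weighted Nemytskii estimate by the same dyadic-shell localization; at that point your argument would in substance coincide with the paper's.
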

Note: The claim of the above lemma holds true even if $\sigma=-s
\not \in \mathbb{Z}$ provided we additionally assume
 $\frac{1}{p}+\frac{1}{q}= 1$.
\begin{proof}{\bf (Lemma~\ref{lempA1})}
A proof for the case $p=q=2$ is given in \cite{2}. Here we use the
multiplication lemma to extend that proof to our general setting.
In the proof we make use of the following facts:
\begin{itemizeX}
\item \textbf{Fact 1}:  If $\eta$ is a smooth function with compact
support, $f$ is as in the statement of lemma, and $u\in W^{t,q}$
with $tq>n$, then $\eta f(u)\in W^{t,q}$ and the map taking $u$ to
$\eta f(u)$ is continuous from $W^{t,q}$ to $W^{t,q}$.
\item \textbf{Fact 2}: For all $j\geq 1$, $S_{2^j}\varphi_j=S_{2^j}
S_{2^{-j}}\varphi=\varphi$. So $S_{2^j}\varphi_j$ is zero if
$x\not \in B_2\setminus B_{\frac{1}{2}}$. Also it is easy to see
that for $x \in B_2\setminus B_{\frac{1}{2}}$, $\varphi_k(2^j
x)=0$ if $k \not \in \{j-1, j, j+1\}$. Since for all $x$,
$\sum_{k=0}^{\infty}\varphi_k(2^j x)=1$, we can conclude that for
$x \in B_2\setminus B_{\frac{1}{2}}$
\begin{equation*}
\sum_{k=j-1}^{j+1} \varphi_k(2^j x)=1.
\end{equation*}
For $j=0$, $S_{2^j}\varphi_j$ is zero if $x\not \in B_2$; one can
easily check that if $j=0$, the above equality holds true for all
$x\in B_2$.
\item \textbf{Fact 3}: $\|S_r u\|_{W^{t,e}}\leq
C(r,t,e,n)\|u\|_{W^{t,e}}$.
\end{itemizeX}
We prove the lemma in six steps:

 \textbf{Step 1}:  Suppose $u$ and $v$ satisfy the hypotheses of
the lemma. Then considering \textbf{Fact 2} and the fact that
$W^{s,p}\times W^{\sigma, q}\hookrightarrow W^{\sigma,q}$
, we can write
\begin{align*}
\|f(u)v\|_{W^{\sigma,q}_{\delta}}^q &=\sum_{j=0}^{\infty}
2^{-q\delta j}\|S_{2^j}(\varphi_j f(u) v)\|_{W^{\sigma,q}}^q\\
&=\sum_{j=0}^{\infty} 2^{-q\delta j}
\|\sum_{k=j-1}^{j+1}(S_{2^j}\varphi_k)f(\sum_{i=j-1}^{j+1}S_{2^{j-i}}S_{2^i}(\varphi_i
u ))S_{2^j}(\varphi_j  v)\|_{W^{\sigma,q}}^q\\
&\preceq \sum_{j=0}^{\infty} 2^{-q\delta j}
\|\sum_{k=j-1}^{j+1}(S_{2^j}\varphi_k)f(\sum_{i=j-1}^{j+1}S_{2^{j-i}}S_{2^i}(\varphi_i
u ))\|_{W^{s,p}}^q \|S_{2^j}(\varphi_j  v)\|_{W^{\sigma,q}}^q.
\end{align*}
In the second line above we made use of the fact that for $j\geq
1$, $S_{2^j}\varphi_j$ is zero if $x\not \in B_2\setminus
B_{\frac{1}{2}}$ and the following equality holds over
$B_2\setminus B_{\frac{1}{2}}$
\begin{align*}
S_{2^j}f(u) & = f(S_{2^j}u)=f(u(2^j x))=f(\sum_{i=j-1}^{j+1}
\varphi_i(2^j x)u(2^j x)) =f(\sum_{i=j-1}^{j+1}S_{2^j} (\varphi_i
u))
\\
&=f(\sum_{i=j-1}^{j+1}S_{2^{j-i}}S_{2^i}(\varphi_i u )).
\end{align*}
If $j=0$, $S_{2^j}\varphi_j$ is zero if $x\not \in B_2$ and the
above equality holds over $B_2$.

 For all j define
\begin{equation*}
R_j u=\sum_{i=j-1}^{j+1}S_{2^{j-i}}S_{2^i}(\varphi_i u ).
\end{equation*}
So we have
\begin{equation*}
\|f(u)v\|_{W^{\sigma,q}_{\delta}}^q\preceq \sum_{j=0}^{\infty}
2^{-q\delta j} \|\sum_{k=j-1}^{j+1}(S_{2^j}\varphi_k)f(R_j u
)\|_{W^{s,p}}^q \|S_{2^j}(\varphi_j  v)\|_{W^{\sigma,q}}^q.
\end{equation*}
\textbf{Step 2}: Note that if $g\in W^{s,p}_{\rho}$ since $\rho<0$
we have $S_{2^i} (\varphi_i g)\rightarrow 0$ in $W^{s,p}$ as
$i\rightarrow \infty$. Indeed, $2^{-p\rho i}\geq 1$ and therefore
we may write
\begin{align*}
\|g\|^p_{W^{s,p}_{\rho}}< \infty
   & \Rightarrow \sum_{i=0}^{\infty}
2^{-p\rho i}\|S_{2^i}(\varphi_i g)\|_{W^{s,p}}^p < \infty
     \Rightarrow \sum_{i=0}^{\infty} \|S_{2^i}(\varphi_i
g)\|_{W^{s,p}}^p<\infty
\\
& \Rightarrow \lim_{i\rightarrow \infty}S_{2^i} (\varphi_i g) =0
\quad \textrm{in}\, W^{s,p}
\end{align*}
Moreover it follows from $2^{-p\rho i}\geq 1$  that if $g\in
W^{s,p}_{\rho}$
 with $\rho<0$ then it holds that $\|S_{2^i}(\varphi_i g)\|_{W^{s,p}}\leq \|g\|_{W^{s,p}_{\rho}}$ for all $i\geq 0$. Also
 we have
\begin{align*}
\|R_j
g-0\|_{W^{s,p}}&=\|\sum_{i=j-1}^{j+1}S_{2^{j-i}}S_{2^i}(\varphi_i
g )\|_{W^{s,p}}\leq
\sum_{i=j-1}^{j+1}\|S_{2^{j-i}}S_{2^i}(\varphi_i g )\|_{W^{s,p}}\\
&\leq \sum_{i=j-1}^{j+1}(\|S_{2^{-1}}S_{2^i}(\varphi_i g
)\|_{W^{s,p}}+\|S_{2^{0}}S_{2^i}(\varphi_i g )\|_{W^{s,p}}
\\
& \quad \quad +\|S_{2^{1}}S_{2^i}(\varphi_i g )\|_{W^{s,p}})\\
&\preceq \sum_{i=j-1}^{j+1}\|S_{2^i}(\varphi_i g
)\|_{W^{s,p}}\rightarrow 0.
\end{align*}
\textbf{Step 3}: Let $\eta_j:=
\sum_{k=j-1}^{j+1}(S_{2^j}\varphi_k)$. For $j>1$ we may write
\begin{align*}
\sum_{k=j-1}^{j+1}(S_{2^j}\varphi_k) & =
\sum_{k=j-1}^{j+1}S_{2^j}S_{2^{-k}}\varphi=\sum_{k=j-1}^{j+1}S_{2^j}S_{2^{-k}}S_2\varphi_1=\sum_{k-j=-1}^{k-j=1}S_{2^{j-k+1}}\varphi_1
\\
& =\sum_{i=0}^2 S_{2^i} \varphi_1=:\eta.
\end{align*}
That is for $j>1$, $\eta_j$ does not depend on $j$. Now, by
\textbf{Step 2}, we know that $R_j u\rightarrow 0$ in $W^{s,p}$.
So it follows from \textbf{Fact 1} that $\eta f(R_j u)\rightarrow
\eta f(0)$ in $W^{s,p}$. Consequently $\{\|\eta f(R_j
u)\|_{W^{s,p}}\}_{j=2}^{\infty}$ is a bounded sequence:
\begin{equation*}
\exists \, M_1 \quad \textrm{such that} \quad \forall \, j\geq 2
\quad \|\eta f(R_j u)\|_{W^{s,p}}<M_1.
\end{equation*}
Let
\begin{equation*}
M= \max\{M_1, \|\eta_1 f(R_1 u)\|_{W^{s,p}}, \|\eta_0 f(R_0
u)\|_{W^{s,p}}\}
\end{equation*}
(M is independent of $j$ but it may depend on $u$).

So by what was proved in \textbf{Step 1} we have
\begin{equation*}
\|f(u)v\|_{W^{\sigma,q}_{\delta}}^q\preceq \sum_{j=0}^{\infty}
2^{-q\delta j} M^q \|S_{2^j}(\varphi_j  v)\|_{W^{\sigma,q}}^q=M^q
\|v\|_{W^{\sigma,q}_{\delta}}^q.
\end{equation*}
This shows that $f(u)v$ is in $W^{\sigma,q}_{\delta}$. Now it
remains to prove the continuity.

\noindent \textbf{Step 4}: Let $(u_k,v_k)$ be a sequence in
$W^{s,p}_{\rho} \times W^{\sigma,q}_{\delta}$ that converges to
$(u,v)\in W^{s,p}_{\rho} \times W^{\sigma,q}_{\delta}$. We must
show that $f(u_k)v_k \rightarrow f(u)v$ in
$W^{\sigma,q}_{\delta}$. Note that
\begin{equation*}
f(u)v-f(u_k)v_k=f(u)(v-v_k)+(f(u)-f(u_k))v_k.
\end{equation*}
By what was proved in \textbf{Step 3}, we have
\begin{equation*}
\|f(u)(v-v_k)\|_{W^{\sigma,q}_{\delta}}\preceq
\|v-v_k\|_{W^{\sigma,q}_{\delta}}\rightarrow 0.
\end{equation*}
So it remains to show that
$\|(f(u)-f(u_k))v_k\|_{W^{\sigma,q}_{\delta}}\rightarrow 0$.

\noindent \textbf{Step 5}: By calculations similar to what was
done in \textbf{Step 1} we have
\begin{align*}
\|(f(u)-f(u_k))v_k\|_{W^{\sigma,q}_{\delta}}^q &\preceq
\sum_{j=0}^{\infty} 2^{-q\delta j} \|\eta_j (f(R_j u )-f(R_j u_k))
\|_{W^{s,p}}^q \|S_{2^j}(\varphi_j  v_k)\|_{W^{\sigma,q}}^q\\
&\preceq \|v_k\|_{W^{\sigma,q}_{\delta}}^q \sup_{j\geq 0}\|\eta_j
(f(R_j u )-f(R_j u_k)) \|_{W^{s,p}}^q.
\end{align*}
Note that $\{v_k\}$ is convergent and so $\{v_k\}$ is bounded in
$W^{\sigma,q}_{\delta}$. Thus it is enough to show that
$\sup_{j\geq 0}\|\eta_j (f(R_j u )-f(R_j u_k))
\|_{W^{s,p}}\rightarrow 0$ as $k\rightarrow \infty$.

\noindent \textbf{Step 6}: We need to show
\begin{equation*}
\forall\, \, \epsilon>0\,\, \exists \,\, N\quad s.t. \quad \forall
\, k\geq N \quad \sup_{j\geq 0}\|\eta_j (f(R_j u )-f(R_j u_k))
\|_{W^{s,p}}<\epsilon.
\end{equation*}
Let $\epsilon>0$ be given. Note that
\begin{equation}\lab{eqproof1}
\|\eta_j (f(R_j u )-f(R_j u_k)) \|_{W^{s,p}}\leq \|\eta_j (f(R_j u
)-f(0)) \|_{W^{s,p}}+\|\eta_j (f(0)-f(R_j u_k)) \|_{W^{s,p}}.
\end{equation}
Let's start by considering the first term on RHS. By \textbf{Fact
1}, there exists $\alpha>0$ such that if $\|g\|_{W^{s,p}}<\alpha$
then $\|\eta_j (f(g )-f(0)) \|_{W^{s,p}}<\frac{\epsilon}{4}$.
Note that for $j>1$, $\eta_j$ does not depend on $j$ and so
$\alpha$ can be chosen independent of $j$. By \textbf{Step 2} we
know that $R_ju\rightarrow 0$ in $W^{s,p}$ and so there exists a
number $P\geq 2$ such that for $j\geq P$, $\|R_j
u\|_{W^{s,p}}<\frac{\alpha}{2}$. It follows that
\begin{equation*}
\forall \, j\geq P \quad \|\eta_j (f(R_j u )-f(0))
\|_{W^{s,p}}<\frac{\epsilon}{4}
\end{equation*}
So
\begin{equation}\lab{eqproof2}
\sup_{j\geq P}\|\eta_j (f(R_j u )-f(0)) \|_{W^{s,p}}\leq
\frac{\epsilon}{4}.
\end{equation}
Now we show that there exists $N_1$ such that if $k\geq N_1$ then
it holds that $\sup_{j\geq P} \|\eta (f(0)-f(R_j u_k))
\|_{W^{s,p}}\leq \frac{\epsilon}{4}$. (Note that since $P \geq 2$
we have $\eta_j=\eta$.)
\begin{itemizeX}
\item \textbf{Claim:} For all $j$, $R_j u_k\rightarrow R_j u$ in
$W^{s,p}$ uniformly with respect to $j$ as $k\rightarrow \infty$.
\item \textbf{Proof of the claim:} By what was stated in \textbf{Step 2}, since
we have that $\rho<0$, $\|S_{2^i}(\varphi_i(u_k-u))\|_{W^{s,p}}
\leq \|u_k-u\|_{W^{s,p}_{\rho}}$ for all $i$, and we have
\begin{align*}
\|R_j (u_k-u)\|_{W^{s,p}} & \preceq
\sum_{i=j-1}^{j+1}\|S_{2^i}(\varphi_i(u_k-u))\|_{W^{s,p}}\leq
\sum_{i=j-1}^{j+1} \|u_k-u\|_{W^{s,p}_{\rho}}
\\
&=3\|u_k-u\|_{W^{s,p}_{\rho}}\rightarrow 0 \quad \textrm{uniform
in $j$ as} \quad k\rightarrow \infty
\end{align*}
\end{itemizeX}
Therefore
\begin{equation*}
\exists \, N_1\,\, s.t.\,\, \forall j \quad \forall k\geq N_1
\quad \|R_j (u_k-u)\|_{W^{s,p}}<\frac{\alpha}{2}.
\end{equation*}
In particular, for all $j\geq P$ and $k\geq N_1$ we have
\begin{equation*}
\|R_j u_k\|_{W^{s,p}}\leq \|R_j (u_k-u)\|_{W^{s,p}}+\|R_j u
\|_{W^{s,p}}<\frac{\alpha}{2}+\frac{\alpha}{2}=\alpha.
\end{equation*}
Consequently for all $j\geq P$ and $k\geq N_1$ we have
\begin{equation*}
\|\eta (f(0)-f(R_j u_k)) \|_{W^{s,p}}< \frac{\epsilon}{4} ,
\end{equation*}
which implies
\begin{equation}\lab{eqproof3}
\forall \,\, k\geq N_1 \quad \sup_{j\geq P} \|\eta (f(0)-f(R_j
u_k)) \|_{W^{s,p}}\leq \frac{\epsilon}{4}.
\end{equation}
From (\ref{eqproof1}), (\ref{eqproof2}), and  (\ref{eqproof3}) we
get
\begin{equation*}
\forall \,\, k\geq N_1 \quad \sup_{j\geq P}\|\eta_j (f(R_j u
)-f(R_j u_k)) \|_{W^{s,p}}\leq \frac{\epsilon}{2}.
\end{equation*}
Now note that by the claim that was proved above, we know that
$R_j u_k \rightarrow R_j u$ in $W^{s,p}$. So by \textbf{Fact 1},
$\|\eta_j (f(R_j u )-f(R_j u_k)) \|_{W^{s,p}}\rightarrow 0$ for
any fixed $j$ as $k \rightarrow \infty$. In particular for $0\leq
j \leq P-1$,
\begin{equation*}
\exists \, M_j\,\, s.t.\,\, \forall\, k\geq M_j\quad \|\eta_j
(f(R_j u )-f(R_j u_k)) \|_{W^{s,p}}< \frac{\epsilon}{2}.
\end{equation*}
So if we let $N=\max\{N_1,M_0,M_1,...,M_{P-1}\}$, then for all
$k\geq N$
\begin{align*}
&\sup_{j\geq P}\|\eta_j (f(R_j u )-f(R_j u_k)) \|_{W^{s,p}}\leq
\frac{\epsilon}{2}\\
&\sup_{0\leq j\leq P-1}\|\eta_j (f(R_j u )-f(R_j u_k))
\|_{W^{s,p}}\leq \frac{\epsilon}{2}.
\end{align*}
That is
\begin{equation*}
\forall\,\, k\geq N \quad \sup_{j\geq 0}\|\eta_j (f(R_j u )-f(R_j
u_k)) \|_{W^{s,p}}\leq \frac{\epsilon}{2}<\epsilon,
\end{equation*}
which is exactly what we wanted to prove.
\end{proof}
\begin{remark}
Obviously the above result also holds true if $f$ is only smooth
on an open interval containing the range of $u$. By using
partition of unity and charts one can show that the claim also
holds for AF manifolds (of any class).
\end{remark}
\begin{corollary}\lab{corpA1}
Suppose $f:\mathbb{R}\rightarrow \mathbb{R}$ is smooth and
$f(0)=0$. If $u\in W^{s,p}_{\rho}$ where $sp>n$, $\rho<0$ then
$f(u) \in W^{s,p}_{\rho}$ and the map taking $u$ to $f(u)$ is
continuous from $W^{s,p}_{\rho}$ to $W^{s,p}_{\rho}$.
\end{corollary}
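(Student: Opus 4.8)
The plan is to reduce the statement to the multiplication-type result Lemma~\ref{lempA1} by factoring $f$. First I would invoke Hadamard's lemma: since $f$ is smooth and $f(0)=0$, the function
\begin{equation*}
g(x):=\int_0^1 f'(tx)\,dt
\end{equation*}
is smooth on all of $\mathbb{R}$ and satisfies $f(x)=x\,g(x)$ for every $x\in\mathbb{R}$. Consequently $f(u)=u\cdot g(u)$ pointwise, so the problem becomes one of showing that multiplying $u$ by $g(u)$ stays in $W^{s,p}_{\rho}$ and depends continuously on $u$.

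Next I would apply Lemma~\ref{lempA1} with the lemma's smooth function taken to be $g$, the lemma's $u$ taken to be our $u\in W^{s,p}_{\rho}$, and the lemma's $v$ also taken to be $u$, with the choice of exponents $\sigma=s$, $q=p$, $\delta=\rho$. The hypotheses of Lemma~\ref{lempA1} are then all satisfied: $s>\frac{n}{p}$ and $\rho<0$ hold by assumption (since $sp>n$); condition (i) holds because $\sigma=s\in[-s,s]$, with $\sigma\neq -s$ (note $s>0$), and the excluded case $\sigma=s\notin\mathbb{N}_0$ occurs only when $q<p$, whereas here $q=p$; condition (ii) holds because $\sigma-\frac{n}{q}=s-\frac{n}{p}$ is precisely the right endpoint of the interval $[-n-s+\frac{n}{p},\,s-\frac{n}{p}]$. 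Hence Lemma~\ref{lempA1} yields $g(u)\cdot u\in W^{s,p}_{\rho}$, i.e.\ $f(u)\in W^{s,p}_{\rho}$, together with continuity of the map $(u,v)\mapsto g(u)v$ from $W^{s,p}_{\rho}\times W^{s,p}_{\rho}$ to $W^{s,p}_{\rho}$.

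Finally, to obtain continuity of $u\mapsto f(u)$ by itself, I would precompose with the continuous linear diagonal map $u\mapsto(u,u)$; the composition $u\mapsto g(u)\cdot u=f(u)$ is then continuous from $W^{s,p}_{\rho}$ to $W^{s,p}_{\rho}$. The same argument transfers verbatim to AF manifolds of any class, since Lemma~\ref{lempA1} holds there (by the partition-of-unity/chart argument recorded in the remark following it) and the Hadamard factorization is a purely pointwise statement.

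I do not expect a genuine obstacle here; the only point demanding a little care is checking that the exponent triple $(\sigma,q,\delta)=(s,p,\rho)$ really lies in the admissible range of Lemma~\ref{lempA1} in the borderline case $\sigma=s\notin\mathbb{N}_0$, and this works out precisely because we take $q=p$ rather than $q<p$.
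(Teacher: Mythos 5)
Your proposal is correct and follows essentially the same route as the paper: factor $f(x)=x\,F(x)$ using the smoothness of $f$ and $f(0)=0$ (the paper cites Taylor's theorem where you cite Hadamard's lemma — these are the same factorization), then apply Lemma~\ref{lempA1} with $v=u$. Your added verification that the exponent triple $(s,p,\rho)$ lies in the admissible range of Lemma~\ref{lempA1} is accurate and is a useful elaboration the paper leaves implicit.
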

\begin{proof}{\bf (Corollary~\ref{corpA1})}
$f(0)=0$, so by Taylor's theorem we have $f(x)=xF(x)$
where $F$ is smooth. Therefore by the previous lemma,
$f(u)=uF(u)\in W^{s,p}_{\rho}$ and moreover the map taking $u$ to
$f(u)=uF(u)$ is continuous from $W^{s,p}_{\rho}$ to
$W^{s,p}_{\rho}$.
\end{proof}
\begin{lemma}\lab{lemA8}
Let the following assumptions hold:
\begin{itemize}
\item $p\in(1,\infty)$, $s\in (\frac{n}{p},\infty)$, $\delta<0$
and $u\in W^{s,p}_{\delta}$,
\item $\nu\in \mathbb{R}$, $\sigma \in [-1,1]$, $\theta=\frac{1}{p}-\frac{s-1}{n}$, $\frac{1}{q}\in (\frac{1+\sigma}{2}\theta,
1-\frac{1-\sigma}{2}\theta)$ and $v\in W^{\sigma,q}_{\nu}$,
\item $f:[\inf u,\sup u]\rightarrow \mathbb{R}$ is a smooth
function. (Note that $W^{s,p}_{\delta}\hookrightarrow
C^{0}_{\delta}\hookrightarrow L^{\infty}$ and therefore $\inf u$
and $\sup u$ are finite numbers.)
\end{itemize}
Then:
\begin{equation*}
\|v f(u)\|_{\sigma,q,\nu}\preceq
\|v\|_{\sigma,q,\nu}(\|f(u)\|_{\infty}+\|f'(u)\|_{\infty}\|u\|_{s,p,\delta}).
\end{equation*}
\end{lemma}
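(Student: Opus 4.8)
The plan is to work on $\mathbb{R}^n$ (the AF‑manifold statement then follows by the usual partition‑of‑unity and chart argument, exactly as in the remark after Lemma~\ref{lempA1}), and to treat the order $\sigma$ in three regimes: $\sigma=0$ by inspection, $\sigma=\pm1$ by a product/chain‑rule computation combined with the weighted multiplication lemma, and the remaining $\sigma\in(-1,1)\setminus\{0\}$ by real interpolation between those two endpoints. The point that makes the interpolation step work is that the hypothesis $\tfrac1q\in\bigl(\tfrac{1+\sigma}{2}\theta,\,1-\tfrac{1-\sigma}{2}\theta\bigr)$ is \emph{precisely} the admissibility condition for it. For $\sigma=0$ we have $W^{0,q}_\nu=L^q_\nu$ and $\|vf(u)\|_{L^q_\nu}\le\|f(u)\|_{L^\infty}\|v\|_{L^q_\nu}$ directly from Remark~\ref{remunweighted}, which already gives the claim.

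For $\sigma=1$, first extend $f$ to a smooth function on all of $\mathbb{R}$; this does not change $\|f(u)\|_\infty$ or $\|f'(u)\|_\infty$. Since $u\in W^{s,p}_{\delta}\subset W^{1,p}_{loc}\cap L^{\infty}$, the chain rule gives $f(u)\in W^{1,p}_{loc}\cap L^\infty$ with $\nabla f(u)=f'(u)\nabla u$, and $\nabla$ maps $W^{s,p}_\delta$ continuously into $W^{s-1,p}_{\delta-1}$, so $\|\nabla u\|_{s-1,p,\delta-1}\preceq\|u\|_{s,p,\delta}$. Using the equivalent norm $\|w\|_{1,q,\nu}\sim\|w\|_{L^q_\nu}+\|\nabla w\|_{L^q_{\nu-1}}$ and $\nabla(vf(u))=(\nabla v)f(u)+v\,f'(u)\nabla u$, the terms $\|vf(u)\|_{L^q_\nu}$ and $\|(\nabla v)f(u)\|_{L^q_{\nu-1}}$ are bounded by $\|f(u)\|_\infty\|v\|_{1,q,\nu}$ as above, while $\|v\,f'(u)\nabla u\|_{L^q_{\nu-1}}\le\|f'(u)\|_\infty\|v\nabla u\|_{L^q_{\nu-1}}$. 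Here I apply the weighted multiplication lemma (Lemma~\ref{lemA1}) to the exponent triple $(1,q)\times(s-1,p)\to(0,q)$ to get $\|v\nabla u\|_{L^q_{\nu+\delta-1}}\preceq\|v\|_{1,q,\nu}\|\nabla u\|_{s-1,p,\delta-1}$, and then $L^q_{\nu+\delta-1}\hookrightarrow L^q_{\nu-1}$ since $\delta<0$ (Theorem~\ref{thmA1}). The hypotheses of Lemma~\ref{lemA1} for this triple are satisfied: the embedding‑type conditions reduce to $n(\tfrac1p-\tfrac1q)<s-1$, i.e.\ $\tfrac1q>\theta=\tfrac1p-\tfrac{s-1}{n}$ (which is exactly the $\sigma=1$ case of the hypothesis on $\tfrac1q$), together with $s>\tfrac np$. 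Collecting the three bounds gives $\|vf(u)\|_{1,q,\nu}\preceq\|v\|_{1,q,\nu}\bigl(\|f(u)\|_\infty+\|f'(u)\|_\infty\|u\|_{s,p,\delta}\bigr)$. The case $\sigma=-1$ follows by duality: with $(W^{-1,q}_\nu)^*=W^{1,q'}_{-n-\nu}$ (Theorem~\ref{thmA1}) and $\langle vf(u),w\rangle=\langle v,f(u)w\rangle$, one gets $\|vf(u)\|_{-1,q,\nu}=\sup_{w\ne0}|\langle v,f(u)w\rangle|/\|w\|_{1,q',-n-\nu}$, and applying the $\sigma=1$ estimate with exponent $q'$ and weight $-n-\nu$ (admissible because the $\sigma=-1$ hypothesis $\tfrac1q<1-\theta$ is exactly $\tfrac1{q'}>\theta$) yields the same bound with $q$ replaced by $-1$ in the order.

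For $\sigma\in(-1,1)\setminus\{0\}$ put $t=\tfrac{1+\sigma}{2}\in(0,1)$. Since
$$\Bigl(\tfrac{1+\sigma}{2}\theta,\ 1-\tfrac{1-\sigma}{2}\theta\Bigr)=\bigl((1-t)\cdot0+t\,\theta,\ (1-t)(1-\theta)+t\bigr),$$
one may choose $a,b\in(1,\infty)$ with $\tfrac1a<1-\theta$, $\tfrac1b>\theta$ and $\tfrac1q=\tfrac{1-t}{a}+\tfrac{t}{b}$. By the two endpoint cases the linear multiplication operator $M_{f(u)}\colon w\mapsto f(u)w$ is bounded on $W^{-1,a}_\nu$ and on $W^{1,b}_\nu$ with operator norm $\preceq N:=\|f(u)\|_\infty+\|f'(u)\|_\infty\|u\|_{s,p,\delta}$ in each case. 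Real interpolation (Theorem~\ref{thmA1}) gives $W^{\sigma,q}_\nu=(W^{-1,a}_\nu,W^{1,b}_\nu)_{t,q}$ — the excluded configuration ``$s_0\ne s_1$ and $s\in\mathbb{Z}$'' does not occur, since we have already removed $\sigma=0$, and both endpoint orders $-1,1$ are integers so the ``exactly one integer'' proviso is vacuous — and the interpolation property of bounded operators yields $\|M_{f(u)}\|_{W^{\sigma,q}_\nu\to W^{\sigma,q}_\nu}\preceq N^{1-t}N^{t}=N$, i.e.\ the assertion.

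\textbf{Main obstacle.} The delicate part is the endpoint $\sigma=\pm1$: one must justify the product/chain‑rule manipulations (extending $f$, using $u\in W^{1,p}_{loc}\cap L^\infty$) and, more importantly, verify that the weighted multiplication lemma applies with the exponents $(1,q)\times(s-1,p)\to(0,q)$ — this is exactly where the lower bound on $\tfrac1q$ is consumed, and it is the reason only $f'$ (and not higher derivatives of $f$) appears in the estimate. Once the endpoints carry the clean bound $N$, the interpolation step is routine, the only care being to peel off $\sigma=0$ first so that integer‑order endpoints are admissible.
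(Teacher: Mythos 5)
Your proof is correct and follows essentially the same route as the paper's: direct endpoint estimates at $\sigma=\pm1$ via the chain rule, the equivalent weighted $W^{1,q}_\nu$ norm, and the weighted multiplication lemma consuming exactly the condition $\tfrac1q>\theta$, the $\sigma=-1$ endpoint by duality, and interpolation for the intermediate orders. The one deviation is cosmetic but a small simplification: the paper fills in $\sigma=0$ by \emph{complex} interpolation between the integer endpoints, while you peel off $\sigma=0$ at the outset with the trivial $\|vf(u)\|_{L^q_\nu}\le\|f(u)\|_\infty\|v\|_{L^q_\nu}$ bound, so that real interpolation with non-integer target order suffices for everything else.
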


\begin{proof}{\bf (Lemma~\ref{lemA8})}
First we prove the claim for the case $\sigma=1$.
We have
\begin{align*}
\|v f(u)\|_{1,q,\nu}&\preceq \|\langle x\rangle^{-\nu-\frac{n}{q}}v
f(u)\|_{L^q}+\|\langle x\rangle^{-\nu-\frac{n}{q}+1}\nabla (v f(u))\|_{L^q}\\
&\preceq \|\langle x\rangle^{-\nu-\frac{n}{q}}v
f(u)\|_{L^q}+\|\langle x\rangle^{-\nu-\frac{n}{q}+1}(\nabla v)
f(u)\|_{L^q}
\\
& \quad \quad +\|\langle x\rangle^{-\nu-\frac{n}{q}+1}v f'(u)\nabla u \|_{L^q}\\
&\preceq
\|\langle x\rangle^{-\nu-\frac{n}{q}}v\|_{L^q}\|f(u)\|_{L^{\infty}}+\|\langle x\rangle^{-\nu-\frac{n}{q}+1}\nabla
v\|_{L^q}\|f(u)\|_{L^{\infty}}
\\
& \quad \quad +\|\langle x\rangle^{-(\nu-1)-\frac{n}{q}}v
\nabla
u\|_{L^q}\|f'(u)\|_{L^{\infty}}\\
&(\textrm{note that $f$ is smooth on $[\inf u, \sup u]$ so}\,\,
f(u)\in L^{\infty}, \,\, f'(u)\in
L^{\infty})\\
&\preceq \|v\|_{1,q,\nu}\|f(u)\|_{L^{\infty}}+\|v\nabla
u\|_{L^q_{\nu-1}}\|f'(u)\|_{L^{\infty}}\\
&\preceq
\|v\|_{1,q,\nu}\|f(u)\|_{L^{\infty}}+\|v\|_{1,q,\nu}\|\nabla
u\|_{s-1,p,\delta-1}\|f'(u)\|_{L^{\infty}}
\\
& \quad \quad
(\textrm{$\frac{1}{q}\geq \theta$ so}\,\,W^{1,q}_{\nu}\times
W^{s-1,p}_{\delta-1}\hookrightarrow
L^q_{\delta+\nu-1}\hookrightarrow L^q_{\nu-1})\\
&\preceq \|v\|_{1,q,\nu}\|f(u)\|_{L^{\infty}}+\|v\|_{1,q,\nu}\|u\|_{s,p,\delta}\|f'(u)\|_{L^{\infty}}\\
&=\|v\|_{1,q,\nu}(\|f(u)\|_{L^{\infty}}+\|f'(u)\|_{L^{\infty}}\|u\|_{s,p,\delta}).
\end{align*}
Now we prove the case $\sigma=-1$ by a duality argument. Note that
\begin{equation*}
\|v f(u)\|_{-1,q,\nu}=\sup_{\eta\in C_c^{\infty}}
\frac{|\langle vf(u),\eta\rangle_{W^{-1,q}_{\nu}\times W^{1,q'}_{-n-\nu}
}|}{\|\eta\|_{1,q',-n-\nu}}.
\end{equation*}
We have
\begin{align*}
\frac{|\langle vf(u),\eta\rangle_{W^{-1,q}_{\nu}\times W^{1,q'}_{-n-\nu}
}|}{\|\eta\|_{1,q',-n-\nu}}
&= \frac{|\langle v,f(u)\eta\rangle_{W^{-1,q}_{\nu}\times W^{1,q'}_{-n-\nu}
}|}{\|\eta\|_{1,q',-n-\nu}}
\\
&\leq \frac{\|v\|_{-1,q,\nu}\|f(u)\eta\|_{
1,q',-n-\nu}}{\|\eta\|_{1,q',-n-\nu}}
\end{align*}
By assumption $\frac{1}{q}< 1-\theta$, so $\frac{1}{q'}>\theta$
and thus we can apply what was proved for the case $\sigma=1$ to
$\|f(u)\eta\|_{ 1,q',-n-\nu}$:
\begin{align*}
\frac{\|v\|_{-1,q,\nu}\|f(u)\eta\|_{1,q',-n-\nu}}{\|\eta\|_{1,q',-n-\nu}}&\preceq
\frac{\|v\|_{-1,q,\nu}[\|\eta\|_{1,q',-n-\nu}(\|f(u)\|_{L^{\infty}}+\|f'(u)\|_{L^{\infty}}\|u\|_{s,p,\delta})]}{\|\eta\|_{1,q',-n-\nu}}\\
&= \|v\|_{-1,q,\nu}
(\|f(u)\|_{L^{\infty}}+\|f'(u)\|_{L^{\infty}}\|u\|_{s,p,\delta}).
\end{align*}
Therefore
\begin{equation*}
\|v f(u)\|_{-1,q,\nu}\preceq \|v\|_{-1,q,\nu}
(\|f(u)\|_{L^{\infty}}+\|f'(u)\|_{L^{\infty}}\|u\|_{s,p,\delta}).
\end{equation*}
Now we prove the case where $\sigma \in (-1,1)$ by interpolation.
According to what was proved we have
\begin{align}\lab{eqimpor}
&\|v f(u)\|_{1,q_1,\nu} \preceq \|v\|_{1,q_1,\nu}(\|f(u)\|_{L^{\infty}}+\|f'(u)\|_{L^{\infty}}\|u\|_{s,p,\delta}),\\
 &\|v f(u)\|_{-1,q_2,\nu}\preceq \|v\|_{-1,q_2,\nu}
(\|f(u)\|_{L^{\infty}}+\|f'(u)\|_{L^{\infty}}\|u\|_{s,p,\delta}),\lab{eqimpor2}
\end{align}
where $q_1$ and $q_2$ are any two numbers that satisfy $\theta<
\frac{1}{q_1}<1$ and $0<\frac{1}{q_2}<1-\theta$. Let
$t=\frac{1-\sigma}{2}$. Clearly $t\in (0,1)$. Also note that if
we set $\frac{1}{q}=\frac{1-t}{q_1}+\frac{t}{q_2}$ then
\begin{align*}
&\frac{1}{q_1}>\theta,\,\, \frac{1}{q_2}>0 \Rightarrow
\frac{1}{q}>(1-t)\theta=\frac{1+\sigma}{2}\theta. \\
& \frac{1}{q_2}<1-\delta,\,\, \frac{1}{q_1}<1 \Rightarrow
\frac{1}{q}<1-t\theta=1-\frac{1-\sigma}{2}\theta.
\end{align*}
So by choosing appropriate $q_1$ and $q_2$ we can get any $q$
with the property that $\frac{1}{q}\in (\frac{1+\sigma}{2}\theta,
1-\frac{1-\sigma}{2}\theta)$. This implies if $\frac{1}{q}\in
(\frac{1+\sigma}{2}\theta, 1-\frac{1-\sigma}{2}\theta)$ then we
may find $q_1$ and $q_2$ for which inequalities \ref{eqimpor},
\ref{eqimpor2} hold true and moreover
\begin{align*}
&(W^{1,q_1}_{\nu}, W^{-1,q_2}_{\nu})_{t,q}=W^{\sigma,q}_{\nu} \quad \textrm{if $\sigma\neq 0$} \quad (\textrm{real interpolation})\\
&[W^{1,q_1}_{\nu}, W^{-1,q_2}_{\nu}]_t=W^{\sigma,q}_{\nu} \quad
\textrm{if $\sigma= 0$}\quad (\textrm{complex interpolation}).
\end{align*}
So by interpolation we get
\begin{equation*}
\|v f(u)\|_{\sigma,q,\nu}\preceq
\|v\|_{\sigma,q,\nu}(\|f(u)\|_{\infty}+\|f'(u)\|_{\infty}\|u\|_{s,p,\delta}).
\end{equation*}
\end{proof}

\section{Differential Operators in Weighted Spaces}
   \label{app:operators}

We now assemble some results we need for differential operators in
Weighted spaces.
Again, we limit our selves to simply stating the results we need,
unless the proof of the result is either unavailable or difficult to find
in the form we need, in which case we include a concise proof.

 Let $M$ be an $n$-dimensional AF manifold and let $E$ be a
smooth vector bundle over $M$ with fiber dimension $k$. Consider
the linear differential operator $A:\Gamma(E)\to \Gamma(E)$ of
order $m$ where $\Gamma(E)$ denotes the space of smooth sections
of $E$. By definition, we know that in any local coordinates
(trivializing $E$) $A$ can be written as $A=\sum_{|\nu|\leq
m}a_{\nu}\partial^{\nu}$ where $a_\nu$ is a $\mathbb{R}^{k\times
k}$ valued function.
\begin{definition}\lab{defellipticoperator}
Let $\alpha\in \mathbb{R}$, $\gamma\in (1,\infty)$, and $\rho<0$.
\begin{itemize}
\item We say $A$ belongs to the class
$D^{\alpha,\gamma}_{m}(E)$ if and only if $a_{\nu}\in
W^{\alpha-m+|\nu|,\gamma}$ for $|\nu|\leq m$.
\item We say $A$ belongs to the class
$D^{\alpha,\gamma}_{m,\rho}(E)$ if and only if $a_{\nu}\in
W^{\alpha-m+|\nu|,\gamma}_{\rho-m+|\nu|}$ for $|\nu|< m$ and
there are constants $a_{\nu}^{\infty}$ such that
$a_{\nu}^{\infty}-a_{\nu}\in W^{\alpha, \gamma}_{\rho}$ for all
$|\nu|=m$. We call
$A_{\infty}=\sum_{|\nu|=m}a_{\nu}^{\infty}\partial_\nu$ the
principal part of $A$ at infinity.
\item If $\alpha \gamma >n$, then the highest order coefficients
of $A\in D^{\alpha,\gamma}_{m}(E)$ are continuous and so it makes
sense to talk about their pointwise values. We say $A$ is
elliptic if for each $x$, the constant coefficient operator
$\sum_{|\nu|=m}a_\nu (x)\partial^{\nu}$ is elliptic.
\item If $\alpha \gamma >n$, then the highest order coefficients
of $A\in D^{\alpha,\gamma}_{m,\rho}(E)$ are continuous and so it
makes sense to talk about their pointwise values. We say $A$ is
elliptic if $A_{\infty}$ is elliptic and moreover for each $x$,
the constant coefficient operator $\sum_{|\nu|=m}a_\nu
(x)\partial^{\nu}$ is elliptic.
\end{itemize}
\end{definition}
\begin{theorem}\lab{thmB1} 
If $\delta\in \mathbb{R}$, $\rho<0$ and if $A\in
D^{\alpha,\gamma}_{m,\rho}(E)$ then $A$ can be viewed as a bounded
linear map
\begin{equation*}
A: W^{s,q}_{\delta}(E)\to W^{\sigma,q}_{\delta-m}(E),
\end{equation*}
provided 
\begin{align*}
  (i) & \ \gamma,q\in(1,\infty), \\
 (ii) & \ s\geq m-\alpha \quad (\textrm{ let $\frac{1}{q}+\frac{1}{\gamma}\geq 1$ if $s=m-\alpha \not \in \mathbb{Z}$}) ,\\
(iii) & \ \sigma\leq \min(s,\alpha)-m \quad (\textrm{let $\gamma \leq q$ if $\alpha-m=\sigma \not \in \mathbb{Z}$})\\
 (iv) & \ \sigma<s-m+\alpha-\dfrac{n}{\gamma}, \\
  (v) & \ \sigma-\dfrac{n}{q}\leq \alpha-\dfrac{n}{\gamma}-m, \\
 (vi) & \ s-n/q> m-n-\alpha+n/\gamma.
\end{align*}
If moreover $A_{\infty}=0$, then $A$ is a continuous map
\begin{equation*}
A: W^{s,q}_{\delta}(E)\to W^{\sigma,q}_{\delta-m+\rho}(E)
\end{equation*}
\end{theorem}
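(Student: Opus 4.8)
The plan is to reduce everything to weighted Sobolev spaces on $\mathbb{R}^n$, split $A$ into its principal part at infinity plus a remainder with decaying coefficients, and then estimate each resulting term using the boundedness of differentiation on the weighted scale together with the weighted multiplication lemma (Lemma~\ref{lemA1}) and the embeddings of Theorem~\ref{thmA1}. Since $E$ has finite fiber dimension, after choosing local trivializations everything is matrix/vector valued and the argument is componentwise, so I will treat $A$ as a scalar operator $A=\sum_{|\nu|\le m}a_\nu\partial^\nu$. Using the AF atlas $\{(U_i,\phi_i)\}_{i=1}^{k}$ and a subordinate partition of unity $\{\chi_i\}$ that defines $\|\cdot\|_{W^{s,q}_\delta(M)}$, a standard partition-of-unity/commutator argument (absorbing multiplication by the smooth, bounded cutoffs and chart data via Lemma~\ref{lemA9}, and noting $\chi_i\,Au=A(\chi_i u)+[\chi_i,A]u$ with $[\chi_i,A]$ of order $<m$) shows it suffices to prove the mapping estimate for such an $A$ on each end chart $E_1\subset\mathbb{R}^n$, and, on each interior chart, on a bounded domain with $W^{\alpha-m+|\nu|,\gamma}_{\mathrm{loc}}$ coefficients. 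The interior case is the classical one: the weight is irrelevant on a bounded set and it follows from the unweighted multiplication lemma (Lemma~\ref{lemA13}); so I focus on $\mathbb{R}^n$, where Lemma~\ref{lemA1} applies (its extension to AF manifolds being recorded in the remark following it).

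On $\mathbb{R}^n$ I would write $Au=A_\infty u+(A-A_\infty)u$ with $A_\infty=\sum_{|\nu|=m}a_\nu^\infty\partial^\nu$ constant-coefficient and $A-A_\infty=\sum_{|\nu|=m}(a_\nu-a_\nu^\infty)\partial^\nu+\sum_{|\nu|<m}a_\nu\partial^\nu$, so that in the remainder the coefficient of $\partial^\nu$ lies in $W^{\alpha-m+|\nu|,\gamma}_{\rho-m+|\nu|}$ for \emph{every} $|\nu|\le m$. For the constant-coefficient part I use that $\partial^\nu:W^{s,q}_\delta\to W^{s-m,q}_{\delta-m}$ is bounded and that hypothesis~(iii) gives $\sigma\le\min(s,\alpha)-m\le s-m$, whence $W^{s-m,q}_{\delta-m}\hookrightarrow W^{\sigma,q}_{\delta-m}$; thus $A_\infty u\in W^{\sigma,q}_{\delta-m}$ with no gain in decay, which is precisely why the improved target $W^{\sigma,q}_{\delta-m+\rho}$ is only asserted when $A_\infty=0$. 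For each remaining term, $\partial^\nu u\in W^{s-|\nu|,q}_{\delta-|\nu|}$, so it remains to verify that pointwise multiplication
\[
W^{\alpha-m+|\nu|,\gamma}_{\rho-m+|\nu|}\times W^{s-|\nu|,q}_{\delta-|\nu|}\longrightarrow W^{\sigma,q}_{\rho+\delta-m}
\]
is continuous; then, since $\rho<0$, Theorem~\ref{thmA1} gives $W^{\sigma,q}_{\rho+\delta-m}\hookrightarrow W^{\sigma,q}_{\delta-m}$, which yields the first assertion, while if $A_\infty=0$ then \emph{every} term already lands in $W^{\sigma,q}_{\rho+\delta-m}$, giving the second assertion.

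The main work — and the step I expect to be the genuine obstacle — is checking that hypotheses~(i)--(vi) are exactly what is needed to invoke Lemma~\ref{lemA1} for each multi-index $\nu$, with $(s_1,p_1,\delta_1)=(\alpha-m+|\nu|,\gamma,\rho-m+|\nu|)$, $(s_2,p_2,\delta_2)=(s-|\nu|,q,\delta-|\nu|)$ and target $(s_0,p_0)=(\sigma,q)$. One verifies: $s_1+s_2=\alpha+s-m\ge 0$ is~(ii); $s_1\ge\sigma$ and $s_2\ge\sigma$ (worst cases $|\nu|=0$ and $|\nu|=m$ respectively) are~(iii); $s_1-\sigma\ge n(1/\gamma-1/q)$ (worst case $|\nu|=0$) is~(v), while $s_2-\sigma\ge n(1/q-1/q)=0$ is again~(iii); $s_1+s_2-\sigma>n/\gamma$ is~(iv); and the strengthened lower bound $s_1+s_2>n(1/\gamma+1/q-1)$ — which covers both cases of condition~(v) of Lemma~\ref{lemA1} — is precisely~(vi). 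The borderline non-integer exclusions in Lemma~\ref{lemA1} are absorbed by the parenthetical restrictions: $1/q+1/\gamma\ge 1$ when $s=m-\alpha\notin\mathbb{Z}$ (matching $s_1+s_2=0$ with $\min(s_1,s_2)\notin\mathbb{Z}$), and $\gamma\le q$ when $\alpha-m=\sigma\notin\mathbb{Z}$ (matching $s_1=s_0\notin\mathbb{Z}$, which can only occur for $|\nu|=0$); the condition $s_2=s_0\notin\mathbb{Z}$ forces only $q\le q$, so is automatic. Condition~(i) supplies $\gamma,q\in(1,\infty)$ as required. The only delicate point beyond this bookkeeping is the boundedness of $\partial^\nu$ on the weighted scale for non-integer $s$, which I would obtain from the dyadic-rescaling definition of $\|\cdot\|_{W^{s,q}_\delta}$ — differentiation intertwines with the operators $S_{2^j}(\varphi_j\,\cdot)$ up to harmless multiplicative and lower-order factors — together with real interpolation between consecutive integer orders.
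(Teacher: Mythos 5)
Your argument is essentially the paper's: decompose $A=A_\infty+(A-A_\infty)$, bound $A_\infty u$ by the elementary embedding $W^{s-m,q}_{\delta-m}\hookrightarrow W^{\sigma,q}_{\delta-m}$ (using $\sigma\le s-m$ from (iii)), and bound each remainder term by invoking the weighted multiplication lemma (Lemma~\ref{lemA1}) on $W^{\alpha-m+|\nu|,\gamma}_{\rho-m+|\nu|}\times W^{s-|\nu|,q}_{\delta-|\nu|}\to W^{\sigma,q}_{\rho+\delta-m}$, matching conditions (i)--(vi) to the hypotheses of that lemma. Your bookkeeping of the worst-case $|\nu|$ and of the borderline non-integer exclusions is correct and coincides with what the paper verifies.
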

\begin{proof}{\bf (Theorem~\ref{thmB1})}
First let's consider the case where $A_{\infty}\neq
0$. The goal is to find sufficient conditions to make sure that
$A=\sum_{|\nu|\leq m}a_{\nu}\partial^{\nu}$ is a continuous
operator from $W^{s,q}_{\delta}\rightarrow W^{\sigma, q}_{\beta}$.
Clearly this will be true provided
\begin{enumerate}
\item For all $|\nu|<m$
\begin{equation*}
W^{\alpha-m+|\nu|,\gamma}_{\rho-m+|\nu|}\times
W^{s-|\nu|,q}_{\delta-|\nu|}\hookrightarrow W^{\sigma, q}_{\beta},
\quad (\textrm{note that}\,\,  a_{\nu}\in
W^{\alpha-m+|\nu|,\gamma}_{\rho-m+|\nu|},\,\,\partial^{\nu}u \in
W^{s-|\nu|,q}_{\delta-|\nu|})
\end{equation*}
It follows from the multiplication lemma and previously mentioned
embedding theorems that the above embedding holds true provided
(the numbering of the items corresponds to the numbering of the
assumptions in multiplication lemma)
\begin{align*}
 (ii) & \ s\geq m-\alpha,\quad (\textrm{
$\frac{1}{q}+\frac{1}{\gamma}\geq 1$ if $s=m-\alpha \not \in \mathbb{Z}$}) \\
  (i) & \ \sigma \leq \alpha-m \quad(\textrm{$\gamma \leq q$ if
         $\alpha-m=\sigma \not \in \mathbb{Z}$}), \\
(i), (iii) & \ \sigma \leq s-(m-1),\\
 (iv) & \ \sigma< s-m+\alpha-\frac{n}{\gamma}, \\
(iii) & \ \sigma-\frac{n}{q}\leq \alpha-\frac{n}{\gamma}-m,\\
  (v) & \ s-\frac{n}{q}> m-n-\alpha+\frac{n}{\gamma},
\end{align*}
and of course we need $(\rho-m+|\nu|)+(\delta-|\nu|)$ to be less
than or equal to  $\beta$, that is, $\rho-m+\delta\leq \beta$.
\item For $|\nu|=m$
\begin{align*}
&W^{\alpha,\gamma}_{\rho}\times
W^{s-m,q}_{\delta-m}\hookrightarrow W^{\sigma, q}_{\beta},\\
&W^{s-m, q}_{\delta-m}\hookrightarrow W^{\sigma, q}_{\beta}.
\end{align*}
Note that,
$a_{\nu}\partial^{\nu}=(a_{\nu}-a_{\nu}^{\infty})\partial^{\nu}+a_{\nu}^{\infty}\partial^{\nu}$.
$a_{\nu}^{\infty}$ is constant and $(a_{\nu}-a_{\nu}^{\infty})\in
W^{\alpha,\gamma}_{\rho}$, so it should be clear why we need the
above embeddings to be true. By using the multiplication lemma it
turns out that the only extra assumption that we need for the
first embedding to be true is that $\sigma\leq s-m$ and then the
only extra assumption that we need for the second embedding to be
true is that $\beta\geq \delta-m$.
\end{enumerate}
To complete the proof we just need to note that if $A_{\infty}=0$
then we do not need to have the embedding $W^{s-m,
q}_{\delta-m}\hookrightarrow W^{\sigma, q}_{\beta}$ and so
$\beta$ can be any number larger than or equal to
$\delta-m+\rho$.
\end{proof}
\begin{remark} In the above proof we implicitly assumed that the
following statement is true:
If $A: \Gamma(E)\rightarrow \Gamma(E)$ is a partial
differential operator whose representation in each local chart is
continuous from $W^{s,q}_{\delta}$ to $W^{\sigma,q}_{\beta}$,
then $A$ is a continuous operator from $W^{s,q}_{\delta}(E)$ to
$W^{\sigma,q}_{\beta}(E)$.
\end{remark}
\textbf{Example:} If the metric of an asymptotically flat manifold
is of class $W^{\alpha,\gamma}_{\rho}$ with $\alpha\gamma>n$ and
$\rho<0$, then the Laplacian and conformal Laplacian are elliptic
operators in class $D^{\alpha,\gamma}_{2,\rho}(M\times
\mathbb{R})$; vector Laplacian
is an elliptic operator in the class $D^{\alpha,\gamma}_{2,\rho}(TM)$.

\textbf{Duality Pairing.} Let $\bar{h}$ denote the Euclidean
metric on $\mathbb{R}^n$. Let $\sigma, \delta\in \mathbb{R}$ and
$q\in (1,\infty)$. We denote the duality pairing $W^{-\sigma,
q'}_{-n-\delta}(\mathbb{R}^n) \times
W^{\sigma,q}_{\delta}(\mathbb{R}^n) \rightarrow \mathbb{R}$ by
$\langle \cdot, \cdot \rangle_{W^{-\sigma, q'}_{-n-\delta} \times
W^{\sigma,q}_{\delta}}$ or just $\langle \cdot,\cdot
\rangle_{(\mathbb{R}^n,\bar{h})}$ if the spaces are clear from
the context. Clearly the duality pairing is a continuous bilinear
map. The restriction of this map to
$C_c^{\infty}(\mathbb{R}^n)\times C_c^{\infty}(\mathbb{R}^n)$ is
the $L^2$ inner product:
\begin{equation*}
\forall \,\, u,v\in C_c^{\infty}(\mathbb{R}^n) \quad \langle
u,v\rangle_{(\mathbb{R}^n,\bar{h})}=\int_{\mathbb{R}^n} uv dx.
\end{equation*}

 Now suppose $(M,h)$ is an $n$-dimensional AF manifold of class
$W^{\alpha,\gamma}_{\rho}$ where $\rho<0$ and $\gamma\in
(1,\infty)$. Our claim is that $(W^{\sigma,q}_{\delta}(M))^{*}$
can be identified with $W^{-\sigma, q'}_{-n-\delta}(M)$. This
identification can be done in at least two ways which we describe
below:
\begin{itemizeX}
\item \textbf{First Method:} By using the corresponding AF atlas and the subordinate
partition of unity that was used in the Definition
\ref{defweightedsobolevae} one can construct a smooth metric
$\hat{h}$ such that $(M,\hat{h})$ is of class
$W^{\alpha,\gamma}_{\rho}$. Recall that our definition of Sobolev
spaces on $M$ is independent of the underlying metric. The
bilinear map $\langle \cdot,\cdot
\rangle_{(M,\hat{h})}:C_c^{\infty}(M)\times C_c^{\infty}(M)
\rightarrow \mathbb{R}$ which is defined by
\begin{equation*}
\langle u,v\rangle_{(M,\hat{h})}=\int_{M} uv dV_{\hat{h}}
\end{equation*}
can be uniquely extended to a continuous bilinear form
\begin{equation*}
\langle \cdot, \cdot
\rangle_{(M,\hat{h})}:W^{-\sigma,q'}_{-n-\delta}(M) \times
W^{\sigma,q}_{\delta}(M) \rightarrow \mathbb{R}.
\end{equation*}
The above bilinear map induces a topological isomorphism
$(W^{\sigma,q}_{\delta}(M))^{*}=W^{-\sigma, q'}_{-n-\delta}(M)$;
if $u$, $v$ are smooth and $v$ has compact support in $U_j$
(domain of a coordinate chart in the AF atlas), then
\begin{equation*}
\langle u,v\rangle_{(M,\hat{h})} =\langle u\circ \phi_j^{-1},
\sqrt{\det \hat{h}}\,v\circ
\phi_j^{-1}\rangle_{(\mathbb{R}^n,\bar{h})}.
\end{equation*}
Note that in the above, $u\circ \phi_j^{-1}$ represents any
extension of $u\circ \phi_j^{-1}$ from $W^{-\sigma,
q'}_{-n-\delta}(\phi_j(U_j)) $ to $W^{-\sigma,
q'}_{-n-\delta}(\mathbb{R}^n) $. Also $v\circ \phi_j^{-1}$
represents the extension of $v\circ \phi_j^{-1}\in W^{\sigma,
q}_{\delta}(\phi_j(U_j)) $ by zero. Since $v$ has compact
support, we know that $\sqrt{\det \hat{h}}\,v\circ \phi_j^{-1}\in
W^{\sigma, q}_{\delta}(\mathbb{R}^n)$.

Similarly there exists a continuous bilinear form $\langle
\cdot,\cdot\rangle_{(M,\hat{h})}:W^{-\sigma, q'}_{-n-\delta}(TM)
\times W^{\sigma,q}_{\delta}(TM) \rightarrow \mathbb{R}$ whose
restriction to $C_c^{\infty}(TM)\times C_c^{\infty}(TM)$ is
\begin{equation*}
(Y,X)\mapsto \int_{M} \hat{h}(Y,X) dV_{\hat{h}}.
\end{equation*}
 This map induces an isomorphism $(W^{\sigma,q}_{\delta}(TM))^{*}=W^{-\sigma, q'}_{-n-\delta}(TM)$; if $X\in W^{\sigma,q}_{\delta}(TM)$, $Y\in W^{-\sigma, q'}_{-n-\delta}(TM)$
 are smooth and $X$ has compact support in $U_j$ then
\begin{equation*}
\langle Y,X\rangle_{(M,\hat{h})}=\sum_{l,p} \langle Y_l\circ
\phi_j^{-1}, \sqrt{\det \hat{h}}\,\,\hat{h}^{lp}\,X_p\circ
\phi_j^{-1}\rangle_{(\mathbb{R}^n,\bar{h})}.
\end{equation*}
 The disadvantage of this method is that the
restriction of the bilinear form that was constructed above to
$C_c^{\infty}$ is $\int_{M} uv dV_{\hat{h}}$ instead of $\int_{M}
uv dV_{h}$. We prefer to construct the isomorphism using the
rough metric instead of $\hat{h}$. It turns out that this can be
done for a limited range of $\sigma$ and $q$.
\item \textbf{Second Method:} Suppose $\alpha \gamma>n$. Then
there exists a continuous function $f$ such that $dV_h=f
dV_{\hat{h}}$ and $f-\varsigma\in W^{\alpha,\gamma}_{\rho}$ for
some constant $\varsigma>0$ \cite{2,6}. Formally we can write
\begin{equation*}
\langle u,v\rangle_{(M,h)}=\int_M uv dV_h=\int_M uv f
dV_{\hat{h}}=\int_M u fv dV_{\hat{h}}=\langle
u,fv\rangle_{(M,\hat{h})}.
\end{equation*}
This motivates the following definition:
\begin{equation*}
\forall u\in W^{-\sigma,q'}_{-n-\delta}\,\,\, \forall \, v\in
W^{\sigma,q}_{\delta}\quad \langle u,v\rangle_{(M,h)} :=\langle
u,fv\rangle_{(M,\hat{h})}.
\end{equation*}
Of course for the above definition to make sense we need to make
sure that $fv\in W^{\sigma,q}_{\delta}$. Note that $f-\varsigma\in
W^{\alpha,\gamma}_{\rho}$ and so by Lemma \ref{lempA1} this holds
provided
\begin{align*}
&\sigma\in [-\alpha,\alpha]\,\,\, (\textrm{$\sigma \neq -\alpha $
if $\alpha \not \in \mathbb{N}_0$; $\sigma \neq \alpha $ if
$\alpha \not \in \mathbb{N}_0$ and $q<\gamma$})\\
&\sigma-\frac{n}{q}\in
[-n-\alpha+\frac{n}{\gamma},\alpha-\frac{n}{\gamma}].
\end{align*}
It is easy to see that (since $\alpha \gamma>n$) if $\sigma\in
[0,\alpha]$ and $q=\gamma$
  then the above conditions hold true. Clearly the restriction of
$\langle \cdot,\cdot \rangle_{(M,h)}$ to $C_c^{\infty}\times
C_c^{\infty} $ is given by $\langle u,v\rangle_{(M,h)}=\int_M uv
dV_h$. This shows that this bilinear form does not depend on the
choice of $\hat{h}$. The above pairing makes sense even if $u\in
W^{-\sigma,q'}_{loc}$ and $v\in W^{\sigma,q}_{loc}$ provided at
least one of $u$ or $v$ has compact support.

Similarly for vector fields $X$ and $Y$ formally we may write
\begin{align*}
\langle Y,X\rangle_{(M,h)}&=\int_M h(Y,X) dV_h=\int_M h_{bc}X^c
Y^b
fdV_{\hat{h}}\\
&=\int_M \hat{h}_{ad}(f \hat{h}^{ab}h_{bc}X^c) Y^d dV_{\hat{h}}
\quad \quad(Y^b=\delta_{d}^bY^d= \hat{h}_{ad}\hat{h}^{ab}Y^d)\\
&=\int_M \hat{h}(Y,X_{*})dV_{\hat{h}} =\langle
Y,X_{*}\rangle_{(M,\hat{h})}\quad (X_{*}^a:=f
\hat{h}^{ab}h_{bc}X^c).
\end{align*}
This motivates the following definition:
\begin{equation*}
\forall\,\, Y\in
\textbf{W}^{-\sigma,q'}_{-n-\delta}\,\,\,\forall\, X\in
\textbf{W}^{\sigma,q}_{\delta}\quad \langle Y,X\rangle_{(M,h)}:=
\langle Y,X_{*}\rangle_{(M,\hat{h})},\quad
(\textbf{W}^{\sigma,q}_{\delta}:=W^{\sigma,q}_{\delta}(TM))
\end{equation*}
where $X_{*}^a:=f \hat{h}^{ab}h_{bc}X^c$. Again one can check
that the above definition makes sense provided $\sigma\in
[-\alpha,\alpha]$ $\,\,$ ($\sigma \neq -\alpha $ if $\alpha \not
\in \mathbb{N}_0$; $\sigma \neq \alpha $ if $\alpha \not \in
\mathbb{N}_0$ and $q<\gamma$), $\,\,\,\sigma-\frac{n}{q}\in
[-n-\alpha+\frac{n}{\gamma},\alpha-\frac{n}{\gamma}]$. As an
example, if $n=3$ and $\alpha>1$ (and of course
$\alpha>\frac{3}{\gamma}$), then the duality pairing of
$W^{-1,2}_{-3-\delta}$ and $W^{1,2}_{\delta}$ is well-defined:
\begin{equation*}
1\in (-\alpha,\alpha),\quad 1-\frac{3}{2}\in
[-3-\alpha+\frac{3}{\gamma},\alpha-\frac{3}{\gamma}] \quad
(\textrm{because $\alpha>\frac{3}{\gamma}$}).
\end{equation*}
\end{itemizeX}
\begin{remark}\textbf{Order on $W^{-\sigma,\gamma}_{\delta}(M)$ for $\sigma \in (-\infty,
\alpha ]$} \lab{remorder}

As before suppose $(M,h)$ is an $n$-dimensional AF manifold of
class $W^{\alpha,\gamma}_{\rho}$ where $\rho<0$, $\gamma\in
(1,\infty)$, and $\alpha \gamma>n$.
\begin{itemizeX}
\item If $\sigma \leq 0$, then $W^{-\sigma,q}_{\delta}\hookrightarrow
L^{q}_{\delta}$ and so the elements of $W^{-\sigma,q}_{\delta}$
are ordinary functions (or more precisely, equivalence classes of
ordinary functions). In this case we define an order on
$W^{-\sigma,q}_{\delta}$ as follows: the functions $u, v \in
W^{-\sigma,q}_{\delta}$ satisfy $u\geq v$ if and only if
$u(x)-v(x)\geq 0$ for almost all $x\in M$.
\item If $\sigma \in (0,\alpha]$ ($\sigma \neq \alpha $ if $\alpha \not
\in \mathbb{N}_0$), then it is easy to check that the duality
pairing $\langle \cdot,\cdot\rangle_{(M,h)}:W^{-\sigma,
\gamma}_{\delta}(M) \times W^{\sigma,\gamma'}_{-n-\delta}(M)
\rightarrow \mathbb{R}$ is well-defined. We define an order on
$W^{-\sigma,\gamma}_{\delta}$ as follows: the functions $u, v \in
W^{-\sigma,\gamma}_{\delta}$ satisfy $u\geq v$ if and only if
$\langle u-v,\xi\rangle_{(M,h)}\geq 0$ for all $\xi \in
C_c^{\infty}(M)$ with $\xi\geq 0$. Notice that if $u$ and $v$ are
ordinary functions in $W^{-\sigma,\gamma}_{\delta}(M)$, then it
follows from the definition that $u\geq v$ if and only if
$u(x)\geq v(x)\,  a.e.$.

According to the above items, if $\alpha\geq 1$ we have a
well-defined order on $W^{\alpha-2,\gamma}_{\delta}(M)$ and in
particular if $u$ is an ordinary function in
$W^{\alpha-2,\gamma}_{\delta}(M)$, then $u\geq 0$ if and only if
$u(x)\geq 0$ for almost all $x$.
\end{itemizeX}
\end{remark}
In what follows, we state and prove Lemma \ref{lemb1}, Lemma
\ref{lemb2}, Proposition \ref{propb1}, and Lemma \ref{lemb3} for
$\mathbb{R}^n$. The proofs can be easily extended to AF manifolds.
\begin{lemma}\lab{lemb1}
Let the following assumptions hold:
\begin{itemize}
\item $A\in D^{\alpha, \gamma}_{m}$ where $\gamma \in (1,\infty)$ and $\alpha-\frac{n}{\gamma}>\max\{0,\frac{m-n}{2}\}$; $A$ is elliptic.
\item $q\in (1,\infty)$, $s\in(m-\alpha,\alpha]$ \quad (\textrm{if $s=\alpha \not \in \mathbb{N}_0$, then let $q \in [\gamma, \infty)$}).
\item $s-\frac{n}{q}\in
(m-n-\alpha+\frac{n}{\gamma},\alpha-\frac{n}{\gamma}]$.
\end{itemize}
Then: 
If $U$ and $V$ are bounded open sets with $U\subset\subset V$,
then there exists $\tilde{s}<s$ such that for all $u\in W^{s,q}$
\begin{equation}\lab{eqestimate111}
\|u\|_{s,q, U}\preceq \|Au\|_{s-m,q, V}+\|u\|_{\tilde{s},q, V}.
\end{equation}
\end{lemma}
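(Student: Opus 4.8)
The statement is a standard interior elliptic estimate, and the plan is to reduce it to the classical interior elliptic estimate for constant-coefficient elliptic operators on $\mathbb{R}^n$ via the usual freezing-the-coefficients and partition-of-unity argument, combined with an interpolation-type absorption. First I would recall that, since $A$ is elliptic of class $D^{\alpha,\gamma}_m$ with $\alpha\gamma>n$, the top-order coefficients $a_\nu$ with $|\nu|=m$ are continuous (indeed in $C^0$ by the Sobolev embedding in Theorem~\ref{thmA1}), so the constant-coefficient operator $A_{x_0}:=\sum_{|\nu|=m}a_\nu(x_0)\partial^\nu$ obtained by freezing coefficients at a point $x_0$ is a genuine elliptic operator with the usual interior estimate $\|u\|_{s,q,B'}\preceq \|A_{x_0}u\|_{s-m,q,B}+\|u\|_{L^q,B}$ on nested balls. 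The lower-order terms of $A$ (those with $|\nu|<m$, with coefficients in $W^{\alpha-m+|\nu|,\gamma}$) are handled by the multiplication lemma (Lemma~\ref{lemA1}), which under the stated numerology $s>m-\alpha$, $s-\tfrac{n}{q}>m-n-\alpha+\tfrac{n}{\gamma}$, $s-\tfrac{n}{q}\le \alpha-\tfrac{n}{\gamma}$ guarantees that $a_\nu \partial^\nu u \in W^{s-m,q}$ with norm controlled by $\|u\|_{s-1,q}$; thus the lower-order part only ever produces a term with a strictly lower Sobolev index, which is exactly the kind of term that can be absorbed into the right-hand side of \eqref{eqestimate111}.

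Next I would run the standard localization. Cover $\bar U$ by finitely many small balls $B(x_i,r_i)$ with concentric doubled balls still inside $V$; choose $r_i$ small enough (using the uniform continuity of the top-order coefficients on the compact set $\bar V$) that on each $B(x_i,2r_i)$ the oscillation $\|a_\nu - a_\nu(x_i)\|_{L^\infty}$ is as small as we like. Pick a partition of unity $\{\chi_i\}$ subordinate to this cover with $\chi_i$ supported in $B(x_i,r_i)$. For each $i$, apply the constant-coefficient estimate to $\chi_i u$ with the operator $A_{x_i}$; write $A_{x_i}(\chi_i u) = \chi_i A u + [\text{commutator terms}] + \sum_{|\nu|=m}(a_\nu(x_i)-a_\nu)\partial^\nu(\chi_i u) - \chi_i(\text{lower order})$. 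The commutator $[A,\chi_i]$ has order $m-1$ with compactly supported coefficients (here we use that multiplication by a fixed smooth compactly supported function is bounded on $W^{\sigma,q}$, Lemma~\ref{lemA9}), so it contributes $\preceq \|u\|_{s-1,q,B(x_i,2r_i)}$; the term $(a_\nu(x_i)-a_\nu)\partial^\nu(\chi_i u)$ has small top-order coefficient and, by the multiplication lemma again together with the small-oscillation choice of $r_i$, contributes $\le \epsilon\|\chi_i u\|_{s,q} + C_\epsilon\|u\|_{s-1,q,B(x_i,2r_i)}$ — this is where ellipticity of the frozen operator at scale $r_i$ and the Banach-algebra/multiplication structure are both essential. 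Summing over $i$ and choosing $\epsilon$ small enough to absorb the $\epsilon\|u\|_{s,q}$ terms on the left yields
\[
\|u\|_{s,q,U}\preceq \|Au\|_{s-m,q,V}+\|u\|_{s-1,q,V}.
\]
Finally, set $\tilde s$ to be anything in the interval $(m-\alpha, s)$ with $\tilde s < s-1$ if that is needed for the multiplication lemma, or more simply take $\tilde s = s-1$ (which automatically lies in the admissible range since $s>m-\alpha$ and the hypotheses are open in $s$); if $s-1 \le m-\alpha$ one instead uses a two-step interpolation $\|u\|_{s-1,q,V}\le \epsilon\|u\|_{s,q,V'}+C_\epsilon\|u\|_{\tilde s,q,V'}$ on a slightly larger domain $V'\subset\subset$ (an even larger set), which is legitimate since we may shrink $U$ and enlarge $V$ freely, and re-absorb. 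This gives the claimed estimate with $\tilde s<s$.

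The main obstacle I anticipate is the bookkeeping around the low-order coefficient multiplications and the commutators: one must verify carefully that all the products $a_\nu\partial^\nu u$, $(a_\nu(x_i)-a_\nu)\partial^\nu(\chi_i u)$, and the commutator coefficients land in the right weighted-free (interior, so weight is irrelevant) Sobolev spaces with the stated index losses, which is precisely the point of hypotheses (ii), (iii), (v), and (vi) in Theorem~\ref{thmB1} applied with $\delta$ immaterial; this amounts to checking the numerology of the multiplication lemma in each case. The other subtlety, minor but real, is ensuring the small-oscillation freezing works in the fractional-order setting: since we only need $\|(a_\nu(x_i)-a_\nu)\partial^\nu(\chi_i u)\|_{s-m,q}$ small, and the multiplication lemma controls this by $\|a_\nu - a_\nu(x_i)\|_{W^{\alpha,\gamma}}$ on the small ball (not merely $L^\infty$), one must observe that restricting to balls of radius $r_i\to 0$ does make this $W^{\alpha,\gamma}$ seminorm of the difference small — which follows from absolute continuity of the norm, or alternatively one truncates $a_\nu$ near $x_i$ and treats the $C^0$-small part and the higher-Sobolev remainder separately. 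Everything else is routine application of results already in the paper (Theorems~\ref{thmA1}, \ref{thmA3}, \ref{thmA4}, Lemmas~\ref{lemA9}, \ref{lemA1}, \ref{lempA1}) plus the classical constant-coefficient interior estimate.
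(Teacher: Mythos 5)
Your plan is correct and is precisely the argument the paper gestures at: the paper itself supplies no details, merely citing \cite{HNT07b} (Lemma A.25) and saying the method parallels Proposition~\ref{propb1}, whose scheme is ``split off a constant-coefficient model, use its isomorphism, show the remainder is small, absorb.'' Your freezing-the-coefficients construction is the interior analogue of that scheme — frozen top-order operator at each $x_i$ playing the role of $A_\infty$, the oscillation $(a_\nu - a_\nu(x_i))$ on a shrinking ball playing the role of the remainder $R$ made small by cutting off at large radius — so this is the same approach, not a different one. The one point worth being careful about, which you already flag, is that the smallness of $\|a_\nu - a_\nu(x_i)\|_{W^{\alpha,\gamma}(B(x_i,2r_i))}$ as $r_i \to 0$ requires absolute continuity of the Sobolev norm (or the split into a $C^0$-small part and a lower-order remainder) rather than mere $L^\infty$ oscillation, because the relevant multiplication estimate from Lemma~\ref{lemA1} uses the full $W^{\alpha,\gamma}$ norm; both of the fixes you propose work, so there is no gap.
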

Note: The assumptions in the statement of the lemma are to ensure
that $A\in D^{\alpha, \gamma}_{m}$ sends elements of $W^{s,q}$ to
elements of $W^{s-m,q}$. In fact the conditions in Theorem
\ref{thmB1} work for unweighted spaces too and the restrictions
in the statement of the above lemma agree with the conditions in
Theorem \ref{thmB1}. The assumption
$\alpha-\frac{n}{\gamma}>\frac{m-n}{2}$ is to ensure that the
interval $(m-n-\alpha+\frac{n}{\gamma},\alpha-\frac{n}{\gamma}]$
is nonempty.
\begin{proof}{\bf (Lemma~\ref{lemb1})}
The proof of the interior regularity lemma in
\cite{HNT07b} (Lemma A.25), with obvious changes, goes through for the
above claim as well. The approach of the proof is similar to our
proof for Proposition \ref{propb1}. Since the claim is about
unweighted Sobolev spaces we do not repeat that argument here.
\end{proof}
\begin{lemma}\lab{lemb2}
Suppose $A$ is a constant coefficient elliptic operator that has
only derivatives of order $m$ with $m < n$ on $\mathbb{R}^n$.
Then for $s\in \mathbb{R}$, $p\in (1,\infty)$, and $\delta \in
(m-n,0)$, $A: W^{s,p}_{\delta}\rightarrow
W^{s-m,p}_{\delta-m}$ is an isomorphism.
\end{lemma}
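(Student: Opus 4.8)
The plan is to show separately that $A$ is injective and that it is surjective; since both spaces are Banach, the open mapping theorem then promotes the continuous bijection to a topological isomorphism. Boundedness of $A\colon W^{s,p}_{\delta}\to W^{s-m,p}_{\delta-m}$ is routine: for integer $s$ the weighted norm of $\partial^{\nu}u$ with $|\nu|=m$ is controlled by that of $u$ with the growth parameter shifted by exactly $m$, and the general real order follows by interpolation and duality (Theorem~\ref{thmA1}). Write $P(\xi)=\sum_{|\nu|=m}a_{\nu}(i\xi)^{\nu}$ for the symbol of $A$: it is a homogeneous polynomial of degree $m$, and ellipticity of the homogeneous operator $A$ means precisely that $P(\xi)\neq 0$ for $\xi\in\mathbb{R}^{n}\setminus\{0\}$.

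\emph{Injectivity.} Suppose $u\in W^{s,p}_{\delta}(\mathbb{R}^{n})$ with $Au=0$. Taking Fourier transforms in $\mathcal{S}'$ gives $P(\xi)\hat{u}=0$, and since $1/P$ is smooth on $\mathbb{R}^{n}\setminus\{0\}$ this forces $\hat{u}$ to be supported at the origin, so $u$ is a polynomial, say of degree $d\ge 0$ if $u\not\equiv 0$. Feeding $u$ into the defining norm, $S_{2^{j}}(\varphi_{j}u)(y)=\varphi(y)\,q(2^{j}y)$ for $j\ge 1$, whose $W^{s,p}(\mathbb{R}^{n})$ norm is $\gtrsim 2^{jd}$ for all large $j$ (the leading homogeneous part contributes $2^{jd}\|\varphi q_{d}\|_{s,p}$ with $\varphi q_{d}\not\equiv 0$); hence $\|u\|_{s,p,\delta}^{p}\gtrsim\sum_{j}2^{pj(d-\delta)}$ diverges unless $d<\delta$, which is impossible since $\delta<0\le d$. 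Thus $\ker A=\{0\}$, and this is the only use of $\delta<0$ for $A$ itself.

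\emph{Surjectivity.} Here the hypothesis $m<n$ is essential. Since $P$ is homogeneous of degree $m<n$, $1/P$ is locally integrable near the origin, hence a tempered distribution, and $E:=\mathcal{F}^{-1}(1/P)$ is a fundamental solution of $A$ that is smooth on $\mathbb{R}^{n}\setminus\{0\}$ and homogeneous of degree $m-n$. Set $Gf:=E*f$; then $AGf=f$, and by the injectivity already proved $G$ is a two-sided inverse of $A$ once we check that $G$ maps $W^{s-m,p}_{\delta-m}$ boundedly into $W^{s,p}_{\delta}$. For integer $s=m$ this reduces to estimating $\partial^{\nu}G$, $|\nu|\le m$, in the appropriate weighted $L^{p}$: for $|\nu|=m$ the kernel $\partial^{\nu}E$ is homogeneous of degree $-n$ with vanishing spherical means (since $AE=\delta$), so $\partial^{\nu}G$ is a Calderón--Zygmund operator, bounded on $L^{p}(\langle x\rangle^{a}\,dx)$ whenever $-n<a<n(p-1)$; the weight exponent occurring here is $a=-p\delta-n$, and $-n<-p\delta-n<n(p-1)$ is equivalent to $\delta\in(-n,0)$, which holds. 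For $|\nu|<m$, $\partial^{\nu}G$ is a fractional integration of positive order $m-|\nu|$ (kernel homogeneous of degree $m-|\nu|-n$), and the weighted Hardy--Littlewood--Sobolev (Stein--Weiss) inequality gives the corresponding bound, its admissibility constraints collapsing exactly to $m-n<\delta<0$. Summing over $|\nu|\le m$ gives $\|Gf\|_{m,p,\delta}\preceq\|f\|_{0,p,\delta-m}$, and interpolation and duality (Theorem~\ref{thmA1}) extend the bound to all real $s$. Hence $A$ is onto, and being also injective it is an isomorphism.

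I expect the weighted boundedness of $G$ to be the main obstacle: recognizing the top-order part as a genuine Calderón--Zygmund operator with its sharp $A_{p}$-weight range, and invoking Stein--Weiss for the lower-order parts with the precise constraint $m-n<\delta<0$, which is the point at which both $m<n$ and $\delta\in(m-n,0)$ are actually used. A kernel-free alternative for surjectivity is available and may be preferable in writeup: the formal transpose $A^{t}$ is again constant-coefficient, homogeneous of order $m$ and elliptic, and under the identification $(W^{\sigma,q}_{\delta})^{*}=W^{-\sigma,q'}_{-n-\delta}$ (Theorem~\ref{thmA1}) it acts from $W^{m-s,p'}_{m-n-\delta}$ to $W^{-s,p'}_{-n-\delta}$, with source growth parameter $m-n-\delta$ again in $(m-n,0)$; the injectivity of $A^{t}$ there (the same Fourier argument, now needing $m-n-\delta<0$, i.e.\ $\delta>m-n$) combined with the closed range of $A$ yields surjectivity — at the cost of having to globalize the interior estimate of Lemma~\ref{lemb1}, after summation over the dyadic annuli and using the gap condition $\delta\in(m-n,0)$, to $\|u\|_{s,p,\delta}\preceq\|Au\|_{s-m,p,\delta-m}+\|u\|_{W^{s,p}(B_{2})}$ with only a fixed compact lower-order term.
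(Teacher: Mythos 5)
Your proof is correct in outline and takes a genuinely different route from the paper. The paper's proof is structured as: cite reference \cite{8} for the integer base case $s \geq m$ (where, in effect, all the analytic content is buried), then promote to non-integer $s \geq m$ by real interpolation, to $s \leq 0$ by duality (exactly the ``kernel-free alternative'' you sketch at the end, using that $A^*$ is again homogeneous elliptic with source weight $m-n-\delta \in (m-n,0)$), and to $s\in(0,m)$ by interpolating the two cases. You instead prove the base case $s=m$ \emph{explicitly}: injectivity from the Fourier-support argument showing a nonzero polynomial cannot lie in $W^{s,p}_\delta$ with $\delta<0$, and surjectivity from the fundamental solution $E=\mathcal F^{-1}(1/P)$ — well-defined precisely because $m<n$ makes $1/P$ locally integrable — with the top-order derivatives $\partial^\nu E$ handled as zero-order homogeneous Fourier multipliers (hence weighted Calder\'on--Zygmund) and the lower-order ones by Stein--Weiss. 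This buys self-containedness at the cost of invoking weighted singular-integral machinery that the paper never needs to name, while the paper's version is shorter but opaque at the base case. Two small technical points to tighten in a writeup: (1) what drives the $A_p$ boundedness of $\partial^\nu G$ is that $\widehat{\partial^\nu E}=(i\xi)^\nu/P(\xi)$ is smooth away from $0$, bounded, and homogeneous of degree zero — the vanishing-spherical-mean remark is unnecessary and does not actually follow from $AE=\delta$ alone; and (2) the $A_p$ weight exponent for $\partial^\nu G:L^p_{\delta-m}\to L^p_{\delta-m}$ is $-p(\delta-m)-n$, not $-p\delta-n$, giving the range $m-n<\delta<m$, still subsumed by the hypothesis $\delta\in(m-n,0)$. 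Your interpolation/duality bootstrap from the single anchor $s=m$ also needs the observation that $G$ commutes with $\partial^\alpha$ to reach higher integer $s$ before interpolating, but once stated this is routine and matches the paper's Cases 1--3.
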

\begin{proof}{\bf (Lemma~\ref{lemb2})}
We closely follow and extend the proof that is given
for the special $p=2$ in \cite{2} [Lemma 4.8]. Let
$A_{s,p,\delta}$ denote the operator $A$ acting on
$W^{s,p}_{\delta}$. We consider three cases $s\geq m$, $s\in
(-\infty,0]$, and $s\in (0,m)$.
\begin{itemizeX}
\item \textbf{Case 1: $s \geq m$.} \\
For $s\in \mathbb{N}$ and $s\geq m$, the claim follows from the
argument in \cite{8}. If $s \not \in \mathbb{N}$, let $k=[s]$,
$\theta= s-k$. We know that $A_{k,p,\delta}$ and
$A_{k+1,p,\delta}$ have inverses and in fact
\begin{align*}
&A_{k,p,\delta}^{-1}: W^{k-m,p}_{\delta-m}\rightarrow
W^{k,p}_{\delta},\\
&A_{k+1,p,\delta}^{-1}: W^{k+1-m,p}_{\delta-m}\rightarrow
W^{k+1,p}_{\delta},
\end{align*}
are continuous maps. Note that
\begin{align*}
& W^{k+1,p}_{\delta}\hookrightarrow W^{k,p}_{\delta}, \quad
W^{k+1-m,p}_{\delta-m}\hookrightarrow W^{k-m,p}_{\delta-m},\\
&(W^{k,p}_{\delta},
W^{k+1,p}_{\delta})_{\theta,p}=W^{s,p}_{\delta}, \quad
(W^{k-m,p}_{\delta-m},
W^{k+1-m,p}_{\delta-m})_{\theta,p}=W^{s-m,p}_{\delta-m}.
\end{align*}
 So by
interpolation we get a continuous operator $T:
W^{s-m,p}_{\delta-m}\rightarrow W^{s,p}_{\delta}$ which must be
the restriction of $A_{k,p,\delta}^{-1}$ to
$W^{s-m,p}_{\delta-m}$. Now for all $u\in W^{s,p}_{\delta}$ we
have
\begin{align*}
u\in W^{s,p}_{\delta}\hookrightarrow W^{k,p}_{\delta} \Rightarrow
A_{s,p,\delta} u= A_{k,p,\delta} u \Rightarrow
T(A_{s,p,\delta}u) & =T(A_{k,p,\delta} u)
\\
& =A_{k,p,\delta}^{-1}(A_{k,p,\delta} u)=u.
\end{align*}
Similarly $A_{s,p,\delta}T u=u$. It follows that $T=
A_{s,p,\delta}^{-1}$.
\item \textbf{Case 2: $s\leq 0$.} \\
We want to show that $A_{s,p,\delta}: W^{s,p}_{\delta}\rightarrow
W^{s-m,p}_{\delta-m}$ is an isomorphism. We note that since
$A_{s,p,\delta}$ is a homogeneous constant coefficient elliptic
operator, its adjoint $(A_{s,p,\delta})^{*}:
W^{-s+m,p'}_{-\delta-n+m}\rightarrow W^{-s,p'}_{-\delta-n}$ is
also a homogeneous constant coefficient elliptic operator. So by
what was proved in the previous case we know that if $-s+m\geq m$
and $-\delta-n+m \in (m-n,0)$ (which are true because by
assumption $s\leq 0$ and $\delta \in (m-n,0)$) then
$(A_{s,p,\delta})^{*}$ is an isomorphism. Now for $u\in
W^{s-m,p}_{\delta-m}$ define the distribution $Tu$ by
\begin{align*}
\langle Tu, \varphi \rangle
& =\langle u,
  ((A_{s,p,\delta})^{*})^{-1}\varphi\rangle_{W^{s-m,p}_{\delta-m}\times
(W^{s-m,p}_{\delta-m}) ^{*}}
\\
& \quad \quad (\textrm{note
that}\,\,((A_{s,p,\delta})^{*})^{-1}:
W^{-s,p'}_{-\delta-n}\rightarrow(W^{s-m,p}_{\delta-m} )^{*}),
\end{align*}
for all $\varphi \in C_c^{\infty}$. We claim that $T$ is the
inverse of $A_{s,p,\delta}$. To this end first we show that $T$
sends $W^{s-m,p}_{\delta-m}$ to $W^{s,p}_{\delta}$ and then we
show that the composition of $T$ and $A_{s,p,\delta}$ is the
identity map.

Suppose $u\in W^{s-m,p}_{\delta-m}$. In order to prove that $Tu
\in W^{s,p}_{\delta}$ it is enough to show that
\begin{equation*}
\|Tu\|_{s,p,\delta}=\sup_{\varphi \in C_c^{\infty}}
 \frac{|\langle Tu, \varphi\rangle|}
      {\|\varphi\|_{-s,p',-\delta-n}}<\infty \quad (\textrm{we
are interpreting}\,\,  W^{s,p}_{\delta}\,\, \textrm{as}\,\,
(W^{-s,p'}_{-\delta-n})^{*})
\end{equation*}
We have
\begin{align*}
|\langle Tu, \varphi\rangle|&
\leq \|u\|_{s-m,p,\delta-m}\|((A_{s,p,\delta})^{*})^{-1}\varphi\|_{(W^{s-m,p}_{\delta-m})
^{*}}
\\
& \leq
\|u\|_{s-m,p,\delta-m}\|((A_{s,p,\delta})^{*})^{-1}\|_{op}\|\varphi\|_{-s,p',-\delta-n}
\end{align*}
Therefore
\begin{equation*}
\|Tu\|_{s,p,\delta} \leq
\|u\|_{s-m,p,\delta-m}\|((A_{s,p,\delta})^{*})^{-1}\|_{op}<
\infty.
\end{equation*}
This implies that $T$ sends $W^{s-m,p}_{\delta-m}$ to
$W^{s,p}_{\delta}$. Now note that for all $u\in
W^{s,p}_{\delta}$, $\varphi\in C_c^{\infty}$
\begin{equation*}
\langle T A_{s,p,\delta} u, \varphi\rangle
= \langle A_{s,p,\delta} u, ((A_{s,p,\delta})^{*})^{-1}\varphi\rangle
=\langle u, (A_{s,p,\delta})^{*}((A_{s,p,\delta})^{*})^{-1}\varphi\rangle
=\langle u,\varphi\rangle.
\end{equation*}
This means $T A_{s,p,\delta} u= u$. Similarly
$A_{s,p,\delta}Tu=u$.
\item \textbf{Case 3: $s\in (0,m)$.} \\
By what was proved in the previous cases we know that
$A_{0,p,\delta}$ and $A_{m,p,\delta}$ are invertible. In fact
\begin{align*}
&A_{0,p,\delta}^{-1}: W^{-m,p}_{\delta-m}\rightarrow
W^{0,p}_{\delta},\\
&A_{m,p,\delta}^{-1}: W^{0,p}_{\delta-m}\rightarrow
W^{m,p}_{\delta},
\end{align*}
are continuous maps. Let $\theta=\frac{s}{m}$. Note that
\begin{align*}
& W^{m,p}_{\delta}\hookrightarrow W^{0,p}_{\delta}, \quad
W^{0,p}_{\delta-m}\hookrightarrow W^{-m,p}_{\delta-m},\\
&(W^{0,p}_{\delta}, W^{m,p}_{\delta})_{\theta,p}=W^{s,p}_{\delta},
\quad (W^{-m,p}_{\delta-m},
W^{0,p}_{\delta-m})_{\theta,p}=W^{s-m,p}_{\delta-m}\quad \textrm{if $s\not \in \mathbb{N} $},\quad \textrm{(real interpolation)}\\
& [W^{0,p}_{\delta}, W^{m,p}_{\delta}]_{\theta}=W^{s,p}_{\delta},
\quad [W^{-m,p}_{\delta-m},
W^{0,p}_{\delta-m}]_{\theta}=W^{s-m,p}_{\delta-m} \quad
\textrm{if $s \in \mathbb{N} $}.\quad \textrm{(complex
interpolation)}
\end{align*}
 So by
interpolation we get a continuous operator $T:
W^{s-m,p}_{\delta-m}\rightarrow W^{s,p}_{\delta}$ which must be
the restriction of $A_{0,p,\delta}^{-1}$ to
$W^{s-m,p}_{\delta-m}$. Now for all $u\in W^{s,p}_{\delta}$ we
have
\begin{equation*}
u\in W^{s,p}_{\delta}\hookrightarrow W^{0,p}_{\delta} \Rightarrow
A_{s,p,\delta} u= A_{0,p,\delta} u \Rightarrow
T(A_{s,p,\delta}u)=T(A_{0,p,\delta}
u)=A_{0,p,\delta}^{-1}(A_{0,p,\delta} u)=u.
\end{equation*}
Similarly $A_{s,p,\delta}T u=u$. It follows that $T=
A_{s,p,\delta}^{-1}$.
\end{itemizeX}
\end{proof}
\begin{proposition}\lab{propb1}
Let the following assumptions hold:
\begin{itemize}
\item $A\in D^{\alpha, \gamma}_{m,\rho}$ where $\gamma \in (1,\infty)$,  $\alpha>\frac{n}{\gamma}$, $\rho<0$, and $m<n$; $A$ is elliptic.
\item $q\in (1,\infty)$, $s\in(m-\alpha,\alpha]$ \quad (\textrm{if $s=\alpha \not \in \mathbb{N}_0$, then let $q \in [\gamma, \infty)$}).
\item $s-\frac{n}{q}\in
(m-n-\alpha+\frac{n}{\gamma},\alpha-\frac{n}{\gamma}]$.
\item $\delta \in (m-n,0)$.
\end{itemize}
In particular note that if the elliptic operator $A\in D^{\alpha,
\gamma}_{m,\rho}$, is given, then $s:=\alpha$ and $q:=\gamma$
satisfy the desired
conditions.

Then: If $t<s$ and $\delta'>\delta$, then for every $u\in W^{s,q}_{\delta}$ we have
\begin{equation*}
\|u\|_{s,q,\delta}\preceq \|Au\|_{s-m, q,
\delta-m}+\|u\|_{t,q,\delta'}
\end{equation*}
Moreover $A: W^{s,q}_{\delta}\rightarrow W^{s-m,q}_{\delta-m}$ is
semi-Fredholm.
\end{proposition}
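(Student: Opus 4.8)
\textbf{Proof proposal for Proposition~\ref{propb1}.}

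The plan is to reduce the statement for the rough variable-coefficient operator $A\in D^{\alpha,\gamma}_{m,\rho}$ on $\mathbb{R}^n$ (and then, by partition of unity, on $M$) to the two pieces already available: the interior elliptic estimate of Lemma~\ref{lemb1} for the operator with its rough coefficients, and the global isomorphism theorem of Lemma~\ref{lemb2} for the constant-coefficient principal part at infinity $A_\infty$. The key structural fact is that $A$ has a well-defined principal part at infinity $A_\infty = \sum_{|\nu|=m} a_\nu^\infty \partial^\nu$, with $a_\nu - a_\nu^\infty \in W^{\alpha,\gamma}_\rho$ for $|\nu|=m$ and all lower-order coefficients decaying (lying in $W^{\alpha-m+|\nu|,\gamma}_{\rho-m+|\nu|}$). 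Since $\rho<0$, these ``error'' coefficients decay at infinity, so on the asymptotic region $A$ is a small perturbation of $A_\infty$, while on any fixed compact region we have the full interior estimate.

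The steps, in order, would be: (1) By Lemma~\ref{lemb2}, $A_\infty : W^{s,q}_\delta \to W^{s-m,q}_{\delta-m}$ is an isomorphism, using $\delta\in(m-n,0)$ and the stated admissible range for $(s,q)$ (which, as noted in the statement, is nonempty and satisfied e.g. by $s=\alpha$, $q=\gamma$ — this is exactly why one needs the ellipticity of $A_\infty$ and the hypothesis $m<n$). (2) Write $A = A_\infty + (A - A_\infty)$. The difference $B := A - A_\infty$ has all coefficients in weighted spaces with a \emph{strictly negative} extra weight coming from $\rho<0$; applying Theorem~\ref{thmB1} (the mapping properties already verified in the derivation of the weak formulation) together with the multiplication lemma, $B$ maps $W^{s,q}_\delta$ continuously into $W^{s-m,q}_{\delta-m+\rho}$, and since $\rho<0$ and the target embeds compactly, $B : W^{s,q}_\delta \to W^{s-m,q}_{\delta-m}$ is \emph{compact} (here one uses the compact embedding $W^{s-m,q}_{\delta-m+\rho}\hookrightarrow W^{s-m,q}_{\delta-m}$ from Theorem~\ref{thmA1}(3), valid because $\rho<0$). (3) A bounded operator that differs from an isomorphism by a compact operator is semi-Fredholm; this gives the semi-Fredholm claim directly, and also yields the a priori estimate
\begin{equation*}
\|u\|_{s,q,\delta} \preceq \|Au\|_{s-m,q,\delta-m} + \|u\|_{\text{(compactly embedded space)}},
\end{equation*}
which is the standard consequence of $A_\infty^{-1}$ being bounded plus $B$ compact (Peetre-type lemma). (4) Finally, to replace the compact-embedding norm on the right by the concrete norm $\|u\|_{t,q,\delta'}$ with $t<s$, $\delta'>\delta$: combine the localized version of the Lemma~\ref{lemb1} interior estimate on the compact core with the perturbation estimate for $A_\infty$ on the ends, and absorb the resulting lower-order terms; since $W^{s,q}_\delta \hookrightarrow W^{t,q}_{\delta'}$ compactly exactly when $t<s$ and $\delta<\delta'$, the term $\|u\|_{t,q,\delta'}$ dominates any admissible compactly-embedded norm, so it suffices to prove the estimate with \emph{some} such weaker norm and then weaken further.

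The main obstacle I expect is bookkeeping in step (2)–(4): verifying that $B=A-A_\infty$ genuinely lands in the space with improved weight $\delta-m+\rho$ requires checking the multiplication-lemma hypotheses for \emph{each} coefficient $a_\nu - a_\nu^\infty$ (top order) and $a_\nu$ (lower order) paired against $\partial^\nu u \in W^{s-|\nu|,q}_{\delta-|\nu|}$ — precisely the index conditions (i)–(vi) in Theorem~\ref{thmB1}, now with the extra $\rho$ in the weight. These are routine but must be shown to follow from the stated hypotheses on $(\alpha,\gamma,s,q)$; the genuinely delicate case is the borderline one $s-n/q = \alpha - n/\gamma$ and the non-integer exponent subtleties (where one needs $q\geq\gamma$). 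The other mild subtlety is gluing the interior estimate of Lemma~\ref{lemb1} (stated for unweighted spaces on bounded sets) with the exterior perturbation estimate via a partition of unity subordinate to the AF atlas, controlling the commutators $[A,\chi_i]$, which are lower-order operators and hence harmless. Once these mapping properties are in hand, the semi-Fredholm conclusion and the a priori estimate follow from soft functional analysis (boundedness of $A_\infty^{-1}$, compactness of $B$, and the closed-range/Peetre lemma).
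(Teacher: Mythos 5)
Your overall architecture — decompose $A = A_\infty + B$, use Lemma~\ref{lemb2} for $A_\infty$, handle $B$ as a perturbation, and combine with the interior estimate of Lemma~\ref{lemb1} — is the right skeleton and matches the paper's strategy. But there is a genuine error in Step~(2): the operator $B = A - A_\infty$ is \emph{not} compact from $W^{s,q}_\delta$ to $W^{s-m,q}_{\delta-m}$, and the embedding you invoke to get compactness does not hold. You correctly observe that $B$ maps $W^{s,q}_\delta$ continuously into $W^{s-m,q}_{\delta-m+\rho}$ (extra decay from $\rho<0$), but the inclusion $W^{s-m,q}_{\delta-m+\rho}\hookrightarrow W^{s-m,q}_{\delta-m}$ is only continuous, not compact: Theorem~\ref{thmA1}(3) requires a \emph{strict drop in the regularity index as well as the weight}, and here the regularity index $s-m$ is unchanged. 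Indeed $B$ cannot be compact for a structural reason: the top-order piece $\sum_{|\nu|=m}(a_\nu - a_\nu^\infty)\partial^\nu$ differentiates $m$ times and multiplies by a decaying but generically nonzero function, so no regularity is gained; a concentrating sequence (rescaled bumps near a point where $a_\nu \ne a_\nu^\infty$) converges weakly to zero in $W^{s,q}_\delta$ while $\|Bu_j\|_{s-m,q,\delta-m}$ stays bounded below. Consequently your Step~(3) — ``isomorphism plus compact $\Rightarrow$ semi-Fredholm and a priori estimate by a Peetre-type lemma'' — is not available in this form, since it relies on the false compactness.

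The paper's proof replaces the compactness claim with a \emph{smallness} argument. It introduces a dyadic cutoff $\tilde\chi_r$ supported away from $B_r$, sets $u_r=\tilde\chi_r u$, and observes that because the decaying factor in the coefficients of $R:=A-A_\infty$ lives in $W^{\alpha,\gamma}_\rho$ with $\rho<0$, one has $\|\tilde\chi_{r/2}\|_{\alpha,q,-\rho}\to 0$ as $r\to\infty$ (Fact~4 in the paper's proof). Choosing $r$ large makes $\|Ru_r\|_{s-m,q,\delta-m}\le \tfrac12\|u_r\|_{s,q,\delta}$, which is then \emph{absorbed} into the left side of the estimate from Lemma~\ref{lemb2} — a norm-absorption argument, not a perturbation-by-compact-operator argument. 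The commutator $[A,\tilde\chi_r]$ is lower order and lives on a fixed compact set, so it is controlled by the interior estimate of Lemma~\ref{lemb1}. After that the paper gets the final form of the estimate with $\|u\|_{t,q,\delta'}$ on the right using Ehrling's lemma, and deduces semi-Fredholmness from the a priori estimate together with the genuine compact embedding $W^{s,q}_\delta\hookrightarrow W^{m-\alpha,q}_{\delta'}$ (here \emph{both} the regularity and the weight strictly improve, so Theorem~\ref{thmA1}(3) applies) via the equivalence lemma (Fact~5). So the compact ingredient is the embedding of Sobolev spaces, not the operator $R$; that is the piece your proposal misplaced. Your Step~(4) is on the right track — the patching and absorption you describe there are close to what actually happens — but the justification for why any of it gives an a priori estimate must come from the cutoff-smallness mechanism, not from compactness of $B$.
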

\begin{proof}{\bf (Proposition~\ref{propb1})}
The approach of the proof is standard. Here we will closely
follow the proof that is given for the case $q=2$ in \cite{2}
[Lemma 4.9]. In the proof we use the following facts (for these
facts $s, \delta\in \mathbb{R}$ and $p\in (1,\infty)$):
\begin{itemizeX}
\item \textbf{Fact 1:}(see Lemma \ref{lemA9}) If $f\in C_c^{\infty}(\mathbb{R}^n)$ and $u\in
W^{s,p}(\mathbb{R}^n)$, then $fu\in W^{s,p}(\mathbb{R}^n)$ and
moreover $\|fu\|_{s,p}\preceq \|u\|_{s,p}$ (the implicit constant
may depend on $f$ but is independent of $u$).
\item \textbf{Fact 2:}(see Lemma \ref{lemA10})
Let $\chi \in C_c^{\infty}(\mathbb{R}^n)$ be a cutoff function
equal to $1$ on $B_1$ and equal to $0$ outside of $B_2$. Let
$\tilde{\chi}(x)=1-\chi(x)$ and for all $\epsilon>0$ define
$\chi_{\epsilon}(x)=\chi(\frac{x}{\epsilon}),
\tilde{\chi}_{\epsilon}(x)=\tilde{\chi}(\frac{x}{\epsilon})$.
Then we have $\|\chi_{\epsilon}u\|_{s,p,\delta}\preceq
\|u\|_{s,p,\delta}$ and
$\|\tilde{\chi}_{\epsilon}u\|_{s,p,\delta}\preceq
\|u\|_{s,p,\delta}$.
\item \textbf{Fact 3:} Let $u\in
W^{s,p}_{\delta}(\mathbb{R}^n)$.
Also let $\Omega$ be an open bounded subset of $\mathbb{R}^n$.
Then
\begin{itemizeXX}
\item $u\in W^{s,p}(\Omega)$ and $\|u\|_{W^{s,p}(\Omega)}\preceq
\|u\|_{W^{s,p}_{\delta}(\mathbb{R}^n)}$.
\item If $\supp u \subseteq \Omega$, then $u\in W^{s,p}(\mathbb{R}^n)$ and
$\|u\|_{W^{s,p}(\Omega)}\simeq
\|u\|_{W^{s,p}_{\delta}(\mathbb{R}^n)}$.
\end{itemizeXX}
If $s\in \mathbb{N}_0$, the above items follow from the fact that
 weights are of the form $\langle x\rangle^a$ and so they attain their
maximum and minimum on any compact subset of $\mathbb{R}^n$. If $s$ is not an
integer, they can be proved by interpolation.
\item \textbf{Fact 4:} Suppose $f\in W^{s,p}_{\delta}$ with
$\delta>0$ and $f$ vanishes in a neighborhood of the origin. Then
\begin{equation*}
\lim_{i\rightarrow \infty}\|S_{2^{-i}}f\|_{s,p,\delta}=0.
\end{equation*}
The reason is as follows: by assumption there exists $l\in
\mathbb{N}$ such that $f=0$ on $B_{2^{-l}}$. So if $\hat{l}\in
\mathbb{Z}$ and $\hat{l}<-l-1$ then $S_{2^{\hat{l}}}f=0$ on $B_2$.
Indeed,
\begin{equation*}
x\in B_2 \,\, \Rightarrow \,\,|2^{\hat{l}} x|<2^{\hat{l}+1}<2^{-l}
\,\,\Rightarrow\,\, f(2^{\hat{l}}x)=0\,\, \Rightarrow \,\,
S_{2^{\hat{l}}}f(x)=0.
\end{equation*}
So for $i>l+2$ we can write
\begin{align*}
\|S_{2^{-i}}f\|_{s,p,\delta}^p&=\sum_{j=0}^{\infty} 2^{-p\delta
j}\|S_{2^j}(\phi_j S_{2^{-i}}f)\|_{W^{s,p}(\mathbb{R}^n)}^{p}
\\
&=\sum_{j=0}^{\infty}
2^{-p\delta j}\|\phi S_{2^{j-i}}f\|_{W^{s,p}(B_2)}^{p}
\quad \quad (S_{2^j}\phi_j=\phi, \,\, \supp \phi \subseteq B_2)\\
&=\sum_{j=i-l-1}^{\infty} 2^{-p\delta j}\|\phi
S_{2^{j-i}}f\|_{W^{s,p}(B_2)}^{p}\quad (S_{2^{j-i}}f=0\,\,
\textrm{on}\,\, B_2\,\, \textrm{if}\,\, j-i<-l-1)\\
&=\sum_{\hat{j}=1}^{\infty} 2^{-p\delta( \hat{j}+i-l-2)}\|\phi
S_{2^{\hat{j}-l-2}}f\|_{W^{s,p}(B_2)}^{p}\quad
(\hat{j}:=j-(i-l)+2)\\
&=2^{-p\delta(i-l-2)}\sum_{\hat{j}=1}^{\infty}2^{-p\delta
\hat{j}} \|S_{2^{\hat{j}}}(\phi_{\hat{j}} S_{2^{-l-2}}
f)\|_{W^{s,p}(\mathbb{R}^n)}^{p}\\
&=2^{-p\delta(i-l-2)} \|S_{2^{-l-2}}f\|_{s,p,\delta}^p\preceq
2^{-p\delta(i-l-2)} \|f\|_{s,p,\delta}^p.
\end{align*}
It follows that $\lim_{i\rightarrow
\infty}\|S_{2^{-i}}f\|_{s,p,\delta}=0$.
\item \textbf{Fact 5:} [\textbf{Equivalence Lemma}]\cite{17}
Let $E_1$ be a Banach space, $E_2$, $E_3$ normed spaces, and let
$A\in L(E_1, E_2)$, $B\in L(E_1, E_3)$ such that one has:
\begin{itemizeXX}
\item $\| u \|_{1}  \lesssim \| A u
\|_{2}+ \| B u\|_{3}.$
\item $B$ is compact.
\end{itemizeXX}
Then  $ker A$ is finite dimensional and the range of $A$ is
closed, i.e., $A$ is semi-Fredholm.
\end{itemizeX}
Now let's start the proof. Let $A=A_{\infty}+R$ where
$A_{\infty}$ is the principal part of $A$ at infinity. Let $r$ be
a fixed dyadic integer to be selected later and let
$u_r=\tilde{\chi}_r u$. By Lemma \ref{lemb2} we have
\begin{equation*}
\|u_r\|_{s,q,\delta}\preceq \|A_{\infty}u_r\|_{s-m,q,\delta-m}\leq
\|Au_r\|_{s-m,q,\delta-m}+\|R u_r\|_{s-m,q,\delta-m}.
\end{equation*}
The implicit constant in the above inequality does not depend on
$r$. Now $R\in D^{\alpha, \gamma}_{m,\rho}$ has vanishing
principal part at infinity. Therefore by Theorem {\ref{thmB1}} we
can consider $R$ as a continuous operator from
$W^{s,q}_{\delta-\rho}$ to $W^{s-m,q}_{(\delta-\rho
-m)+\rho}=W^{s-m,q}_{\delta -m}$. Also since $\rho<0$ we have
$u_r\in W^{s,q}_{\delta}\hookrightarrow W^{s,q}_{\delta-\rho}$.
Consequently
\begin{equation*}
\|R u_r\|_{s-m,q,\delta-m} \preceq
\|R\|_{op}\|u_r\|_{s,q,\delta-\rho}=\|R\|_{op}\|\tilde{\chi}_{\frac{r}{2}}u_r\|_{s,q,\delta-\rho}\quad
(\textrm{note that}\,\, \tilde{\chi}_{\frac{r}{2}}u_r=u_r)
\end{equation*}
Now it is easy to check that $W^{\alpha,q}_{-\rho}\times
W^{s,q}_{\delta}\hookrightarrow W^{s,q}_{\delta-\rho}$. Therefore
\begin{equation*}
\|R u_r\|_{s-m,q,\delta-m} \preceq
\|R\|_{op}\|\tilde{\chi}_{\frac{r}{2}}\|_{\alpha,q,-\rho}\|u_r\|_{s,q,\delta}.
\end{equation*}
By \textbf{Fact 4}, $\lim_{i\rightarrow
\infty}\|\tilde{\chi}_{2^i}\|_{\alpha,q,-\rho}\rightarrow 0$.
Thus we can choose the fixed dyadic number $r$ large enough so that
\begin{equation*}
\|R\|_{op}\|\tilde{\chi}_{\frac{r}{2}}\|_{\alpha,q,-\rho}<\frac{1}{2},
\end{equation*}
and so we get
\begin{equation*}
\|u_r\|_{s,q,\delta}\preceq
\|Au_r\|_{s-m,q,\delta-m}+\frac{1}{2}\|u_r\|_{s,q,\delta}.
\end{equation*}
Hence
\begin{equation*}
\|u_r\|_{s,q,\delta}\preceq \|Au_r\|_{s-m,q,\delta-m}\leq
\|\tilde{\chi}_r
Au\|_{s-m,q,\delta-m}+\|[A,\tilde{\chi}_r]u\|_{s-m,q,\delta-m}.
\end{equation*}
By \textbf{Fact 2}, $\|\tilde{\chi}_r Au\|_{s-m,q,\delta-m}\preceq
\|Au\|_{s-m,q,\delta-m}$. Also one can easily show that
$[A,\tilde{\chi}_r]u$ has support in $B_{2r}$ and so by \textbf{Fact 3},
$\|[A,\tilde{\chi}_r]u\|_{s-m,q,\delta-m}\simeq
\|[A,\tilde{\chi}_r]u\|_{W^{s-m,q}(B_{2r})}$. On the bounded
domain $B_{2r}$, $[A,\tilde{\chi}_r]\in D^{\alpha,
\gamma}_{m-1}$, so
$[A,\tilde{\chi}_r]:W^{s-1,q}(B_{2r})\rightarrow
W^{s-m,q}(B_{2r})$ is continuous. Consequently
\begin{equation*}
\|[A,\tilde{\chi}_r]u\|_{W^{s-m,q}(B_{2r})}\preceq
\|u\|_{W^{s-1,q}(B_{2r})}\preceq \|u\|_{W^{s,q}(B_{2r})}.
\end{equation*}
Thus
\begin{equation*}
\|u_r\|_{s,q,\delta}\preceq
\|Au\|_{s-m,q,\delta-m}+\|u\|_{W^{s,q}(B_{2r})}.
\end{equation*}
Now we can write
\begin{align*}
\|u\|_{s,q,\delta}&=\|u_r+\chi_r u\|_{s,q,\delta}\leq
\|u_r\|_{s,q,\delta}+\|\chi_r u\|_{s,q,\delta}\\
&\preceq \|u_r\|_{s,q,\delta}+\|\chi_r u\|_{W^{s,q}(B_{2r})}\quad
(\chi_r u \,\, \textrm{has support in}\,\, B_{2r},\,\, \textbf{Fact 3})\\
&\preceq \|u_r\|_{s,q,\delta}+\| u\|_{W^{s,q}(B_{2r})}
\quad (\textbf{Fact 1})\\
&\preceq \|Au\|_{s-m,q,\delta-m}+\|u\|_{W^{s,q}(B_{2r})}.
\end{align*}
From interior regularity estimate for elliptic operators on
unweighted Sobolev spaces [Lemma \ref{lemb1}] we know there
exists $\tilde{s}<s$ such that
\begin{equation}\lab{eqlemb1}
\|u\|_{W^{s,q}(B_{2r})}\preceq
\|Au\|_{W^{s-m,q}(B_{3r})}+\|u\|_{W^{\tilde{s},q}(B_{3r})},
\end{equation}
and by \textbf{Fact 3}
\begin{equation*}
\|Au\|_{W^{s-m,q}(B_{3r})} \preceq \|Au\|_{s-m,q,\delta-m}.
\end{equation*}
It follows that
\begin{equation*}
\|u\|_{s,q,\delta}\preceq
\|Au\|_{s-m,q,\delta-m}+\|u\|_{W^{\tilde{s},q}(B_{3r})}.
\end{equation*}
But by \textbf{Fact 3}, for any $\delta'\in \mathbb{R}$ we have
$\|u\|_{W^{\tilde{s},q}(B_{3r})}\preceq
\|u\|_{\tilde{s},q,\delta'}$. This implies
\begin{equation}\lab{eqestimate555}
\|u\|_{s,q,\delta}\preceq
\|Au\|_{s-m,q,\delta-m}+\|u\|_{\tilde{s},q,\delta'}.
\end{equation}
Now, if $t<s$ then either $t\geq \tilde{s}$ or $t<\tilde{s}$. If
$t\geq \tilde{s}$ then $W^{t,q}_{\delta'}\hookrightarrow
W^{\tilde{s},q}_{\delta'}$ and so $\|u\|_{\tilde{s},q, \delta'}
\preceq \|u\|_{t,q, \delta'}$. If $t<\tilde{s}$, then for
$\delta'>\delta$ we have $W^{s,q}_{\delta}\hookrightarrow
W^{\tilde{s},q}_{\delta'}\hookrightarrow W^{t,q}_{\delta'}$ where
the first embedding is compact and the second is continuous.
Therefore, by Ehrling's lemma, for all $\epsilon>0$ there exists
$C(\epsilon)$ such that
\begin{equation*}
\|u\|_{\tilde{s},q, \delta'}\leq \epsilon \|u\|_{s,q,
\delta}+C(\epsilon)\|u\|_{t,q, \delta'}
\end{equation*}
In particular the above inequality holds for
$\epsilon=\frac{1}{2}$. Combining this with (\ref{eqestimate555})
we can conclude that for all $t<s$ and $\delta'>\delta$
\begin{equation*}
\|u\|_{s,q,\delta}\preceq
\|Au\|_{s-m,q,\delta-m}+\|u\|_{t,q,\delta'}.
\end{equation*}
It remains to show that $A: W^{s,q}_{\delta}\rightarrow
W^{s-m,q}_{\delta-m}$ is semi-Fredholm. Pick any $\delta'$
strictly larger than $\delta$. By assumption $s>m-\alpha$, so we
have $\|u\|_{s,q,\delta}\preceq
\|Au\|_{s-m,q,\delta-m}+\|u\|_{m-\alpha,q,\delta'}$. Also
$W^{s,q}_{\delta}\hookrightarrow W^{m-\alpha,q}_{\delta'}$ is
compact. Hence by the estimate that was proved above and \textbf{Fact 5},
$A: W^{s,q}_{\delta}\rightarrow W^{s-m,q}_{\delta-m}$ is
semi-Fredholm.
\end{proof}
\begin{remark}\lab{remb1}
The proof of Proposition \ref{propb1} in fact shows that if $u\in
W^{t,q}_{\delta'}$ for some $t<s$ and $\delta'>\delta$ and if $Au
\in W^{s-m,q}_{\delta-m}$ then $u\in W^{s,q}_{\delta}$.
\end{remark}
\begin{lemma}\lab{lemb3}
Let the following assumptions hold:
\begin{itemize}
\item $A\in D^{\alpha, \gamma}_{m,\rho}$, $\gamma\in (1,\infty)$, $\alpha>\frac{n}{\gamma}$, $\rho<0$, and $m<n$. $A$ is elliptic.
\item $e\in(m-\alpha,\alpha]$ \quad (\textrm{if $e=\alpha \not \in \mathbb{N}_0$, then let $q \in [\gamma, \infty)$}).
\item $e-\frac{n}{q}\in
(m-n-\alpha+\frac{n}{\gamma},\alpha-\frac{n}{\gamma}]$.
\end{itemize}
Then: If $u\in W^{e,q}_{\beta}$ for some $\beta<0$ satisfies
$Au=0$, then $u\in W^{e,q}_{\beta'}$ for all $\beta'\in (m-n,0)$.
\end{lemma}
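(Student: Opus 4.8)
The plan is to run the standard elliptic weight‑improvement bootstrap at infinity, localizing to the model (constant–coefficient) operator away from the compact core. First I would dispose of the trivial cases: by the continuous embedding $W^{e,q}_{\gamma}\hookrightarrow W^{e,q}_{\gamma'}$ for $\gamma\le\gamma'$, if either $\beta'\ge\beta$ or $\beta\le m-n$ then $u\in W^{e,q}_{\beta}\subseteq W^{e,q}_{\beta'}$ and there is nothing to prove. So it suffices to treat the case $m-n<\beta'<\beta<0$.

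Next I would decompose $A=A_{\infty}+R$, where $A_{\infty}=\sum_{|\nu|=m}a_{\nu}^{\infty}\partial^{\nu}$ is the principal part at infinity — a constant–coefficient, homogeneous, elliptic operator of order $m<n$ — and $R=A-A_{\infty}\in D^{\alpha,\gamma}_{m,\rho}$ has vanishing principal part at infinity. By Lemma~\ref{lemb2}, $A_{\infty}\colon W^{e,q}_{\delta}\to W^{e-m,q}_{\delta-m}$ is an isomorphism for every $\delta\in(m-n,0)$; and since the hypotheses on $(e,q)$ are exactly the ones in Theorem~\ref{thmB1} (with $A_{\infty}=0$), $R$ maps $W^{e,q}_{\delta}$ continuously into $W^{e-m,q}_{\delta-m+\rho}$, gaining the extra decay factor $\rho<0$. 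Then, exactly as in the proof of Proposition~\ref{propb1}, I fix a dyadic $r$ and a smooth cutoff $\tilde{\chi}_r$ vanishing near the origin and equal to $1$ outside $B_{2r}$, and set $u_r=\tilde{\chi}_r u$. Since $Au=0$,
\[
A_{\infty}u_r = Au_r - Ru_r = [A,\tilde{\chi}_r]u - Ru_r,
\]
where $[A,\tilde{\chi}_r]$ is a differential operator of order $\le m-1$ with coefficients supported in $B_{2r}$, so $[A,\tilde{\chi}_r]u$ is compactly supported and (by Fact~3 in the proof of Proposition~\ref{propb1}) lies in $W^{e-m,q}_{\gamma}$ for every $\gamma$.

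Now I would iterate. Suppose we know $u\in W^{e,q}_{\gamma}$ for some $\gamma\in(\beta',0)$ (initially $\gamma=\beta$). Then $Ru_r\in W^{e-m,q}_{\gamma+\rho-m}$, so with $\sigma:=\max\{\beta',\gamma+\rho\}\in(m-n,0)$ the right–hand side above lies in $W^{e-m,q}_{\sigma-m}$; inverting $A_{\infty}$ on the interval $(m-n,0)$ — and using that $A_{\infty}$ is in particular injective on $W^{e,q}_{\gamma}$ to identify the preimage with $u_r$ itself — gives $u_r\in W^{e,q}_{\sigma}$. Adding back $\chi_r u=(1-\tilde{\chi}_r)u$, which is compactly supported and lies in $W^{e,q}$ hence in $W^{e,q}_{\sigma}$, we obtain $u\in W^{e,q}_{\sigma}$, i.e.\ the weight has improved from $\gamma$ to $\max\{\beta',\gamma+\rho\}$. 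Setting $\gamma_0=\beta$ and $\gamma_{k+1}=\max\{\beta',\gamma_k+\rho\}$, the sequence $\gamma_k$ decreases by $|\rho|$ at each step until it reaches $\beta'$, which occurs after finitely many steps because $\rho<0$ is fixed; hence $u\in W^{e,q}_{\beta'}$.

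The main point requiring care — and the only real obstacle — is the bookkeeping that keeps every intermediate weight inside $(m-n,0)$, which is exactly why one truncates with the $\max\{\beta',\cdot\}$ rather than letting $A_{\infty}^{-1}$ act at a weight that might overshoot below $m-n$; and the passage from the solvable equation $A_{\infty}u_r=(\text{better-decaying datum})$ to the conclusion $u_r\in(\text{better-decaying space})$ must be justified via injectivity of $A_{\infty}$ on the space where $u_r$ already lives, not merely via surjectivity. All the remaining ingredients — the cutoff estimates $\|\tilde{\chi}_r u\|_{e,q,\delta}\preceq\|u\|_{e,q,\delta}$, the order reduction of $[A,\tilde{\chi}_r]$ on bounded domains, and the equivalence of weighted norms for compactly supported functions — are already established in the proof of Proposition~\ref{propb1}.
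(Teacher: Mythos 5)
Your proof is correct and rests on the same core mechanism the paper uses: split $A=A_\infty+R$, observe that $R$ (having vanishing principal part at infinity) maps $W^{e,q}_{\delta}$ into $W^{e-m,q}_{\delta-m+\rho}$ and hence improves decay by $|\rho|$, invert the constant-coefficient model operator $A_\infty$ at the improved weight via Lemma~\ref{lemb2}, and iterate finitely many times until the weight drops to $\beta'$. The one real difference is that you interpose the cutoff $\tilde\chi_r$ and work with $u_r=\tilde\chi_r u$ and the commutator $[A,\tilde\chi_r]u$, whereas the paper's proof dispenses with the cutoff entirely: on $\mathbb{R}^n$ the operator $A_\infty$ is a globally defined constant-coefficient operator, so one can simply write $A_\infty u=-Ru\in W^{e-m,q}_{\beta-m+\rho}$ and apply the isomorphism directly to $u$. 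Your version is harmless and makes the localization away from the core explicit (which is the step one would genuinely need when transplanting the argument to an AF manifold where $A_\infty$ only lives in each end chart), but it imports machinery from the proof of Proposition~\ref{propb1} that is not required here. You also do something the paper leaves implicit and that is worth spelling out: the passage from solvability of $A_\infty v=-Ru$ in the better-weighted space to the conclusion $u\in W^{e,q}_{\sigma}$ requires matching $v$ with $u$ by the injectivity of $A_\infty$ on the larger space containing both; your phrasing of this point, and your explicit handling of the trivial cases $\beta'\ge\beta$ or $\beta\le m-n$, are small but genuine improvements in rigor over the paper's terse two-case argument.
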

\begin{proof}{\bf (Lemma~\ref{lemb3})}
Following the proof of \cite{dM06} [Lemma 3.8], let
$A=A_{\infty}+R$ where $A_{\infty}$ is the principal part of $A$
at infinity. $R$ has vanishing principal part at infinity and
therefore by Theorem \ref{thmB1}, $Ru \in
W^{e-m,q}_{\beta-m+\rho}$. Consequently $A_{\infty}u=-Ru \in
W^{e-m,q}_{\beta-m+\rho}$.
Now we may consider two cases:
\begin{itemizeX}
\item If $\beta+\rho\leq m-n$, then $\beta-m+\rho\leq -n$ and so $W^{e-m,q}_{\beta-m+\rho}\hookrightarrow
W^{e-m,q}_{\eta}$ for all $\eta\geq -n$. Consequently $A_{\infty}u
\in W^{e-m,q}_{\eta}$ for all $\eta \geq -n$. Since $A_{\infty}:
W^{e,q}_{\beta'}\rightarrow W^{e-m,q}_{\beta'-m}$ is an
isomorphism for all $\beta' \in (m-n,0)$ we conclude that $u \in
W^{e,q}_{\beta'}$ for all $\beta'\in (m-n,0)$.
\item If $\beta +\rho>m-n$ then $A_{\infty}: W^{e,q}_{\beta+\rho}\rightarrow
W^{e-m,q}_{\beta-m+\rho}$ is an isomorphism. and therefore $u\in
W^{e,q}_{\beta+\rho}$ which implies $u\in W^{e,q}_{\beta'}$ for
all $\beta' \in (\beta+\rho,0)$
\end{itemizeX}
Combining the above observations, we can conclude that $u\in
W^{e,q}_{\beta'}$ for every $\beta'\in
(\max(m-n,\beta+\rho),0)$.

Now clearly for some $N\in \mathbb{N}$ we have $\beta+N\rho< m-n$
and therefore by iteration we get $u\in W^{e,q}_{\beta'}$ for every
$\beta'\in (m-n,0)$.
\end{proof}
\begin{remark}
In contrast to the notation that was introduced in the main text,
in this Appendix and in particular in the statement of lemmas
\ref{lemb8} and \ref{lemb4} we do not use the notation $A_L$ for
the Laplace operator when it acts on weighted Sobolev spaces.
\end{remark}
\begin{lemma}[Maximum principle]\lab{lemb8}
Suppose $(M^n, h)$ is an AF manifold of class $W^{s,p}_{\delta}$
where $s \in (\frac{n}{p},\infty)\cap [1,\infty)$,$p\in
(1,\infty)$, and $\delta<0$. Also suppose $a\in
W^{s-2,p}_{\eta-2}$, $\eta\in \mathbb{R}$, $\eta<0$.
Suppose that $a \geq 0$.
\begin{itemize}
\item(a) If $u\in W^{s,p}_{\rho}$ for some $\rho<0$ satisfies
\begin{equation*}
-\Delta_h u+ a u \leq 0
\end{equation*}
then $u \leq 0$. In particular, if $-\Delta_h u+ a u = 0$, then
applying this result to $u$ and $-u$ shows that $u=0$.
\item(b) Suppose that $u\in W^{s,p}_{\rho}$ is nonpositive and satisfies
\begin{equation*}
-\Delta_h u \leq 0 .
\end{equation*}
If $u(x)=0$ at some point $x\in M$, then $u$ vanishes identically.
\end{itemize}
\end{lemma}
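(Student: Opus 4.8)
The plan is to reduce both parts to the local \textbf{strong maximum principle} for divergence-form second order elliptic operators with continuous, locally uniformly elliptic leading coefficients --- equivalently, to the De~Giorgi--Nash--Moser weak Harnack inequality (see e.g.\ \cite{Gilbarg-Trudinger}, and, for precisely this asymptotically flat setting with rough metric, \cite{dM06,dM05b}). First I would set up two preliminary facts. (i) In any coordinate chart $\sqrt{\det h}\,\Delta_h u=\partial_i\bigl(a^{ij}\partial_j u\bigr)$ with $a^{ij}=\sqrt{\det h}\,h^{ij}$; since $h\in W^{s,p}_{loc}$ and $sp>n$, Theorem~\ref{thmA1} gives $h^{ij},\sqrt{\det h}\in W^{s,p}_{loc}\hookrightarrow C^{0}_{loc}$, so the $a^{ij}$ are continuous and locally uniformly elliptic, and $\sqrt{\det h}>0$ is continuous (multiplying an inequality by it preserves the order of Remark~\ref{remorder}). (ii) Since $u\in W^{s,p}_{\rho}\hookrightarrow W^{s,p}_{loc}$ with $sp>n$ and $n\ge 3$, the Sobolev embedding gives $u\in W^{1,2}_{loc}(M)\cap C^{0}(M)$. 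Hence $u$ is a continuous $W^{1,2}_{loc}$ weak (sub)solution, and the local strong maximum principle applies in the usual form: if $u$ is a weak subsolution of $-\Delta_h u\le 0$ on an open set and attains an interior maximum at a point, then $u$ is constant near that point.

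For \emph{part (a)}, I would first note that $u\in W^{s,p}_{\rho}\hookrightarrow C^{0}_{\rho}$ with $\rho<0$ forces $u$ to be continuous with $u(x)\to 0$ as $|x|\to\infty$; thus either $\sup_{M}u\le 0$, and we are done, or $c:=\sup_{M}u>0$ is attained at some $x_{0}\in M$. Assuming the latter, set $\Omega:=\{x\in M:u(x)>0\}$, an open set containing $x_{0}$. On $\Omega$ the hypothesis gives $-\Delta_h u\le -au$ in the weak sense of Remark~\ref{remorder}, and for $\phi\in C_{c}^{\infty}(\Omega)$ with $\phi\ge 0$ one has $\langle au,\phi\rangle_{(M,h)}=\langle a,u\phi\rangle_{(M,h)}\ge 0$, because $u\phi\ge 0$ and $a\ge 0$ (here $a$ is a nonnegative Radon measure when $s<2$ and a nonnegative function when $s\ge 2$). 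Hence $-\Delta_h u\le 0$ weakly on $\Omega$, i.e.\ $u$ is a weak subsolution there attaining an interior maximum, so by the strong maximum principle $u\equiv c$ on the connected component $\Omega_{0}\subseteq\Omega$ containing $x_{0}$. If $\Omega_{0}\ne M$, any boundary point $y\in\partial\Omega_{0}$ satisfies $u(y)=c>0$ by continuity along $\Omega_{0}$, yet $y\notin\Omega$ gives $u(y)\le 0$ --- a contradiction; so $\Omega_{0}=M$ and $u\equiv c>0$, contradicting $u(x)\to 0$ at infinity. Therefore $\sup_{M}u\le 0$, i.e.\ $u\le 0$; applying this to $u$ and to $-u$ yields the stated consequence when $-\Delta_h u+au=0$.

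For \emph{part (b)}, now $u\le 0$ and $u(x_{0})=0$ for some $x_{0}$, so $x_{0}$ is an interior maximum of the weak subsolution $u$ of $-\Delta_h u\le 0$ on all of $M$. The set $Z:=\{x\in M:u(x)=0\}$ is closed by continuity of $u$ and open by the local strong maximum principle (each $z\in Z$ is an interior maximum, so $u\equiv 0$ near $z$). Since $x_{0}\in Z\ne\varnothing$ and $M$ is connected, $Z=M$, i.e.\ $u$ vanishes identically.

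The main obstacle --- and the only genuinely nontrivial input --- is the strong maximum principle / weak Harnack inequality for divergence-form operators with merely continuous (locally uniformly elliptic) leading coefficients; this is classical once the reduction (i)--(ii) is in place, and I would simply invoke it with the references above. Everything else is routine: identifying the nonnegative distribution $a$ with a nonnegative Radon measure (via the order of Remark~\ref{remorder}) and reading off the sign of $\langle au,\phi\rangle_{(M,h)}$; the $W^{1,2}_{loc}\cap C^{0}$ regularity of $u$ coming from $sp>n$; and the open--closed propagation argument on the connected manifold $M$. (Throughout we use $n\ge 3$, which is the relevant asymptotically flat case.)
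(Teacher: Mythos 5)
Your proof is correct, but for part (a) it takes a genuinely different route from the paper's. The paper proves (a) with a short, self-contained \emph{truncation/energy argument}: fix $\epsilon>0$, set $v=(u-\epsilon)^+$, note $v$ is compactly supported and lies in $W^{1,n}$, then pair the inequality $-\Delta_h u + au\le 0$ against $v$ and use $\langle a, uv\rangle\ge 0$ (justified via the embedding $W^{1,n}\hookrightarrow (W^{s-2,p})^*$) to get $\|\grad v\|_{L^2}^2\le 0$, hence $v\equiv 0$, hence $u\le\epsilon$; the strong maximum principle is only invoked for (b). You instead prove (a) by applying the strong maximum principle (De~Giorgi--Nash--Moser/weak Harnack) on the positivity set $\Omega=\{u>0\}$, propagating constancy on a connected component and deriving a contradiction with decay at infinity; you sidestep the dual embedding by treating the nonnegative distribution $a$ as a Radon measure and pairing with the continuous compactly supported function $u\phi$. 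The trade-off: the paper's argument for (a) is more elementary and needs only a weak-derivative identity for $(u-\epsilon)^+$; your argument is conceptually unified (both parts rest on the same Harnack-type machinery and a connectedness/propagation scheme) but uses a strictly heavier tool for (a) than is actually necessary. Your treatment of (b) is essentially the same as the paper's.

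One small care point you should make explicit: attaining a positive supremum at an interior point $x_0$ uses completeness of $(M,h)$ together with the decay $u\to 0$ in each end to conclude $\{u\ge c'\}$ is compact for $0<c'<\sup u$; you invoke this implicitly, and it is exactly why the attainment is legitimate on a noncompact $M$.
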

\begin{proof}{\bf (Lemma~\ref{lemb8})}
For (a), we combine the proof that is given in
\cite{HNT07b} for the case of closed manifolds and the proof that is
given in \cite{2} for the case where $p=2$. Fix $\epsilon>0$. By
assumptions $u\in W^{s,p}_{\rho}\hookrightarrow C^{0}_{\rho}$ and
therefore $u$ goes to zero at infinity. Therefore if we let
$v=(u-\epsilon)^{+}:= \max\{u-\epsilon, 0\}$, then $v$ is compactly
supported. Note that if $f\in W^{1,q}_{loc}$ then $f^{+}\in
W^{1,q}_{loc}$  \cite{HNT07b} and so we have
\begin{equation*}
u\in W^{s,p}_{\rho}\hookrightarrow W^{1,n}_{\rho}\Rightarrow u\in
W^{1,n}_{loc}\Rightarrow u-\epsilon \in  W^{1,n}_{loc}
\Rightarrow v \in W^{1,n}_{loc}\Rightarrow v\in W^{1,n},
\end{equation*}
since $v$ has compact support.
Now $u\in W^{s,p}_{loc}$ and so $u\in W^{s,p}$ in the support of $v$.
By the multiplication lemma $W^{s,p}\times W^{1,n}\hookrightarrow
W^{1,n}$, therefore $uv$ is a nonnegative, compactly supported
element of $W^{1,n}$. Since $W^{1,n}\hookrightarrow
(W^{s-2.p})^{*}$ and $a\in W^{s-2,p}_{\eta-2}\subseteq
W^{s-2,p}_{loc}$, we can apply $a$ to $uv$ and since $a\geq 0$
and $uv\geq 0$ we have $\langle a,uv\rangle_{(M,h)}\geq 0$. Hence
\begin{equation*}
0\geq -\langle a,uv \rangle \geq \langle -\Delta_h
u,v\rangle=\langle \grad u, \grad v\rangle =\langle \grad v, \grad
v\rangle.
\end{equation*}
It follows that $v$ is constant with compact support which means
 $v\equiv 0$. Note that $v\equiv 0$ if and only if $u-\epsilon \leq 0$. So $u\leq
\epsilon$ for all $\epsilon
>0$. This
shows $u\leq 0$.

For (b), the proof is based on the weak Harnack inequality. The
exact same proof as the one that is given in \cite{HNT07b} for closed
manifolds, works for the above setting as well.
\end{proof}
\begin{lemma}[Elliptic estimate for Laplacian]\lab{lemb4}
Suppose $(M, h)$ is an $n$-dimensional ($n>2$) AF manifold of
class $W^{\alpha,\gamma}_{\rho}$, $\alpha\geq 1$,
$\alpha>\frac{n}{\gamma}$ ,$\rho<0$. If $\alpha>1$, let
$\sigma\in(2-\alpha,\alpha]$ be such that
$(\sigma-\frac{n}{\gamma})+(\alpha-\frac{n}{\gamma})>2-n$. If
$\alpha=1$, let $\sigma=1$. Then
\begin{enumerate}
\item $-\Delta_h \in D^{\alpha,\gamma}_{2,\rho}$.
\item For all $\delta\in \mathbb{R}$, $-\Delta_h:W^{\sigma,\gamma}_{\delta}\rightarrow W^{\sigma-2,\gamma}_{\delta-2}$ is a continuous elliptic operator.
\item For all $\delta\in (2-n,0)$, $-\Delta_h : W^{\sigma,\gamma}_{\delta} \rightarrow W^{\sigma-2,\gamma}_{\delta-2}$
is semi-Fredholm and satisfies the following elliptic estimate:
\begin{equation*}
\| u \|_{W^{\sigma,\gamma}_{\delta}}\,  \lesssim \, \| -\Delta_h u
\|_{W^{\sigma-2,\gamma}_{\delta-2}}+ \|
u\|_{W^{2-\sigma,\gamma}_{\delta'}}.
\end{equation*}
where $\delta'$ can be any real number larger than $\delta$.
\item For all $\delta \in (2-n,0)$, $-\Delta_h:W^{\alpha,\gamma}_{\delta}\rightarrow
W^{\alpha-2,\gamma}_{\delta-2}$ is an isomorphism. In particular
\begin{equation*}
\| u \|_{W^{\alpha,\gamma}_{\delta}}\,  \lesssim \, \| -\Delta_h u
\|_{W^{\alpha-2,\gamma}_{\delta-2}}
\end{equation*}
\end{enumerate}
\end{lemma}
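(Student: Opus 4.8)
The plan is to get parts (1)--(3) as bookkeeping applications of the machinery in Appendices~\ref{app:spaces} and~\ref{app:operators}, reserving the real work for part (4). For part (1), write $-\Delta_h$ in an end chart as $-h^{ij}\partial_i\partial_j + b^i\partial_i$, where $b^i$ is assembled from $h^{ij}$ and first derivatives of $h$. Since $(M,h)$ is AF of class $W^{\alpha,\gamma}_\rho$ we have $h_{ij}-\omega\bar h_{ij}\in W^{\alpha,\gamma}_\rho$ in each end chart, and because matrix inversion is smooth and carries $\omega\bar h$ to $\omega^{-1}\bar h$, Corollary~\ref{corpA1} applied to the correction term, together with the multiplication Lemma~\ref{lemA1}, gives $h^{ij}-\omega^{-1}\bar h^{ij}\in W^{\alpha,\gamma}_\rho$ and $b^i\in W^{\alpha-1,\gamma}_{\rho-1}=W^{\alpha-2+1,\gamma}_{\rho-2+1}$; comparing with Definition~\ref{defellipticoperator} this is exactly the statement $-\Delta_h\in D^{\alpha,\gamma}_{2,\rho}$, and ellipticity holds because $\alpha\gamma>n$ makes $h^{ij}$ continuous while condition~(3) of Definition~\ref{defAE} makes it uniformly positive definite. (This is the ``Example'' already recorded after Theorem~\ref{thmB1}.) For part (2), feed this into Theorem~\ref{thmB1} with $m=2$, $q=\gamma$, domain order $\sigma$, codomain order $\sigma-2$ and weight $\delta-2$: conditions (iii) and (v) of that theorem reduce to $\sigma\le\alpha$, (iv) reduces to $\alpha>n/\gamma$, (ii) to $\sigma\ge 2-\alpha$, and (vi) to the hypothesis $(\sigma-\tfrac n\gamma)+(\alpha-\tfrac n\gamma)>2-n$, while the equality side-conditions are harmless since $q=\gamma$ and (in the only equality cases) $\sigma,\alpha$ are integers; continuity and ellipticity then follow.

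For part (3), apply Proposition~\ref{propb1} with $A=-\Delta_h$, $m=2<n$, $q=\gamma$, $s=\sigma$, $\delta\in(2-n,0)$; its hypotheses on $(s,q,\delta)$ are again precisely the displayed inequalities (for $\alpha=1$, which forces $\gamma>n$, one instead reads off semi-Fredholmness from the divergence form of $-\Delta_h$, whose coefficients are then bounded and continuous, via the same interior-estimate-and-compactness argument underlying that proposition). This gives semi-Fredholmness and, for every $\delta'>\delta$ and $t<\sigma$, $\|u\|_{\sigma,\gamma,\delta}\preceq\|-\Delta_h u\|_{\sigma-2,\gamma,\delta-2}+\|u\|_{t,\gamma,\delta'}$; taking $t=2-\sigma$ when $\sigma>1$ yields the stated estimate, and when $\sigma=1$ one takes any $t<1$ and uses $W^{1,\gamma}_{\delta'}\hookrightarrow W^{t,\gamma}_{\delta'}$ to replace $\|u\|_{t,\gamma,\delta'}$ by $\|u\|_{2-\sigma,\gamma,\delta'}$.

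For part (4), first apply part (3) with $\sigma=\alpha$: for $\delta\in(2-n,0)$ the map $-\Delta_h:W^{\alpha,\gamma}_\delta\to W^{\alpha-2,\gamma}_{\delta-2}$ is semi-Fredholm. To pin the index to zero, approximate $h$ by smooth AF metrics $h_k$ with $h_k-\omega\bar h\to h-\omega\bar h$ in $W^{\alpha,\gamma}_\rho$ (mollify $h-\omega\bar h$ inside the charts; for $k$ large $h_k$ stays uniformly positive definite, using $W^{\alpha,\gamma}_\rho\hookrightarrow L^\infty$, hence $h_k$ is AF of the same class). The coefficients of $-\Delta_{h_k}$ then converge to those of $-\Delta_h$ in the weighted norms of Definition~\ref{defellipticoperator}, so $-\Delta_{h_k}\to-\Delta_h$ in $L(W^{\alpha,\gamma}_\delta,W^{\alpha-2,\gamma}_{\delta-2})$; for a smooth AF metric $-\Delta_{h_k}$ is Fredholm of index zero on these non-exceptional weighted spaces (indeed an isomorphism, Bartnik~\cite{5}; see also~\cite{dM06}), and since the index is locally constant on the open set of semi-Fredholm operators, $-\Delta_h$ has index zero. (This mirrors Step~1 of the proof of Theorem~\ref{thm1}.) For injectivity: if $u\in W^{\alpha,\gamma}_\delta$ with $\delta<0$ solves $-\Delta_h u=0$, then since $\alpha\in(\tfrac n\gamma,\infty)\cap[1,\infty)$ and $a\equiv 0\ge 0$, applying the maximum principle Lemma~\ref{lemb8}(a) to $u$ and to $-u$ forces $u\le 0$ and $u\ge 0$, so $u=0$. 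A Fredholm operator of index zero with trivial kernel is onto, so $-\Delta_h$ is a continuous bijection, and the open mapping theorem gives the asserted estimate $\|u\|_{W^{\alpha,\gamma}_\delta}\preceq\|-\Delta_h u\|_{W^{\alpha-2,\gamma}_{\delta-2}}$.

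The main obstacle is the index-zero step: semi-Fredholmness by itself is useless here, and one genuinely needs (a) the classical index-zero (in fact isomorphism) statement for the Laplacian of a \emph{smooth} AF metric on non-exceptional weighted spaces, together with (b) enough control on the mollified metrics $h_k$ --- that they remain in the same AF class with uniform ellipticity --- to guarantee $-\Delta_{h_k}\to-\Delta_h$ in operator norm so that local constancy of the semi-Fredholm index can be invoked. A subsidiary but necessary technicality, already used in part~(1), is checking that $-\Delta_h\in D^{\alpha,\gamma}_{2,\rho}$ with the correct decay weights on the lower-order coefficients, which rests on the weighted multiplication lemmas and on the fact that matrix inversion preserves the decay class of the perturbation of $h$ from its Euclidean model.
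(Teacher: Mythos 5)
Your proposal is correct and follows essentially the same route as the paper: parts (1)--(3) reduce to the multiplication lemma plus Theorem~\ref{thmB1} and Proposition~\ref{propb1}, and part~(4) combines semi-Fredholmness from part~(3), local constancy of the semi-Fredholm index under approximation of $h$ by smooth AF metrics, injectivity from the maximum principle of Lemma~\ref{lemb8} (with $a\equiv 0$, applied to $u$ and $-u$), and the open mapping theorem. You fill in more of the bookkeeping than the paper's one-paragraph sketch does, and you correctly flag the one genuine wrinkle: for $\alpha=1$, $\sigma=1$ the interval $(2-\alpha,\alpha]=(1,1]$ is empty, so Proposition~\ref{propb1} as literally stated does not apply; the strict left inequality $s>m-\alpha$ there is used only to choose the lower-order term $t=m-\alpha$ and can be weakened (any $t<s$ together with $\delta'>\delta$ gives a compact embedding), which is a cleaner fix than the divergence-form detour you gesture at. Your handling of the $\sigma=1$ case in the estimate of part~(3), replacing $t<1$ by $2-\sigma=1$ via $W^{1,\gamma}_{\delta'}\hookrightarrow W^{t,\gamma}_{\delta'}$, is correct.
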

\begin{proof}{\bf (Lemma~\ref{lemb4})}
Item 1 is a direct consequence of the multiplication
lemma and the expression of Laplacian in local coordinates. Item
2 is a direct consequence of item 1 and Theorem \ref{thmB1}. Item
3 is a direct consequence of item 1 and Proposition \ref{propb1}.
For the last item we can proceed as follows:

By item 3, $-\Delta_h:W^{\alpha,p}_{\delta}\rightarrow
W^{\alpha-2,p}_{\delta-2}$ is semi-Fredholm. On the other hand,
Laplacian of the rough metric can be approximated by the
Laplacian of smooth metrics and it is well known that Laplacian
of a smooth metric is Fredholm of index zero. Therefore since the
index of a semi-Fredholm map is locally constant, it follows that
$-\Delta_h$ is Fredholm with index zero. Now maximum principle
(Lemma \ref{lemb8}) implies that the kernel of
$-\Delta_h:W^{\alpha,p}_{\delta}\rightarrow
W^{\alpha-2,p}_{\delta-2}$ is trivial. An injective operator of
index zero is surjective as well. Consequently
$-\Delta_h:W^{\alpha,p}_{\delta}\rightarrow
W^{\alpha-2,p}_{\delta-2}$ is a continuous bijective operator.
Therefore by the open mapping theorem,
$-\Delta_h:W^{\alpha,p}_{\delta}\rightarrow
W^{\alpha-2,p}_{\delta-2}$ is an isomorphism of Banach spaces. In
particular the inverse is continuous and so $\| u
\|_{W^{\alpha,\gamma}_{\delta}}\,  \lesssim \, \|
-\Delta_h u
\|_{W^{\alpha-2,\gamma}_{\delta-2}}$.
\end{proof}
Compact perturbations of Fredholm operators of index $0$ remain
Fredholm of index $0$. The following lemma comes handy in
identifying a useful collection of compact operators.
\begin{lemma}\lab{lemb5}
Let the following assumptions hold:
\begin{itemize}
\item $\eta \in \mathbb{R}$, $\delta \in (-\infty, 0)$.
\item $p \in (1,\infty)$, $\alpha \in (\frac{n}{p},\infty)\cap (1,\infty)$.
\item $\sigma \in (2-\alpha,\alpha]\cap
(\frac{2n}{p}-n+2-\alpha,\infty)$.
\item $a(x)\in W^{\alpha-2,p}_{\eta-2}$.
\end{itemize}
Then: For all $\nu> \delta+\eta-2$, the operator $K:
W^{\sigma,p}_{\delta}(\mathbb{R}^n)\rightarrow
W^{\sigma-2,p}_{\nu}(\mathbb{R}^n)$ defined by $K(\psi)=a\psi$ is
compact. (In particular, we can set $\nu=\eta-2$ and for $n \geq
3$ we can set $\sigma=\alpha$.)
\end{lemma}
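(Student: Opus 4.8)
The plan is to write $K$ as a ``compact bulk'' plus a ``small tail''. The reason a more direct argument fails is that the target space $W^{\sigma-2,p}_{\nu}$ carries the \emph{same} differentiability index $\sigma-2$ as the naive target of the product $a\psi$, so no compact \emph{weighted} embedding (Theorem~\ref{thmA1}(3), which needs a strict gain of regularity) is available to finish in one step. First I would record that $K$ is bounded. The hypotheses on $\sigma,p,\alpha$ are precisely what the multiplication lemma (Lemma~\ref{lemA1}) requires with $s_1=\alpha-2$, $s_2=\sigma$, $s=\sigma-2$, $p_1=p_2=p$: one checks $\alpha-2\geq\sigma-2$ and $\sigma\geq\sigma-2$ (from $\sigma\leq\alpha$); $s_1+s_2=\alpha-2+\sigma>0$ (from $\sigma>2-\alpha$); $s_1+s_2-s=\alpha>\tfrac{n}{p}\geq0$; and the low-regularity condition $s_1+s_2\geq n(\tfrac{2}{p}-1)$ is exactly $\sigma\geq\tfrac{2n}{p}-n+2-\alpha$, which holds (and strictly, covering the ``$s<0$ with $\min(s_i)\geq0$'' subcase). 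Hence pointwise multiplication is bounded $W^{\alpha-2,p}_{\eta-2}\times W^{\sigma,p}_{\delta}\to W^{\sigma-2,p}_{\delta+\eta-2}$; composing with the continuous inclusion $W^{\sigma-2,p}_{\delta+\eta-2}\hookrightarrow W^{\sigma-2,p}_{\nu}$ (Theorem~\ref{thmA1}(2), valid since $\delta+\eta-2\leq\nu$) shows $K\colon W^{\sigma,p}_{\delta}\to W^{\sigma-2,p}_{\nu}$ is bounded, and more precisely that for any $b$ the operator $\psi\mapsto(a-b)\psi$ has operator norm $\preceq\|a-b\|_{W^{\alpha-2,p}_{\eta-2}}$.

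Next I would approximate. By density of $C_c^{\infty}(\mathbb{R}^n)$ in $W^{\alpha-2,p}_{\eta-2}$ (Theorem~\ref{thmA1}(12)), choose $a_j\in C_c^{\infty}(\mathbb{R}^n)$ with $a_j\to a$ in $W^{\alpha-2,p}_{\eta-2}$, and set $K_j\psi=a_j\psi$. Each $K_j$ is compact: picking a ball $B_j$ containing $\supp a_j$, consider
\[
W^{\sigma,p}_{\delta}(\mathbb{R}^n)\ \xrightarrow{\ \psi\mapsto a_j\psi\ }\ \{u\in W^{\sigma,p}(\mathbb{R}^n):\supp u\subseteq B_j\}\ \hookrightarrow\ W^{\sigma-2,p}(\mathbb{R}^n)\ \xrightarrow{\ \mathrm{id}\ }\ W^{\sigma-2,p}_{\nu}(\mathbb{R}^n),
\]
where the first arrow is bounded (restriction to the bounded set $B_j$ is bounded from the weighted to the unweighted space, followed by multiplication by the $BC^N$ function $a_j$, bounded by Lemma~\ref{lemA9}), the middle inclusion is compact (Rellich--Kondrachov, since $\sigma>\sigma-2$ on a bounded domain), and the last arrow is bounded because for functions supported in the fixed ball $B_j$ the weighted norm $\|\cdot\|_{W^{\sigma-2,p}_{\nu}}$ is equivalent to the unweighted one (the weights $\langle x\rangle^{a}$ are bounded above and below on $\bar B_j$; this is ``Fact~3'' in the proof of Proposition~\ref{propb1}). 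A composition bounded$\,\circ\,$compact$\,\circ\,$bounded is compact, so $K_j$ is compact.

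Finally, by the estimate from the first paragraph with $b=a_j$, $\|K-K_j\|_{\mathrm{op}}\preceq\|a-a_j\|_{W^{\alpha-2,p}_{\eta-2}}\to0$, so $K$ is a norm limit of compact operators and hence compact. The parenthetical assertions then follow by inspection: $\eta-2>\delta+\eta-2$ since $\delta<0$, so $\nu=\eta-2$ is admissible; and $\sigma=\alpha$ is admissible because $\alpha\in(2-\alpha,\alpha]$ (as $\alpha>1$) and $\alpha>\tfrac{2n}{p}-n+2-\alpha$, which rearranges to $\alpha>\tfrac{n}{p}-\tfrac{n-2}{2}$ and is implied by $\alpha>\tfrac{n}{p}$ when $n\geq2$.

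I expect the main obstacle to be the bookkeeping: verifying the multiplication-lemma hypotheses uniformly across the ranges of $\alpha$ (so that either $\alpha-2$ or $\sigma$ may be negative, activating the extra low-regularity and strict-inequality clauses of Lemma~\ref{lemA1}), and carefully managing the transition between weighted norms on $\mathbb{R}^n$ and unweighted norms on the fixed ball $B_j$ so that Rellich compactness applies; everything else is a routine ``norm limit of compacts is compact'' argument.
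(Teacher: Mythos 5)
Your proof is correct, but it takes a genuinely different route from the paper's. The paper exploits reflexivity of $W^{\sigma,p}_{\delta}$ and shows $K$ is \emph{completely continuous}: it picks a slack parameter $\beta>0$ from the strict inequalities in the hypotheses, sets $\theta=\sigma-\tfrac12\beta$ and $\delta'>\delta$, so that $W^{\sigma,p}_{\delta}\hookrightarrow W^{\theta,p}_{\delta'}$ is compact by Theorem~\ref{thmA1}(3), and then verifies the multiplication lemma gives $W^{\alpha-2,p}_{\eta-2}\times W^{\theta,p}_{\delta'}\hookrightarrow W^{\sigma-2,p}_{\nu}$, converting a weakly convergent input sequence into a strongly convergent output sequence. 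In other words, the paper burns the strict-inequality slack on the \emph{input} side (degrading $\sigma\to\theta$ and $\delta\to\delta'$), then applies the multiplication lemma once. You instead keep the input space intact, degrade the \emph{multiplier} (approximating $a$ by $a_j\in C_c^\infty$), establish compactness of each $K_j$ by Rellich on a bounded domain, and finish with the norm-limit-of-compacts fact, with the multiplication lemma used only for the operator-norm estimate $\|K-K_j\|_{\mathrm{op}}\preceq\|a-a_j\|_{\alpha-2,p,\eta-2}$. Both are sound. The trade-off: the paper's argument stays entirely inside the weighted framework built in the appendix (no detour through unweighted compact embeddings), whereas yours is more elementary in spirit but leans on Rellich--Kondrachov for possibly fractional and possibly negative order Sobolev spaces on a bounded domain together with the unweighted/weighted norm equivalence for compactly supported functions (the paper's ``Fact~3'' in Proposition~\ref{propb1}); both ingredients are used implicitly elsewhere in the paper, so this is not a gap, just a dependency worth stating explicitly. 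Incidentally, your observation that the parenthetical clause ``$\sigma=\alpha$'' holds already for $n\geq2$ (not just $n\geq3$) is correct and is a slight sharpening of the paper's statement.
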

\begin{proof}{\bf (Lemma~\ref{lemb5})}
$W^{\sigma,p}_{\delta}$ is a reflexive Banach
space; so in order to show that $K$ is a compact operator, we just
need to prove that it is completely continuous, that is, we need
to show if $\psi_n\rightarrow \psi$ weakly in
$W^{\sigma,p}_{\delta}$, then $K\psi_n\rightarrow K\psi$ strongly
in $W^{\sigma-2,p}_{\nu}$. Let
\begin{align*}
&\beta = \min\{\alpha-\frac{n}{p}, \sigma-(2-\alpha),1,
\sigma-n(\frac{2}{p}-1)-2+\alpha\}.\\
&\theta = \sigma-\frac{1}{2}\beta.\\
& \delta'= \delta + \frac{1}{2}[\nu- (\delta+\eta-2)].
\end{align*}
\begin{itemizeX}
\item \textbf{\bf Step 1:} It follows from the assumptions that $\beta>0$ and so
$\theta<\sigma$. Also clearly $\delta'>\delta$.
Thus we can conclude that $W^{\sigma,p}_{\delta}\hookrightarrow
W^{\theta,p}_{\delta'}$ is compact. Therefore $\psi_n\rightarrow
\psi$ strongly in $W^{\theta,p}_{\delta'}$.
\item \textbf{\bf Step 2:} Now we prove that
\begin{equation*}
W^{\alpha-2,p}_{\eta-2}\times
W^{\theta,p}_{\delta'}\hookrightarrow W^{\sigma-2,p}_{\nu}.
\end{equation*}
According to the multiplication lemma we need to check the
following conditions
\begin{enumerate}[(i)]
\item $\alpha-2\geq \sigma-2$. True because $\alpha \geq
\sigma$.\\
$\theta \geq \sigma-2$. True because $\beta \leq 1$ and so
\begin{equation*}
\theta=\sigma-\frac{1}{2}\beta \geq \sigma-\frac{1}{2}\geq
\sigma-2.
\end{equation*}
\item $\alpha-2+\theta > 0$. True because $\beta \leq
\sigma-(2-\alpha)$ and so
\begin{equation*}
\theta=\sigma-\frac{1}{2}\beta > \sigma-\beta \geq \sigma
-(\sigma-(2-\alpha))=2-\alpha.
\end{equation*}
\item $(\alpha-2)-(\sigma-2)\geq 0$. True because $\alpha \geq
\sigma$.
\end{enumerate}
\hspace{0.8cm} $\theta-(\sigma-2)\geq 0$. This one was shown
above.
\begin{enumerate}[(iv)]
\item
$(\alpha-2)+\theta-(\sigma-2)>n(\frac{1}{p}+\frac{1}{p}-\frac{1}{p})$.
That is, we need to show $\theta > \sigma- (\alpha-\frac{n}{p})$.
This is true because $\beta \leq \alpha-\frac{n}{p}$ and so
\begin{equation*}
\theta=\sigma-\frac{1}{2}\beta \geq \sigma-
\frac{1}{2}(\alpha-\frac{n}{p})>\sigma- (\alpha-\frac{n}{p}).
\end{equation*}
\end{enumerate}
\begin{enumerate}[(v)]
\item $(\alpha-2)+\theta> n(\frac{1}{p}+\frac{1}{p}-1)$. This
is true because $\beta \leq \sigma-n(\frac{2}{p}-1)-2+\alpha$ and
so
\begin{equation*}
\theta=\sigma-\frac{1}{2}\beta > \sigma-\beta \geq \sigma
-[\sigma-n(\frac{2}{p}-1)-2+\alpha]=n(\frac{2}{p}-1)+2-\alpha.
\end{equation*}
\end{enumerate}
The numbering of the above items agrees with the numbering of the
conditions in the multiplication lemma. Also note that
$(\eta-2)+\delta'\leq \nu$ by the definition of $\delta'$.
\item \textbf{\bf Step 3:} By what was proved in \textbf{Step 2} we have
\begin{equation*}
\|a(\psi_n-\psi)\|_{W^{\sigma-2,p}_{\nu}}\preceq
\|a\|_{W^{\alpha-2,p}_{\eta-2}}\|\psi_n-\psi\|_{W^{\theta,p}_{\delta'}}.
\end{equation*}
But by \textbf{Step 1}, the right hand side goes to zero, which means
$a\psi_n\rightarrow a\psi$ strongly in $W^{\sigma-2,p}_{\nu}$.
\end{itemizeX}
\end{proof}
\begin{lemma}[Ehrling's lemma]\lab{lemehrling}\cite{18}
Let $X$, $Y$ and $Z$ be Banach spaces. Assume that $X$ is
compactly embedded in $Y$ and $Y$ is continuously embedded in $Z$.
Then for every $\epsilon >0$ there exists a constant $c(\epsilon)$
such that
\begin{equation*}
\| x \|_Y \leq \epsilon \| x \|_X +
c(\epsilon)\| x \|_Z
\end{equation*}
\end{lemma}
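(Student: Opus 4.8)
The plan is the standard contradiction-and-compactness argument; there is no serious obstacle here, only a couple of routine points to get right (injectivity of the embeddings and legitimacy of the rescaling).

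First I would negate the conclusion. If the asserted inequality fails, then there is a fixed $\epsilon_0 > 0$ for which no admissible constant exists, so for every $n \in \mathbb{N}$ we can choose $x_n \in X$ with
\[
\|x_n\|_Y > \epsilon_0 \|x_n\|_X + n\,\|x_n\|_Z .
\]
In particular $x_n \neq 0$ (otherwise the strict inequality reads $0 > 0$), so after replacing $x_n$ by $x_n/\|x_n\|_Y$ we may assume $\|x_n\|_Y = 1$ for all $n$. The displayed inequality then yields $\epsilon_0\|x_n\|_X < 1$ and $n\|x_n\|_Z < 1$; hence $\{x_n\}$ is bounded in $X$ and $\|x_n\|_Z \to 0$ as $n \to \infty$.

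Second, since $X$ is compactly embedded in $Y$ and $\{x_n\}$ is bounded in $X$, there is a subsequence (not relabeled) and an element $x \in Y$ with $x_n \to x$ in $Y$; continuity of the norm gives $\|x\|_Y = 1$. Third, using that $Y$ is continuously (hence injectively) embedded in $Z$, convergence $x_n \to x$ in $Y$ forces $x_n \to x$ in $Z$; but we also have $x_n \to 0$ in $Z$, so by uniqueness of limits in the Banach space $Z$ we get $x = 0$. This contradicts $\|x\|_Y = 1$, and the lemma follows. The only place demanding any care is the passage "limit in $Y$ equals limit in $Z$," which is exactly where the continuity (and injectivity) of $Y \hookrightarrow Z$ is used; everything else is bookkeeping.
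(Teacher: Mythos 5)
The paper itself states Ehrling's lemma without proof, citing reference [18], so there is no in-paper argument to compare against. Your proposal is the standard compactness-and-contradiction proof and is correct: negate the estimate, normalize in $Y$, extract a $Y$-convergent subsequence by compactness of $X\hookrightarrow Y$, and use continuity (together with the injectivity built into the notion of embedding) of $Y\hookrightarrow Z$ to derive the contradiction $1=\|x\|_Y=0$. One small wording caveat: your parenthetical ``continuously (hence injectively)'' suggests continuity implies injectivity, which it does not; what you actually need is that an \emph{embedding} is by convention injective, and that assumption is part of the hypothesis of the lemma.
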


\section{Artificial Conformal Covariance of the Hamiltonian Constraint}
   \label{app:covariance}

Here we develop several results we need involving properties of the
Hamiltonian constraint under a conformal change.
We closely follow the argument in \cite{HNT07b} for closed manifolds.

Let $(M, h)$ be a 3-dimensional AF manifold of class
$W^{s,p}_{\delta}$ where $p\in(1,\infty)$,  $s \in
(\frac{3}{p},\infty)\cap[1,\infty)$, and $\delta<0$. Suppose
$\beta<0$. For $\psi\in W^{s,p}_\delta$ and $a_\tau, a_\rho, a_W
\in W^{s-2,p}_{\beta-2}$, let
\begin{equation*}
H(\psi, a_W, a_\tau, a_\rho):= -\Delta_h \psi+a_{R_h}(\psi
+\mu)+a_\tau(\psi+\mu)^5-a_W (\psi+\mu)^{-7}-a_\rho(\psi+\mu)^{-3}
\end{equation*}
where $\mu$ is a fixed positive constant, $a_{R_h}=\frac{R_h}{8}$,
and $R_h\in W^{s-2,p}_{\delta-2}$ is the scalar curvature of the
metric $h$. Note that the Hamiltonian constraint can be
represented by the equation $H=0$.

Now let $\tilde{h}=(\xi+1)^4 h$ where $\xi \in W^{s,p}_{\delta}$
is a fixed function with $\xi>-1$. According to the discussion
right after Definition \ref{defAE} we know that $(M, \tilde{h})$
is also AF of class $W^{s,p}_{\delta}$. Define
\begin{equation*}
\tilde{H}(\psi, a_W, a_\tau, a_\rho):=-\Delta_{\tilde{h}}
\psi+a_{R_{\tilde{h}}}(\psi +\mu)+a_\tau(\psi+\mu)^5-\tilde{a}_W
(\psi+\mu)^{-7}-\tilde{a}_\rho(\psi+\mu)^{-3}
\end{equation*}
where $\tilde{a}_W:= (\xi+1)^{-12}a_W$ and
$\tilde{a}_\rho:=(\xi+1)^{-8}a_\rho$. Note that it follows from
Lemma \ref{lempA1} that $\tilde{a}_W$ and $\tilde{a}_\rho$ are in
$W^{s-2,p}_{\beta-2}$.
\begin{proposition}\label{prop:X2}
For all $\psi \in W^{s,p}_{\delta}$
\begin{equation*}
\tilde{H}(\psi, a_W, a_\tau,
a_\rho)=(\xi+1)^{-5}H((\xi+1)\psi+\mu\,
 \xi, a_W, a_\tau, a_\rho).
\end{equation*}
\end{proposition}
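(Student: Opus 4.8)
The statement is a pointwise algebraic identity relating the Hamiltonian constraint operator for the metric $h$ to the one for the conformally rescaled metric $\tilde h = (\xi+1)^4 h$. The plan is to verify it term by term, using the well-known transformation law for the Laplacian (equivalently the conformal Laplacian) under a conformal change of metric in dimension $3$, together with the definitions of $\tilde a_W$ and $\tilde a_\rho$.

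First I would recall the conformal transformation of the scalar Laplacian in dimension $n=3$: if $\tilde h = \theta^4 h$ with $\theta = \xi + 1 > 0$, then for any sufficiently regular function $v$,
\begin{equation*}
-\Delta_{\tilde h} v + a_{R_{\tilde h}} v
= \theta^{-5}\bigl(-\Delta_h(\theta v) + a_{R_h}(\theta v)\bigr),
\end{equation*}
where $a_{R} = R/8$; this is exactly the statement that the operator $-8\Delta + R$ is conformally covariant of bidegree $(5,-5)$ in three dimensions, and at the level of regularity here it follows from Appendix~\ref{app:covariance}'s setup together with Lemma~\ref{lempA1} (one checks both sides lie in $W^{s-2,p}_{\eta-2}$ and the identity holds for smooth data, then extends by density). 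Applying this with $v = \psi + \mu\,\dfrac{\xi}{\xi+1}$, so that $\theta v = (\xi+1)\psi + \mu\,\xi$ and $v + \mu = \psi + \mu$ ... wait, more carefully: I want $\theta v + \mu\theta = (\xi+1)\psi + \mu\xi + \mu(\xi+1) = (\xi+1)(\psi+\mu) + \mu\xi$; that is not quite $(\xi+1)\psi+\mu\xi+\mu$. Let me instead set things up directly: write $u := (\xi+1)\psi + \mu\,\xi$, so that $u + \mu = (\xi+1)(\psi+\mu)$. Then the conformal covariance of $-8\Delta+R$ gives
\begin{equation*}
-\Delta_{\tilde h}\psi + a_{R_{\tilde h}}(\psi+\mu)
= (\xi+1)^{-5}\Bigl(-\Delta_h\bigl((\xi+1)(\psi+\mu)\bigr) + a_{R_h}(\xi+1)(\psi+\mu)\Bigr)
= (\xi+1)^{-5}\Bigl(-\Delta_h(u+\mu) + a_{R_h}(u+\mu)\Bigr),
\end{equation*}
and since $-\Delta_h$ annihilates the constant $\mu$ this equals $(\xi+1)^{-5}\bigl(-\Delta_h u + a_{R_h}(u+\mu)\bigr)$, which is the first part of $(\xi+1)^{-5}H(u,a_W,a_\tau,a_\rho)$.

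Next I would match the remaining four terms, all of which are purely algebraic. Using $u + \mu = (\xi+1)(\psi+\mu)$ we have $(u+\mu)^5 = (\xi+1)^5(\psi+\mu)^5$, hence $(\xi+1)^{-5}a_\tau(u+\mu)^5 = a_\tau(\psi+\mu)^5$, matching the $a_\tau$ term of $\tilde H$. Similarly $(u+\mu)^{-7} = (\xi+1)^{-7}(\psi+\mu)^{-7}$, so $(\xi+1)^{-5}a_W(u+\mu)^{-7} = (\xi+1)^{-12}a_W(\psi+\mu)^{-7} = \tilde a_W(\psi+\mu)^{-7}$ by the definition $\tilde a_W = (\xi+1)^{-12}a_W$; and $(u+\mu)^{-3} = (\xi+1)^{-3}(\psi+\mu)^{-3}$, so $(\xi+1)^{-5}a_\rho(u+\mu)^{-3} = (\xi+1)^{-8}a_\rho(\psi+\mu)^{-3} = \tilde a_\rho(\psi+\mu)^{-3}$. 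Collecting all four terms with the correct signs reproduces exactly $\tilde H(\psi,a_W,a_\tau,a_\rho)$. Throughout one uses that $\xi \in W^{s,p}_\delta$ with $\inf(\xi+1) > 0$ (Remark~\ref{rem1}) and Lemma~\ref{lempA1} to justify that each product and each composition with the smooth function $x\mapsto (x+1)^{-k}$ stays in the appropriate weighted space, so all manipulations are legitimate in $W^{s-2,p}_{\eta-2}$ rather than merely formally.

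The only genuine subtlety — the main obstacle — is the rigorous justification of the conformal covariance identity for $-8\Delta + R$ at the low regularity $W^{s,p}_\delta$ of the metric: the scalar curvatures $R_h, R_{\tilde h}$ are only distributions in $W^{s-2,p}_{\delta-2}$, so one cannot manipulate the pointwise formula $R_{\tilde h} = \theta^{-5}(-8\Delta_h\theta + R_h\theta)$ naively. I would handle this exactly as the paper handles analogous issues: establish the identity first for smooth metrics and smooth functions (where it is the classical computation), then pass to the limit using the density of smooth metrics/functions and the continuity of the relevant multiplication and Nemytskii maps from Lemma~\ref{lempA1}, noting that both sides of the claimed identity are continuous functions of $(\psi,\xi,h)$ in the relevant topologies. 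Since the paper's Appendix~\ref{app:covariance} has already been set up with precisely these tools and $\tilde a_W,\tilde a_\rho$ have already been shown to lie in $W^{s-2,p}_{\beta-2}$, this limiting argument is routine given the preceding machinery.
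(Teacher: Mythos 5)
Your proof is correct and follows essentially the same route as the paper's: the paper's displayed computation is precisely the derivation of the conformal covariance of $-8\Delta_h + R_h$ in dimension three (via the three identities for $R_{\tilde h}$, $\Delta_h(\theta\psi+\mu(\theta-1))$, and $\Delta_{\tilde h}\psi$) followed by the same elementary algebraic matching of the remaining terms, which you package a bit more cleanly by citing the covariance as a known fact. Your remarks about justifying the identity at the given low regularity by density and Lemma~\ref{lempA1} are appropriate and consistent with the paper's conventions.
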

\begin{proof}{\bf (Proposition~\ref{prop:X2})}
Let $\theta= \xi+1$. Then we have
\begin{align*}
&R_{\tilde{h}}=(-8 \Delta_h \theta+R_h \theta)\theta^{-5},\\
& \Delta_h(\theta \psi+\mu (\theta-1))=\Delta_h(\theta
\psi)+\mu\Delta_h\theta=(\Delta_h\theta)\psi+\theta\Delta_h
\psi+2\langle \grad \psi,\grad \theta\rangle_h+\mu\Delta_h\theta,\\
&
\Delta_{\tilde{h}}\psi=\theta^{-4}\Delta_h\psi+2\theta^{-5}
\langle \grad\psi,\grad \theta\rangle_h.
\end{align*}
Therefore we can write
\begin{align*}
(\xi+1)^{-5}& H((\xi+1)\psi+\mu\,
 \xi, a_W, a_\tau,
 a_\rho)=\theta^{-5}H(\theta\psi+\mu\theta-\mu, a_W, a_\tau, a_\rho)\\
 &=\theta^{-5}[-\Delta_h
 (\theta\psi+\mu\theta-\mu)+\frac{1}{8}R_h(\theta \psi+\theta
 \mu)+a_\tau (\theta \psi+\theta \mu)^5
\\
& \quad \quad
- a_W(\theta\psi+\theta\mu)^{-7}-a_\rho(\theta\psi+\theta\mu)^{-3}]\\
 &=\theta^{-5}[(-\Delta_h\theta)\psi-\theta\Delta_h
\psi-2
\langle \grad \psi,\grad
\theta\rangle_h-\mu\Delta_h\theta+\frac{1}{8}R_h\theta(\psi+\mu)
\\
& \quad \quad +a_\tau\theta^{5}(\psi+\mu)^5-a_W\theta^{-7}(\psi+\mu)^{-7}-a_\rho\theta^{-3}(\psi+\mu)^{-3}]\\
&=\big[-\theta^{-4}\Delta_h \psi-2\theta^{-5}
\langle \grad \psi,\grad \theta\rangle_h\big]+\big[-\theta^{-5}(\Delta_h
\theta)\psi-\mu\theta^{-5}\Delta_h\theta
\\
& \quad \quad +\frac{1}{8}R_h\theta^{-4}(\psi+\mu)\big]
+a_\tau(\psi+\mu)^5-a_W\theta^{-12}(\psi+\mu)^{-7}-a_\rho\theta^{-8}(\psi+\mu)^{-3}\\
&=-\Delta_{\tilde{h}}\psi+\frac{1}{8}R_{\tilde{h}}(\psi+\mu)+a_\tau(\psi+\mu)^5-\tilde{a}_W(\psi+\mu)^{-7}-\tilde{a}_\rho(\psi+\mu)^{-3}
\\
&=\tilde{H}(\psi, a_W, a_\tau, a_\rho).
\end{align*}
\end{proof}
We have the following important corollary:
\begin{corollary}\lab{corocovariance}
Assume the above setting. Then we have
\begin{align*}
\tilde{H}(\tilde{\psi}, a_W, a_\tau, a_\rho)=0 \Longleftrightarrow
H((\xi+1)\tilde{\psi}+\mu\,\xi, a_W, a_\tau, a_\rho)=0, \\
\tilde{H}(\tilde{\psi}, a_W, a_\tau, a_\rho)\geq 0
\Longleftrightarrow
H((\xi+1)\tilde{\psi}+\mu\,\xi, a_W, a_\tau, a_\rho)\geq 0, \\
\tilde{H}(\tilde{\psi}, a_W, a_\tau, a_\rho)\leq 0
\Longleftrightarrow
H((\xi+1)\tilde{\psi}+\mu\,\xi, a_W, a_\tau, a_\rho)\leq 0. \\
\end{align*}
In particular, if $\tilde{\psi}_+$ and $\tilde{\psi}_-$ are sub
and supersolutions for the equation $\tilde{H}=0$, then
$\psi_+:=(\xi+1)\tilde{\psi}_++\mu\,\xi$ and
$\psi_-:=(\xi+1)\tilde{\psi}_-+\mu\,\xi$ are sub and
supersolutions for the equation $H=0$.
\end{corollary}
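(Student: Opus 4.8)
The plan is to reduce the entire statement to Proposition~\ref{prop:X2}, which records the pointwise conformal identity
\[
\tilde H(\psi,a_W,a_\tau,a_\rho)=(\xi+1)^{-5}\,H\big((\xi+1)\psi+\mu\,\xi,\,a_W,a_\tau,a_\rho\big),
\]
the only extra ingredient being that multiplication by the positive function $(\xi+1)^{-5}$ is an order isomorphism of the weighted space into which $H$ and $\tilde H$ take their values (say $W^{s-2,p}_{\eta-2}$, $\eta=\max\{\delta,\beta\}$). First I would check that $(\xi+1)^{\pm 5}$ are legitimate such multipliers: since $\xi\in W^{s,p}_\delta$ with $\xi>-1$, the argument of Remark~\ref{rem1} gives $\inf\xi>-1$, so these are bounded functions bounded away from zero; writing $(1+x)^{\pm5}-1$ as a smooth function vanishing at the origin and applying Corollary~\ref{corpA1} shows $(\xi+1)^{\pm5}-1\in W^{s,p}_\delta$, so that by the multiplication lemmas (Lemma~\ref{lemA1}, Lemma~\ref{lempA1}) both $(\xi+1)^{\pm5}$ act continuously on $W^{s-2,p}_{\eta-2}$.

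Next I would verify that, for a positive multiplier $g$ of this type, $u\ge 0\iff gu\ge 0$ in the order of Remark~\ref{remorder}. In the range where the elements in question are genuine (equivalence classes of) functions this is immediate from $g>0$ a.e. In the distributional range the order is characterized by $\langle\,\cdot\,,\varphi\rangle_{(M,h)}\ge 0$ for all nonnegative $\varphi\in C_c^\infty(M)$, and one has $\langle gu,\varphi\rangle_{(M,h)}=\langle u,g\varphi\rangle_{(M,h)}$; since $g\varphi$ is a nonnegative, compactly supported element of $W^{s,p}_\delta$, it can be approximated in that space by nonnegative $C_c^\infty$ functions (mollification preserves the sign), so $u\ge 0$ forces $\langle gu,\varphi\rangle\ge 0$. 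Running the same argument with the multiplier $(\xi+1)^{5}$ gives the reverse implication, so $g=(\xi+1)^{-5}$ induces an order isomorphism; in particular $u\le 0\iff gu\le 0$ (replace $u$ by $-u$) and $u=0\iff gu=0$.

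With these preliminaries the three displayed equivalences follow by applying the order isomorphism to $u=H\big((\xi+1)\tilde\psi+\mu\,\xi,a_W,a_\tau,a_\rho\big)$ and using Proposition~\ref{prop:X2} to identify $gu=\tilde H(\tilde\psi,a_W,a_\tau,a_\rho)$. For the assertion about barriers I would first note that $\psi_\pm:=(\xi+1)\tilde\psi_\pm+\mu\,\xi=\tilde\psi_\pm+\xi\tilde\psi_\pm+\mu\,\xi\in W^{s,p}_\delta$ (the products lie in the algebra $W^{s,p}_\delta$ by Corollary~\ref{coroA1}), and $\psi_\pm+\mu=(\xi+1)(\tilde\psi_\pm+\mu)>0$ since $\xi+1>0$, so $\psi_\pm>-\mu$. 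Recalling from Section~\ref{sec:subsuper} that $\tilde\psi_-$ being a subsolution of $\tilde H=0$ means exactly $\tilde H(\tilde\psi_-,a_W,a_\tau,a_\rho)\le 0$, the equivalence just established yields $H(\psi_-,a_W,a_\tau,a_\rho)\le 0$, i.e.\ $\psi_-$ is a subsolution of $H=0$; the supersolution case is identical with the inequality reversed.

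I expect the only delicate point to be the second paragraph, namely making the claim ``multiplication by $(\xi+1)^{-5}$ respects the order of $W^{s-2,p}_{\eta-2}$'' fully rigorous across the whole admissible range of $s$ --- in particular justifying the pairing identity $\langle gu,\varphi\rangle_{(M,h)}=\langle u,g\varphi\rangle_{(M,h)}$ for these spaces and the density step that promotes ``tested against nonnegative $C_c^\infty$'' to ``tested against nonnegative compactly supported elements of the predual.'' Once that is in place, everything else is a direct consequence of the explicit identity in Proposition~\ref{prop:X2}.
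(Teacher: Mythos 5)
Your proposal is correct and follows the same route the paper implicitly takes: the corollary is stated in the paper with no written proof, as it is meant to be an immediate consequence of Proposition~\ref{prop:X2} together with the positivity of $(\xi+1)^{-5}$. Your reconstruction fills in the details the paper leaves tacit --- in particular, the verification that multiplication by the positive, bounded-above-and-below function $(\xi+1)^{-5}$ (which, by Corollary~\ref{corpA1}, differs from a positive constant by an element of $W^{s,p}_{\delta}$) is an order isomorphism of $W^{s-2,p}_{\eta-2}$ in the sense of Remark~\ref{remorder}, including the distributional/duality case $s<2$. This extra care is warranted and consistent with the paper's framework; the density and pairing steps you flag as delicate are indeed where the rigor lies, but the argument as sketched goes through.
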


\section{Metrics in the Positive Yamabe Class}
   \label{app:posyam}

Here we collect some facts regarding the Yamabe invariant
in the case of AF manifolds.

Let $(M,h)$ be a 3-dimensional AF manifold of class
$W^{s,p}_{\delta}$ where $p\in(1,\infty)$,  $s \in
(\frac{3}{p},\infty)\cap(1,\infty)$, and $-1<\delta<0$. We define
the Yamabe invariant as follows: \cite{dM06,6}
\begin{equation*}
\lambda_h=\inf_{f\in C_c^{\infty}(M),f\not \equiv 0} \frac{\int_M
8|\grad f|^2 dV_h +\langle R_h,f^2\rangle_{(M,h)}}{\|f\|_{L^6}^2}
\end{equation*}
We say $h$ is in the positive Yamabe class if and only if
$\lambda_h>0$. Contrary to what we have for closed manifolds(e.g
\cite{HNT07b}), as
it is discussed in \cite{dM06} and \cite{6} we have

\textbf{$\lambda_h>0$ if and only if there exists a conformal
factor $\eta>0$ such that $\eta-1\in W^{s,p}_{\delta}$ and $(M,
\eta^{4}h)$ is scalar flat.}

It is interesting to notice that if $\lambda_h>0$, then $h$ is
also conformal to a metric with \textbf{continuous positive}
scalar curvature.
\begin{proposition}\label{prop:X3}
Let $(M,h)$ be a 3-dimensional AF manifold of class
$W^{s,p}_{\delta}$ where $p\in(1,\infty)$,  $s \in
(\frac{3}{p},\infty)\cap(1,\infty)$, and $-1<\delta<0$. If $h$
belongs to the positive Yamabe class, then there exist $\chi \in
W^{s,p}_{\delta}$ such that if we set $\hat{h}=(1+\chi)^4 h$, then
$R_{\hat{h}}$ is continuous and positive.
\end{proposition}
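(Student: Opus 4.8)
The plan is to start from the defining property of the positive Yamabe class stated just above: if $\lambda_h>0$, then there exists a conformal factor $\eta>0$ with $\eta-1\in W^{s,p}_{\delta}$ such that $\tilde{h}=\eta^4 h$ is scalar flat, i.e.\ $R_{\tilde{h}}=0$. The idea is then to make a \emph{second} conformal change, $\hat{h}=\theta^4 \tilde{h}$, choosing the conformal factor $\theta$ so that $R_{\hat{h}}$ becomes continuous and strictly positive. Under a conformal change $\hat{h}=\theta^4\tilde{h}$ in dimension three, the scalar curvature transforms as
\begin{equation*}
R_{\hat{h}}=\theta^{-5}\left(-8\Delta_{\tilde{h}}\theta + R_{\tilde{h}}\,\theta\right) = -8\,\theta^{-5}\Delta_{\tilde{h}}\theta,
\end{equation*}
using $R_{\tilde{h}}=0$. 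So I need to produce $\theta>0$ with $\theta-1\in W^{s,p}_{\delta}$ (so that $\hat{h}$ is again AF of the same class, by the discussion following Definition~\ref{defAE}) such that $-\Delta_{\tilde{h}}\theta$ is a continuous, strictly positive function; then $R_{\hat{h}}=-8\theta^{-5}\Delta_{\tilde{h}}\theta$ will automatically be continuous and positive because $\theta$ is continuous and bounded away from zero (Remark~\ref{rem1}-type argument: $\theta-1\in W^{s,p}_{\delta}\hookrightarrow C^0_\delta$ forces $\inf\theta>0$).

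The key step is therefore to solve $-\Delta_{\tilde{h}}\theta = \Lambda$ for a conveniently chosen positive continuous source $\Lambda$, and to check the solution has the required form $\theta = 1 + (\text{something in } W^{s,p}_{\delta})$. First I would pick $\Lambda$ positive, continuous, with appropriate decay — e.g.\ $\Lambda\sim r^{\delta'-2}$ for some $\delta'\in(\text{something},\delta)$, exactly as in Remark~\ref{remlambdaexistence}, so that $\Lambda\in W^{s-2,p}_{\delta-2}$. By Lemma~\ref{lemb4} (item 4, elliptic isomorphism for the Laplacian on the AF manifold $(M,\tilde{h})$, which is of class $W^{s,p}_{\delta}$ with $s\geq1$, $-1<\delta<0$), there is a unique $w\in W^{s,p}_{\delta}$ with $-\Delta_{\tilde{h}}w=\Lambda$, and by the maximum principle (Lemma~\ref{lemb8}) $w>0$. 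Now set $\theta = 1+w$. Then $\theta>1>0$, $\theta-1=w\in W^{s,p}_{\delta}$, and $-\Delta_{\tilde{h}}\theta = -\Delta_{\tilde{h}}w = \Lambda$, which is continuous and strictly positive. Hence
\begin{equation*}
R_{\hat{h}} = -8\,\theta^{-5}\Delta_{\tilde{h}}\theta = 8\,(1+w)^{-5}\Lambda
\end{equation*}
is continuous (product/quotient of continuous functions, denominator bounded below) and strictly positive everywhere on $M$.

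Finally I would assemble the two conformal factors: $\hat{h} = \theta^4\tilde{h} = \theta^4\eta^4 h = (\theta\eta)^4 h$, so taking $\chi := \theta\eta - 1$ gives $\hat{h}=(1+\chi)^4 h$ as required. One must verify $\chi\in W^{s,p}_{\delta}$: writing $\theta\eta = (1+w)(1+(\eta-1)) = 1 + w + (\eta-1) + w(\eta-1)$, so $\chi = w + (\eta-1) + w(\eta-1)$; the first two terms lie in $W^{s,p}_{\delta}$ by construction, and the product $w(\eta-1)$ lies in $W^{s,p}_{\delta}$ because $W^{s,p}_{\delta}$ is a Banach algebra (Corollary~\ref{coroA1}, using $sp>3$ and $\delta<0$). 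I expect the main obstacle — really the only nontrivial point — to be making sure all the function-space bookkeeping is consistent: that the source $\Lambda$ genuinely lands in $W^{s-2,p}_{\delta-2}$ so that Lemma~\ref{lemb4} applies (this requires $\delta'<\delta$ and $s-2\leq 0$, exactly the computation in Remark~\ref{remlambdaexistence}, or a slight variant when $s>2$), and that the transformation law for $R$ under a conformal change is valid in this low-regularity weighted setting (which is implicit in the identity $R_{\tilde{h}}=(-8\Delta_h\theta+R_h\theta)\theta^{-5}$ already used in the proof of Proposition~\ref{prop:X2}). Everything else is a direct application of the maximum principle and the Banach algebra property already established in the appendices.
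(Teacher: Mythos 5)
Your proof is correct and follows essentially the same route as the paper's own argument: both pass first to a scalar-flat metric $\tilde h$ in the conformal class via the positive Yamabe assumption, then solve a Poisson equation $-\Delta_{\tilde h}(\cdot)=\text{positive source}$ on $(M,\tilde h)$ with the maximum principle to produce a positive correction, compute the new scalar curvature from the conformal transformation law, and compose the two conformal factors using the Banach-algebra property of $W^{s,p}_\delta$. The only differences are cosmetic -- the paper writes the first factor as $(1+\eta)^4$ with $\eta\in W^{s,p}_\delta$, $\eta>-1$, and takes an arbitrary smooth positive $f\in W^{s-2,p}_{\delta-2}$ as the source, whereas you write $\eta^4$ with $\eta-1\in W^{s,p}_\delta$ and pin down a concrete source $\Lambda\sim r^{\delta'-2}$; your extra care about whether the source genuinely lies in $W^{s-2,p}_{\delta-2}$ when $s>2$ is a point the paper simply asserts, and the decay argument for a smooth radial $\Lambda$ handles it for all $s$.
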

\begin{proof}{\bf (Proposition~\ref{prop:X3})}
If $h$ is in the positive Yamabe class, then there
exists $\eta \in W^{s,p}_{\delta}$, $\eta>-1$ such that
$R_{\tilde{h}}=0$ where $\tilde{h}=(1+\eta)^4 h$. Let $f$ be a
smooth positive function in $W^{s-2,p}_{\delta-2}$. By Lemma
\ref{lemb4} there exists a unique function $v\in
W^{s,p}_{\delta}$ such that $-8\Delta_{\tilde{h}}v=f$. By the
maximum principle (Lemma \ref{lemb8}) v is positive. Now define
$\hat{h}=(1+v)^4 \tilde{h}$. We have
\begin{equation*}
R_{\hat{h}}=\big(-8\Delta_{\tilde{h}}v+R_{\tilde{h}}(1+v)\big)(1+v)^{-5}=8f(1+v)^{-5}.
\end{equation*}
Since $f$ and $v$ are both continuous and positive we can
conclude that $R_{\hat{h}}$ is continuous and positive. If we set
$\chi=v+\eta+\eta v$, then $\chi \in W^{s,p}_{\delta}$ and
\begin{equation*}
\hat{h}=(1+v)^4(1+\eta)^4 h=(1+\chi)^4 h.
\end{equation*}
Note that since $v>0$ and $\eta>-1$ we have $\chi>-1$.
\end{proof}

\section{Analysis of the LCBY Equations in Bessel Potential Spaces}
   \label{app:bessel}
As it was pointed out in Appendix A, Sobolev-Slobodeckij spaces
are not the only option that we have if we wish to work with
noninteger order Sobolev spaces. Another option is to consider
the Bessel potential spaces $H^{s,p}(\mathbb{R}^n)$ and then
define the weighted spaces based on Bessel potential spaces.
Bessel potential spaces agree with Sobolev spaces
$W^{s,p}(\mathbb{R}^n)$ when $s$ is an integer and therefore they
can be considered as an extension of integer order Sobolev
spaces. There are two main advantages in working with Bessel
potential spaces (and the corresponding weighted versions) in
comparison with Sobolev-Slobodeckij spaces: First, Bessel
potential spaces have better interpolation properties; second we
have a better (stronger) multiplication lemma for Bessel
potential spaces.
\begin{theorem} [Complex Interpolation]\cite{36}
 Suppose $\theta \in (0,1)$,
 $0\leq s_0,s_1<\infty$, and
 $1<p_0,p_1<\infty$. If
\begin{equation*}
s=(1-\theta)s_0+\theta s_1, \quad \quad
\frac{1}{p}=\frac{1-\theta}{p_0}+\frac{\theta}{p_1},
\end{equation*}
then $H^{s,p}(\mathbb{R}^n)=[H^{s_0,p_0}(\mathbb{R}^n),
H^{s_1,p_1}(\mathbb{R}^n)]_{\theta}$.
\end{theorem}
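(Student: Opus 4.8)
The plan is to reduce the claim to the classical complex interpolation theorem for Lebesgue spaces, $[L^{p_0}(\mathbb{R}^n),L^{p_1}(\mathbb{R}^n)]_{\theta}=L^p(\mathbb{R}^n)$ whenever $\tfrac1p=\tfrac{1-\theta}{p_0}+\tfrac{\theta}{p_1}$ (the Riesz--Thorin/Calder\'on theorem), by transporting it through the Bessel potential operator $J^{\sigma}:=(I-\Delta)^{\sigma/2}$. Recall that, by the very definition $\|u\|_{H^{s,p}}=\|\mathcal{F}^{-1}(\langle\xi\rangle^{s}\mathcal{F}u)\|_{L^p}$, the map $J^{\mu}:H^{\sigma,q}(\mathbb{R}^n)\to H^{\sigma-\mu,q}(\mathbb{R}^n)$ is for every $\mu,\sigma\in\mathbb{R}$ and every $q\in(1,\infty)$ an isometric isomorphism; in particular $J^{s}:H^{s,p}\to L^{p}$ is an isomorphism. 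The one genuinely analytic input I would invoke is that the purely imaginary powers $J^{it}$, $t\in\mathbb{R}$, are bounded on $L^{q}(\mathbb{R}^n)$ for $1<q<\infty$ with operator norm growing at most polynomially in $t$; this follows from the Mikhlin--H\"ormander multiplier theorem applied to the symbol $\langle\xi\rangle^{it}=(1+|\xi|^2)^{it/2}$, whose derivatives up to order $\lfloor n/2\rfloor+1$ satisfy $|\partial^{\alpha}_{\xi}\langle\xi\rangle^{it}|\lesssim(1+|t|)^{|\alpha|}\langle\xi\rangle^{-|\alpha|}$.

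With these facts in hand I would run the standard ``method of analytic families'' (retraction/coretraction for the complex method). For $z$ in the closed strip $0\le\operatorname{Re}z\le1$ put $\mu(z):=(1-z)s_0+z s_1$ and define $S_z:=J^{\mu(z)}$ and $R_z:=J^{-\mu(z)}=S_z^{-1}$; these are analytic families of operators (say on $\mathcal{S}'(\mathbb{R}^n)$, and bounded between the spaces below). On the boundary line $\operatorname{Re}z=0$ one factors $S_{it}=J^{s_0}\circ J^{\,it(s_1-s_0)}$, whence $S_{it}:H^{s_0,p_0}\to L^{p_0}$ is bounded with admissible (polynomial) growth in $t$; similarly $S_{1+it}=J^{s_1}\circ J^{\,it(s_1-s_0)}:H^{s_1,p_1}\to L^{p_1}$, and the same factorizations give the boundedness of $R_{it}:L^{p_0}\to H^{s_0,p_0}$ and $R_{1+it}:L^{p_1}\to H^{s_1,p_1}$. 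Since $R_z S_z=I$ identically, the abstract interpolation-of-retracts theorem for the complex method yields that $S_\theta$ is a bounded isomorphism $[H^{s_0,p_0},H^{s_1,p_1}]_\theta\to[L^{p_0},L^{p_1}]_\theta$ with inverse $R_\theta$. But $\mu(\theta)=(1-\theta)s_0+\theta s_1=s$, so $S_\theta=J^{s}$; combining with $[L^{p_0},L^{p_1}]_\theta=L^p$ we find that $J^{s}$ is an isomorphism $[H^{s_0,p_0},H^{s_1,p_1}]_\theta\to L^p$. Composing with the isomorphism $J^{-s}:L^{p}\to H^{s,p}$ identifies $[H^{s_0,p_0},H^{s_1,p_1}]_\theta$ with $H^{s,p}$, with equivalent norms, which is exactly the assertion.

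I expect the main obstacle to be the bookkeeping in the analytic-families step rather than any deep idea: one must verify that $z\mapsto S_z$ is analytic in the operator sense into the correct spaces, is continuous and bounded up to the two boundary lines, and that the growth of $\|S_{it}\|$ and $\|S_{1+it}\|$ (equivalently of $\|J^{\,it(s_1-s_0)}\|_{L^q\to L^q}$) is at most $e^{a|t|}$ with $a<\pi$ --- here the polynomial bound from Mikhlin--H\"ormander is more than enough --- so that the hypotheses of the complex retraction theorem are genuinely met. A secondary, purely technical point is density: it is cleanest to establish the operator identities $R_zS_z=I$ and the norm comparisons first on the Schwartz class (dense in each endpoint space and in $H^{s,p}$) and then extend by continuity. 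Once these routine checks are carried out, the conclusion follows at once; the restriction $s_0,s_1\ge0$ in the statement plays no essential role in this argument but is harmless to keep.
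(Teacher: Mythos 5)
The paper states this result as a \emph{cited} theorem (reference~[36]); no proof is given there, so there is nothing in the paper to compare against line by line. Your reconstruction is essentially the classical proof found in the standard references (Bergh--L\"ofstr\"om, Theorem~6.4.5, and Triebel): lift to $L^p$ via the Bessel potential $J^s$, invoke $[L^{p_0},L^{p_1}]_\theta=L^p$, and control the cross terms by Mikhlin--H\"ormander bounds on the imaginary powers $J^{it}$. The ingredients you list are the right ones, and your remark that the restriction $s_0,s_1\geq 0$ is inessential is also correct --- the argument works for all real orders.

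One point is stated imprecisely and should be fixed before this could stand as a proof. You appeal to ``the abstract interpolation-of-retracts theorem for the complex method.'' That theorem requires a \emph{fixed} retraction--coretraction pair $(R,S)$ that is simultaneously bounded between both endpoints of the two couples, i.e.\ $S\colon A_j\to B_j$ and $R\colon B_j\to A_j$ for $j=0,1$ with the same operators $S,R$. Your $S_z=J^{\mu(z)}$ and $R_z=J^{-\mu(z)}$ genuinely depend on $z$: for example $J^{s_0}$ maps $H^{s_0,p_0}\to L^{p_0}$ isometrically but maps $H^{s_1,p_1}\to H^{s_1-s_0,p_1}$, not $L^{p_1}$, so no single $S$ serves both endpoints. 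The retraction lemma therefore does not apply as stated. What you actually want is the Stein interpolation theorem for \emph{analytic families} of operators (Cwikel--Janson, or Voigt's abstract version), applied once to $S_z$ to get $J^s\colon[H^{s_0,p_0},H^{s_1,p_1}]_\theta\to L^p$ bounded and once to $R_z$ to get $J^{-s}\colon L^p\to[H^{s_0,p_0},H^{s_1,p_1}]_\theta$ bounded; since these are two-sided inverses, the isomorphism follows. Equivalently, you can avoid citing any operator-interpolation theorem and argue directly with admissible functions: given $u\in H^{s,p}$, take $g$ admissible for $(L^{p_0},L^{p_1})$ with $g(\theta)=J^s u$, set $f(z)=e^{\epsilon(z-\theta)^2}J^{-\mu(z)}g(z)$ (the Gaussian damps the polynomial growth in $t$ coming from $\|J^{it}\|$), verify admissibility for $(H^{s_0,p_0},H^{s_1,p_1})$, and let $\epsilon\to 0$; and conversely post-compose any admissible $f$ for the Bessel couple with $J^{\mu(z)}$. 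Either route uses exactly the two inputs you identified --- the isometry $J^\sigma\colon H^{\sigma,q}\to L^q$ and the Mikhlin bound on $J^{it}$ --- so the gap is one of citation rather than of substance, but as written the step ``by the retraction theorem'' is not justified.
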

\begin{lemma}\lab{thmemult}
Let $s_i \geq s$ with $s_1+s_2\geq 0$, and $1 < p, p_i < \infty$
($i=1,2$) be real numbers satisfying
\begin{equation*}
s_i-s\geq n(\dfrac{1}{p_i}-\dfrac{1}{p}),\quad
s_1+s_2-s>n(\dfrac{1}{p_1}+\dfrac{1}{p_2}-\dfrac{1}{p})\geq 0,
\end{equation*}
In case $s<0$ let
\begin{equation*}
s_1+s_2> n(\dfrac{1}{p_1}+\dfrac{1}{p_2}-1) \quad
\textrm{(equality is allowed if $min(s_1,s_2)<0$)}.
\end{equation*}
Then the pointwise multiplication of functions extends uniquely
to a continuous bilinear map
\begin{equation*}
H^{s_1,p_1}(\mathbb{R}^n)\times
H^{s_2,p_2}(\mathbb{R}^n)\rightarrow H^{s,p}(\mathbb{R}^n).
\end{equation*}
\end{lemma}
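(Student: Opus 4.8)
The plan is to prove the inequality $\|u_1u_2\|_{H^{s,p}}\preceq\|u_1\|_{H^{s_1,p_1}}\|u_2\|_{H^{s_2,p_2}}$ by Littlewood--Paley theory and Bony's paraproduct decomposition. The structural fact that makes this the natural tool is that, for $1<p<\infty$, the Bessel potential space coincides with the Triebel--Lizorkin space $F^{s}_{p,2}(\mathbb{R}^n)$, so that
\[
\|u\|_{H^{s,p}}\sim\Big\|\Big(\sum_{j\ge0}2^{2js}|\Delta_j u|^2\Big)^{1/2}\Big\|_{L^p(\mathbb{R}^n)},
\]
where $\{\Delta_j\}_{j\ge0}$ is a fixed dyadic Littlewood--Paley decomposition and $S_m:=\sum_{k\le m}\Delta_k$; this square-function characterization (a consequence of the Mikhlin--Hörmander multiplier theorem) carries none of the integer-order exceptional cases that appear in the Sobolev--Slobodeckij theory, which is exactly why the hypotheses here are strictly cleaner than those of Lemma~\ref{lemA13}. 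The argument divides into two parts: first a reduction of the case $s<0$ to the case $s\ge0$ by duality, and then the estimate for $s\ge0$ via paraproducts.

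For the duality reduction, assume the inequality is known whenever the target smoothness is nonnegative, and take a tuple with $s<0$ satisfying the hypotheses. For $u_i\in H^{s_i,p_i}$ and $\varphi\in C_c^\infty$, writing $\langle u_1u_2,\varphi\rangle=\langle u_2,u_1\varphi\rangle$ and using $(H^{s,p})^{*}=H^{-s,p'}$ for $1<p<\infty$ gives
\[
\|u_1u_2\|_{H^{s,p}}=\sup_{\|\varphi\|_{H^{-s,p'}}\le1}|\langle u_2,u_1\varphi\rangle|\le\|u_2\|_{H^{s_2,p_2}}\sup_{\|\varphi\|_{H^{-s,p'}}\le1}\|u_1\varphi\|_{H^{-s_2,p_2'}},
\]
so the problem is reduced to the multiplication estimate $H^{s_1,p_1}\times H^{-s,p'}\hookrightarrow H^{-s_2,p_2'}$. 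Since the inequalities of the lemma are affine in the variables $(s_i,1/p_i)$, a direct check shows that the substituted tuple $\big(s_1,-s,-s_2\big)$ with exponents $\big(1/p_1,\,1-1/p,\,1-1/p_2\big)$ again satisfies all hypotheses of the lemma, with target smoothness $-s_2$ lying in an already-treated range (using $s_1+s_2\ge0$, and, when $-s_2<0$, the $s<0$ clause including the equality-permitting case $\min(s_1,s_2)<0$); a short case analysis at the endpoints, possibly iterating the reduction once more, closes this step. By symmetry of the two factors one may assume throughout that $s_1\ge s_2$.

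For $s\ge0$ I would use Bony's decomposition $u_1u_2=T_{u_1}u_2+T_{u_2}u_1+R(u_1,u_2)$, with $T_fg=\sum_j S_{j-N}f\,\Delta_j g$ and $R(f,g)=\sum_{|j-k|\le N}\Delta_j f\,\Delta_k g$, and estimate the three pieces in the square-function norm above using Bernstein's inequalities for the frequency localizations, Hölder's inequality in $x$ with the exponent given by $\tfrac1{p_1}+\tfrac1{p_2}$ playing the intermediate role, and the Fefferman--Stein vector-valued maximal inequality to handle the low-frequency truncations $S_{j-N}$. The two low--high paraproducts $T_{u_i}u_{3-i}$ are responsible for the conditions $s_i\ge s$ and $s_i-s\ge n(\tfrac1{p_i}-\tfrac1{p})$; the resonant (high--high) term $R(u_1,u_2)$, whose dyadic blocks are localized only to a ball of radius comparable to $2^j$ rather than to an annulus, is responsible for $s_1+s_2-s>n(\tfrac1{p_1}+\tfrac1{p_2}-\tfrac1{p})$ together with the requirement $\tfrac1{p_1}+\tfrac1{p_2}\ge\tfrac1{p}$ (the ``$\ge0$'' in that hypothesis), which is precisely what permits the concluding $\ell^2$-summation and the embedding of sequence spaces into $L^p$.

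The step I expect to be the main obstacle is the careful bookkeeping for the resonant term at the boundary of the admissible region — namely, justifying that equality of the form $s_1+s_2=n(\tfrac1{p_1}+\tfrac1{p_2}-1)$ is allowed exactly when $\min(s_1,s_2)<0$ (and, after the duality reduction, in the corresponding $s<0$ cases). At that boundary the crude frequency estimate for $R(u_1,u_2)$ yields only an $\ell^\infty$, not an $\ell^2$, bound on the dyadic blocks, and one must instead use that the negativity of one of the indices provides genuine room in the summation over frequencies — an Abel/telescoping summation in $j$ exploiting $s_i<0$ — so that the resonant series still converges in the norm of $F^{s}_{p,2}(\mathbb{R}^n)$. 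Once the Euclidean statement is established, passing to an AF manifold of any class is routine: one uses an AF atlas with a subordinate partition of unity and the coordinate-chart definition of $W^{s,p}_{\delta}$, exactly as in the remark following Lemma~\ref{lemA1}, and the corresponding weighted (Bessel-potential) multiplication statement then follows from the same dyadic-rescaling argument used there.
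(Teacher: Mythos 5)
Your proposal takes a genuinely different route from the paper's proof. The paper handles the lemma by a three-way split on the relative sizes of $p$, $p_1$, $p_2$: the case $p_1,p_2\le p$ is deferred entirely to~\cite{35}; the case $p\le\min\{p_1,p_2\}$ is handled in Steps~1--4 by first establishing the diagonal instance $s_1=s_2=s$ via complex interpolation between integer-order instances and citations to named theorems of~\cite{37}, then invoking a cited corollary of~\cite{37} plus further interpolation to reach the off-diagonal range; and the mixed case is folded into the second by a Sobolev embedding. You instead go directly through the square-function characterization $H^{s,p}=F^s_{p,2}$ and Bony's paraproduct decomposition, estimating the two low--high pieces and the resonant piece via Bernstein, H\"older, and Fefferman--Stein. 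This is precisely the machinery \emph{underlying} the theorems the paper imports from~\cite{37}, so in effect you are reproving those black-box results rather than invoking them; what that buys is a single uniform argument that avoids the case split altogether and makes transparent which hypothesis of the lemma is paying for which paraproduct piece, while what it costs is that the two genuinely delicate places --- the reduction of $s<0$ to $s\ge0$, and the endpoint case $s_1+s_2=n(\tfrac1{p_1}+\tfrac1{p_2}-1)$ with $\min(s_1,s_2)<0$ --- are only sketched. The paper has essentially the same lacuna, deferring both to~\cite{35}.

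One concrete issue in your duality sketch: when $s<0$ but \emph{both} $s_1>0$ and $s_2>0$, the reduced target smoothness $-s_2$ is still negative, and dualizing again with respect to either remaining factor either leaves the target negative or simply returns you to the original inequality; so ``iterating the reduction once more'' does not terminate in that subcase. The right move there is not further duality but a Sobolev/H\"older estimate: embed $H^{s_i,p_i}\hookrightarrow L^{r_i}$ with $\tfrac1{r_i}=\tfrac1{p_i}-\tfrac{s_i}{n}$, multiply in $L^r$ with $\tfrac1r=\tfrac1{r_1}+\tfrac1{r_2}$, and use $L^r\hookrightarrow H^{s,p}$ (which is where the strict inequalities $s_1+s_2-s>n(\tfrac1{p_1}+\tfrac1{p_2}-\tfrac1p)$ and $s_1+s_2>n(\tfrac1{p_1}+\tfrac1{p_2}-1)$ are consumed). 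Similarly, the Abel-summation remark at the boundary of the admissible region is the right instinct but remains a gesture; both points would need to be closed for the argument to stand on its own, although neither invalidates the overall strategy.
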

We will prove the above lemma later in this Appendix.
\begin{remark}
We make the following observations.
\begin{itemizeX}
\item Note that in the above multiplication lemma there is no
restriction on the values of $p_1$ and $p_2$ with respect to $p$.
That is, it is allowed for $p_1$ or $p_2$ to be greater than $p$.
\item Note that contrary to what we had for
Sobolev-Slobodeckij spaces, the complex interpolation works
regardless of whether exponents are integer or noninteger. This
feature is crucial because complex interpolation works much
better for interpolation of bilinear forms. This is one of the
reasons that we have a stronger multiplication lemma for Bessel
potential spaces.
\end{itemizeX}
\end{remark}
Let us denote the weighted spaces based on $H^{s,p}$ by
$H^{s,p}_{\delta}$ (rather than $W^{s,p}_{\delta}$). Our spaces
$H^{s,p}_{\delta}(\mathbb{R}^n)$ correspond with the spaces
$h^{s}_{p,ps-p\delta-n}(\mathbb{R}^n)$ in \cite{15,16}.
\begin{theorem}[Complex Interpolation, Weighted Spaces]
\cite{15,16} Suppose $\theta \in (0,1)$. If
\begin{equation*}
s=(1-\theta)s_0+\theta s_1, \quad \quad
\frac{1}{p}=\frac{1-\theta}{p_0}+\frac{\theta}{p_1}, \quad \quad
\delta=(1-\theta)\delta_0+\theta \delta_1
\end{equation*}
then
$H^{s,p}_{\delta}(\mathbb{R}^n)=[H^{s_0,p_0}_{\delta_0}(\mathbb{R}^n),
H^{s_1,p_1}_{\delta_1}(\mathbb{R}^n)]_{\theta}$.
\end{theorem}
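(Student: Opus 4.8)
The plan is to realize the weighted Bessel potential space $H^{s,p}_\delta(\mathbb{R}^n)$ as a retract of a weighted $\ell^p$-sum of copies of the \emph{unweighted} space $H^{s,p}(\mathbb{R}^n)$, using one and the same retraction--coretraction pair for both endpoints of the couple, and then to transport the complex interpolation identity through that pair. Concretely, with the dyadic partition of unity $\{\varphi_j\}_{j\ge 0}$ and the scalings $S_r$ of Appendix~\ref{app:spaces}, set
\[
Ju=\big(S_{2^j}(\varphi_j u)\big)_{j\ge 0}, \qquad
\ell^{p}_{\delta}(H^{s,p}):=\Big\{(u_j)_{j\ge0}:\ \textstyle\sum_{j\ge0}2^{-p\delta j}\|u_j\|_{H^{s,p}}^p<\infty\Big\}.
\]
By the very definition of the weighted norm, $J$ is an isometry of $H^{s,p}_\delta(\mathbb{R}^n)$ onto a closed subspace of $\ell^{p}_{\delta}(H^{s,p})$. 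For the reverse map, fix smooth cutoffs $\psi_0$ supported in $B_4$ and $\psi_j=S_{2^{-j}}\psi$ with $\psi$ supported in $B_4\setminus B_{1/4}$ and $\psi\equiv1$ on $B_2\setminus B_{1/2}$, so that $\psi_j\equiv1$ on $\supp\varphi_j$, and define
\[
P\big((u_j)_{j\ge0}\big)=\sum_{j\ge0}\psi_j\,S_{2^{-j}}u_j .
\]
Then $\psi_j\varphi_j=\varphi_j$ gives $PJu=\sum_j\varphi_j u=u$, so $(J,P)$ is a coretraction--retraction pair; and, as the same formulas make sense for any triple $(s_i,p_i,\delta_i)$, the pair is simultaneous for both endpoints.

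The bounded\-ness of $P$ on the weighted sequence spaces is the step where the real work lies. It is proved by the same almost-orthogonality argument used for the multiplication lemma (Lemma~\ref{lemA1}): each point of $\mathbb{R}^n$ meets at most three of the annuli $\supp\varphi_j$, so after reindexing one controls $\|Pu\|_{H^{s,p}_\delta}^p$ by $\sum_j 2^{-p\delta j}\|\psi_j S_{2^{-j}}u_j\|_{H^{s,p}}^p$; one then uses that multiplication by the fixed smooth compactly supported cutoffs is bounded on $H^{s,p}$ with a uniform constant (the Bessel-potential analogue of Lemma~\ref{lemA9}) and that $\|S_r v\|_{H^{s,p}}\preceq\|v\|_{H^{s,p}}$ for $r$ in the bounded range $\{2^{-1},1,2\}$. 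This is the main obstacle; once it is in place, everything else is a formal consequence of standard interpolation facts.

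With the retraction--coretraction established, the complex interpolation functor yields
\[
\big[H^{s_0,p_0}_{\delta_0},\,H^{s_1,p_1}_{\delta_1}\big]_\theta
= PJ\Big(\big[\ell^{p_0}_{\delta_0}(H^{s_0,p_0}),\,\ell^{p_1}_{\delta_1}(H^{s_1,p_1})\big]_\theta\Big),
\]
so it remains to interpolate the sequence spaces. By the Stein--Weiss-type interpolation theorem for weighted $\ell^p$-sums with values in a Banach couple, and since $2^{-\delta j}=(2^{-\delta_0 j})^{1-\theta}(2^{-\delta_1 j})^{\theta}$,
\[
\big[\ell^{p_0}_{\delta_0}(H^{s_0,p_0}),\,\ell^{p_1}_{\delta_1}(H^{s_1,p_1})\big]_\theta
=\ell^{p}_{\delta}\big(\big[H^{s_0,p_0},\,H^{s_1,p_1}\big]_\theta\big),
\]
with $1/p$ and $\delta$ the stated affine combinations; the unweighted complex interpolation theorem stated just above in this appendix identifies $\big[H^{s_0,p_0},H^{s_1,p_1}\big]_\theta=H^{s,p}(\mathbb{R}^n)$, and applying $PJ$ once more recovers $H^{s,p}_\delta(\mathbb{R}^n)$. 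This gives the claim. Note that, in contrast to the Sobolev--Slobodeckij setting earlier in the paper, no parity restriction on $s_0,s_1,s$ is required, precisely because the unweighted complex interpolation identity for $H^{s,p}$ holds for all real orders.
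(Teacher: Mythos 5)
The paper does not prove this theorem; it states it with a citation to~\cite{15,16} and moves on, so there is no internal proof to compare your argument against. Your approach — exhibiting $H^{s,p}_\delta(\mathbb{R}^n)$ as a common retract of the weighted vector-valued sequence space $\ell^p_\delta(H^{s,p})$, interpolating the sequence spaces, and transporting the identity back through the retraction — is correct and is, in essence, the standard mechanism used in the cited references for weighted Triebel--Lizorkin and Bessel-type spaces. The key steps all check out: $J$ is by construction an isometric embedding; $PJ=\mathrm{id}$ follows from $\psi_j\varphi_j=\varphi_j$; and your sketch of the boundedness of $P$ is sound — the finite overlap of the annuli gives $\varphi_k Pu=\sum_{|j-k|\le c}\varphi_k\psi_j S_{2^{-j}}u_j$, after which $S_{2^k}(\varphi_k\psi_j S_{2^{-j}}u_j)=\varphi\cdot(S_{2^{k-j}}\psi)\cdot S_{2^{k-j}}u_j$ is estimated uniformly over $|j-k|\le c$ by the uniform boundedness of scalings in a fixed finite range and of multiplication by fixed compactly supported cutoffs, and the weight $2^{-p\delta k}$ is shifted to $2^{-p\delta j}$ at the cost of a constant $2^{p\delta(k-j)}$ that is bounded over $|j-k|\le c$.

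Two points worth tightening, though neither is a genuine gap. First, the unweighted identity $[H^{s_0,p_0},H^{s_1,p_1}]_\theta=H^{s,p}$ that you cite from the preceding theorem in the appendix is stated there only for $s_0,s_1\ge 0$; for the weighted theorem as written (with no sign restriction on $s_0,s_1$) you need the version valid for all real smoothness indices, which is standard (Bergh--L\"ofstr\"om, Theorem~6.4.5, via conjugation with the Bessel lift $(1-\Delta)^{\sigma/2}$). Second, the ``Stein--Weiss-type interpolation theorem for weighted $\ell^p$-sums with values in a Banach couple'' should be pinned down: it combines the complex interpolation of vector-valued $L^p$ spaces, $[L^{p_0}(A_0),L^{p_1}(A_1)]_\theta=L^p([A_0,A_1]_\theta)$ for $p_0,p_1<\infty$ (Bergh--L\"ofstr\"om, Theorem~5.1.2), with the classical scalar Stein--Weiss change-of-weights identity, applied here with discrete weights $\mu_i(j)=2^{-\delta_i j}$ for which $\mu_0^{1-\theta}\mu_1^\theta(j)=2^{-\delta j}$. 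Both hypotheses ($p_0,p_1\in(1,\infty)$) hold in the setting of the theorem, so the argument goes through.
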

The corresponding weighted version of the multiplication lemma
can be proved using the exact same argument as the one that we
used for weighted Sobolev-Slobodeckij spaces.
\begin{lemma}[Multiplication Lemma, Weighted Bessel potential spaces] Assume $s,
s_1, s_2$ and $1 < p, p_1, p_2 < \infty$ are real numbers
satisfying
\begin{enumerate}[(i)]
\item  $s_i \geq s$ \quad($i=1,2$),
\item  $s_1+s_2\geq 0$,
\item  $s_i-s\geq n(\dfrac{1}{p_i}-\dfrac{1}{p})$  \quad($i=1,2$),
\item  $s_1+s_2-s>n(\dfrac{1}{p_1}+\dfrac{1}{p_2}-\dfrac{1}{p})\geq
0$.
\end{enumerate}
In case $s<0$, in addition let
\begin{enumerate}[(v)]
\item $\quad s_1+s_2> n(\dfrac{1}{p_1}+\dfrac{1}{p_2}-1)$ \quad
\textrm{(equality is allowed if $min(s_1,s_2)<0$)}.
\end{enumerate}
Then for all $\delta_1 , \delta_2 \in \mathbb{R}$, the pointwise
multiplication of functions extends uniquely to a continuous
bilinear map
\begin{equation*}
H^{s_1,p_1}_{\delta_1}(\mathbb{R}^n)\times
H^{s_2,p_2}_{\delta_2}(\mathbb{R}^n)\rightarrow
H^{s,p}_{\delta_1+\delta_2}(\mathbb{R}^n).
\end{equation*}
\end{lemma}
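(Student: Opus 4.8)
The plan is to prove the weighted Bessel-potential multiplication lemma by reproducing, essentially verbatim, the proof of the weighted Sobolev--Slobodeckij version (Lemma~\ref{lemA1}), with every occurrence of $W^{s,p}_{\delta}$ replaced by $H^{s,p}_{\delta}$ and with the unweighted Sobolev--Slobodeckij multiplication lemma (Lemma~\ref{lemA13}) replaced by its Bessel-potential analogue, Lemma~\ref{thmemult}. The guiding observation is that the hypotheses $(i)$--$(v)$ in the statement to be proved are exactly the hypotheses of Lemma~\ref{thmemult}; in particular there is no $p_i\le p$ restriction and no integer-exponent caveat, which is precisely why the weighted statement here is cleaner than in the Sobolev--Slobodeckij setting, and why no extra hypotheses need to be carried along.

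First I would recall the mechanism. Writing the weighted norm as $\|u\|_{H^{s,p}_{\delta}}^p=\sum_{j\ge 0}2^{-p\delta j}\|S_{2^j}(\varphi_j u)\|_{H^{s,p}}^p$ with the dyadic partition of unity $\{\varphi_j\}$ and the dilations $S_r$ from Appendix~\ref{app:spaces}, one has $S_{2^j}(\varphi_j u_1 u_2)=S_{2^j}(\varphi_j)\,S_{2^j}u_1\,S_{2^j}u_2$, and on the support of $S_{2^j}\varphi_j$ (an annulus for $j\ge 1$, a ball for $j=0$) only the rescaled bumps $\varphi_{j-1},\varphi_j,\varphi_{j+1}$ survive, so $S_{2^j}u_i=\sum_{k=j-1}^{j+1}S_{2^{j-k}}S_{2^k}(\varphi_k u_i)$ there. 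Then I would apply, in order: multiplication by the fixed smooth compactly supported function $S_{2^j}\varphi_k$ (the unweighted analogue of \textbf{Fact~1} in the proof of Lemma~\ref{lemA1}); the elementary comparison $\sum a_k^p\sim(\sum a_k)^p$; the unweighted Bessel multiplication lemma $H^{s_1,p_1}\times H^{s_2,p_2}\hookrightarrow H^{s,p}$ (Lemma~\ref{thmemult}) applied to the localized factors; and the dilation bound $\|S_r v\|_{H^{s,p}}\preceq\|v\|_{H^{s,p}}$ for $r\in\{\tfrac12,1,2\}$. This produces $\|S_{2^j}(\varphi_j u_1 u_2)\|_{H^{s,p}}^p\preceq\big(\sum_{k=j-1}^{j+1}\|S_{2^k}(\varphi_k u_1)\|_{H^{s_1,p_1}}\big)^p\big(\sum_{l=j-1}^{j+1}\|S_{2^l}(\varphi_l u_2)\|_{H^{s_2,p_2}}\big)^p$. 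Setting $a_j=2^{-\delta_1 j}\sum_{k=j-1}^{j+1}\|S_{2^k}(\varphi_k u_1)\|_{H^{s_1,p_1}}$ and $b_j$ analogously, this gives $\|u_1 u_2\|_{H^{s,p}_{\delta_1+\delta_2}}\preceq(\sum_j(a_j b_j)^p)^{1/p}$; since $\tfrac1{p_1}+\tfrac1{p_2}-\tfrac1p\ge 0$ we have $r\le p$ for $\tfrac1r=\tfrac1{p_1}+\tfrac1{p_2}$, so the monotonicity of $r\mapsto(\sum c_j^r)^{1/r}$ together with Hölder's inequality bounds this by $(\sum a_j^{p_1})^{1/p_1}(\sum b_j^{p_2})^{1/p_2}$, which after a finite index shift is $\preceq\|u_1\|_{H^{s_1,p_1}_{\delta_1}}\|u_2\|_{H^{s_2,p_2}_{\delta_2}}$. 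Density of $C_c^{\infty}$ in the weighted Bessel spaces then yields the unique continuous bilinear extension, and a partition-of-unity/chart argument transfers the estimate to AF manifolds, exactly as in the remark following Lemma~\ref{lemA1}.

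The only ingredient that is not purely formal is the unweighted analogue of \textbf{Fact~1}: that multiplication by a fixed $\eta\in C_c^{\infty}(\mathbb{R}^n)$ is bounded on $H^{t,q}(\mathbb{R}^n)$. I would obtain this directly from Lemma~\ref{thmemult} by regarding $\eta$ as an element of $H^{\sigma,r}(\mathbb{R}^n)$ for $\sigma$ and $r$ chosen large enough that the hypotheses of Lemma~\ref{thmemult} hold with $(s_1,p_1)=(\sigma,r)$ and $(s_2,p_2)=(s,p)=(t,q)$ --- a compactly supported smooth function lies in all such spaces, so this is free. Thus the main (and essentially the only) obstacle is bookkeeping: verifying that the hypotheses $(i)$--$(v)$ of the lemma being proved are sufficient for every invocation of Lemma~\ref{thmemult} on the localized pieces. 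This is immediate because those hypotheses coincide with the hypotheses of Lemma~\ref{thmemult}, the scaling factors $2^{j-k}$ appearing in the localization are uniformly bounded (belonging to $\{\tfrac12,1,2\}$), and the weight exponents enter only through the exact identity $2^{-p(\delta_1+\delta_2)j}=2^{-p\delta_1 j}\,2^{-p\delta_2 j}$. No new interpolation or hard-analysis input is needed beyond Lemma~\ref{thmemult} and the complex-interpolation facts already recorded in Appendix~\ref{app:bessel}.
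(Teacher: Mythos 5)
Your proposal is correct and matches the paper's intent exactly: the paper disposes of this lemma with a single remark that it "can be proved using the exact same argument as the one that we used for weighted Sobolev--Slobodeckij spaces," and you have written out precisely that argument, substituting $H^{s,p}_\delta$ for $W^{s,p}_\delta$ and the unweighted Bessel multiplication lemma (Lemma~\ref{thmemult}) for Lemma~\ref{lemA13}, with the dyadic decomposition, dilation bounds, $\ell^r$-monotonicity, and H\"older step carried over unchanged. Your observation that the "multiplication by $C_c^\infty$" fact can itself be recovered directly from Lemma~\ref{thmemult} (rather than invoking a separate Bessel analogue of Lemma~\ref{lemA9}) is a sound, slightly more self-contained way to handle that one supporting ingredient, but it does not change the structure of the argument.
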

Again notice that $p_1$ and $p_2$ do NOT need to be less than or
equal to $p$. This extra degree of freedom that we have for
multiplication in Bessel potential spaces allows us to remove the
restrictions of the type ``$p=q$ if $e=s \not \in \mathbb{N}_0$''
in all the statements of the main text. Consequently we will have
a stronger existence theorem as follows:
\begin{theorem}
Let $(M,h)$ be a $3$-dimensional AF Riemannian manifold of class
$H^{s,p}_{\delta}$ where $p\in (1,\infty)$, $s\in
(1+\dfrac{3}{p},\infty)$ and $-1< \delta<0$ are given. Suppose
$h$ admits no nontrivial conformal Killing field and is in the
positive Yamabe class. Let $\beta \in (-1,\delta]$. Select $q$
and $e$ to satisfy:
\begin{align*}
&\frac{1}{q}\in
(0,1)\cap(0,\frac{s-1}{3})\cap[\frac{3-p}{3p},\frac{3+p}{3p}],\\
&e\in(1+\frac{3}{q},\infty)\cap[s-1,s]\cap[\frac{3}{q}+s-\frac{3}{p}-1,\frac{3}{q}+s-\frac{3}{p}].
\end{align*}
Assume that the data satisfies:
\begin{itemize}
\item $\tau \in H^{e-1,q}_{\beta-1}$ if $e\geq 2$ and $\tau\in H^{1,z}_{\beta-1}$ otherwise, where $z=\dfrac{3q}{3+(2-e)q}$ (note that if $e=2$, then $H^{e-1,q}_{\beta-1}=H^{1,z}_{\beta-1}$),
\item $\sigma \in H^{e-1,q}_{\beta-1}$,
\item $\rho \in H^{s-2,p}_{\beta-2}\cap
L^{\infty}_{2\beta-2}$, $\rho\geq 0$ ($\rho$ can be identically
zero),
\item $J\in \textbf{H}^{e-2,q}_{\beta-2}$.
\end{itemize}
If $\mu$ is chosen to be sufficiently small and if
$\|\sigma\|_{L^{\infty}_{\beta-1}}$,
$\|\rho\|_{L^{\infty}_{2\beta-2}}$, and
$\|J\|_{\textbf{H}^{e-2,q}_{\beta-2}}$ are sufficiently small,
 then there exists $\psi\in H^{s,p}_{\delta}$ with $\psi>-\mu$ and
$W\in \textbf{H}^{e,q}_{\beta}$ solving (\ref{eqweak1}) and
(\ref{eqweak2}).
\end{theorem}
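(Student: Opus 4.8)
The plan is to re-run the proof of Theorem~\ref{thm:main} essentially verbatim, with weighted Sobolev--Slobodeckij spaces $W^{s,p}_{\delta}$ replaced throughout by weighted Bessel potential spaces $H^{s,p}_{\delta}$, and then to verify that every supporting result invoked along the way remains valid in the Bessel setting while the integer-exponent side conditions --- ``$p=q$ if $e=s\not\in\mathbb{N}_0$'' and ``$e<s$ if $s>2$, $s\not\in\mathbb{N}_0$'' --- may be dropped. The conceptual point is that those side conditions entered the original argument \emph{only} through invocations of the Sobolev--Slobodeckij multiplication lemmas (Lemma~\ref{lemA1}, Corollary~\ref{coroA2}, Lemma~\ref{lempA1}) and of real interpolation (Theorem~\ref{thmA1}, item~10), both of which carry awkward restrictions when one endpoint has non-integer order and the target has integer order. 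For Bessel potential spaces the multiplication lemma (Lemma~\ref{thmemult} and its weighted counterpart stated in this appendix) has no such restriction, and complex interpolation holds with no integer/non-integer distinction at all; hence each estimate used in the proof of Theorem~\ref{thm:main} has a Bessel analogue with hypotheses no stronger --- and in several places strictly weaker --- than before.

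I would carry this out in the following order. First, re-derive Weak Formulation~\ref{weakf2} in the Bessel category: check that $-\Delta$ and $-\Delta_L$ extend to bounded maps $A_L\colon H^{s,p}_{\delta}\to H^{s-2,p}_{\delta-2}$ and $\mathcal{A}_L\colon \textbf{H}^{e,q}_{\beta}\to\textbf{H}^{e-2,q}_{\beta-2}$, and that $f(\psi,W)\in H^{s-2,p}_{\eta-2}$ and $\textbf{f}(\psi)\in\textbf{H}^{e-2,q}_{\beta-2}$ under the stated hypotheses; this is the computation of Section~\ref{sec:weak}, but now the admissible range for $(e,q)$ no longer needs the clause forcing $p=q$, since the Bessel multiplication and weighted-embedding results (Theorems~\ref{thmA3}, \ref{thmA4}, whose proofs use only $\ell^{p}$-sequence embeddings together with interpolation, transfer unchanged) cover the endpoint cases. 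Second, transfer the elliptic-operator machinery of Appendix~\ref{app:operators}: the classes $D^{\alpha,\gamma}_{m}$, $D^{\alpha,\gamma}_{m,\rho}$ and the extension theorem Theorem~\ref{thmB1}, the interior estimate Lemma~\ref{lemb1}, the constant-coefficient isomorphism Lemma~\ref{lemb2}, the semi-Fredholm estimate Proposition~\ref{propb1}, the weight-improvement Lemma~\ref{lemb3}, the compact-perturbation Lemma~\ref{lemb5}, the maximum principle Lemma~\ref{lemb8}, and the Laplacian estimate Lemma~\ref{lemb4} are all established purely through the multiplication lemma, the embedding theorems, duality, and complex interpolation between integer-order cases, so each one is re-proved line for line with $W\mapsto H$, the interpolation steps now being cleaner because there are no parity restrictions.

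Third, with the appendix machinery in place, re-prove the momentum-constraint results of Section~\ref{sec:mom} (Theorem~\ref{thm1}, Corollary~\ref{coro2.8}, Lemmas~\ref{lemglobal1}, \ref{lem1}), the Hamiltonian-constraint results of Section~\ref{sec:ham} (Lemmas~\ref{lem2}, \ref{lem3}, \ref{lem4}), the artificial conformal covariance of Appendix~\ref{app:covariance}, the Yamabe facts of Appendix~\ref{app:posyam}, and the global barrier constructions of Section~\ref{sec:subsuper} (Propositions~\ref{propglobal1}, \ref{propglobal2}); each statement and proof transfers directly. In Theorem~\ref{thm1} the six-case kernel-triviality argument was designed precisely to dodge the Sobolev--Slobodeckij endpoint restrictions at $e=s$ and $e=s-\frac{3}{p}+\frac{3}{q}$; in the Bessel setting one argues directly, since for any nice triple one has $\textbf{H}^{e,q}_{\beta}\hookrightarrow\textbf{H}^{\tilde e,2}_{\beta'}$ with $(\tilde e,2,\beta')$ super nice and no parity side condition, so the analysis collapses to the Case~1 computation together with Observations~1--3. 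Fourth, run the coupled Schauder argument of Section~\ref{sec:main} with $X=H^{s,p}_{\delta}$, $Y=\textbf{H}^{e,q}_{\beta}$, $Z=H^{\tilde s,p}_{\tilde\delta}$ (with $\tilde s\in(1,1+\frac{3}{p})$, $\tilde\delta>\delta$, so that $X\hookrightarrow Z$ compactly by the Bessel analogue of Theorem~\ref{thmA1}, item~3): Case~1 ($s\le 2$) is verbatim, and for Case~2 ($s>2$) the bootstrapping through the \textbf{beautiful}/\textbf{faithful}/\textbf{extremely faithful} ten-tuples goes through with the definition of \textbf{beautiful} simplified by deleting the two parity clauses; the three-case construction of an \textbf{extremely faithful} $\tilde A$ in the proof of Claim~2 still succeeds, with more room to choose $\tilde p,\tilde q$, and Claim~1, which uses only Remark~\ref{remb1} and the multiplication lemma, transfers unchanged.

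I expect the main obstacle to be bookkeeping rather than a new idea: one must audit every invocation of a multiplication or interpolation result in the entire chain of lemmas and confirm both that the Bessel-potential analogue applies and that each previously required integer side condition was only ever used to meet a hypothesis of a Sobolev--Slobodeckij result that no longer carries that hypothesis. The one point requiring genuine (if standard) work is establishing the Bessel multiplication lemma Lemma~\ref{thmemult} itself, promised later in this appendix: this follows from a Littlewood--Paley paraproduct decomposition, the endpoint cases being handled by complex interpolation between integer-order Sobolev multiplication and H\"older's inequality, after which the weighted version follows by the dyadic-rescaling argument already used to prove Lemma~\ref{lemA1}.
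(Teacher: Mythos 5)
Your proposal is correct and follows the same approach the paper implicitly takes: the paper never writes out a separate proof of this theorem, but states it immediately after establishing the weighted Bessel multiplication lemma (noting that the extra freedom in $p_1,p_2$ versus $p$ and the unrestricted complex interpolation allow the integer-order side conditions to be dropped from every lemma in the chain), which is precisely the re-run-the-argument bookkeeping you describe. The only minor overstatement is your claim that the six-case kernel analysis in Theorem~\ref{thm1} collapses to Case~1 together with Observations~1--3 --- the distinction between $q<2$, $q=2$, and $q>2$ (Cases~3--6) still governs which embedding to a $q=2$ super nice triple applies, but the parity-driven sub-cases inside Cases~4 and~6 (namely $e=s\not\in\mathbb{N}_0$ forcing $p=q$) do disappear, which is what actually removes the extra hypotheses.
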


Our plan for the remainder of this appendix is to discuss the
proof of the stronger version of multiplication lemma that was
stated for Bessel potential spaces.
In our proof we will make use of some of the well-known results
for pointwise multiplication in Triebel-Lizorkin spaces that can
be found in \cite{37}. Just for the purpose of completeness we
quickly review the definition of Besov spaces and
Triebel-Lizorkin spaces and their relations to the
Sobolev-Slobodeckij spaces and Bessel potential spaces.
\begin{definition}
Consider the partition of unity $\{\varphi_j\}$ that was
introduced in the beginning of Appendix A.
\begin{itemizeX}
\item For $s\in \mathbb{R}$, $1\leq p<\infty$, and $1 \leq q<\infty$ (or
$p=q=\infty$) define the Triebel-Lizorkin space
$F^{s}_{p,q}(\mathbb{R}^n)$ as follows
\begin{equation*}
F^{s}_{p,q}(\mathbb{R}^n)=\{u\in S'(\mathbb{R}^n): \|
u\|_{F^{s}_{p,q}(\mathbb{R}^n)}= \big |\big| \|
2^{sj}\mathcal{F}^{-1}(\varphi_j \mathcal{F}u)\|_{l^q}
\big|\big|_{L^p(\mathbb{R}^n)}<\infty\}
\end{equation*}
\item For $s\in \mathbb{R}$, $1\leq p<\infty$, and $1 \leq
q<\infty$ define the Besov space $B^s_{p,q}(\mathbb{R}^n)$ as
follows
\begin{equation*}
B^{s}_{p,q}(\mathbb{R}^n)=\{u\in S'(\mathbb{R}^n): \|
u\|_{B^{s}_{p,q}(\mathbb{R}^n)}= \big |\big| \|
2^{sj}\mathcal{F}^{-1}(\varphi_j
\mathcal{F}u)\|_{L^p(\mathbb{R}^n)}
\big|\big|_{l^q}<\infty\}
\end{equation*}
\end{itemizeX}
\end{definition}
We have the following relations \cite{36,StHo2011a}:
\begin{align*}
L^p &= F^0_{p,2},\quad 1<p<\infty, \\
B^{s}_{p,p} &= F^{s}_{p,p}, \quad s\in \mathbb{R}, \quad 1<p<\infty, \\
H^{s,p} &= F^{s}_{p,2}, \quad s\in \mathbb{R}, \quad 1<p<\infty, \\
W^{k,p} &= H^{k,p}=F^{k}_{p,2}, \quad k\in \mathbb{Z}, \quad 1<p<\infty, \\
W^{s,p} & =B^s_{p,p}=F^s_{p,p},
   \quad s\in \mathbb{R}\setminus \mathbb{Z}, \quad 1<p<\infty, \\
\mbox{ If }   k &\in \mathbb{N},
\mbox{ then } B^{k}_{p,p}\hookrightarrow W^{k,p}
\mbox{ for }  1\leq p \leq 2
\mbox{ and }  W^{k,p}\hookrightarrow B^k_{p,p}
\mbox{ for }  p\geq 2.
\end{align*}
With these definitions and notation at our disposal, we now give
an abbreviated proof of the key multiplication Lemma~\ref{thmemult} that
was stated earlier.
\begin{proof}{\bf (Lemma~\ref{thmemult})}
We prove the lemma for the
case $s\geq 0$. The case $s<0$ can be proved by using a duality
argument that can be found in \cite{35}. We may consider three
cases:
\begin{itemizeX}
\item \textbf{Case 1:} \textbf{$p_1, p_2\leq p$:} This case is proved in \cite{35}. The
proof in \cite{35} is based on complex interpolation.
\item \textbf{Case 2:} \textbf{$p\leq \min\{p_1,p_2\}$:} In what follows we will discuss the proof of this
case. For now let's assume the lemma holds true in this case.
\item \textbf{Case 3:} \textbf{$p_1>p,\, p_2\leq p$ or $p_2>p,\, p_1\leq p$:} Here we
prove the case where $p_1>p,\, p_2\leq p$. The proof of the other
case is completely analogous. We have
\begin{equation*}
H^{s_1,p_1}\times H^{s_2,p_2}\hookrightarrow H^{s_1,p_1}\times
H^{s_2-\frac{n}{p_2}+\frac{n}{p},p}\hookrightarrow H^{s,p}.
\end{equation*}
Note that by assumption $s_2-\frac{n}{p_2}\geq s-\frac{n}{p}$ and
so $s_2-\frac{n}{p_2}+\frac{n}{p}\geq s \geq 0$. The first
embedding is true because $H^{s_2,p_2}\hookrightarrow
H^{s_2-\frac{n}{p_2}+\frac{n}{p},p}$ (one can easily check that
the conditions of Theorem \ref{thmembedunweight}, which is also
valid for Bessel potential spaces, are satisfied). Also as a
direct consequence of the claim of \textbf{Case 2}, the second embedding
holds true (note that $p\leq \min\{p, p_1\}$).
\end{itemizeX}
So it remains to prove the claim of \textbf{Case 2}, that is the case
where $p\leq \min\{p_1,p_2\}$. Of course if both $p_1$ and $p_2$
are equal to $p$, then the claim follows from case 1; so we may
assume at least one of $p_1$ or $p_2$ is strictly larger than
$p$. To prove \textbf{Case 2} we proceed as follows:
\begin{itemizeX}
\item \textbf{Step 1:} In this step we consider the case where
$s_1=s_2=s$. Note that by assumption
$\frac{1}{p_1}+\frac{1}{p_2}-\frac{1}{p}\geq 0$. If
$\frac{1}{p_1}+\frac{1}{p_2}=\frac{1}{p}$, then let $k=\floor{s}$.
We have (\cite{jZ77})
\begin{align*}
& H^{k+1,p_1}\times H^{k+1,p_2}\hookrightarrow H^{k+1,p},\\
& H^{k,p_1}\times H^{k,p_2}\hookrightarrow H^{k,p},
\end{align*}
so by complex interpolation we get
\begin{equation*}
H^{s,p_1}\times H^{s,p_2}\hookrightarrow H^{s,p}.
\end{equation*}
As a direct consequence of Theorem 2 in page 239 of \cite{37},
the above embedding remains valid if
$\frac{1}{p_1}+\frac{1}{p_2}>\frac{1}{p}$ and $p_1, p_2>p$. What
if $p_2=p$ or $p_1=p$? Here we consider the case where $p_2=p$
(and so $p_1>p$). The proof of the other case is completely
analogous. Note that by assumption $s>
n(\frac{1}{p_1}+\frac{1}{p_2}-\frac{1}{p})=\frac{n}{p_1}$; under
this assumption we need to prove the following:
\begin{equation*}
H^{s,p_1}\times H^{s,p}\hookrightarrow H^{s,p}.
\end{equation*}
If $s\neq \frac{n}{p}$, the above embedding follows from Theorem
1 in page 176 and Theorem 2 in page 177 of \cite{37}. Now if
$s=\frac{n}{p}$, we set $\epsilon=\frac{n}{p}-\frac{n}{p_1}$ and
then since the claim is true for $s\neq \frac{n}{p}$ we have
\begin{align*}
&H^{\frac{n}{p}-\frac{\epsilon}{2},p_1}\times
H^{\frac{n}{p}-\frac{\epsilon}{2},p}\hookrightarrow
H^{\frac{n}{p}-\frac{\epsilon}{2},p}, \\
& H^{\frac{n}{p}+\frac{\epsilon}{2},p_1}\times
H^{\frac{n}{p}+\frac{\epsilon}{2},p}\hookrightarrow
H^{\frac{n}{p}+\frac{\epsilon}{2},p},
\end{align*}
so the result follows from complex interpolation.
\item \textbf{Step 2:} Let $t_1,t_2\in
[0,\frac{n}{p_1}+\frac{n}{p_2}-\frac{n}{p}]$ and suppose
$\epsilon>0$ is such that
$t_1+t_2-(\frac{n}{p_1}+\frac{n}{p_2}-\frac{n}{p})-\epsilon\geq
0$. Then as a direct consequence of the Corollary that is stated
in page 189 of \cite{37} we have
\begin{equation*}
H^{t_1,p_1}\times H^{t_2,p_2}\hookrightarrow
H^{t_1+t_2-(\frac{n}{p_1}+\frac{n}{p_2}-\frac{n}{p})-\epsilon,p}
\end{equation*}
\item \textbf{Step 3:} Note that by \textbf{Step 1}, if
$b>\frac{n}{p_1}+\frac{n}{p_2}-\frac{n}{p}$, then
\begin{equation*}
H^{b,p_1}\times H^{b,p_2}\hookrightarrow H^{b,p}.
\end{equation*}
Also if we let $\frac{1}{r}=\frac{1}{p}-\frac{1}{p_2}$, then
$H^{b,p_1}\hookrightarrow L^r$ and so by Holder's inequality
\begin{equation*}
H^{b,p_1}\times H^{0,p_2}\hookrightarrow H^{0,p}.
\end{equation*}
By complex interpolation we get
\begin{equation*}
\forall t\in [0,b] \quad H^{b,p_1}\times H^{t,p_2}\hookrightarrow
H^{t,p}.
\end{equation*}
Therefore
\begin{equation*}
\forall \epsilon>0 \,\, \forall t\in
[0,\frac{n}{p_1}+\frac{n}{p_2}-\frac{n}{p}]\quad
H^{\frac{n}{p_1}+\frac{n}{p_2}-\frac{n}{p}+\epsilon,p_1}\times
H^{t,p_2}\hookrightarrow H^{t,p}.
\end{equation*}
\item \textbf{Step 4:} In this step we prove the claim of \textbf{Case 2}
in its general form. Without loss of generality we may assume
$s_1=\max\{s_1,s_2\}$, so $s_2\in [0,s_1]$. If
$s_1>\frac{n}{p_1}+\frac{n}{p_2}-\frac{n}{p}$, then by what was
proved in \textbf{Step 3} we have
\begin{equation*}
H^{s_1,p_1}\times H^{s_2,p_2}\hookrightarrow
H^{s_2,p}\hookrightarrow H^{s,p}.
\end{equation*}
In case $s_1\leq \frac{n}{p_1}+\frac{n}{p_2}-\frac{n}{p}$ (that
is, if $s_1,s_2\in [0,\frac{n}{p_1}+\frac{n}{p_2}-\frac{n}{p}] $),
choose $\epsilon>0$ such that
$s_1+s_2-(\frac{n}{p_1}+\frac{n}{p_2}-\frac{n}{p})-\epsilon>s\geq
0 $. Then by \textbf{Step 2} we have:
\begin{equation*}
H^{s_1,p_1}\times H^{s_2,p_2}\hookrightarrow
H^{s_1+s_2-(\frac{n}{p_1}+\frac{n}{p_2}-\frac{n}{p})-\epsilon,p}\hookrightarrow
H^{s,p}.
\end{equation*}
\end{itemizeX}
\end{proof}

\section{An Alternative Weak Formulation of the LCBY Equations}
   \label{app:weak2}

In Section~\ref{sec:weak} we described a setting where the constraint
equations make sense with rough data. Here we describe a second
framework in which rough data is allowed. Recall that according to our preliminary discussion in Section~\ref{sec:weak},
we have already imposed the following restrictions:
\begin{equation*}
p\in (1,\infty),\quad s\in(\frac{3}{p},\infty)\cap[1,\infty),\quad
\delta<0.
\end{equation*}

\noindent
\textbf{Framework~2:}\\
In this framework we seek $W$ in $\textbf{W}^{1,2r}_{\beta}$
where $r\geq 1$ and $\beta<0$. For the momentum constraint to
make sense we need to ensure that
\begin{enumerate}
\item it is possible to extend the operator $-\Delta_L:
\textbf{C}^{\infty}\rightarrow \textbf{C}^{\infty}$ to an
operator $\mathcal{A}_L: \textbf{W}^{1,2r}_{\beta}\rightarrow
\textbf{W}^{-1,2r}_{\beta-2}$.
\item $b_\tau (\psi+\mu)^6+b_J \in \textbf{W}^{-1,2r}_{\beta-2}$.
\end{enumerate}
Note that $\Delta_L$ belongs to the class $D^{s,p}_{2,\delta}$.
Therefore by Theorem \ref{thmB1} we just need to check the
following conditions (numbering corresponds
to conditions in Theorem \ref{thmB1}):
\begin{center}
\begin{tabular}{lll}
    (i) & $2r\in (1,\infty)$,      & (since $r\geq 1$) \\
   (ii) & $1> 2-s$,                & (enough to assume $s> 1$) \\
  (iii) & $-1< \min(1,s)-2$,       & (enough to assume $s> 1$) \\
   (iv) & $-1< 1-2+s-\frac{3}{p}$, & (since $\frac{3}{p}<s$)  \\
    (v) & $-1-\frac{3}{2r}\leq s-\frac{3}{p}-2$,
                                   & (so need to assume
             $1-\frac{3}{2r}\leq s-\frac{3}{p}$) \\
   (vi) & $1-\frac{3}{2r}> 2-3-s+\frac{3}{p}$.
& (since $r\geq 1$ and so $1-\frac{3}{2r}\geq \frac{-1}{2}>-1-(s-\frac{3}{p})$)
\end{tabular}
\end{center}
So the only extra assumptions that we need to make is that
$1-\frac{3}{2r}\leq s-\frac{3}{p}$ and $s>1$. Also in order to
ensure that the second condition holds true it is enough to assume
$\tau$ is given in $L^{2r}_{\beta-1}$ and $J$ is given in
$\textbf{W}^{-1,2r}_{\beta-2}$. Indeed, note that $\tau \in
L^{2r}_{\beta-1}$ implies $b_{\tau}\in
\textbf{W}^{-1,2r}_{\beta-2}$. Since $\psi \in W^{s,p}_{\delta}$,
it follows from Lemma \ref{lempA1} that $b_\tau (\psi+\mu)^6 \in
\textbf{W}^{-1,2r}_{\beta-2}$; Lemma \ref{lempA1} can be applied
because clearly $2r \in (1,\infty)$ and moreover
\begin{center}
\begin{tabular}{lll}
 (i) & $-1 \in (-s,s)$, & (since $s> 1$) \\
 (ii) & $-1-\frac{3}{2r}\leq s-\frac{3}{p}$,
       & (since we assumed $1-\frac{3}{2r}\leq s-\frac{3}{p}$) \\
      & $-3-s+\frac{3}{p}\leq -1-\frac{3}{2r}$.
       & (the same as item (vi) above)
\end{tabular}
\end{center}
The numbering of the above items corresponds to the numbering of
the conditions in Lemma \ref{lempA1}.

 Now let's consider the
Hamiltonian constraint. Note that $W \in
\textbf{W}^{1,2r}_{\beta}$ and so $\mathcal{L}W\in
L^{2r}_{\beta-1}$. So for $a_W$ to be well defined it is enough
to assume $\sigma \in L^{2r}_{\beta-1}$. Exactly similar to our
 discussion for weak formulation 1, for Hamiltonian constraint to
make sense it is enough to ensure that
\begin{equation*}
f(\psi,W)=a_R(\psi+\mu)+a_\tau(\psi+\mu)^5-a_W(\psi+\mu)^{-7}-a_\rho(\psi+\mu)^{-3}\in
W^{s-2,p}_{\eta-2},
\end{equation*}
where $\eta=\max\{\delta,\beta\}$.
One way to guarantee that the above statement holds true is to
ensure that
\begin{equation*}
a_\tau, a_\rho, a_W\in W^{s-2,p}_{\beta-2}, \quad a_R \in
W^{s-2,p}_{\delta-2},
\end{equation*}
We claim that for the above statements to be true it is enough to
make the following extra assumptions:
\begin{equation*}
 s\leq 2, \quad 1-\frac{3}{2r}\geq \frac{1}{2}(s-\frac{3}{p}), \quad
\rho\in W^{s-2,p}_{\beta-2}.
\end{equation*}
The details are as follows:
\begin{enumerate}
\item $a_\tau=\frac{1}{12} \tau^2$.\\
We want to ensure $a_\tau \in W^{s-2,p}_{\beta-2}$. Note that
$\tau \in L^{2r}_{\beta-1}$, so $\tau^2\in L^r_{2\beta-2}$. Thus
we want to have $L^r_{2\beta-2}\hookrightarrow
W^{s-2,p}_{\beta-2}$. We will see that this embedding becomes
true provided $s\leq 2$ and $1-\frac{3}{2r}\geq
\frac{1}{2}(s-\frac{3}{p})$.

We just need to check that the
assumptions of Theorem \ref{thmA4} are satisfied
(numbering corresponds to the assumptions in Theorem \ref{thmA4})
\begin{align*}
(ii) & \  0\geq s-2 \quad (\textrm{equivalent to $s\leq 2$}),\\
(iii) & \  0-\frac{3}{r}\geq s-2-\frac{3}{p}\quad
(\textrm{equivalent to $1-\frac{3}{2r}\geq
\frac{1}{2}(s-\frac{3}{p})$})
,\\
(iv) & \  2\beta-2<\beta-2 \quad (\textrm{true because
$\beta<0$}).
\end{align*}
\item $a_R=\frac{R}{8}$.\\
We want to ensure $a_R\in W^{s-2,p}_{\delta-2}$. Note that $h$ is
an AF metric of class $W^{s,p}_{\delta}$ and $R$ involves the
second derivatives of $h$, so $R\in W^{s-2,p}_{\delta-2}$. We do
not need to impose any extra restrictions for this one.
\item $a_\rho=\kappa\rho/4$.\\
Clearly $a_\rho\in W^{s-2,p}_{\beta-2}$ iff $\rho\in
W^{s-2,p}_{\beta-2}$.
\item
$a_W=[\sigma_{ab}+(\mathcal{L}W)_{ab}][\sigma^{ab}+(\mathcal{L}W)^{ab}]/8$.\\
We want to ensure that $a_W\in W^{s-2,p}_{\beta-2}$. Note that
$\mathcal{L}W, \sigma \in L^{2r}_{\beta-1}$ and as discussed
above, $L^r_{2\beta-2}\hookrightarrow W^{s-2,p}_{\beta-2}$. So
$a_W=\frac{1}{8}|\sigma+ \mathcal{L}W|^2\in
L^r_{2\beta-2}\hookrightarrow W^{s-2,p}_{\beta-2}$.
\end{enumerate}
\begin{remark}
According to the above discussion we need $r\geq 1$ satisfy
\begin{equation*}
\frac{1}{2}(s-\frac{3}{p})\leq 1-\frac{3}{2r}\leq s-\frac{3}{p}.
\end{equation*}
In particular, if we choose $r$ such that
$\frac{1}{2}(s-\frac{3}{p})= 1-\frac{3}{2r}$, that is, if we set
$r=\frac{3p}{3+(2-s)p}$, then clearly $r$ satisfies the above
inequalities and moreover since  $s\leq 2$, we have $r\geq 1$.
\end{remark}
\begin{weakf}\lab{weakf1}
Let $(M,h)$ be a $3D$ AF Riemannian manifold of class
$W^{s,p}_{\delta}$ where $p\in (1,\infty)$, $\beta, \delta<0$ and
$s\in (\frac{3}{p},\infty)\cap (1,2]$. Let
$r=\frac{3p}{3+(2-s)p}$. Fix source functions:
\begin{equation*}
\tau \in L^{2r}_{\beta-1},\quad \sigma \in L^{2r}_{\beta-1},\quad
\rho \in W^{s-2,p}_{\beta-2} (\rho\geq 0),\quad J\in
\textbf{W}^{-1,2r}_{\beta-2}.
\end{equation*}
Let $\eta=\max\{\beta, \delta\}$. Define
$f:W^{s,p}_{\delta}\times \textbf{W}^{1,2r}_{\beta}\rightarrow W^{s-2,p}_{\eta-2}$ and $\textbf{f}:W^{s,p}_{\delta}\rightarrow \textbf{W}^{-1,2r}_{\beta-2}$ as
\begin{align*}
\quad f(\psi,W) &=a_R(\psi+\mu)+a_\tau(\psi+\mu)^5-a_W(\psi+\mu)^{-7}-a_\rho(\psi+\mu)^{-3},\\
\textbf{f}(\psi) &=b_\tau (\psi+\mu)^6+b_J.
\end{align*}
Find $(\psi, W)\in W^{s,p}_{\delta}\times
\textbf{W}^{1,2r}_{\beta}$ such that
\begin{align}\lab{eqv2weak1}
A_L \psi+f(\psi,W)=0, \\
\mathcal{A}_L W+\textbf{f}(\psi)=0.\lab{eqv2weak2}
\end{align}
\end{weakf}
\begin{remark}
Consider \textbf{Weak Formulation~\ref{weakf2}}. In the case where $s\leq
2$ and $\frac{1}{q}\geq \frac{2-d}{6}$ where $d=s-\frac{3}{p}$,
this formulation becomes a special case of
\textbf{Weak Formulation~\ref{weakf1}}.
Indeed, we just need to check that in this case
$W^{e,q}_{\beta}\hookrightarrow W^{1,2r}_{\beta}$. By Theorem
 \ref{thmA3} we need to make sure that the followings hold true:
\begin{align*}
  (i) & \ q\leq 2r,
     \quad (\textrm{true because $\frac{1}{q}\geq
           \frac{2-d}{6}= \frac{3+(2-s)p}{6p}= \frac{1}{2r}$}) \\
 (ii) & \ e\geq 1, \quad (\textrm{true because $e>1+\frac{3}{q}$})\\
(iii) & \ e-\frac{3}{q}\geq 1-\frac{3}{2r}.\\
\end{align*}
(The numbering of the above items agrees with the
numbering of the assumptions in Theorem \ref{thmA3}.)
 The third condition is true because
 \begin{equation*}
e-\frac{3}{q}\geq 1-\frac{3}{2r}\Leftrightarrow e\geq
1+\frac{3}{q}-\frac{3+(2-s)p}{2p}\Leftrightarrow e \geq
\frac{3}{q}+\frac{d}{2},
 \end{equation*}
 and
\begin{itemizeX}
\item if $d>2$, then $d-1>\frac{d}{2}$ and so $e\geq
\frac{3}{q}+d-1>\frac{3}{q}+\frac{d}{2}$,
\item if $d\leq 2$, then $1\geq \frac{d}{2}$ and so $e>1+\frac{3}{q}\geq
\frac{3}{q}+\frac{d}{2}$.
\end{itemizeX}
\end{remark}

\bibliographystyle{abbrv}
\bibliography{refs}
\end{document}